\newtheorem{theorem}{Theorem}[section]
\newtheorem{definition}[theorem]{Definition}
\newtheorem{example}[theorem]{Example}
\newtheorem{lemma}[theorem]{Lemma}
\newtheorem{proposition}[theorem]{Proposition}
\newtheorem{corollary}[theorem]{Corollary}
\newtheorem{remark}[theorem]{remark}
\newtheorem{problem}[theorem]{Problem}
\newcommand{\cala}[0]{{\mathcal A}}
\newcommand{\cals}[0]{{\mathcal S}}
\newcommand{\calr}[0]{{\mathcal R}}
\newcommand{\calx}[0]{{\mathcal X}}
\newcommand{\calv}[0]{{\mathcal V}}
\newcommand{\call}[0]{{\mathcal L}}
\newcommand{\calp}[0]{{\mathcal P}}
\newcommand{\calt}[0]{{\mathcal T}}
\newcommand{\N}[0]{{\mathbb N}}
\newcommand{\R}[0]{{\mathbb R}}
\newcommand{\Z}[0]{{\mathbb Z}}
\newcommand{\C}[0]{{\mathbb C}}
\newcommand{\ad}[0]{{\wedge}}
\newcommand{\ud}[0]{{\vee}}   %{{\sqcup}}
\newcommand{\defn}[0]{{:=}}
\newcommand{\ol}[0]{\overline}
\newcommand{\na}[0]{\nabla}
\newcommand{\e}[0]{\emptyset}
\newcommand{\tiha}[1]{{\tilde { \hat{ #1}}}}
\newcommand{\inft}[1]{{ {#1}^{(\infty)}}}
\newcommand{\re}[1]{{(\ref{#1})}}
\newcommand{\tihadot}[1]{{\tilde { \hat{ \dot{ #1}}}}}
\newcommand{\me}[1]{{\check{#1}}}
\newcommand{\tihabar}[1]{{\bar { \tilde{ \hat{ #1}}}}}
\newcommand{\ul}[0]{\underline}
\newcommand{\be}[1]{\begin{equation} \label{#1}}
\newcommand{\en}[0]{\end{equation}}
\author[B. Burgstaller]{Bernhard Burgstaller}
\email{bernhardburgstaller@yahoo.de}
\title{Information content in formal languages}
\keywords{formal language, variables, metric, 
distance, equivalence relation, information, information content, 
abelian, commutative, idempotent, %monoids, 
monoid, semigroup, length, length function, 
measure, Boolean algebra,   
%theory,
computing, physics, quantum gravity,  
physical theories}
\subjclass{94A17,
06E75,  06F05,   68Q45, 94-04}
\address{Departamento de Matematica,  
Universidade Federal de Santa Catarina, CEP 88.040-900
Florianópolis-SC, Brasil }
\begin{document}

%\if 0
%\maketitle 

%\tableofcontents 
%\fi

%\chapter{}

\begin{abstract}
%#
%f
Motivated by %modelling 
%designing 
creating physical theories, 
formal languages $S$ with variables 
%and equivalene relations 
are considered 
and a kind of distance %on 
between elements of the languages is defined by
the formula $d(x,y)= \ell(x \na y) - \ell(x) \ad \ell(y)$,
where $\ell$ is a length function and $x \na y$ means the united theory of $x$ and $y$. 
Actually we mainly consider abstract abelian idempotent monoids 
$(S,\na)$ provided with length functions $\ell$. 
The set of length functions can be projected to another set of length functions 
such that the distance $d$ is actually a pseudometric and satisfies
$d(x\na a,y\na b) \le d(x,y) + d(a,b)$. 
We also propose a  ``signed measure'' on the set of Boolean expressions of elements in $S$, and a Banach-Mazur-like distance 
between abelian, idempotent monoids  with length functions, 
or formal languages.

\end{abstract}

\maketitle 

\tableofcontents

\section{Introduction}	%duction}

Variables are the center and core of 
mathematics, physics, %and 
theoretical technics  %techniques 
%mathe and phsycs and theoretical technixs 
and theoretical economy theory.  % . 
In this paper we consider formal languages with variables 
%(def. \ref{deflangua}),
(see definition \ref{deflangua}),
provided with a length function for sentences (for example a simple character count), and define something which appears like a
pseudo-metric between the sentences within one languages  %.
%(def. \ref{def210}).  
(see definition \ref{def210}).  

The languages are assumed to be equipped with an equivalence relation (def. \ref{defequ}), and the more % in common 
two sentences $x,y$ have in common, the closer their distance becomes. 

Actually, this project was motivated   by physics in the quest
for a solution to quantum gravity, see for instance 
Weinberg \cite{weinberg2} 
and Hawking and Israel \cite{zbMATH03660360}. It is observed, that in general physical theories are surprisingly short when formulated as equations, 
and that progress in advancing the correct equations in physics is
often realized by small modifications in formulas. 
For example, the formulas between Hamiltonian mechanics and
quantum mechanics are very close in distance, also in our approach here, for all given Hamiltonians, even if quantum mechanics and mechanics appear very different when considered without 
formulas  (cf. example \ref{ex70}). 
%at first

We %propose 
guess that a solution that unites gravity and quantum field theory, when formulated as mathematical equations, is actually close 
as formulas, for example in the sense proposed in this paper,
to the existing ununited theories of gravity and quantum field theory. 
This might %guide 
lead to the direction towards a candidate for a united theory, and on the other hand, might be used as a tool to 
evaluate a given candidate theory by computing the %distances.
distance to the existing theory so far. 
 
 In %shoet 
 short terms, the simpler a theory is, the better. The closer it is to existing effective theories,  
 %effctive thearies, 
 the better, as the more likely it is the correct generalization.   %one.  
% the more it can be trusted. 
 
 In this sense, even quantum mechanics %could had 
 %have 
  might very theoretically have had 
 been predicted without any inspection of black body radiation  or double slit experiment, because it is perhaps the closest theory to mechanics which avoids the infinities. 
 Simply by looking at formulas. 
 
 %Int his 
 In this sense also, %we believe that 
 a solution to quantum gravity 
 is %not a 
 { perhaps} less a physical problem but more a pure mathematical problem. 
 It may be formulated like so: indicate the %the 
 theory with the 
 shortest distance - in the sense of this paper, say - to the existing theory, such that no %infnities accur 
 infinities occur (cf. problem \ref{prob1}). 
 %This is a pure mathematical 
% question, not a physical one! 
Even more so, as ultrahigh-energy experiments appear currently to be impossible by %many 
 several reasons, for example the no-return from black holes obstacle.  
 
 After this motivating and ambitious 
 %ambituos 
 remarks, which may 
 %certaily 
 also appear somewhat 
 %trivial 
 %obvious and 
 self-evident, let us come to the hard facts %state 
 %hard fatcs 
 of this paper:
 
 Our definitions may be used much more widely, for example in computing, and it is widely applicable, as the approach is very
 flexible and %lets details
 does not bother with details but with the core principle.
 %The world is 
 World applications are viewed as formulas.  
 Actually, the distance $d(s,t)$ between two formulas $s,t$ is defined as
 $$d(s,t) = \ell(s \na t) - \ell(s) \ad \ell(t)$$
 which means in words: unite both theories $s$ and $t$ (called $s \na t$), optimize their length (called %i.e. 
 $\ell$) and compute it (the more overlap they have the shorter it will become) and then subtract the length of the shorter theory of $s$ and $t$. 
  
  It looks natural %(hopefully) 
  and %innocent, 
  simple, but is also intriguing because of the minus sign. Even if it appears natural that the triangle inequality - also called  $\Delta$-inequality in this paper - 
  %in definition \ref{def31} -
  %triangle inequality (also called $\Delta$-inequality) 
  holds for $d$ (cf. lemma \ref{lem16}), it is difficult and enormous involving when trying to verify this.  		%this, 
  A key of this paper is to repair the definition %it 
  of $d$ by making the $\Delta$-inequality valid, see definition \ref{def71}. 
    %. 
    See also theorem \ref{thm61} for other conditions which imply
    the $\Delta$-inequality. 
  
  As another example, it is difficult to get any easy criteria 
  for the %equicontinuous 
  uniform continuity of a homomorphism between two languages,
  for example computer languages, but see lemma 
  \ref{lem121}.   %. %#

To not get lost in details or become too specific, in this paper we 
  mainly forget languages at all but %rater 
  instead work with an abelian, idempotent monoid $(S,\na)$ provided 
  with a length function $\ell$ (see definition \ref{def211}). 
  
  This setting can be somewhat compared with a measured space
  $(M,\cup)$ 		%$(S,\cup)$ 
  with the operation $\cup$ of taking the union of sets and a measure $\ell$ on it. And indeed, we will 
  make parallels, and show how we can compute the ``signed measure'' $\zeta$ (i.e. 
  length) of a formal Boolean expression  in %this 
  the  monoid $S$,		%. 
  see definition \ref{def66}. 
  The above %metric 
  formula for  $d$ is easily seen to be indeed a metric for measured spaces $(M,\cup)$ %$(S,\cup)$ 
  and $\na := \cup$, see lemma \ref{lemma11}. %	. 
 
 In a main result, see theorem \ref{cor52}, we show how one can, starting with a length function $\ell$, construct a length function $\inft \ell$ such that the above formula for $d$ with $\ell$ replaced by $\inft \ell$ is indeed a pseudometric, 
 and the measure $\zeta(x \cap y)$ of the Boolean expression 
 of intersection of two sentences $x$ and $y$, 
% $x \cap y$, 
% that is $\zeta(x \cup y)$,
 %
% the obove metric with respect it is indeed a pseudo-metric, 
%
% and the measure of the boolean exaprssion $x \cup y$, in other words
% $$\inft \ell(x \cup  y)$$
% (later called $\zeta(x \cup y)$) 
 is monotonically increasing in $x$ and $y$, as one may 
 naturally  expect it. 
 We have however no clue in how far $\inft \ell$ degenerates, it could even be completely zero (but don't believe that). 
 
% Over all, this paper is more a defintion and lemmas paper, less a theorem-paper. 

 A key 
% Another important 
 %prominent 
 role %is 
 in this paper also plays the $\na$-inequality
 $$d( x \na y,  a \na b ) \le d(x,a) + d(y,b)$$
 which, even if it holds heuristically, see %-, 
 lemma \ref{lem41}, we also force by definition finally, 
 see definition  \ref{def72}.   %. 
%which we also - even it holds heuristically, see - force by defintion finally. 
  
  For homomorphisms between languages we use the same concepts as for languages itself to define a metric, enriched by  the  
  %rthe %the
  submultiplicativity (cf. lemma \ref{lemmulti}). 
  
  As a distance between languages, or more generally, 
  abelian, idempotent monoids, one may use the proposed 
  Banach-Mazur-like distance of definition \ref{def827}. 
  %\ref{}-

  In the final sections we give various examples which %shouöd 
  should illustrate and motivate the theory of this paper,  %and
  %, 
  presented in an extremely  superficially %, informal 
   and sloppy way.   
  
  %T
  %We 
  
  %The
  Actually, 
  %Defining 
  defining pseudometrics 
  %and distance-like %pseudometrics and distance-like 
  %functions 
  on languages, and on a family of languages 
  %has actually
  goes back for a long time with a wide literature, and we only 
  cite samples, far from being in any way complete,    
  %see 
  Han and Ko \cite{zbMATH06763319},  
  {Djordjevi{\'c}, %, Radosav and 
  Ikodinovi{\'c} %, Neboj{\v{s}}a 
  and Stojanovi{\'c} } 
  %, Nenad} 
  \cite{zbMATH07274642},
   {Fulop  %, Sean A. 
   and Kephart %, David} 
   } \cite{zbMATH06819795},  
   {Ko, %Sang-Ki and 
   Han  %, Yo-Sub 
   and Salomaa}   %, Kai}    
   \cite{zbMATH06275883},  
   %Terou 
   Imaoka \cite{zbMATH01793825},   
  %Ceresneyes and 
  Cseresnyes and Seiwert \cite{zbMATH07466697},   
  Ibarra, %and 
  McQuillan and Ravikumar \cite{zbMATH07067598}, 
  Bertoni, Goldwurm and %Sabatini 
  Sabadini \cite{zbMATH04033108},  
  Stratford
   \cite{zbMATH05658818},  
  %{Ko, % Sang-Ki and 
  %Han %, Yo-Sub 
  %and Salomaa}		%, Kai}   
  Jalobeanu 
  \cite{zbMATH00569758},   
  {Parker,  % Austin J. and 
  K.B. Yancey %, Kelly B. 
  and M.P. Yancey}    %, Matthew P.}  
  \cite{zbMATH07204337}, 
  {Cui, Dang, Fischer and Ibarra} 
  \cite{zbMATH06244084}, 
  Pighizzini  
  \cite{zbMATH01853156},  %. 
  and %particularly 
  in the context of physical theories,
  Giangiacomo \cite{zbMATH04197945}. 
  We will not compare these approaches with the one in this paper,        
  but we think there is a different emphasis on the latter. 
  %are significant differences and a .  

  Let us %only 
  remark, that %for example, 
  we less stress the exact computability 
  %feasability 
  of the proposed metrics, but the idealized relations %and properties 
  they should satisfy.  %have. 
  Also, our definitions are designed such that we find them 
  particularly useful for physical theories.

  %Contrary 
  
%help   
\section{Metric}		%{Formal languages and metric}		%Metrik}
							\label{sec2}

%In this section we fix the main setting used throughout. 
%We define langauges with variables, quotients and length functions on it, which leads to the abstract notion of abelian, idempotent monoids provided with length functions, and fix the core formulas for the distances on them. 

%A

Under a language $(S,A)$ we mean a set $A$ called alphabet, 
and a set $S \subseteq \{ f:\{1,\ldots,n\} \rightarrow A| \,n \ge 1\} 
\cup \{\e\}$ 
of finite text-strings with letters in $A$ %N 
called sentences or formulas or theories. Here, $\e$ means the
empty %word 
sentence (i.e. empty set).  
 In this section we are going to give successively several definitions, each of which
 becomes part of the considered setting after the definition. 
 % condered here. 

\begin{definition}[Language with variables] 	%[language]					
\label{deflangua}

{\rm 
A {\em language with variables} means a language $(S,A)$, where 
$\e \in S$ and  
$A$ contains at least a sign $\ud$ and a distinguished  
%an 
infinite subset of so-called variables $X \subseteq A$,
divided further into two disjoint infinite subsets $X_0 \cup X_1 = X$ ($X_0 \cap X_1 = \e$) called {\em main variables} set $X_0$ and {\em auxiliary 
variable} set $X_1$,  %.
%
%Consider a language $(\cals,\cala)$. 
%The alphabet $\cala$ contains at least in infinite set of variables
%$\calx$ and a sign $\ud$  %. 
%%and brackets $($ and $)$. 
%
such that 
for all elements $s,t \in S$ and variable permutations 
%transformations 
$\phi \in P_S$ 
(to be explained below),
 $$s \ud t \in S 	$$	%, %\S, 
 %\qquad   
 $$\phi(s) \in S$$
 
%$$s,t \in S \quad \Rightarrow  \quad \; s \ud t \in S$$

%$$s \in S \quad \Rightarrow \quad \phi(s) \in S$$

%The variable

}
\end{definition}

Here, $s \ud t$ means the concatenated string %concatenation 
of the string $s$ with the letter $\ud$ with the string $t$ 
in the order as indicated.  %)
%Throughtout this section, we fix such an $S$. 
The letter $\ud$ serves purely as a symbol for concanation of strings. 

\begin{definition}[Variables permutation]

{\rm 

%Given a 
Let $\psi:X \rightarrow X$
%$\psi:\calx \rightarrow \calx$ 
be a permutation (i.e. bijective function) on the variable set such that $\psi(X_i) = X_i$ ($i=1,2$), and let 
%put 
$\phi:S \rightarrow S$ 
%$\phi:\calt \rightarrow \calt$ 
denote the canonically bijective map induced
by $\psi$. 
That is, set $\phi(a_1 \ldots a_n):= \psi(a_1) \ldots \psi (a_n)$ 
for $a_i \in A$, where $\psi(a):=a$ if $a \in A \backslash X$. 

Write $P_S$ for the set of all such functions 
$\phi:S \rightarrow S$, called variable permutation or transformation.  

%$\phi:\cals \rightarrow \cals$. 
%Write $

}
\end{definition}

\begin{definition}[Length function]

{\rm 

Under a length or cost function on a language $S$ with variables
we mean a positive function $L: S \rightarrow \R$ % such that 
%Set $L(t)=$ the length of a string $t \in \cals_0$. 
%A length or cost function. 
%
satisfying
%Satisfying
$$L(\emptyset)  = 0$$
$$ L(x \ud y) = L(x) + L(y)$$
$$L(s) = L(\phi(s))$$
%
%Assume $L(s) = L(\phi(s))$ whenever 
%$\phi \in P$

for all $x,y \in S$ and $\phi \in P_S$. 

} 
\end{definition}

In other words, the empty word has no length, the length of two concanated strings is the sum of of the length of the single strings, and the length
or cost of a string does not change if we just choose other variable
names.  

\begin{lemma}
The map $f:S \times S \rightarrow S$ 
$$f(s,t)= s \ud t$$
is an associative semigroup operation on $S$.
\end{lemma}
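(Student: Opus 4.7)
The plan is to verify the two requirements of being an associative semigroup operation: closure under $f$, and associativity of $f$.

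For closure, I would simply invoke the axiom in definition \ref{deflangua} that $s \ud t \in S$ whenever $s,t \in S$; this tells us that $f$ is indeed a well-defined binary operation from $S \times S$ into $S$, so $(S,f)$ is a magma.

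For associativity, the key observation is that $s \ud t$ is defined as a pure string concatenation: $s$ followed by the single letter $\ud$ followed by $t$. Hence both $(s \ud t) \ud u$ and $s \ud (t \ud u)$, when unfolded according to this definition, yield the same string, namely the concatenation of $s$, the letter $\ud$, $t$, the letter $\ud$, and $u$, in that left-to-right order. Since finite string concatenation over the alphabet $A$ is associative (as it reduces to concatenation of finite sequences, equivalently functions on $\{1,\ldots,n\}$), the two sides agree as elements of $S$. I would note briefly that this uses only the string-concatenation interpretation of $\ud$ stated right after definition \ref{deflangua} and not any further structure such as variable permutations or length functions.

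There is no real obstacle here: the entire content of the lemma is the remark that inserting a fixed separator letter between operands preserves the associativity inherited from ordinary string concatenation, together with the closure axiom already imposed on a language with variables. The only thing that requires a sentence of justification is the reduction of $s \ud t$ to an actual concatenation of finite strings, which is exactly the interpretation given in the paragraph immediately following definition \ref{deflangua}.
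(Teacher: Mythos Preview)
Your proposal is correct and matches the paper's treatment: the paper states this lemma without proof, regarding it as immediate from the closure axiom in Definition~\ref{deflangua} and the fact that $\ud$ is defined as raw string concatenation with a separator letter. Your unpacking of closure and associativity is exactly the implicit justification.
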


Two sentences are completely unrelated if they have no variables in common because they do not speak %not 
about the same variables: 
%things, expressed als variables: 

\begin{definition}[Independence]

%#
{\rm 

Two elements $x,y \in S$ are called {%\em 
independent}, notated by $x \bot y$, 
%Write $x \bot y$ 
if and only if $x$ and $y$ have no
variables in common.   %They are called independent. 

(In other words, %A 
a variable showing up in the string $x$ must not show up in the string $y$.)

%$$ x \bot y \quad \Leftrightarrow \quad 

}
\end{definition}

In languages like mathematical of physical theories, one can 
typically express formulas in formally different ways but which 
are equivalent in content (for example, formally different %formulas 
equations  may have the same solution). That is why we need an equivalence relation
on the set of sentences, and the following axioms 
are the minimum natural ones we postulate: 

\begin{definition}[Equivalence relation]			\label{defequ}	

\label{equrel}

{\rm 

Let be given an equivalence relation $\equiv$ on $S$  %$\calt$.
such that
the following relations hold for all $a,b,s,
%s',
t,x %,y 
\in S$ and
%
%The following relations should hold for all $a,b,s,s',t,x,y \in S$ and
$\phi \in P_S$:

Neutral element:
\begin{equation} 			\label{eq28}
%$$
s \ud \emptyset \equiv \emptyset \ud s  \equiv s
\end{equation}
%$$   %  \qquad \mbox{(idempotence)}$$ 
%  neutral element
%\begin{equation} 			\label{eq29}
%%$$
%\emptyset \ud s  \equiv  s
%\end{equation}
%$$   
%neutral element

Commutativity:
\begin{equation}     	\label{eq27}
%$$
a \ud s \ud t  \ud b \equiv a \ud t \ud s \ud b
%$$  
\end{equation}
% commutativity 

%Then

$\ud$ preserves equivalent elements in case of independence: 
%indepedence:
%, if other independent:
%$s \equiv t $ and $tz\bot s ,t $ then $s \ud z \equiv t \ud z$
\begin{equation}     	\label{eq3}
s \equiv t %s' 
, \; s  \bot x ,\; t %s' 
\bot x   \quad   
\Rightarrow   \quad s \ud x \equiv t %s' 
\ud x
\end{equation}
%$ and $   $ then $

A superfluous copy can be removed:
%
%$\phi(x)$
% is a superflous copy and can be removed:
%
\begin{equation} 			\label{eq8}
%$$
\phi(x) \bot x
\quad \Rightarrow \quad x \ud \phi(x)   \equiv x %$$
\end{equation}

%$$\phi(x) \bot x, \phi(x) \bot y
%\quad \Rightarrow \quad x \ud \phi(x) \ud y  \equiv x \ud y$$

%varialbes 
Variable transformation does not change:
\begin{equation} 			\label{eq10}
%$$
x \equiv \phi(x)
%$$  
\end{equation}

%later:

If $x$ contains  no main variables 
%vars
%only auxilliary vars 
then require:
\begin{equation} 			\label{eq9}
%$$
x \equiv \emptyset %$$
\end{equation}  
}

\end{definition}

Let us revisit some aspects of the last definition. 
Relation (\ref{eq27}) means it is unimportant in which
order we list our sentences separated by $\ud$, like it is unimportant
in which order we list the single formulas in a longer theory, or the collection of subroutines in a longer computer program. 
 The connection between the single sentences is regulated solely by the variables they have in common. 
 That is why relation (\ref{eq3}) says that %we 
 the semigroup operation $\ud$ of concatenation does {\em not} %satisfy 
 preserve the equivalence
 relation in general, but only if the %concanated sentence is unlreated  
 concatenated sentences are unrelated.
 Thus we also needed to incorporate $a$ and $b$ in commutativity (\ref{eq27}).  
 Relation (\ref{eq10}) tells us that it is unimportant which variables
 we choose (but only as a whole sentence).
 %, we cannot change variables 
Relation (\ref{eq8}) says if a theory %(for example a computer program) 
is restated and fused with the
original theory with other variables then essentially nothing new is said %the
 and the so new formed theory collapses to the original theory. 
% This law may typically not appear in information  
%
%
The relation \re{eq9} is actually theoretically  
%axiomatically 
unimportant in the sense that it does not shine to the later abstract approach stated in definition \ref{def211} below through and may thus be 
omitted, but it is essential in the examples that we shall consider 
and are the only difference and the point  %for the 
what the distinction between main and auxiliary variables 
is all about. 
(See also example \ref{ex101}.) 
%the intrinsic difference between main and auxilliary variables.  

We use %classe 
class brackets as in $[x]$ for classes in quotients, where $x$ is a representative.   %representant. 
%All function    
The limits of sequences and the compare %(i.e. $\le$) 
of all real valued functions $f:Y \rightarrow  \R$ are always understood to be taken pointwise (i.e. $f \le g$ iff $f(x) \le g(x)$ for all $x,y \in Y$). 
 
The whole theory of this paper is mainly build only on the 
quotient set of sentences: 

\begin{definition}[Quotient]			\label{def26}

{\rm 
Set 
$\ol S := S/\equiv$ to be the set-theoretical quotient.

%$\cals := \calt/\equiv$ (finite sequence of modifications)

}
\end{definition}

On the quotient of sentences we naturally form an operation 
of concatenation by concatenating different sentences by choosing copies of them with disjoint 
variables chosen: 

\begin{definition}[Operation]				\label{def27}

{\rm 

Define 
%an abelian, idempotent monoid operation 
$\na: \ol S \times \ol S \rightarrow \ol S$ by 
$$[x] \nabla [y]:= [x \ud \phi(y)]$$ 
for all $x,y \in %\ol 
S$, where
$\phi \in P_S$ (dependent on $x,y$) is chosen such that $x \bot \phi(y)$. 

}
\end{definition}

We will then mainly work with an abstract abelian monoid that
actually comes out so far: 

\begin{lemma}			\label{lem02}
$(\ol S,\na,[\e])$ is %a 
an abelian, idempotent monoid (i.e. $x \na x = x$ for all $x \in \ol S$).  
% operation
\end{lemma}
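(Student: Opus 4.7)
The plan is to verify the well-definedness of $\na$ on the quotient first, then derive the monoid axioms plus idempotency from the defining relations of $\equiv$ in definition \ref{defequ}.

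\textbf{Step 1: Well-definedness of $\na$.} I would first argue that a suitable $\phi \in P_S$ in definition \ref{def27} always exists: since $X_0$ and $X_1$ are infinite and any sentence uses only finitely many variables, one can choose a bijection $\psi:X\to X$ preserving $X_0$ and $X_1$ that carries the variables of $y$ to fresh ones not occurring in $x$. Next I would show the class $[x\ud\phi(y)]$ does not depend on the choice of $\phi$: given two such $\phi_1,\phi_2$ with $\phi_i(y)\bot x$, relation \re{eq10} gives $\phi_1(y)\equiv y\equiv\phi_2(y)$, and then \re{eq3} (with $x$ playing the role of the independent argument) yields $x\ud\phi_1(y)\equiv x\ud\phi_2(y)$. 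Independence from representatives proceeds similarly: if $x\equiv x'$ and $y\equiv y'$, pick $\phi$ so that $\phi(y)$ and $\phi(y')$ both have variables disjoint from both $x$ and $x'$, and apply \re{eq3} twice (once on each side of $\ud$, using commutativity from \re{eq27}$+$\re{eq28} to swap sides before re-applying \re{eq3}). This last swap step is the one to do carefully.

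\textbf{Step 2: Monoid laws.} For commutativity, observe that specializing \re{eq27} with $a=b=\e$ and using \re{eq28} gives $s\ud t\equiv t\ud s$; then $[x]\na[y]=[x\ud\phi(y)]=[\phi(y)\ud x]=[\phi(y)]\na[x]=[y]\na[x]$, where the third equality uses $\phi(y)\bot x$ and the last uses \re{eq10}. For associativity I would use Step 1 to pick convenient representatives: choose $\phi_1,\phi_2\in P_S$ so that the variable sets of $x$, $\phi_1(y)$, $\phi_2(z)$ are pairwise disjoint. Then both $([x]\na[y])\na[z]$ and $[x]\na([y]\na[z])$ unfold to $[x\ud\phi_1(y)\ud\phi_2(z)]$ directly from definition \ref{def27} and the pairwise independence, with no further manipulation needed. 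The neutral element property $[x]\na[\e]=[x]$ is immediate: one can take $\phi=\mathrm{id}$ (since $\e\bot x$ vacuously), and $x\ud\e\equiv x$ is exactly \re{eq28}.

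\textbf{Step 3: Idempotency.} This is essentially the content of axiom \re{eq8}: choose $\phi\in P_S$ with $\phi(x)\bot x$ (again possible by infiniteness of the variable sets), so that $[x]\na[x]=[x\ud\phi(x)]$; then \re{eq8} gives $x\ud\phi(x)\equiv x$, whence $[x]\na[x]=[x]$.

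\textbf{Main obstacle.} The only nontrivial point is Step 1, because definition \ref{def27} makes two simultaneous choices ($\phi$ and the representative of $[y]$) whose interaction must be controlled using \re{eq3}, \re{eq10}, and commutativity of $\ud$ up to $\equiv$. Everything else reduces to a careful but routine bookkeeping of variable disjointness, exploiting the fact that $X_0$ and $X_1$ are infinite so that variable collisions can always be avoided by an appropriate renaming.
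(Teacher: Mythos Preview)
Your proposal is correct and follows essentially the same approach as the paper: well-definedness via \re{eq10} and \re{eq3}, neutral element via \re{eq28}, commutativity via \re{eq27}, and idempotence via \re{eq8}. In fact your write-up is considerably more thorough than the paper's, which is a three-line sketch that does not explicitly address associativity or independence from the choice of representatives.
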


\begin{proof}
Indeed, by \re{eq10} and \re{eq3} the %given 
above definition of $\na$ 
is independent of the chosen $\phi \in P_S$. 
 %Of course, 
 By \re{eq28},
  %and \re{eq29}, 
 $[\e]$ is a null-element,  %. 
 and by \re{eq27} we get commutativity. 
  Idempotence follows from \re{eq8}. 
\end{proof}

On the quotient set of sentences we essentially define %choose 
a length function $\ell$ by choosing the shortest possible length $L$ 
among the equivalent representatives (i.e. $\ell([x]) = \inf_{y\equiv x} L(y)$), but %as we want to avoid that the length of $s \na t$ co 
to ensure that the length of $s \na t$ is always bigger or equal than the length of $s$ for $s,t$ in the quotient $\ol S$, we correct this 
definition by the formula of definition \ref{def74}.  
Thus the following definition of $\ell$ is just a somewhat obscured reformulation 
of defining $\ell$ at first as in definition \ref{def91} 
and then %correctify 
%correctifying 
correcting it %then 
by definition \ref{def74}:  %. 

\begin{definition}[Length function on quotient]			\label{def29}

Set ($x \in S$)
$$\ell: \overline S \rightarrow \R$$
$$\ell([x]) = \inf \{ L(y) \in \R|\,  y ,a \in S,  \;x \bot a, \;y \equiv x \ud a \}$$

%$$\ell([s]) = \inf \{ L(t) \in \R|\,  t ,z \in S,  s \bot z, t \equiv s \ud z \}$$

%$$\ell([s]) = \inf_{t ,z \in S,  s \bot z, t \equiv s \ud z} 
%L(t)$$

%(t is equivalent to s or some more z) 

\end{definition}

We shall conveniently 
%sloppily 
denote $\e:=[\e] \in \ol S$. % also by $\em \in \ol S$. 
We get very natural properties for the length function
which we later will postulate axiomatically for abelian monoids:

\begin{lemma}			\label{lem03}
$\ell$ is positive %(i.e. $\ell(x) \ge 0$) 
and 
%For 
for all $x,y \in \ol S$ we have: 
%
%Zero-law: 
%$$\ell(x) \ge 0$$
\begin{equation}		\label{eql1}
\ell(\emptyset)=0
\end{equation}

Subadditivity:
\begin{equation}		\label{eql2}
%$$
\ell(x \na y) \le \ell (x) + \ell(y)
%$$
\end{equation}

Monotone increasingness: % (or just `monotonicity' for short): 
\begin{equation}		\label{eql3}
\ell(x) \le \ell(x \na y) 
\end{equation}

\end{lemma}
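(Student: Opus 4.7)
The plan is to read off each property directly from the defining infimum $\ell([x]) = \inf\{L(y) : y,a \in S,\ x \bot a,\ y \equiv x \ud a\}$, by exhibiting an admissible pair $(y,a)$ for each case. Positivity is immediate since $L \ge 0$, and for $\ell(\e)=0$ the choice $y=a=\e$ works: $\e \bot \e$ holds vacuously and $\e \ud \e \equiv \e$ by \re{eq28}, so $L(\e)=0$ lies in the set.

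For monotonicity I would write $x \na y = [x \ud \phi(y)]$ for some $\phi \in P_S$ with $x \bot \phi(y)$, as in definition \ref{def27}. Any admissible pair $(z,b)$ for $\ell(x \na y)$ satisfies $(x \ud \phi(y)) \bot b$ and $z \equiv (x \ud \phi(y)) \ud b$. Since $\ud$ is strictly associative as a string operation, $z \equiv x \ud (\phi(y) \ud b)$, and moreover $x \bot (\phi(y) \ud b)$ follows from $x \bot \phi(y)$ together with $x \bot b$. Thus $(z,\, \phi(y) \ud b)$ is admissible for $\ell(x)$, so $\ell(x) \le L(z)$; taking the infimum over $z$ yields the inequality.

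Subadditivity is the substantial case and amounts to assembling admissible pairs for $x$ and $y$ into one for $x \na y$. Given $\epsilon > 0$, I would pick admissible pairs $(y_1,a_1)$ for $[x]$ and $(y_2,a_2)$ for $[y]$ with $L(y_i)$ within $\epsilon$ of the respective infimum. Let $W$ be the union of the variable sets of $y_2,y,a_2$ and $V$ the union of those of $y_1,x,a_1$; both are finite. Exploiting that $X_0$ and $X_1$ are each infinite, I would choose a permutation $\psi \in P_S$ with $\psi(X_i)=X_i$ sending $W$ into $X\setminus V$. By \re{eq10}, $\psi(y_2) \equiv y_2 \equiv y \ud a_2 \equiv \psi(y\ud a_2) = \psi(y) \ud \psi(a_2)$, so $\psi(y_2) \equiv \psi(y) \ud \psi(a_2)$. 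Setting $z := y_1 \ud \psi(y_2)$ gives $L(z) = L(y_1)+L(y_2)$ by the length axioms. Two applications of \re{eq3}, each with independence guaranteed by the choice of $\psi$, then yield $z \equiv (x \ud a_1) \ud (\psi(y) \ud \psi(a_2))$, and \re{eq27} rearranges this to $(x \ud \psi(y)) \ud (a_1 \ud \psi(a_2))$. Since $x \bot \psi(y)$, the first factor represents $x \na y$, and $x \ud \psi(y) \bot a_1 \ud \psi(a_2)$ holds by the same freshness. Hence $(z,\, a_1 \ud \psi(a_2))$ is admissible for $\ell(x \na y)$, giving $\ell(x \na y) \le L(y_1)+L(y_2) < \ell(x)+\ell(y)+2\epsilon$; letting $\epsilon \to 0$ finishes.

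The main obstacle I anticipate is the careful bookkeeping of independence: the representative $y_2$ may well contain variables appearing neither in $y$ nor in $a_2$, so the single permutation $\psi$ must simultaneously move the variables of $y_2$, $y$ and $a_2$ away from those of $y_1$, $x$ and $a_1$, while preserving the partition $X_0 \cup X_1$. Once this finite variable avoidance is set up, each rearrangement step via \re{eq3} and \re{eq27} is essentially routine, but every application of \re{eq3} demands an explicit independence check against $\psi$, so writing the variable sets down concretely will be indispensable.
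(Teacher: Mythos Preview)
Your proof is correct, and your treatment of positivity, $\ell(\e)=0$, and monotone increasingness is essentially identical to the paper's.

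For subadditivity the paper takes a different route. Rather than constructing an admissible pair for $\ell([x]\na[y])$ by hand, it observes that once $y_1 \bot \phi(y_2)$ one has $L(y_1\ud\phi(y_2)) \ge \ell([y_1\ud\phi(y_2)]) = \ell([y_1]\na[y_2])$, and then works entirely at the level of $\ol S$: since $x_i \bot a_i$ gives $[y_i]=[x_i\ud a_i]=[x_i]\na[a_i]$, this equals $\ell([x_1]\na[x_2]\na[a_1]\na[a_2])$, and a single application of the already-proved monotonicity drops the $[a_i]$ to yield $\ge \ell([x_1]\na[x_2])$. This bypasses all the careful variable bookkeeping you anticipate (freshening $y_2,y,a_2$ simultaneously, verifying each independence needed for \re{eq3}) by pushing the combinatorics into the monotonicity step. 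Your direct construction is more self-contained---it does not rely on proving monotonicity first---but the paper's argument is shorter and shows that subadditivity is really monotonicity plus the trivial bound $\ell([s])\le L(s)$.
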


\begin{proof}

Let $\varepsilon > 0$. 
%Let
Given $x_1,x_2 \in S$, 
by definition \ref{def29} choose $y_1,y_2,a_1,a_2 \in S$ such that $\ell([x_i]) %\le 
\ge L(y_i)+\varepsilon$ where $x_i \ud a_i \equiv
y_i$ and $x_i \bot a_i$ ($i=1,2$).

Subadditivity: Choose any $\phi \in P_S$ such that 
$y_1 \bot \phi(y_2) $. Then 
$$\ell([x_1])+\ell([x_2])  +2 \varepsilon
\ge L( y_1 ) + L( \phi(y_2))
=  L( y_1 \ud \phi(y_2))  
\ge \ell([y_1 ) \ud \phi(y_2)])$$
$$= \ell ([y_1]  \na  [y_2])  %$$ 
%$$
=\ell ([x_1 \ud a_1]  \na  [x_2 \ud a_2])
=\ell ([x_1] \na  [x_2 ] \na [a_1]  \na [a_2]) 
\ge \ell([x_1] \na  [x_2 ])$$

where the last inequality follows from monotone increasingness,
which we prove next:
$$\ell([x] \na [a]) + \varepsilon =\ell([x \ud \phi(a)]) + \varepsilon
\ge L(y) \ge \ell([x])$$
where, given $x,a \in S$, $y,b \in S$ is chosen such that
$y \equiv x \ud \phi(a) \ud b$ 
and the before last inequality holds. 
 %Observe
%
\end{proof}

We often abbreviate the notion ``monotonically increasing'' %of 
for \re{eql3} simply by ``monotone'' or ``increasing''. 
Let $x \ad  y :=\min(x,y)$ and $x \ud y :=\max(x,y)$ denote the minimum and maximum, respectively, of two real numbers $x,y \in \R$. 

 The length function induces now two very natural
 notions of distance functions between sentences: 
 
\begin{definition}[``Metric'']			%[metrik]
				\label{def210}

{\rm

We define %a metrik 
two %``metric-like'' 
functions 
$d,\sigma : \ol S \times  \ol S %\cals \times \cals 
\rightarrow \R$ by  
$$d(s,t) = %\ell(s   \ud \phi(t)) 
\ell(s   \na t) - \ell(s) \ad \ell(t) $$
$$\sigma(s,t)=  2 \ell(s \na t) - \ell(s)  - \ell(t)$$
%$$\sigma(s,t)=  2 \ell(s \na t) - \ell(s) \ad \ell(t) - \ell(s) \ud \ell(t)$$

}
\end{definition}

We sometimes call these functions also more precisely $d_\ell:=d$
and $\sigma_\ell:=\sigma$.   
They have the characteristic of %a 
distance %-functions 
between elements of $\ol S$: 

As an illustrative example
let us recall that
in his famous work \cite{zbMATH02579754}, Dirac took the then known %$\R$-valued 
quantum numbers valued Klein-Gordon 
function %$\psi: \R^4 \rightarrow \R$ 
$\psi: \R^4 \rightarrow B(H)$ (operators on Hilbert space) defined on spacetime $\R^4$
and the Klein-Gordon differential equation $\Delta \psi - \partial_t \psi - m^2 \psi= 0 $ ($m \in \R$ constant)
and %formally 
took the formal square-root of the Klein-Gorden differential 
 operator $K:=\Delta - \partial_t - m^2$ and formed the Dirac differential operator 
 $D:=\sqrt{K}$ with constant coefficients in matrix algebra $M_4(\C)$, that is,
 $D^2 = K$, and proposed the physical equation $D \psi = 0$
 and along with it predicted correctly the existence
 %on 
 of particles and anti-particles %\cite{zbMATH02565629, 
 \cite{zbMATH03002233}. 
 This is a prototype example where an existing physical theory
 is slightly modified to obtain a  new theory. That is, the Klein-Gordon
 equation is close to the Dirac equation, also in our ``metric''. Let us compute this: 
 
\begin{example}				\label{ex213}

{\rm

The prototype example is physical theories. 
%more details. 
$S$ are the collections of all sentences, which are equations, formulas, function definitions and so on. Equivalence relation $\equiv$ is when two formulas are equivalent in a mathematical sense. The $\ud$ means ``and'' in the sense 
that this and the other formula/equation holds. 
The equivalence relation $\equiv$ has to be enriched by the axioms 
of definition \ref{defequ}.   
Main variables are those that cannot be deleted, like 
the electromagnetic field in Maxwell's equations, because it is what the theory is about. 
%file
%electromagenitic
%elctorm. 
The length function $L$ could be a simple character count. 
See also the beginning of section \ref{sec10} including 
example \ref{ex101} for further explanations. 
  
  For example, consider the free Klein-Gordon equation $\Delta \psi =  \partial_t^2 \psi + m^2 \psi$ and the Dirac equation $D \psi =0$, where
  $D^2 = \Delta - \partial_t^2 - m^2 =:K$ , and where in both cases
  $\psi$ is a main variable. 
  %caese
  %
  Then
  $$d([\mbox{Dirac}], [\mbox{Klein-Gordon}]) =
  %$$
    d([D \psi = 0 \ud D^2 = \Delta - \partial_t^2 - m^2],
  %d([D \psi = 0 \ud D^2 = K \ud K:= \Delta - \partial_t^2],
  [(\Delta - \partial_t^2- m^2) \psi = 0]) $$
 % \be{eq100}
    $$
    = \ell([D \psi = 0 \ud D^2 = \Delta - \partial_t^2 -m^2 
  %$$= \ell(D \psi = 0 \ud D^2 = K \ud K:= \Delta - \partial_t^2 
  \ud (\Delta - \partial_t^2 -m^2) \phi = 0]) 
  - \ell([(\Delta - \partial_t^2 -m^2) \psi = 0])
  $$
%  \en
  $$= \ell([D \psi = 0 \ud D^2 = K \ud K:= \Delta - \partial_t^2 
  -m^2 
  \ud K \phi = 0]) 
  - \ell([(\Delta - \partial_t^2  -m^2) \psi = 0])$$
  $$\approx 23 - 13 = 10$$ %,
  whereas $\ell([\mbox{Dirac}]) = 15$  %16$ %14$ 
  and $\ell([\mbox{Klein-Gordon}]) \approx 13$. 

  Here, $K$ and $D$ are auxiliary variables.
  If we had a longer formula of $K$ (i.e. %.. of the KG-equatin
  the Klein-Gordon operator) then the above distance $d(D,KG)$ would not change, but $\ell(D)$ and $\ell(KG)$ would increase. 

Note that we did not explicitly transform the variables $x_1,x_2,x_3,t$ in $(\Delta - \partial_t^2) \phi(x,t)$ 	%\re{eq100} 
as required by the $\na$-operation of definition \ref{def27} because they are %bounded 
bound variables which we can label %labelel 
 as we like without change, and so have back transformed. 
  
}  
  \end{example}	%{definition}

\begin{example}						\label{ex22}
{\rm
Another %interesting 
standard example is computer languages.   
%Computer language. 
For example $C$++.  %$C++$. 
Sentences are meaningful programs or code snippets.  
Two programs are equivalent, if they do the same,
for example, independent of time performance. 
Typically, the entry point of a program is a main variable. 
Set $L(p)$ %$\ell(p)$ 
to be the length of a program $p$. 
 Set $\ud$ %$\sqcup$ 
 to be a concatenation of programs. 

Variables are the typical variables in higher computer languages. 
The variable transformation 
in the  $\na$-operation of definition \ref{def27}
%in $\na$ 
ensures that the variables of two different programs do not overlap. 

The more two programs $s,t \in \ol S$ have in common, the lower 
$d(s,t)$ becomes. 

For example, if $s$ and $t$ have one common routine, then it is enough to keep one copy of them in $s \ud t$. 	%: 

}
\end{example}

%\begin{defintion}

%Given a 
%Let $(\ol S, \na)$ be a abelian, idempotent monoid. 

%{lem03}

Throughout, if nothing else is said, we %shall 
switch to the following general setting for $\ol S, \na$ and $\ell$. 
Only if we discuss something involving $S$ itself, 
we implicitly assume without saying that we use the above 
specific definitions of $\ol S,\na$ and $\ell$. 

\begin{definition}[More abstract definition]				\label{def211}

{\rm 

Let $(\ol S,\na,\e)$ be an abelian, idempotent monoid. 

%Let 
%$\ol S$ a set, equipped with idempotent, abelian monoid operation $\na$ and neutral element $\em$. 

We call $\ell : \ol S \rightarrow \R$ a length function 
%A length function $\ell : \ol S \rightarrow \R$ is called so
if the assertions of lemma \ref{lem03} hold for it. 

% and $\ell : \ol S \rightarrow \R$ 
%a positive length function in that the rules of lemma are satisfied.

}
\end{definition}

$\ell(x)$ is also called the %informtatio 
{\em information content} of $x \in \ol S$. 

Positivity of $\ell$ is also automatic by idempotence and subadditivity, as $\ell(x) = \ell(x \na x) \le  2 \ell(x)$.  %,   %. 
%or by monotone incresingness, as $0 = \ell(\e) \le \ell(x \na \e)$.  
%
Instead of the above specific metric candidates $d$ and $\sigma$
from above, we shall sometimes use the following generalization: 

\begin{definition}[Metric candidate]			 \label{def212}
%[More abstract metric candidate]
We call 
$d : \ol S \times \ol S \rightarrow \R$ a positive, symmetric, 
nilpotent 
 %shortcutting 
function if 
%satisfying
$$d(x,y) \ge 0 , \quad d(x,y) = d(y,x) , \quad d(x,x)=0$$

for all $x,y \in \ol S$. 

\end{definition}

$d$ is called a pseudometric on $\ol S$ if it is % (positive), 
positive, symmetric, nilpotent and satisfies 
%satisfy 
the $\Delta$-inequality (see definition \ref{def31}).

\section{$\Delta$-inequality}
%{Aspects of metrik}

The following section is a first discussion when 
the proposed distance functions $d$ and $\sigma$ of defintion 
\ref{def210} could indeed satisfy the well known triangle inequality 
(here also called $\Delta$-inequality). 
This is actually a main theme of this paper and first
satisfying answers are given in theorems \ref{thm61} 
and \ref{cor52}.

Let $(\ol S,\na,\e)$ be an  %a %abelain, 
abelian, idempotent monoid.

%Let us continue with the setting of definition \ref{def211}.

\begin{definition}		\label{def31}

{\rm 

A function 
$d: \ol S \times \ol S \rightarrow \R$ is said to satisfy the $\Delta$-inequality if
$$d( x ,z ) \le d(x,y) + d(y,z)$$
for all $x,y,z\in \ol S$ 
}
\end{definition}

\begin{lemma}

If a symmetric function $d: \ol S \times \ol S \rightarrow \R $ satisfies the $\Delta$-inequality then
also the second $\Delta$-inequality  	%: 
(i.e. second triangle inequality)
%zweite $\Delta$-ungl
$$|d(x,z) - d(z,y)| \le d(x,y)$$

\end{lemma}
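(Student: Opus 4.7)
The plan is to derive the second triangle inequality directly from the first, using only symmetry; no additional structure on $\ol S$ or $d$ is needed. The strategy is to obtain two one-sided bounds on the difference $d(x,z) - d(z,y)$ and then combine them into the absolute value statement.

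First I would apply the $\Delta$-inequality to the triple $(x,y,z)$ to get $d(x,z) \le d(x,y) + d(y,z)$. Using symmetry to rewrite $d(y,z) = d(z,y)$, this rearranges to $d(x,z) - d(z,y) \le d(x,y)$. Next I would apply the $\Delta$-inequality (and symmetry) to the triple $(z,x,y)$, obtaining $d(z,y) \le d(z,x) + d(x,y) = d(x,z) + d(x,y)$, which rearranges to $d(z,y) - d(x,z) \le d(x,y)$.

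Combining these two bounds gives $|d(x,z) - d(z,y)| \le d(x,y)$, which is the desired inequality. There is no real obstacle here: the argument is the same textbook one used to deduce the reverse triangle inequality from the usual triangle inequality on a metric space, and it does not depend on positivity or the nilpotence condition $d(x,x)=0$, only on symmetry together with the $\Delta$-inequality. The only minor care point is to apply symmetry in the right place so as to not accidentally invoke $d(y,x) = d(x,y)$ where it is not available — but since the hypothesis gives full symmetry, this is immediate.
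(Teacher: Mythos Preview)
Your proof is correct and is exactly the standard textbook derivation of the reverse triangle inequality from the triangle inequality plus symmetry. The paper itself states this lemma without proof, so there is nothing to compare against; your argument fills in the omitted (routine) verification cleanly.
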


To heuristically see 
%see heuristically 
that $d$ of definition \ref{def210} 
in the context of formal languages 
 could satisfy the $\Delta$-inequality 
we %make 
do the following considerations. 
Assume a sentence $s \in S$ is written in its shortest form
as $s\equiv s_1 \ud \ldots \ud s_n$ for $s_i \in S$, that is, such that $\ell([s]) \approx 
L(s_1 \ud \ldots \ud s_n) =L(s_1)+ \ldots L_(s_n)$. 
Similarly we do so for a given sentence $t\equiv t_1 \ud \ldots t_m$. 
  Many $s_i$ could be auxiliary function definitions, some of which may be identical to some $t_j$. 
  As a first approximation we may estimate
  $$\ell([s] \na [t])= \ell([s \ud \phi(t)])
  = \ell([s_1 \ud \ldots s_n \ud \phi(t_1 ) 
  \ud \ldots \phi(t_m)])$$
 $$ =\ell([s_1 \ud \ldots s_n\ud \psi(t_1 ) 
  \ud \ldots \psi(t_k)]) 
  \le L(s_1) + \ldots + L(s_n) + L(t_1 ) + \ldots +L(t_k)   
  $$
  
  where in the last line we have dropped identical auxiliary function definitions %$s_i$ and 
  $\phi(t_{k+1}) , \ldots , \phi(t_m)$ which appear already in the $s_i$s, and so instead of calling $\phi(t_j)$ we may call $s_i$ and that is why we %made 
  had to introduce another transformation $\psi$ and dropped the superfluous $\phi(t_j)$ by \re{eq9}.  
  In a certain sense, we formed the set theoretical union of
  $\{s_1,\ldots,s_n\}$ and $\{t_1,\ldots,t_m\}$. 
  
  If we sloppily replace the above $\le$ by $\approx$ and view $L$
  as a measure, then the $\Delta$-inequality holds:
  
%  But for sets and measures, the $\Delta$-inequality holds: 
  
 %  which overlap 

\begin{lemma}			\label{lemma11}
Let $(X,\cala,\mu)$ be a measured space. 
Set $\cala_0 := \{A \in \cala|\, \mu(A) < \infty\}$ 
to  be the collection of finite, measurable sets. 
 %
%, where $\cala$ denotes 
%the collection of all measurable subsets of $X$  
%Let $\calp$ be a set consisting of measurable subsets of $X$. 
Then 
\begin{eqnarray*}
d(A,B)   &:=&  \mu( A \cup B ) - \mu(A)  \wedge \mu(B) 
% \\
%$$
%&=&   %\max(|A \backslash B|, |B\backslash A|)
%\mu(A \backslash ) \ud   \mu (B\backslash A)
\end{eqnarray*}
$$\sigma(A,B):= 2 \mu( A \cup B ) - \mu(A)  -\mu(B)$$
where $A,B \in \cala_0$, %defines a metric 
define metrics on $\cala_0$.

\end{lemma}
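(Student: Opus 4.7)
The plan is to verify symmetry, vanishing on the diagonal, positivity, and the $\Delta$-inequality, which is the only substantive step.

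Symmetry is immediate from $A\cup B=B\cup A$, and $d(A,A)=\sigma(A,A)=0$ by direct substitution. Positivity of $d$ follows from $\mu(A\cup B)\ge \mu(A)\vee\mu(B)\ge \mu(A)\wedge\mu(B)$, and for $\sigma$ from the splitting $\sigma(A,B)=(\mu(A\cup B)-\mu(A))+(\mu(A\cup B)-\mu(B))$.

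The core of the argument is a single submodular estimate. Applied to the sets $A\cup B$ and $B\cup C$, the inclusion--exclusion identity yields
$$\mu(A\cup B)+\mu(B\cup C)=\mu(A\cup B\cup C)+\mu((A\cup B)\cap(B\cup C))\ge \mu(A\cup C)+\mu(B),$$
using $A\cup C\subseteq A\cup B\cup C$ and $B\subseteq (A\cup B)\cap(B\cup C)$. Rearranging immediately gives $\sigma(A,C)\le\sigma(A,B)+\sigma(B,C)$. For $d$, the same estimate reduces the $\Delta$-inequality to the purely real-variable claim
$$b+(a\wedge c)\ge (a\wedge b)+(b\wedge c),\qquad a=\mu(A),\;b=\mu(B),\;c=\mu(C).$$
I would derive this from the identity $x\wedge y=\tfrac12(x+y-|x-y|)$, under which the inequality collapses to the triangle inequality $|a-b|+|b-c|\ge |a-c|$ on $\R$; a brief case analysis on the ordering of $a,b,c$ also works.

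The one genuinely delicate point is the separation axiom $d(A,B)=0\Rightarrow A=B$ (and likewise for $\sigma$). What the definitions actually force is $\mu(A\cup B)=\mu(A)=\mu(B)$, so $\mu(A\setminus B)=\mu(B\setminus A)=0$; strictly speaking $d$ and $\sigma$ are therefore pseudometrics, and become genuine metrics on $\cala_0$ only after identifying measurable sets which agree up to a $\mu$-null set (or when $\mu$ separates points of $\cala_0$, e.g.\ counting measure). I would flag this as the implicit convention under which the statement should be read; once it is adopted, everything else reduces to the submodular lemma and the elementary min-inequality above.
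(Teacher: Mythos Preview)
Your proof is correct, and in fact cleaner than the paper's. The paper proceeds by a direct case split on the ordering of $\mu(X),\mu(Y),\mu(Z)$: in each of the three cases it rewrites $d(X,Z)\le d(X,Y)+d(Y,Z)$ via the identity $\mu(X\cup Z)-\mu(X)=\mu(Z)-\mu(Z\cap X)$ until it reduces to an inequality of the form $\mu(X\cap Y)+\mu(Z\cap Y)\le \mu(Z\cap X)+(\text{some term})$, which is then verified by the elementary bound $\mu(X\cap Y)+\mu(Z\cap Y)\le \mu(Y)+\mu(X\cap Y\cap Z)$. It treats only $d$ explicitly.

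Your route isolates a single submodular estimate $\mu(A\cup B)+\mu(B\cup C)\ge \mu(A\cup C)+\mu(B)$, which is exactly the criterion of Lemma~\ref{lem16} specialized to $\ell=\mu$, so the $\sigma$-case falls out in one line. For $d$ you then separate the set-theoretic content from a purely order-theoretic residue $b+(a\wedge c)\ge (a\wedge b)+(b\wedge c)$, which you dispatch via $x\wedge y=\tfrac12(x+y-|x-y|)$ and the triangle inequality on $\R$. This avoids the explicit case analysis and makes transparent why the same lemma drives both metrics; the paper's argument, by contrast, is self-contained and does not appeal to the abstract criterion. Your remark that separation holds only up to $\mu$-null sets is correct and worth stating; the paper does not address this point.
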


\begin{proof}

Generally we have 
\be{eq37}
%$$
\mu(X \cup Z) - \mu(X) 
= \mu(Z \backslash X)		= \mu(Z )-\mu(Z \cap X)
%$$
\en

We only prove the $d$-case. 
(i)
Assume $\mu(Y) \le \mu(X) \le \mu(Z)$. 
Then we have equivalences  
$$\mu(X \cup Z) - \mu(X) \le \mu(X \cup Y)  + \mu (Y \cup Z) 
- 2\mu(Y) $$
%$$\mu(Z \backslash X)  \le \mu(X \backslash Y)  + \mu (Z \backslash Y)  $$
$$\mu(Z )-\mu(Z \cap X)  \le \mu(X)- \mu(X \cap Y)  + 
\mu(Z) - \mu (Z \cap Y)  $$
\begin{equation} 		\label{eq3b}
\mu(X \cap Y)  + 
\mu (Z \cap Y)  \le \mu(Z \cap X)  +  \mu(X) 
% $$
\end{equation}
But this is true because in general we have
$$\mu(X \cap Y)  + 
\mu (Z \cap Y) \le \mu(Y) + \mu (X\cap Y \cap Z)$$

(ii)
Assume $\mu(X) \le \mu(Y) \le \mu(Z)$.   
Doing the same as above leads to \re{eq3b} but with 
$\mu(X)$ replaced with $\mu(Y)$ and this is then also 
true by the above argument.

%$$\mu(X \cup Z) - \mu(X) \le \mu(X \cup Y)  + \mu (Y \cup Z) 
%- \mu(X) - \mu(Y) $$
%$$\mu(Z )-\mu(Z \cap X)  \le \mu(Y)- \mu(X \cap Y)  + 
%\mu(Z) - \mu (Z \cap Y)  $$

(iii)
Assuming $\mu(X) \le \mu(Z) \le \mu(Y)$ 
leads to \re{eq3b} %again.
but with 
$\mu(X)$ replaced with $2 \mu(Y) - \mu(X)$ and this is then also 
true. % by the above argument. 
%the case  
%
%
\end{proof}

%Con
Actually $d$ is like the maximum's norm or $\ell^\infty$-norm,
and $\sigma$ like the $\ell^1$-norm in $\R^n$ or space of real-valued sequences: 

\begin{lemma}				\label{lem12}

Continuing the setting of the last lemma, we have $d\le \sigma$ and 
$$d(A,B) = \max( \mu(A \backslash B),  \mu(B\backslash A))$$
$$\sigma(A,B)= \mu(A \backslash B) +   \mu (B\backslash A)$$
%$$d \le \sigma$$
\end{lemma}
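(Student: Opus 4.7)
The plan is to unfold both expressions using the disjoint decomposition
$A \cup B = (A\backslash B) \sqcup (B\backslash A) \sqcup (A\cap B)$
together with the identities
$\mu(A) = \mu(A\backslash B) + \mu(A\cap B)$ and $\mu(B) = \mu(B\backslash A) + \mu(A\cap B)$,
all of which are immediate from finite additivity on $\cala_0$.

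First I would handle $\sigma$, since it is purely algebraic: substituting the three expressions above into $2\mu(A\cup B) - \mu(A) - \mu(B)$ causes the $\mu(A\cap B)$ terms to cancel in a single line, leaving exactly $\mu(A\backslash B) + \mu(B\backslash A)$.

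Next I would tackle $d$. By the symmetry $d(A,B)=d(B,A)$ I can assume $\mu(A) \le \mu(B)$, so that $\mu(A)\wedge\mu(B) = \mu(A)$. Applying identity \re{eq37} from the previous proof (with $X=A$, $Z=B$) gives $d(A,B) = \mu(A\cup B) - \mu(A) = \mu(B\backslash A)$. It remains to check that under the same assumption $\mu(B\backslash A) = \max(\mu(A\backslash B), \mu(B\backslash A))$; but subtracting $\mu(A\cap B)$ from both sides of $\mu(A) \le \mu(B)$ shows $\mu(A\backslash B) \le \mu(B\backslash A)$, so the maximum is indeed attained on the second term.

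Finally, the inequality $d \le \sigma$ is just the elementary observation that $\max(x,y) \le x+y$ for nonnegative reals $x,y$, applied to $x = \mu(A\backslash B)$ and $y = \mu(B\backslash A)$. I do not anticipate any real obstacle here; the only small care point is the case split on which of $\mu(A),\mu(B)$ is smaller, and that is absorbed by an appeal to the symmetry of $d$.
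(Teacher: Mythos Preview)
Your proposal is correct and essentially matches the paper's approach: the paper simply says ``This is obvious by \re{eq37}'', and you have spelled out exactly how \re{eq37} (together with the standard disjoint decomposition of $A\cup B$) yields both formulas and the inequality $d\le\sigma$. There is nothing to add.
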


This is obvious by \re{eq37}. 
Actually, $\sigma$ as just formulated is %a 
well-known to be a metric, $d$ %may 
might be less used. 
We continue by imposing definitions \ref{def211} and \ref{def210}.

%Let us be given an abstract ablelian idempotent monoid $\ol S$ with a length function as stated in defintion \ref{def211}, and consider
%$d$ and $\sigma$ of definition \ref{def212}. 
%Then

Switch back to the notions of definitions \ref{def211}
and \ref{def212} for $\ol S, \ell, d$ and $\sigma$. 

\begin{lemma}				\label{lem34}

For all $x,y \in \ol S$ we have %heave 
%$\forall x,y$   
%,Y$  with $x \bot Y$
%
$$\ell(x) \ud \ell(y) \le \ell( x \na y)  $$
%
%$$%\ell (x) \le 
%\ell(x) \ud \ell(y) \le \ell( x \ud y)  \qquad if \,x \bot y$$
%
%$$\ell( x \ud y)   \le \ell(x) + \ell(y)$$
%
$$|\ell(x)-\ell(y)| \le d(x,y) \le  \ell(x) \ud \ell(y)$$
$$|\ell(x)-\ell(y)| \le \sigma(x,y) \le  \ell(x) + \ell(y)$$
\begin{equation}			\label{eq7}
d(x,y) \le \sigma(x,y) \le 2 d(x,y)
\end{equation}
$$d(x,\emptyset)= \sigma(x,\emptyset)=\ell(x)$$
%

%Setting $y=\em$,
%all  estimates are seen to be sharp.

%All estimates are sharp by %setting 
%$y=\em$.  ausser $\sigma \l 2 d$

%all estimates are sharp(?)

\end{lemma}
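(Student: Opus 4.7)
The plan is to derive all five inequalities directly from the three defining properties of a length function in Lemma \ref{lem03} (positivity with $\ell(\emptyset)=0$, subadditivity, monotone increasingness) together with the definitions of $d$ and $\sigma$ in Definition \ref{def210}. No tricky case analysis is needed; each claim reduces to a one-line algebraic simplification after splitting on which of $\ell(x)$, $\ell(y)$ is larger.

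First I would establish the top inequality $\ell(x)\ud \ell(y)\le \ell(x\na y)$: this is just monotone increasingness \eqref{eql3} applied twice, once as stated and once after swapping $x,y$ using commutativity of $\na$ (Lemma \ref{lem02}). Next, for the bounds on $d$ and $\sigma$, I would without loss of generality assume $\ell(x)\le \ell(y)$, so that $\ell(x)\ad \ell(y)=\ell(x)$ and $\ell(x)\ud \ell(y)=\ell(y)$ and $|\ell(x)-\ell(y)|=\ell(y)-\ell(x)$. Then the four bounds collapse to:
\begin{itemize}
\item $d(x,y)-(\ell(y)-\ell(x))=\ell(x\na y)-\ell(y)\ge 0$ by monotonicity;
\item $\ell(y)-d(x,y)=\ell(x)+\ell(y)-\ell(x\na y)\ge 0$ by subadditivity;
\item $\sigma(x,y)-(\ell(y)-\ell(x))=2(\ell(x\na y)-\ell(y))\ge 0$ by monotonicity;
\item $\ell(x)+\ell(y)-\sigma(x,y)=2(\ell(x)+\ell(y)-\ell(x\na y))\ge 0$ by subadditivity.
\end{itemize}

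For \eqref{eq7}, I would compute the two differences algebraically without any case split:
$$\sigma(x,y)-d(x,y)=\ell(x\na y)-\bigl(\ell(x)+\ell(y)-\ell(x)\ad \ell(y)\bigr)=\ell(x\na y)-\ell(x)\ud \ell(y)\ge 0$$
by the first inequality of the lemma (using $a+b-a\ad b=a\ud b$), and
$$2d(x,y)-\sigma(x,y)=\ell(x)+\ell(y)-2\bigl(\ell(x)\ad \ell(y)\bigr)=|\ell(x)-\ell(y)|\ge 0.$$
Finally, $d(x,\e)=\ell(x\na \e)-\ell(x)\ad \ell(\e)=\ell(x)-0=\ell(x)$ and analogously $\sigma(x,\e)=2\ell(x)-\ell(x)-0=\ell(x)$, using $x\na \e=x$ (neutral element) and $\ell(\e)=0$.

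I do not expect any real obstacle: each inequality is a direct consequence of one of the three axioms, and the only mildly delicate point is keeping track of the $\ad/\ud$ identities (particularly $a+b-a\ad b=a\ud b$ and $a+b-2(a\ad b)=|a-b|$), which handle the case analysis uniformly and avoid the need to write out ``WLOG $\ell(x)\le \ell(y)$'' separately for each assertion.
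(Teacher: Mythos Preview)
Your proof is correct and follows essentially the same approach as the paper: the paper also invokes monotone increasingness \eqref{eql3} for the first line, assumes without loss of generality $\ell(x)\le\ell(y)$ to handle the $d$-bounds, and remarks that the $\sigma$-bounds follow similarly together with \eqref{eq7}. Your write-up is simply more explicit, in particular your use of the identities $a+b-a\ad b=a\ud b$ and $a+b-2(a\ad b)=|a-b|$ to dispose of \eqref{eq7} without a case split is a clean touch the paper leaves implicit.
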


\begin{proof}

The first relation follows from \re{eql3}. 
%The second one 
If $\ell(x) \le \ell(y)$ then $|\ell(x) -\ell(y)| = \ell(y) -\ell(x) \le 
\ell(x \na  y) - \ell(x) = d(x,y) \le \ell(x) + \ell(y) -\ell(x)=\ell(y)$ 
by \re{eql3}, showing the second line.
The third line follows from the similarly proved  \re{eq7} and \re{eql3}.
\end{proof}

This criterion for the triangle inequality will be 
extensively used in section \ref{sec6}:

\begin{lemma}			\label{lem16}

$\sigma$ satisfies   
%satisfied 
the $\Delta$-inequality if and only if 
%$\sigma$ is a metrik  $\Delta$-inequ  if and only if 
%$\delta$ 
%
$$\ell(x \na z) + \ell(y) \le \ell(x \na y) + \ell(y \na z)$$

%$$\ell(x \ud z) + \ell(y) \le \ell(x \ud y) + \ell(y \ud z)$$
 
 for all $x,y,z \in \ol S$. 
 
 Moreover, this equivalence is true for any given function $\ell :\ol S \rightarrow \R$. 
 
 %equilance 
 
% for all $x,y,z$ such that they are mutually independent from each %other 
\end{lemma}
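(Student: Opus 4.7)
The proof is a direct algebraic unwinding of the definition of $\sigma$. The plan is to write out $\sigma(x,z) \le \sigma(x,y) + \sigma(y,z)$ explicitly using $\sigma(s,t) = 2\ell(s \na t) - \ell(s) - \ell(t)$, cancel the $\ell(x)$ and $\ell(z)$ terms that appear on both sides, divide by $2$, and rearrange so that $\ell(y)$ moves to the left.

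In more detail, I would expand the right-hand side to $2\ell(x\na y) - \ell(x) - \ell(y) + 2\ell(y\na z) - \ell(y) - \ell(z)$ and the left-hand side to $2\ell(x \na z) - \ell(x) - \ell(z)$. Subtracting $-\ell(x) - \ell(z)$ from both sides leaves the inequality $2\ell(x\na z) \le 2\ell(x\na y) + 2\ell(y\na z) - 2\ell(y)$, equivalently $\ell(x\na z) + \ell(y) \le \ell(x\na y) + \ell(y\na z)$. Since every step here is a reversible equivalence between real-number inequalities, the two conditions are equivalent for every $x,y,z \in \ol S$.

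For the ``moreover'' clause, I would observe that nowhere in this chain of equivalences did I use positivity, subadditivity, monotonicity, or any of the length-function axioms of Lemma \ref{lem03}; the manipulations are purely algebraic in the real values $\ell(x\na y)$, $\ell(y\na z)$, $\ell(x\na z)$, $\ell(x)$, $\ell(y)$, $\ell(z)$. Hence the equivalence holds for an arbitrary function $\ell : \ol S \to \R$ (with $\na$ any binary operation on $\ol S$), which is what the statement claims.

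There is no real obstacle here; the only thing to be slightly careful about is keeping track of the factor of $2$ and making sure that after cancellation nothing hidden remains. In particular, no use is made of commutativity, idempotence, or the neutral element, so the lemma is robust under weakening of the monoid structure.
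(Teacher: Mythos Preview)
Your proof is correct and follows essentially the same approach as the paper: write out the $\Delta$-inequality for $\sigma$, cancel the $\ell(x)$ and $\ell(z)$ terms, and divide by $2$. Your additional remark that none of the length-function axioms are used is exactly what justifies the ``moreover'' clause.
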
 

\begin{proof}
Divide the $\Delta$-inequality spelled-out by $2$ to see the equivalence:  
$$2 \ell(x \na z) - \ell(x) - \ell(z)  
\le 2 \ell(x \na y) - \ell( x) -\ell(y) +  
2\ell (y \na z) - \ell(y) - \ell(z)$$
  
\end{proof}

%\begin{lemma}
%
%$x \equiv y$ 

%Actually, if $\sigma := \sigma_\ell$ satisfies the $\Delta$-inequality
%for any $\ell$ with $\ell(\e)0$ then $\ell$ must be a length function:
Interestingly,  if $\sigma$ satisfies the triangle inequality then $\ell$
must be a length function: 
  
\begin{lemma}

Given any function $\ell: \ol S \rightarrow \R$, if $\sigma$
%satifies 
satisfies the $\Delta$-inequality, then $\ell$ is monotonically increasing.

Moreover it is  positive and 
subadditive if $\ell(\e) \ge 0$, and even a 
length function  (def. \ref{def211}) if $\ell(\e)=0$.

\end{lemma}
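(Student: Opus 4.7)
The plan is to exploit Lemma \ref{lem16}, which does not require $\ell$ to be a length function and tells us that the $\Delta$-inequality for $\sigma$ is equivalent to
$$\ell(x \na z) + \ell(y) \le \ell(x \na y) + \ell(y \na z) \qquad \text{for all } x,y,z \in \ol S.$$
Everything will come from judicious specialization of $x,y,z$ in this single inequality, together with the monoid identities $x \na \e = x$ and idempotency $x \na x = x$.

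First I would prove monotonicity. Substituting $z := x \na y$ into the displayed inequality, the left-hand side becomes $\ell(x \na x \na y) + \ell(y) = \ell(x \na y) + \ell(y)$ by idempotency, while the right-hand side becomes $\ell(x \na y) + \ell(y \na x \na y) = 2\ell(x \na y)$. Cancelling $\ell(x \na y)$ yields $\ell(y) \le \ell(x \na y)$, which is the monotonicity relation \re{eql3} (after renaming, using commutativity of $\na$).

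Next, assuming $\ell(\e) \ge 0$, monotonicity immediately gives positivity: $0 \le \ell(\e) = \ell(\e \na x) \ge \ell(\e)$ is not quite what I want; rather, by monotonicity applied with the roles of $x$ and $\e$, I get $\ell(\e) \le \ell(\e \na x) = \ell(x)$, so $\ell(x) \ge \ell(\e) \ge 0$. For subadditivity, specialize $y := \e$ in the main inequality: the left-hand side is $\ell(x \na z) + \ell(\e)$, the right-hand side is $\ell(x \na \e) + \ell(\e \na z) = \ell(x) + \ell(z)$, yielding
$$\ell(x \na z) \le \ell(x) + \ell(z) - \ell(\e) \le \ell(x) + \ell(z),$$
which is \re{eql2}.

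Finally, if $\ell(\e) = 0$ then \re{eql1} holds, \re{eql2} and \re{eql3} have just been established, and positivity is automatic, so $\ell$ satisfies all the conditions of Lemma \ref{lem03} and is therefore a length function in the sense of Definition \ref{def211}. I do not expect any real obstacle here: the argument is essentially a search for the right substitutions in a single inequality, and the choice $z := x \na y$ (respectively $y := \e$) is forced by the need to collapse one side using idempotency (respectively the neutral element).
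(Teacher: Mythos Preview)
Your proof is correct and follows essentially the same route as the paper: both rely on Lemma \ref{lem16} and specialize the inequality $\ell(x \na z) + \ell(y) \le \ell(x \na y) + \ell(y \na z)$. The only cosmetic difference is that for monotonicity the paper sets $z := \e$ (using the neutral element alone) while you set $z := x \na y$ (using idempotency); both substitutions work, and the remaining steps (subadditivity via $y := \e$, positivity via $0 \le \ell(\e) \le \ell(x)$) are identical.
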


\if 0
\begin{lemma}

Given any function $\ell: \ol S \rightarrow \R$, if $\sigma$
satifies the $\Delta$-inequality, then $\ell$ is positive, 
subadditive, and monotone increasing.  %, %. 
%so is a length function (def. \ref{def211}) if $\ell(\e)=0$.

In particular, it is a length function if $\ell(\e)=0$ (def. \ref{def211}).

\end{lemma}
\fi

\begin{proof}

Set $z=\e$ in lemma \ref{lem16} to get monotonically increasingness,
$y=\e$ to get subaddtivity, and positivity by $0 \le \ell(\e) \le \ell(x)$. 
%was remarked after 
%definition \ref{def211}. 
%
\end{proof}

\section{$\na$-inequality}		%{$\nabla$-ungl }

An %overly 
important ingredient of our theory is the $\na$-inequality 
defined and first discussed in this section. 
It is fundamental in the proof of lemma \ref{lem511}, and thus
for the whole section \ref{sec7}, theorem \ref{lem511} and
corollary \ref{cor724}.

Let $(\ol S,\na,\e)$ be %a abelain, 
an abelian, idempotent monoid.

%In this section assume the setting of definition \ref{def211}. 
 
\begin{definition}				\label{def41}

{\rm 
A function 
$d: \ol S \times \ol S \rightarrow \R$ is said to satisfy the $\na$-inequality if
$$d( x \na y,  a \na b ) \le d(x,a) + d(y,b)$$
for all $x,y,a,b \in \ol S$

} 
\end{definition}

This may be heuristically compared with normed vector spaces by setting
$d(x,y):= \|x-y\|$, $x \na y:= x+y$ and having
$$\|x+y - a -b\|\le \|x-a\| + \|y-b\|$$

Going back to the remark before lemma \ref{lemma11} and the setting of this lemma, we heuristically argue that 
the $\na$-equality may hold:  
%lemma

\begin{lemma}					\label{lem41}

Let $d$ be as in lemma \ref{lemma11}. 
Then 
$$d( A \cup B,  C \cup D ) \le d(A,C) + d(B,D)$$
\end{lemma}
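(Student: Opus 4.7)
The plan is to reduce the inequality to a simple set–inclusion argument by exploiting the characterization of $d$ established in Lemma \ref{lem12}, namely
$$d(X,Y) = \max(\mu(X\setminus Y), \mu(Y\setminus X)).$$
Applying this to $d(A\cup B, C\cup D)$, the task becomes bounding each of $\mu((A\cup B)\setminus(C\cup D))$ and $\mu((C\cup D)\setminus(A\cup B))$ by $d(A,C) + d(B,D)$.

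The key observation is the elementary set inclusion
$$(A\cup B)\setminus(C\cup D) \;\subseteq\; (A\setminus C)\cup(B\setminus D),$$
which one verifies by case analysis: any point of the left side lies either in $A$ (hence in $A\setminus C$ since it is outside $C\cup D$) or in $B$ (hence in $B\setminus D$). By monotonicity and finite subadditivity of $\mu$, this yields
$$\mu\bigl((A\cup B)\setminus(C\cup D)\bigr) \le \mu(A\setminus C) + \mu(B\setminus D) \le d(A,C) + d(B,D),$$
where the second step uses that each set-difference measure is dominated by the corresponding max in Lemma \ref{lem12}. A completely symmetric argument swapping the roles of $(A,B)$ and $(C,D)$ gives the same upper bound for $\mu((C\cup D)\setminus(A\cup B))$. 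Taking the maximum of the two estimates then gives the desired $\na$-inequality.

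I do not expect any real obstacle here: once Lemma \ref{lem12} is in hand the entire content of the proof is the one set inclusion above plus subadditivity of $\mu$, and the needed measures are finite because $\cala_0$ consists of finite-measure sets so everything stays in $[0,\infty)$. Had we instead tried to route the argument through $\sigma$ and the comparison $d\le \sigma$ from \eqref{eq7}, we would only have recovered a weaker bound (with a factor of $2$), which is why working directly with the maximum representation of $d$ is the right path.
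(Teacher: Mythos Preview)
Your proof is correct and shares the same core idea as the paper's: both reduce the inequality to the set inclusion $(A\cup B)\setminus(C\cup D)\subseteq (A\setminus C)\cup(B\setminus D)$ and then bound the measure of the right side by subadditivity. The only difference is packaging: the paper works from the original definition of $d$, assumes without loss of generality that $\mu(A\cup B)\le\mu(C\cup D)$, and then does a short case analysis on the relative sizes of $\mu(A),\mu(C)$ and $\mu(B),\mu(D)$ to compare $\mu(C\setminus A)+\mu(D\setminus B)$ with $d(A,C)+d(B,D)$, whereas you invoke the $\max$-formula of Lemma~\ref{lem12} directly, which makes that case analysis unnecessary and handles both symmetric halves at once.
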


\begin{proof}

Write $|A|:=\mu(A)$. 
We have to show that 
\be{eq38}
%$$
%d(A \cup B,  C \cup D) =
|A \cup B \cup C \cup D| - |A \cup B| \ad | C \cup D|
%$$
\le |A \cup C |  -  |A| \ad | C| 
  + |B \cup D| - |B| \ad | D|
\en

% |A \cup B \cup C \cup D| - |A \cup B| \ad | C \cup D|$$
%$$|A \cup C | + |B \cup D| -  |A| \ad | C| 
%-  |B| \ad | D|$$

By symmetry, just exchange $A \leftrightarrow C$ and
$B \leftrightarrow D$, it is enough to consider 
the case $|A \cup B| \le | C \cup D|$. 

%Notice that in general we have
%$$|X \cup Y|  - |X| = |Y \backslash X|$$

Recalling \re{eq37}, 
the left hand side of \re{eq38} is 
%The first line then is
$$|(C \cup D) \backslash (A \cup B)| = |(C \cup D )\cap  A^c \cap   B^c| $$
$$\le |C \cap  A^c | + |D \cap   B^c|
= | C \backslash A| + |D \backslash B|  $$
%= |A \cup C |  -  |A|  
%  + |B \cup D| - |B| $$

Case %: 
$|A| \le |C| , |B| \le |D|$: 
In this case, the last inequality verifies \re{eq38} by \re{eq37}.

%In this case, the second line is
%$$|C \backslash A| + |D \backslash B|$$

Case %: 
$|A| \le |C| , |B| \ge |D|$: 
In this case, $|D \backslash B| \le  |B \backslash D|$. 
Hence the last inequality verifies \re{eq38}. 
%
%In this case, the second line is
%$$|C \backslash A| + |B \backslash D|$$
%
%But the assumptions implies that $|D \backslash B| \le  |B \backslash D|$. 
%
The third case is analogous. 
\end{proof}

\begin{lemma}			\label{lem42}

If %an idempotent 
a nilpotent function $d$ satisfies the $\na$-inequality then ($\forall x,y,z \in \ol S$)  
\be{eq70}
%$$
d(x \na z, y \na z) \le d(x,y)
%$$
\en

\end{lemma}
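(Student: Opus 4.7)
The plan is to apply the $\na$-inequality directly with a carefully chosen substitution, and then collapse one of the resulting terms using the nilpotence hypothesis. Specifically, I would substitute $a := y$ and $b := z$ into the $\na$-inequality of Definition \ref{def41}, which (after relabeling the roles of $y$ and the second argument to avoid clashing with the statement of the lemma) yields
\[
d(x \na z, y \na z) \le d(x, y) + d(z, z).
\]
Since $d$ is assumed to be nilpotent, $d(z,z) = 0$, and the right-hand side immediately reduces to $d(x,y)$, which is what we want.

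There is essentially no obstacle here; the step boils down to picking the right specialization of the four-argument inequality so that one of the two distance terms on the right becomes a self-distance and hence vanishes. The only care needed is notational: the variable named $y$ in the statement of the lemma plays the role of the variable named $a$ in Definition \ref{def41}, so one has to be explicit about the substitution to avoid confusion between the two uses of the letter.

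Once written out, the proof is literally two lines: invoke the $\na$-inequality with the substitution above, then apply $d(z,z)=0$. No use of the $\Delta$-inequality, symmetry, or monoid axioms (beyond the formal presence of $\na$) is required, which is consistent with the lemma being stated for a general nilpotent $d$ satisfying the $\na$-inequality rather than for the specific $d$ built from a length function.
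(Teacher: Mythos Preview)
Your proof is correct and is exactly the intended argument: the paper states this lemma without proof, treating it as immediate from the $\na$-inequality specialized so that one term becomes $d(z,z)=0$.
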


We say that a function $d$ satisfies the %(very) 
{weak $\na$-inequality}
if it satisfies the inequality \re{eq70},  
and only the very weak one, if this holds only for $x \le y$
(see definition \ref{deforder}). 
%(very: only for $x\le y$). 

We finally remark that given a (possibly infinite) sequence of pseudometrics in our setting, we may sum %up 
them up 
%(possibly to an infinite sum) 
to get possibly a more faithful 
pseudometric:  
%then 
\begin{lemma}

The set of length functions on $\ol S$ %are 
is a positive cone, i.e. 
if $\ell_1,\ell_2$ are length functions, then also $\lambda_1 \ell_1 + \lambda_2 \ell_2$ for constants $\lambda_1,\lambda_2 \ge 0$ in $\R$. Similarly, the functions $d$ satisfying the $\na$-inequality 
(or the $\Delta$-inequality, or those of the form $\sigma= \sigma_\ell$) 
% $d=d_\ell$) 
are a positive cone, %. 
and one has
$\sigma_{\lambda_1 \ell_1 + \lambda_2 \ell_2} = 
\lambda_1 \sigma_{\ell_1} + \lambda_2 \sigma_{\ell_2}$. 
%$

\end{lemma}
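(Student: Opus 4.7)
The plan is to verify each closure property by direct substitution, exploiting the fact that all the relevant conditions are given by linear inequalities (or equalities) in $\ell$ and $d$, so they are stable under non-negative linear combinations. The assumption $\lambda_1,\lambda_2\ge 0$ is essential precisely because it preserves the direction of inequalities.

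First I would handle the length-function cone. Writing $\ell := \lambda_1\ell_1 + \lambda_2 \ell_2$, the axioms from lemma \ref{lem03} are checked one by one: $\ell(\emptyset) = \lambda_1\ell_1(\emptyset) + \lambda_2\ell_2(\emptyset) = 0$ by \re{eql1} for each $\ell_i$; for subadditivity, \re{eql2} for each $\ell_i$ gives $\ell_i(x\na y)\le \ell_i(x)+\ell_i(y)$, which after multiplying by $\lambda_i\ge 0$ and summing yields $\ell(x\na y) \le \ell(x)+\ell(y)$; monotonicity \re{eql3} is analogous. Positivity then follows from the remark after definition \ref{def211} (idempotence plus subadditivity), or simply from $\lambda_i\ell_i\ge 0$.

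Next I would treat the $d$-cones. For the $\Delta$-inequality: if $d_1,d_2$ satisfy definition \ref{def31} and $d:=\lambda_1 d_1 + \lambda_2 d_2$, then adding the two inequalities $d_i(x,z)\le d_i(x,y)+d_i(y,z)$ after scaling by $\lambda_i\ge 0$ gives the $\Delta$-inequality for $d$. The $\na$-inequality of definition \ref{def41} is handled identically, scaling and adding. For the set of functions of the form $\sigma_\ell$, the cleanest route is to prove the stated identity $\sigma_{\lambda_1\ell_1+\lambda_2\ell_2} = \lambda_1\sigma_{\ell_1}+\lambda_2\sigma_{\ell_2}$ by unfolding the defining formula
\[
\sigma_\ell(x,y) = 2\ell(x\na y) - \ell(x) - \ell(y)
\]
from definition \ref{def210}: each of the three terms is linear in $\ell$, so the identity is immediate. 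The cone property for $\sigma$-type functions then follows as a corollary.

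There is essentially no obstacle here; the only point that requires a moment's attention is that closure under scaling requires $\lambda_i\ge 0$, since e.g.\ multiplying a subadditivity inequality by a negative constant would reverse it. A minor sanity check worth mentioning is that for the $\sigma$-identity one does \emph{not} need the $\ell_i$ to be length functions: the formula is a pure algebraic identity valid for arbitrary real-valued $\ell_i$, so the last statement of the lemma holds unconditionally and in particular makes the cone of $\sigma_\ell$-functions well defined via the first part.
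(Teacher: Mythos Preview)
Your proof is correct; each claim is indeed a direct linear/convex combination argument, and your observation that the $\sigma$-identity is a pure algebraic fact is a nice touch. The paper states this lemma without proof, so there is nothing to compare against beyond the obvious direct verification you have given.
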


\section{Order relation}

In this section we consider a natural order relation
on an abelian idempotent monoid induced by the 
monoid operation. 
This is not only important in various occasions in the theory
like in the monotone increasingness of $\ell$ and in various
aspects in the next section, but also a fundamental relation 
for information itself. It says when a 
%possibly very 
theory is information-theoretically contained in another theory, and because we work with equivalence relations on theories, this works 
also for formally very different looking representatives of theories. 
%Even more so 
This effect will even become more intense when we shall divide out equivalences induced by a pseudometric in definition \ref{def33}. 
    
% (Actually, this is also known from 
%the literature as semilattice.) 
In this section, if nothing else is said, we continue 
with the setting and notions 
 of definitions \ref{def211} and 
\ref{def210}.

%In this section, if nothing else is said, we continue 
%with the setting of definition \ref{def211} and 
%defintion 
%%defintion 
%of $d$ and $\sigma$ of \ref{def210}.

\begin{definition}			\label{deforder}
%Fro all $x,y \in \ol S$ define an 
Define an order relation $\le$ on $\ol S$ by 
$(\forall x,y \in \ol S)$
$$x \le y  \quad \Leftrightarrow  \quad x  \na %\ud 
y = y	$$

\end{definition}

The following three lemmas are straightforward to check,
and the reader should recall the monotone increasingness of $\ell$, 
see \re{eql3}, 
which is often used, as it is in lemmas \ref{lem54} and
\ref{lem35}.

\begin{lemma}

{\rm (i)} This is an order relation on $\ol S$. 

{ \rm (ii) }
For all $a,b,x,y \in \ol S$ we have 
$$a \le x , b \le y  \quad    \Rightarrow 
\quad a \nabla b \le x \nabla y$$

%$a \le b , x \le y $ then 
%$$a \le b , x \le y    \Rightarrow a \nabla x \le b \nabla y$$

{\rm (iii)  }
$\ol S$ is a directed set as
%
%gerichtete Menge
$$\e \le a,b \le a \nabla b$$

\end{lemma}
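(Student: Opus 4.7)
The plan is to verify the three parts using only the monoid axioms for $(\ol S, \na, \e)$: associativity, commutativity, idempotence, and the neutral element property. Since each claim reduces to a short algebraic manipulation, the main task is just to pick the right rewriting at each step. There is no real obstacle; the whole lemma is a bookkeeping exercise with $\na$.

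For part (i), I would verify the three axioms of a (partial) order in turn. Reflexivity $x \le x$ is immediate from idempotence $x \na x = x$. Antisymmetry follows from commutativity: if $x \na y = y$ and $y \na x = x$, then $x = y \na x = x \na y = y$. For transitivity, if $x \le y$ and $y \le z$, I would compute
\[
x \na z = x \na (y \na z) = (x \na y) \na z = y \na z = z,
\]
using associativity together with both hypotheses.

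For part (ii), assuming $a \na x = x$ and $b \na y = y$, I would rearrange
\[
(a \na b) \na (x \na y) = (a \na x) \na (b \na y) = x \na y
\]
by commutativity and associativity, so $a \na b \le x \na y$.

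For part (iii), $\e \le a$ is just the neutral element property $\e \na a = a$. The inequality $a \le a \na b$ follows from
\[
a \na (a \na b) = (a \na a) \na b = a \na b
\]
by associativity and idempotence, and $b \le a \na b$ is symmetric by commutativity. Together, these two show that any pair $a,b$ has a common upper bound, so $\ol S$ is directed.
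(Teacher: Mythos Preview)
Your proof is correct and is exactly the routine verification the paper has in mind; the paper itself omits the argument, merely remarking that this and the next two lemmas are straightforward to check.
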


The length function $\ell$ preserves the order and
$d$ and $\sigma$ have particular simple formulas for ordered
pairs:

\begin{lemma}				\label{lemma53}

For all $a,b,c \in \ol S$ we have 
$$a\le b  \quad \Rightarrow \quad \ell(a)\le \ell(b)$$
$$a\le b \quad \Rightarrow  \quad d(a,b)= \sigma(a,b) = \ell(b)-\ell(a)  $$
$$a \le b \le c \quad \Rightarrow \quad  d(a,c)  =  d(a,b)+ d(b,c),
\quad  \sigma(a,c)  = \sigma(a,b)+ \sigma(b,c)$$
%auch für $\sigma$
\end{lemma}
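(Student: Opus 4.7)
The plan is to exploit the definition of $\le$: $a \le b$ means precisely $a \nabla b = b$, which turns the expressions $\ell(a \nabla b)$, $d(a,b)$, and $\sigma(a,b)$ into expressions purely in $\ell(a)$ and $\ell(b)$. Once this is done, the transitivity-like additivity formulas for $d$ and $\sigma$ along a chain $a \le b \le c$ are just telescoping sums.

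\textbf{Step 1 (monotonicity of $\ell$ on $\le$).} Assume $a \le b$, i.e. $a \nabla b = b$. By \re{eql3} (monotone increasingness of $\ell$), $\ell(a) \le \ell(a \nabla b) = \ell(b)$. This is the first assertion.

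\textbf{Step 2 (formulas for $d$ and $\sigma$ on ordered pairs).} Still assuming $a \le b$, substitute $a \nabla b = b$ into the definitions in \ref{def210}:
$$d(a,b) = \ell(a \nabla b) - \ell(a) \wedge \ell(b) = \ell(b) - \ell(a) \wedge \ell(b),$$
and by Step 1 the minimum equals $\ell(a)$, giving $d(a,b) = \ell(b) - \ell(a)$. Likewise,
$$\sigma(a,b) = 2\ell(a \nabla b) - \ell(a) - \ell(b) = 2\ell(b) - \ell(a) - \ell(b) = \ell(b) - \ell(a).$$

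\textbf{Step 3 (additivity along chains).} If $a \le b \le c$, then each of $(a,b)$, $(b,c)$, $(a,c)$ is an ordered pair (transitivity of $\le$ is the already-stated part (i) of the preceding lemma), so Step 2 applies to each of them and the identities
$$d(a,c) = \ell(c) - \ell(a) = \bigl(\ell(b) - \ell(a)\bigr) + \bigl(\ell(c) - \ell(b)\bigr) = d(a,b) + d(b,c)$$
and the same calculation for $\sigma$ follow immediately by telescoping.

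No step presents a genuine obstacle; the only thing to be slightly careful about is invoking Step 1 inside Step 2 so that the $\wedge$ in the definition of $d$ collapses to $\ell(a)$, and then invoking Step 2 uniformly along the chain in Step 3. The whole argument is a direct unwinding of the definitions and uses only \re{eql3} and the defining property $a \nabla b = b$ of $a \le b$.
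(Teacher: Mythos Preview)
Your proof is correct and is exactly the straightforward unwinding of the definitions that the paper has in mind; the paper does not spell out a proof but merely remarks that the lemma is easy to check using the monotone increasingness \re{eql3}, which is precisely what you do.
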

 
 Here are some further relations ($\delta_y$ is defined below)
 for $d$ and $\sigma$ under ordered pairs:
 
 \begin{lemma}								\label{lem53}
 For all $a,b  ,x,y \in \ol S$ we have 
 %$$ 
 \be{eq20}
 \sigma(x,x \nabla y) =  d(x,x \nabla y)  = - \delta_y(x)
 \en
 %$$
%
 $$- \sigma(x,x \nabla y) + \sigma(y,x \nabla y) 
%= \sigma(x,y) - 2\sigma(x,x \nabla y) 
= \ell(x) - \ell(y)$$
%
%
	%\be{eq95}
 	$$
 	\sigma(a,b)= d(a , a\nabla b) + d(b , a\nabla b) %$$ 
 	= \sigma(a,a \nabla b) +\sigma(b,a \nabla b)
 	$$
 	%\en 
 	%$$ \sigma(x,y)  = \sigma(x,x \nabla y) +\sigma(y,x \nabla y)$$
 	%
 	%\be{eq96}
$$
d(a,b)= d(a , a\nabla b)  \ud d(b , a\nabla b)   
= \sigma(a , a\nabla b)  \ud \sigma(b , a\nabla b)  
$$
	%\en
 
 In other words, if we know the values of $d$ or $\sigma$
 %and $\sigma$, respectively, 
 for all $a \le b$, then we know them for all $a,b$,
 and for both $d$ and $\sigma$. 
 \end{lemma}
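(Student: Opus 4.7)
The plan is to reduce every identity to the clean formulas of Lemma \ref{lemma53}, which apply to ordered pairs. The key preliminary observation is that by idempotence, $x \na (x \na y) = (x \na x) \na y = x \na y$, so $x \le x \na y$ for every $x,y \in \ol S$; symmetrically $y \le x \na y$. Thus both $(x, x\na y)$ and $(y, x \na y)$ (and likewise for $a,b$) are ordered pairs and Lemma \ref{lemma53} gives
\[
d(x, x\na y) = \sigma(x, x\na y) = \ell(x \na y) - \ell(x),
\qquad \sigma(y, x\na y) = \ell(x\na y) - \ell(y).
\]

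The first identity \re{eq20} then follows by unwinding the (forthcoming) definition of $\delta_y$, which the context forces to be $\delta_y(x) := \ell(x) - \ell(x \na y)$; nothing remains to check. The second identity is immediate subtraction:
\[
-\sigma(x, x\na y) + \sigma(y, x\na y) = -(\ell(x\na y) - \ell(x)) + (\ell(x\na y) - \ell(y)) = \ell(x) - \ell(y).
\]
For the third identity, simply add the two $\sigma$-values:
\[
\sigma(a, a\na b) + \sigma(b, a\na b) = 2\ell(a\na b) - \ell(a) - \ell(b) = \sigma(a,b),
\]
and the equality $d(a, a\na b) = \sigma(a, a\na b)$ (and similarly for $b$) comes again from Lemma \ref{lemma53} applied to the ordered pairs.

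The last identity reduces to the elementary fact $\max(p,q) = (p+q) - \min$ applied correctly: setting $p := \ell(a\na b) - \ell(a)$ and $q := \ell(a\na b) - \ell(b)$, one has $p \ud q = \ell(a\na b) - (\ell(a) \ad \ell(b)) = d(a,b)$, which is exactly $d(a, a\na b) \ud d(b, a\na b)$, and replacing each $d$ by the corresponding $\sigma$ is legitimate by the ordered-pair identity from Lemma \ref{lemma53}. None of these steps presents a real obstacle; the whole lemma is a bookkeeping exercise confirming that the ordered case of Lemma \ref{lemma53}, together with idempotence forcing $a,b \le a \na b$, determines $d$ and $\sigma$ on all pairs.
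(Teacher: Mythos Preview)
Your proof is correct and is exactly the routine verification the paper intends: the text states that this lemma (together with the two adjacent ones) is ``straightforward to check'' and gives no separate proof, so your reduction to the ordered-pair formulas of Lemma~\ref{lemma53} via $a,b \le a\na b$ and direct computation with the definitions of $d$, $\sigma$, and $\delta_y$ is precisely what is expected.
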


% In this lemma it is fundamental that $\ell$ is an increasing function 
%( as it was in \re{eq20}).
 
 The following lemma reverses lemma \ref{lemma53}. 
  It also states in words that a theory is contained 
 in another  theory  (i.e. $x \le y$) if and only if the %length 
 information content of the united theory 
 is the %length 
 information content of the bigger theory  
 (i.e. $\ell(x \na y) = \ell(y)$).

% It also states in words that a theory $x$ is contained 
% in a theory $y$ (i.e. $x \le y$) if and only if the length of the united theory 
% of $x$ and $y$ is the length of the theory $y$  
% (i.e. $\ell(x \na y) = \ell(y)$).
 
\begin{lemma}			\label{lem54}
%Equivalent:
For all $x,y \in \ol S$, the following assertions are equivalent:
$$d(x,y)= \ell(y) - \ell(x) 
\quad \Leftrightarrow \quad 
d(y, x \na y)=0
\quad \Leftrightarrow  \quad   \ell(x \na y) = \ell(y)$$
$$\Leftrightarrow \quad \sigma(x,y)= \ell(y) - \ell(x) 
\quad \Leftrightarrow \quad 
\sigma(y, x \na y)=0$$

If $d$ or $\sigma$ is faithful (i.e. for all $x,y \in \ol S$ one has $d(x,y)= 0 \Rightarrow x=y$) then %we have 
these relations are also equivalent to 
%$$d(x,y)= \ell(y) - \ell(x) 
%\quad \Leftrightarrow \quad 
$$x \le y$$

\end{lemma}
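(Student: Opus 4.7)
The plan is to introduce the auxiliary condition
$(\star)$: $\ell(x \na y) = \ell(y)$, which is literally the third listed statement, and show that each of the other four statements is equivalent to $(\star)$. The full chain of equivalences then follows by transitivity, and no ``round trip'' argument is needed.

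For the statements involving distance to $x \na y$, I would first use idempotence to rewrite $y \na (x \na y) = x \na y$, and then use monotone increasingness \re{eql3} to note that $\ell(y) \le \ell(x \na y)$, so that $\ell(y) \ad \ell(x \na y) = \ell(y)$. Plugging into the definitions of $d$ and $\sigma$ from definition \ref{def210} gives
$$d(y, x \na y) = \ell(x \na y) - \ell(y), \qquad \sigma(y, x \na y) = 2\ell(x \na y) - \ell(y) - \ell(x \na y) = \ell(x \na y) - \ell(y),$$
and both vanish if and only if $(\star)$ holds.

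For the statements asserting $d(x,y) = \ell(y) - \ell(x)$ and $\sigma(x,y) = \ell(y) - \ell(x)$, the $\sigma$-case is immediate: writing out the definition, $2\ell(x \na y) - \ell(x) - \ell(y) = \ell(y) - \ell(x)$ is literally $(\star)$. The $d$-case is the only step requiring a little care about which of $\ell(x), \ell(y)$ realises the minimum. For the forward direction, positivity $d \ge 0$ (lemma \ref{lem34}) forces $\ell(y) \ge \ell(x)$, hence $\ell(x) \ad \ell(y) = \ell(x)$, and then $d(x,y) = \ell(x \na y) - \ell(x) = \ell(y) - \ell(x)$ is equivalent to $(\star)$. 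For the converse, assume $(\star)$; then monotone increasingness gives $\ell(x) \le \ell(x \na y) = \ell(y)$, so again $\ell(x) \ad \ell(y) = \ell(x)$, and the identity follows.

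Finally, under the faithfulness hypothesis, $d(y, x \na y) = 0$ forces $y = x \na y$, which is exactly $x \le y$ by definition \ref{deforder}; conversely, if $x \le y$ then $x \na y = y$, so $d(y, x \na y) = d(y, y) = 0$ by nilpotence, and similarly for $\sigma$. I do not expect any real obstacle here — the whole lemma is bookkeeping around the $\min$ in the definition of $d$, with the only mildly delicate point being the use of positivity in one direction and monotone increasingness in the other to identify $\ell(x) \ad \ell(y) = \ell(x)$.
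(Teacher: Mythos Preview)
Your proposal is correct and takes essentially the same approach as the paper: both arguments hinge on using positivity of $d$ to force $\ell(y)\ge\ell(x)$ in one direction and monotone increasingness in the other, in order to identify $\ell(x)\wedge\ell(y)=\ell(x)$ and reduce everything to the condition $\ell(x\na y)=\ell(y)$. Your hub-and-spoke organization around $(\star)$ is slightly tidier than the paper's chain of implications, but the content is identical.
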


\begin{proof}
(i)
if $d(x,y)= \ell(y)-\ell(x) \ge 0$ then  
$\ell(y) \ge \ell(x)$ and so
$ d(x,y)= \ell(x \na y) -\ell(x)$  and thus
%then $\ell(y) \ge \ell(x)$ and 
$\ell(x \na y) = \ell(y) \ge \ell(x)$ and so 
%(bacause then $\ell(y) \ge \ell(x)$) then $\ell(x \na y) = \ell(y)$ then 
%
$$d(y,x\na y)= \ell(x \na y) - \ell(y)=0$$
These implications can be reversed. 
%This implications can go back.   %F

The same equivalence 
%equilance 
with $d$ replaced by  $\sigma$ is analogously proved. 

(ii)
If $d$ is faithful this is equivalent to saying that $y= x \na y$
% $x \le y$.
%
%also $x \le y$ im sinne von gleichheit bez metrik $d$
%
%
\end{proof}

\begin{lemma}					\label{lem35}
%Equivalent:

For all $x,y \in \ol S$ we have the equivalences
$$d(x,y)= 0 \quad \Leftrightarrow \quad \sigma(x,y)=0  
\quad \Leftrightarrow \quad 
\ell(x \na y) = \ell(x)= \ell(y)$$

In particular, $d$ is faithful iff $\sigma$ is faithful. 
%so.
\end{lemma}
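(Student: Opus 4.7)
The plan is to set up a short cycle of implications drawing only on the sandwich relation \re{eq7} and the max-bound from lemma \ref{lem34}, with no reliance on the $\Delta$- or $\na$-inequality.

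First I would establish $d(x,y)=0 \Leftrightarrow \sigma(x,y)=0$. This is immediate from the sandwich $d(x,y) \le \sigma(x,y) \le 2 d(x,y)$ recorded in \re{eq7}: since both $d$ and $\sigma$ are nonnegative by lemma \ref{lem34}, vanishing of one forces vanishing of the other.

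Next I would show $d(x,y)=0 \Rightarrow \ell(x \na y) = \ell(x) = \ell(y)$. By definition of $d$, the hypothesis reads $\ell(x \na y) = \ell(x) \ad \ell(y)$. But lemma \ref{lem34} also gives the reverse bound $\ell(x) \ud \ell(y) \le \ell(x \na y)$. Chaining these, $\ell(x) \ud \ell(y) \le \ell(x) \ad \ell(y)$, which forces $\ell(x)=\ell(y)$, and then both equal $\ell(x \na y)$. The converse direction is a one-line substitution: if $\ell(x \na y)=\ell(x)=\ell(y)$, then $d(x,y) = \ell(x \na y)-\ell(x)\ad\ell(y) = 0$. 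This closes the triangle of equivalences.

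The final statement that $d$ is faithful iff $\sigma$ is faithful is then just the equivalence $d(x,y)=0 \Leftrightarrow \sigma(x,y)=0$ applied pointwise. There is no real obstacle: everything rests on the trivial inequality $\min \le \max$ combined with the two bounds already proved in lemma \ref{lem34}, so the proof is essentially a three-line verification.
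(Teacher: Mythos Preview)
Your proof is correct and follows essentially the same approach as the paper: both use \re{eq7} for the first equivalence and then the monotonicity bound $\ell(x)\ud\ell(y)\le\ell(x\na y)$ from lemma \ref{lem34} to force all three lengths equal. The only cosmetic difference is that the paper argues the second equivalence starting from $\sigma(x,y)=0$ (using $\ell(x)+\ell(y)\le 2\ell(x\na y)$) rather than from $d(x,y)=0$, but the logic is the same.
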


\begin{proof}
If $\sigma(x,y)= 2 \ell(x \na y)-\ell(x) -\ell(y) = 0 \le 0$ then,
because %2 \ell(x), 2 \ell(y), 
$\ell(x)+\ell(y) \le 2 \ell(x \na y)$, we have
%$\ell(x \na y) = \ell(x) + \ell(y)$ and
$2 \ell(x) \le 2\ell(x \na y) = \ell(x) + \ell(y)$ and so
%$= 
$\ell(x \na y)= \ell(x) = \ell(y)$. 
%, which implies $d(x,y)=0$.
The first %implication 
equivalence is by \re{eq7}. 
\end{proof}

%A positive symmetric function $d: \ol S \rightarrow \ol S \rightarrow \R$ satisfying the 

Typically, later in this paper we will construct pseudometrics
satisfying the $\na$-inequality and in the next definition
and proposition we state how the given structures on the abelian idempotent monoid carry over 
naturally to the quotient when turning a pseudometric to 
a metric.

\begin{definition}[From pseudometric to metric]				\label{def33}

{\rm 
Assume that a pseudometric $d$ on $\ol S$ 
satisfies the $\na$-inequality. 		%$\na$ -inequality. 

%Assume that a $d$ as in defintion \ref{def212} %is a metric and  
%satisfies $\na$ and $\Delta$-inequality. 

Define an equivalence relation $\cong$ %$\cong$ 
on $\ol S$ by 
%let $x,y \in \overline S$ 
%
$$x \cong y  \quad \Leftrightarrow   \quad  d(x,y) = 0 \qquad \forall x,y \in \ol S$$
%and $\na$ 
%$$[x] \na [y] := [x \na y]$$
%$$[x ] \le [y]   \Leftrightarrow    $$
Set $\me S := \ol S/\cong$ to be the set-theoretical quotient. 
%For all $x,y \in \ol S$ define 
Define operations, relations and functions on $\me S$ by   
(for all $x,y \in \ol S$) 
$$[x] \na [y] := [x \na y]$$
$$[x ] \le [y]     
: \Leftrightarrow [y] = [x] \na [y] $$
%\Leftrightarrow  x  \le y$$
$$\me d([x ] , [y])  :=  d(x  , y ) $$
$$\me \ell([x ] )  := %\tilde 
\check d([x ] , [\emptyset])  $$ 
%= : \ell(x) $$

%and an order relation as in definition \ref{deforder}, that is,

%(da metrik, äquivalenzrel, da $\nabla$, $\nabla$  wohldef)
}
\end{definition}

Then the so defined language $\me S$ with equivalence
relation satisfies the concepts of section \ref{sec2}
again:

\begin{proposition}					\label{lem36}

%\begin{lemma}					\label{lem36}
If a pseudo-metric $d$ on $\ol S$ 
satisfies the $\na$-inequality then: 			%$\na$- -inequality then:

%satisfies $\na$- and $\Delta$-inequality then

{\rm (i)} 
The $\na$ operation passes to $\check S$, 
that is, the above 
%such that 
$(\check S,\na ,[\emptyset])$ is an abelian, idempotent monoid with
order relation as in definition \ref{deforder},  %.   %{def51}. 
%(ii) %and 
%$\check S$ is 
and is provided with a metric $\check d$ satisfying the $\na$-inequality. 
%, and
%length function $\check \ell$. 
%fuctoin
%\el

%$\na$-operation and the order relation passes to $\tilde S$ 

{\rm (ii)} 
In case $\ol S = S / %\backslash 
\equiv$ as of  %(see 
definition \ref{def26},  %),
then 
% We 
we may canonically identify %write 
%$\tilde 
$\check S = S / \equiv_2$ for an equivalence
relation $\equiv_2$ on $S$ satisfying the same rules of definition \ref{defequ}. 
(So $\check S$ is again of the form $\ol S$.) 
%same roles of .
  
{\rm   (iii)}  
  In case $d= d_\ell$ (or $d= \sigma_\ell$), the above defined
  $\check \ell$ is a length function on $\check S$ and 
  $$\me{d_\ell} = d_{\me \ell} 
  \qquad (\mbox{or } \quad \me{\sigma_\ell} = \sigma_{\me \ell}
  )$$
%\end{lemma}

\end{proposition}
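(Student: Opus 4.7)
The plan is to verify the three parts in sequence, using the $\na$-inequality (and its second-triangle consequence) at each well-definedness step.

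\emph{Part (i).} First I would check that $\na$ descends to $\me S$: if $x \cong x'$ and $y \cong y'$ then by the $\na$-inequality $d(x \na y, x' \na y') \le d(x,x') + d(y,y') = 0$, so $x \na y \cong x' \na y'$. Commutativity, associativity, neutrality of $[\e]$ and idempotence then pass immediately from $(\ol S, \na)$. Well-definedness of $\me d$ follows from the second triangle inequality applied twice, giving $|d(x,y) - d(x',y')| \le d(x,x') + d(y,y') = 0$; symmetry, nilpotency, the $\Delta$-inequality and the $\na$-inequality for $\me d$ are then inherited by choosing representatives. Faithfulness is immediate from the very definition of $\cong$. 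Finally, the order $[x] \le [y] \Leftrightarrow [x] \na [y] = [y]$ in the definition is exactly definition \ref{deforder} applied to $(\me S, \na, [\e])$.

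\emph{Part (ii).} I would define $s \equiv_2 t$ on $S$ by $d([s]_\equiv, [t]_\equiv) = 0$; this is a coarser equivalence relation than $\equiv$ and yields a canonical identification $S/\equiv_2 \;=\; \me S$. Axioms \re{eq8}, \re{eq10}, \re{eq9} hold for $\equiv$ and hence for the coarser $\equiv_2$, while \re{eq28} and \re{eq27} follow from the monoid properties of $(\me S, \na)$ together with definition \ref{def27}. The only non-trivial axiom is \re{eq3}: if $s \equiv_2 t$ with $s \bot x$ and $t \bot x$, then $[s \ud x]_\equiv = [s]_\equiv \na [x]_\equiv$ and $[t \ud x]_\equiv = [t]_\equiv \na [x]_\equiv$ by definition \ref{def27}, and the weak $\na$-inequality (lemma \ref{lem42}) gives
$$d([s]_\equiv \na [x]_\equiv, [t]_\equiv \na [x]_\equiv) \le d([s]_\equiv, [t]_\equiv) = 0,$$
so $s \ud x \equiv_2 t \ud x$.

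\emph{Part (iii).} By lemma \ref{lem34} we have $d_\ell(x, \e) = \sigma_\ell(x, \e) = \ell(x)$, so $\me \ell([x]) = \me d([x], [\e]) = \ell(x)$; that is, $\me \ell$ is just the descent of $\ell$ along $\ol S \to \me S$. The three properties of a length function (vanishing at $[\e]$, subadditivity, monotone increasingness from lemma \ref{lem03}) therefore transfer to $\me \ell$ at once. The identity $\me{d_\ell} = d_{\me \ell}$ then reduces to
$$d_{\me \ell}([x],[y]) = \me \ell([x] \na [y]) - \me \ell([x]) \ad \me \ell([y]) = \ell(x \na y) - \ell(x) \ad \ell(y) = d_\ell(x,y),$$
and the $\sigma$-case is identical.

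The main obstacle I anticipate is keeping the three layers ($\equiv$ on $S$, $\cong$ on $\ol S$, $\equiv_2$ on $S$) cleanly separated and ensuring that axiom \re{eq3} for $\equiv_2$ genuinely uses the $\na$-inequality rather than only the plain $\Delta$-inequality: it is precisely the compatibility of $d$ with the monoid operation that makes the descent of the full structure to $\me S$ go through, and without that hypothesis $\na$ would not even be well-defined on the quotient.
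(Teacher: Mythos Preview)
Your proof is correct and follows essentially the same route as the paper: the $\na$-inequality for well-definedness of $\na$ on $\me S$ in (i), lemma~\ref{lem42} for axiom~\re{eq3} in (ii), and the identification $\me\ell([x])=\ell(x)$ in (iii). Your part (iii) is in fact slightly cleaner than the paper's, which rederives monotonicity and subadditivity of $\me\ell$ from positivity of $d$ and the $\na$-inequality of $\me d$ rather than simply reading them off from the corresponding properties of $\ell$.
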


\begin{proof}
(i) 
The $\na$ operation is well-defined on $\tilde S$ as
%If 
if $x \cong x'$ and $y \cong y'$ then 
$$d(x \na y, x' \na y') \le d(x, x')  + d(y,  y')  = 0$$
 
 (ii) Let $\alpha: S \rightarrow \ol S / \cong$ the canonical map 
 and $\equiv_2$ the associated equivalence relation $x \equiv_2 y$
 if and only if $\alpha(x)=\alpha(y)$ for all $x,y \in S$. 
 As all rules of definition hold for $\equiv$, they hold also for $\equiv_2$, excepting (\ref{eq3}) needs explanation. 
 But  by lemma \ref{lem42} we have 
 $$s \equiv_2 t %s' 
 \Leftrightarrow d([s],[  %s'
 t ])= 0 
 \Rightarrow 0= d([s] \na [x],[  
 t] \na  [x])  %= 0 
 = d([s \ud x],[ %s' 
 t \ud x])	%= 0
 \Rightarrow s \ud x \equiv_2 %s' 
 t \ud x$$
for $s,  %s' 
t \bot x \in S$ 
%(Class brackets $[s] \in \ol S$)

(iii)
By definition 
$$\check \ell([x] \na [y]) - \check \ell([x]) = d_\ell(x \na y,\e ) - d_\ell(x,\e) 
= \ell(x \na y) - \ell(y) = d(x \na y,x) \ge 0$$
so $\check \ell$ is monotonically increasing, and its subadditivity 
follows from the $\na$-inequality of $\check d$. 
\end{proof}

\section{Measure of Boolean expressions}			\label{sec6} 
%Bollean e
%durchschnitt }  %intersection}

We have seen in the considerations before lemma \ref{lemma11}, that somehow, 
%Somehow, 
$\ell(x \na y)$ is the measure of the union of $x$ and $y$. 
In this section we want to extend the notion of a measure also 
to other %bool's 
Boolean expressions,  %in $x$ and $y$ or more sentences,
like the intersection $x \cap y$ of sentences $x$ and $y$. 

%We cannot 
There is not necessarily a sentence that represents the intersection or the set-theoretical difference $x \backslash y$ of two sentences
$x,y$, see example \ref{ex108}. 
But we can still compute the ``measure'' of these non-existing,
imagined elements.

Let us write formally $\zeta(E)$ for the measure of %of 
a %bool's 
Boolean expression $E$,
for example $\zeta(x \cap y)$. 
%
%Let us write formally $\zeta(E)$ for a %bool's 
%Boolean expression $E$,
%for example $\zeta(x \cap y)$, for the measure of the expression $E$. 
We will give precise definitions in this section. 

In this section, %if nothing else is said, we continue 
%with the 
we assume the setting and notations of definitions \ref{def211} and 
%defintion of $d$ and $\sigma$ of 
\ref{def210}, 
but suppose that $\ell$ is possibly non-increasing, that is, 
we %drop 
cease requiring \re{eql3}. 

%without supposing that $\ell$ is increasing, that is, we drop 

\begin{definition}[Measure of intersection]		\label{def61}
%Maesure of durchschnitt

{\rm 

For all $x,y \in \ol S$ we set 
$$%\Gamma(x,a)
\zeta(x \cap y):= \ell(x) + \ell(y) - \ell(x \nabla y)$$

}
\end{definition}

An illustrative example for $\zeta(x \cap y)$ would be %to 
%the measure of 
the amount %of intersection
two computer programs $x$ and $y$ have in common, for example
by identical or mathematically similar routines.
The following function slightly extracts some core of the last definition:

\begin{definition}			%		\label{def62}
[$\delta$-function]				\label{def62}
%[] 

{\rm 

We also set, for all %fixed 
$x,y \in \ol S$,
$$\delta_y(x):=\ell(x) - \ell(x \nabla y)$$

}
\end{definition}

We say %$\Gamma$ 
the { measure of intersection is increasing} if 
$$a \le x,b \le y \quad \Rightarrow  \quad \zeta(a \cap b) \le \zeta(x \cap y)$$
%$$a \le x,b \le y \Rightarrow \Gamma(a,b) \le \Gamma(x,y)$$
for all $a,b,x,y \in \ol S$, 
and { $\delta$ is increasing} if
$$\forall a,x,y \in \ol S: \qquad a \le x \quad \Rightarrow \quad \delta_y(a) 
\le \delta_y(x)$$

The following couple of lemmas will cummulate %in
to theorem \ref{thm61} below.

\begin{lemma}					\label{lem61}
We have the following equivalences:   %implications:

%$\sigma$ satisfies the $\Delta$-inequality 

%$\Rightarrow$  
The measure of intersection is increasing. 

$\Leftrightarrow$ $\zeta(x \cap y)$ is  increasing in $x \in \ol S$
for all fixed $y \in \ol S$.  

$\Leftrightarrow$ $\delta$ is  increasing.

\end{lemma}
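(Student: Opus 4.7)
The plan is to reduce all three conditions to a single monotonicity statement about $\delta$ and then exploit the symmetry $\zeta(x \cap y) = \zeta(y \cap x)$ to lift coordinate-wise monotonicity to joint monotonicity.

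First I would rewrite $\zeta$ in terms of $\delta$. Directly from definitions \ref{def61} and \ref{def62},
\[
\zeta(x \cap y) \;=\; \ell(x) + \ell(y) - \ell(x \na y) \;=\; \ell(y) + \delta_y(x).
\]
Since $\ell(y)$ is a constant for fixed $y$, the map $x \mapsto \zeta(x \cap y)$ is increasing in $x$ if and only if $x \mapsto \delta_y(x)$ is increasing in $x$. This gives the equivalence between the second and third statements of the lemma, for each fixed $y$, and hence in general.

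Next I would observe that $\zeta$ is symmetric in its two arguments: $\zeta(x \cap y) = \zeta(y \cap x)$ is immediate from commutativity of $\na$ (lemma \ref{lem02}) and of addition in $\R$. For the implication from the measure of intersection being jointly increasing to one-variable monotonicity, one takes $b = y$ in the hypothesis (recalling that $y \le y$ since $y \na y = y$ by idempotence), which directly yields $\zeta(a \cap y) \le \zeta(x \cap y)$ whenever $a \le x$.

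For the converse, assuming one-variable monotonicity in $x$, given $a \le x$ and $b \le y$ I would chain
\[
\zeta(a \cap b) \;\le\; \zeta(x \cap b) \;=\; \zeta(b \cap x) \;\le\; \zeta(y \cap x) \;=\; \zeta(x \cap y),
\]
where the first inequality uses the hypothesis applied to the pair $(a,x)$ with second argument $b$, the middle equality is the symmetry of $\zeta$, and the second inequality uses the hypothesis again, this time with $(b,y)$ and second argument $x$. There is no genuine obstacle here; the only thing worth flagging is that the hypothesis is stated as monotonicity in the first slot only, so the appeal to symmetry of $\zeta$ is essential in order to convert it to monotonicity in the second slot before composing.
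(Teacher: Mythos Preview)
Your proposal is correct and follows essentially the same route as the paper: both arguments hinge on the identity $\zeta(x \cap y) = \delta_y(x) + \ell(y)$ to link conditions two and three, and both invoke the symmetry $\zeta(x \cap y) = \zeta(y \cap x)$ to pass from one-variable monotonicity to joint monotonicity. You are simply more explicit about the easy directions (joint $\Rightarrow$ one-variable, via $b=y$) than the paper, which treats them as evident.
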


\begin{proof}

If $\delta$ is increasing, then $\zeta(x \cap y)= \delta_y(x) + \ell(y)$
is increasing in $x$ for fixed $y$, and so also in $y$ for fixed $x$ by symmetry of $\zeta(x \cap y)$. Combining both we get that the measure of intersection is increasing. 
%
%\end{proof} 
\end{proof}

\begin{lemma}					\label{lem62}

If 
$\delta$ is increasing and $\ell$ monotone,  %$\Rightarrow$ 
then $\sigma$ 
satisfies the $\Delta$-inequality. 

%$\delta$ incresaing und $\ell$ monoton $\Rightarrow$ $\sigma$ erfüllt $\Delta$-ungl

%If $\delta$ is increasing and $\ell$ monotone increasing then 
%$\sigma$ satisfies the $\Delta$-inequality. 

\end{lemma}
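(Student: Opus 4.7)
The plan is to invoke Lemma \ref{lem16}, which reduces the $\Delta$-inequality for $\sigma$ to showing
$$\ell(x \na z) + \ell(y) \le \ell(x \na y) + \ell(y \na z)$$
for all $x,y,z \in \ol S$. So the task becomes deriving this single scalar inequality from the two hypotheses.

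First I would observe that $y \le x \na y$ holds automatically in any abelian idempotent monoid, since $y \na (x \na y) = x \na y$ by commutativity and idempotence. Applying the hypothesis that $\delta$ is increasing (with the fixed second argument $z$) to the pair $y \le x \na y$ yields
$$\delta_z(y) \le \delta_z(x \na y),$$
which, unfolded via Definition \ref{def62}, reads
$$\ell(y) - \ell(y \na z) \le \ell(x \na y) - \ell(x \na y \na z).$$
Rearranging gives
$$\ell(y) + \ell(x \na y \na z) \le \ell(x \na y) + \ell(y \na z).$$

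Next I would invoke the monotonicity of $\ell$: since $x \na z \le x \na y \na z$, we get $\ell(x \na z) \le \ell(x \na y \na z)$, and hence
$$\ell(y) + \ell(x \na z) \le \ell(y) + \ell(x \na y \na z) \le \ell(x \na y) + \ell(y \na z),$$
which is exactly the reformulation of the $\Delta$-inequality supplied by Lemma \ref{lem16}.

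I don't expect a real obstacle here: the key move is simply choosing the correct comparison $y \le x \na y$ so that the increasingness of $\delta_z$ fires in a useful direction, and then using monotonicity of $\ell$ to replace $x \na y \na z$ by $x \na z$ on the left-hand side. The only small thing to watch is that both hypotheses are needed: without $\delta$-monotonicity the first step fails, and without monotonicity of $\ell$ one is stuck with $\ell(x \na y \na z)$ instead of $\ell(x \na z)$.
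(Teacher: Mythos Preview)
Your proof is correct and follows essentially the same route as the paper: both reduce to the criterion of Lemma~\ref{lem16}, apply the increasingness of $\delta$ to the comparison $y \le y \na x$ (your $y \le x \na y$), and then use monotonicity of $\ell$ to pass from $\ell(x \na y \na z)$ down to $\ell(x \na z)$. The only difference is cosmetic variable relabeling and that you spell out the intermediate rearrangement a bit more explicitly.
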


\begin{proof}

$\delta$ is increasing means that for all $x,y,z \in \ol S$ we have 
\begin{equation} \label{e2}
%$$
\ell(y ) - \ell(y \na x) \le \ell(y \na z) - \ell(y \na z\na x)  %$$
\end{equation}
%und 
Monotony of $\ell$ implies 
$$\ell(z\na x) \le \ell(y \na z\na x)$$
Thus 
we get the criterion of lemma \ref{lem16}, that is, 
%: 
%damit
\begin{equation} \label{e1}
\ell( z\na x) + \ell(y )  \le \ell(y \na z) 
+ \ell(y \na x)
\end{equation}

%ohne monotonie! :

%setze in oberste zeile 
\end{proof}

\begin{lemma}			\label{lem63}

If $\sigma$ satisfies the $\Delta$-inequality 
then $\delta$ is increasing.

\end{lemma}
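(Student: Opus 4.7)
The plan is to unwind all three ingredients ($\sigma$ satisfies the $\Delta$-inequality, the definition of $\delta_y(x)=\ell(x)-\ell(x\na y)$, and the order relation $a\le x \Leftrightarrow a\na x = x$) and then simply match them up via the equivalent formulation of the $\Delta$-inequality already extracted in lemma \ref{lem16}.

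More concretely, fix $a,x,y \in \ol S$ with $a \le x$. The goal $\delta_y(a)\le \delta_y(x)$ is, after rearranging, the inequality
$$\ell(a) + \ell(x \na y) \le \ell(x) + \ell(a \na y).$$
The key observation is that this is essentially the criterion of lemma \ref{lem16} in disguise: that lemma says $\sigma$ satisfies the $\Delta$-inequality if and only if
$$\ell(u \na w) + \ell(v) \le \ell(u \na v) + \ell(v \na w)$$
holds for all $u,v,w \in \ol S$. I would therefore specialize $u:=x$, $v:=a$, $w:=y$, obtaining
$$\ell(x \na y) + \ell(a) \le \ell(x \na a) + \ell(a \na y).$$
Using $a\le x$, which by definition means $a\na x = x$, the right-hand side collapses to $\ell(x) + \ell(a \na y)$, and this is exactly what is needed.

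There is no real obstacle: once one sees the correct substitution into the lemma \ref{lem16} criterion, the proof is a one-line rearrangement. The only point that could momentarily look subtle is ensuring that monotonicity of $\ell$ is \emph{not} required here (unlike in lemma \ref{lem62}, which proves the converse direction), but this is automatic, since the hypothesis $a\le x$ is used purely to rewrite $x\na a$ as $x$, and no further positivity or monotonicity assumption on $\ell$ is invoked.
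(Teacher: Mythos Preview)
Your proof is correct and essentially identical to the paper's: both plug a suitable specialization into the criterion of lemma \ref{lem16} and then use one instance of $a\na x = x$ (the paper phrases this as replacing $z$ by $z\na y$ in \re{e1} and using idempotence $y\na z\na y = y\na z$, which is the same maneuver). Your remark that monotonicity of $\ell$ is not needed here is also accurate.
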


\begin{proof}

Replace $z$ in \re{e1}, which holds by lemma \ref{lem16},  by $z \na y$ to obtain \re{e2}.
\end{proof}

\begin{lemma}					\label{lem64}

If 
$\delta$ is increasing %$\Rightarrow$ 
then $\sigma$ satisfies the $\na$-inequality. 

%erfüllt $\na$-ungl 
%$\Delta$-ungl

\end{lemma}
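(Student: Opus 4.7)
My plan: rewrite the desired $\na$-inequality $\sigma(x\na y, a\na b) \le \sigma(x,a)+\sigma(y,b)$ in terms of $\delta$. The identity $\sigma(u,v) = -\delta_u(v) - \delta_v(u)$, immediate from $\sigma(u,v) = 2\ell(u\na v)-\ell(u)-\ell(v)$ and $\delta_v(u) = \ell(u) - \ell(u \na v)$, turns the target into the equivalent form
\[
\delta_a(x) + \delta_x(a) + \delta_b(y) + \delta_y(b) \;\le\; \delta_{a\na b}(x\na y) + \delta_{x\na y}(a\na b),
\]
which I would prove in two successive strokes, each a use of $\delta$-increasingness.

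The first stroke raises the base argument of each $\delta$ on the left. Since $x, y \le x \na y$ and $a, b \le a \na b$, the hypothesis yields $\delta_a(x) \le \delta_a(x\na y)$, $\delta_b(y) \le \delta_b(x\na y)$, $\delta_x(a) \le \delta_x(a \na b)$, and $\delta_y(b) \le \delta_y(a \na b)$; summing bounds the left-hand side above by $\delta_a(x\na y) + \delta_b(x \na y) + \delta_x(a \na b) + \delta_y(a \na b)$.

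The second stroke uses the subscript-subadditivity $\delta_p(u) + \delta_q(u) \le \delta_{p \na q}(u)$ (valid for all $u,p,q \in \ol S$), which is just a rewriting of $\delta$-increasing applied to $u \le u \na q$: both conditions unpack to $\ell(u) + \ell(u \na p \na q) \le \ell(u \na p) + \ell(u \na q)$. Invoking this with $(u,p,q) = (x\na y, a, b)$ and $(u,p,q) = (a\na b, x, y)$ collapses the bound from the first stroke to $\delta_{a \na b}(x \na y) + \delta_{x \na y}(a \na b)$, which is exactly the target right-hand side.

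The main obstacle, I expect, is noticing that $\delta$-increasing carries two pieces of information simultaneously --- monotonicity in the base argument, and subadditivity in the subscript --- and that the four base variables $x,y,a,b$ must be paired through the unions $x\na y$ and $a\na b$ so the two invocations dovetail. Direct manipulation of the raw length inequality without reformulating via $\delta$ tends to go in circles, since among the many available instances of the hypothesis one has to pick exactly the six above.
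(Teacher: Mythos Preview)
Your proof is correct. Both you and the paper reduce the $\na$-inequality for $\sigma$ to repeated applications of the hypothesis that $\delta$ is increasing, but the bookkeeping differs. The paper passes to the symmetric quantity $\rho(u,v):=\zeta(u\cap v)=\ell(u)+\ell(v)-\ell(u\na v)$, invokes Lemma~\ref{lem61} to know $\rho$ is increasing in both arguments simultaneously, and then applies this twice to get $\rho(x,a)+\rho(y,b)\le 2\rho(x\na y,a\na b)$, which rearranges to the $\na$-inequality. You instead stay with the asymmetric $\delta$, first lifting each of the four base arguments (four applications of monotonicity), and then collapsing the subscripts via the subscript-subadditivity $\delta_p(u)+\delta_q(u)\le\delta_{p\na q}(u)$, which you correctly observe is just another face of $\delta$-increasing. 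Your route is slightly longer (six invocations versus the paper's two, though each of the paper's uses simultaneous monotonicity in two variables), but it has the virtue of being entirely self-contained in terms of $\delta$ and of making explicit that the hypothesis really encodes two usable forms: monotonicity in the base and subadditivity in the subscript.
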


\begin{proof}

By lemma \ref{lem61}, $\rho(x,y):= \zeta(x \cap y)$ is increasing in $x,y$ simultaneously.
Then  
$\rho$ is increasing implies %, 
$-\rho$ is decreasing, and applying this two times yields
$$-2\rho(x \na y,a \na b)= 2 \ell(x \na y \na a \na b)  - 2\ell(x \na y) - 2\ell(a \na b)$$
$$\le -\rho(x,a) -\rho(y,b) 
%-\rho(x,a) 
= \ell(x \na a) - \ell(x) - \ell(a)
+\ell(y \na b) - \ell(y) - \ell(b)$$
%$$-\rho(x \na y,a \na b)
%%\ell(x \na y \na a \na b)  - \ell(x \na y) - \ell(a \na b)
%\le -\rho(x,a) = \ell(y \na b) - \ell(y) - \ell(b)$$

A simple rearrangement of the summands in this inequality yields
exactly 
$$\sigma(x \na a, y \na b) \le \sigma(x,y) + \sigma(a,b)$$

\end{proof}

\begin{lemma}			\label{lem65}

If $d$ or $\sigma$ satisfies the %(very) 
very weak $\na$-inequality and $\ell$ is monotone,
%$\Rightarrow$ 
then
$\delta$ is increasing. 

%$d$ $\na$-ungl und $\ell$ monoton 
%$\Rightarrow$ 
%$\delta$ incresaing 
\end{lemma}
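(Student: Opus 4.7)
The plan is to unravel the definition of ``$\delta$ increasing'' into a length-function inequality that is exactly what the very weak $\na$-inequality delivers once the order relation is in place.

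Concretely, fix $a,x,y\in\ol S$ with $a\le x$. The goal is to show
$$\delta_y(a)=\ell(a)-\ell(a\na y)\le \ell(x)-\ell(x\na y)=\delta_y(x),$$
or equivalently
$$\ell(x\na y)-\ell(a\na y)\le \ell(x)-\ell(a).$$
First I would note that $a\le x$ forces $a\na y\le x\na y$ (this is the second clause of the lemma right after definition \ref{deforder}). Combined with the standing monotonicity hypothesis \re{eql3} on $\ell$, we get $\ell(a)\le \ell(x)$ and $\ell(a\na y)\le \ell(x\na y)$, so both sides of the inequality we need are nonnegative differences.

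Next, I would invoke lemma \ref{lemma53}: whenever $u\le v$, both $d(u,v)$ and $\sigma(u,v)$ collapse to $\ell(v)-\ell(u)$. Applied to the ordered pairs $a\le x$ and $a\na y\le x\na y$, this yields
$$d(a,x)=\ell(x)-\ell(a),\qquad d(a\na y,\,x\na y)=\ell(x\na y)-\ell(a\na y),$$
and the analogous identities for $\sigma$. Now the very weak $\na$-inequality (the version of \re{eq70} restricted to $x\le y$) reads, with $z:=y$ and the ordered pair $a\le x$,
$$d(a\na y,\,x\na y)\le d(a,x),$$
and substituting the two identities above gives exactly
$$\ell(x\na y)-\ell(a\na y)\le \ell(x)-\ell(a),$$
which is the required $\delta_y(a)\le \delta_y(x)$. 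The $\sigma$ case is identical, using the $\sigma$-identities from lemma \ref{lemma53} instead.

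There is no real obstacle here: the work is purely bookkeeping, and the only subtlety is recognising that the restriction ``$x\le y$'' in the very weak $\na$-inequality exactly matches the hypothesis $a\le x$ in the definition of ``$\delta$ increasing,'' so that lemma \ref{lemma53} legitimately turns the two distance values into plain length differences. Monotonicity of $\ell$ is what guarantees that the minimum $\ell(a)\ad\ell(x)$ in the definition of $d$ equals $\ell(a)$, which is what makes the $d$-case go through in the same way as the $\sigma$-case.
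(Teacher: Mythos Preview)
Your proof is correct and follows essentially the same approach as the paper. The paper phrases the very weak $\na$-inequality with the canonical ordered pair $y\le y\na z$ (and $\na$-translate by $x$) rather than a general $a\le x$, but since $a\le x$ means $x=a\na x$ these are the same computation; both then use monotonicity of $\ell$ to rewrite the distances on ordered pairs as length differences, exactly as you do via lemma~\ref{lemma53}.
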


\begin{proof}

The very weak $\na$-inequality of $d$ implies
$$- d(y ,y \na z) \le - d(y \na x, y \na z \na x)$$
which by the monotonicity of $\ell$ means 
$$\ell(y ) - \ell(y \na z) \le \ell(y \na x) - \ell(y \na z\na x)$$

But 
% since $x,y,z \in \ol S$ were arbitrary, 
this means $\delta$ is increasing. 
The same %argument
 is true for $\sigma$ by \re{eq20}. 
%dann wie oben
%
\end{proof}

%We have observed in the last %prove 
%proof that 

\begin{lemma}			\label{lem66}
If $\ell$ is monotone, then $d$ satisfies the very weak $\na$-inequality
if and only if $\sigma$ satisfies it.

\end{lemma}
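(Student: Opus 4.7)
The plan is to observe that the very weak $\na$-inequality, by definition, is a statement only about pairs $x \le y$. In that restricted regime, $d$ and $\sigma$ already agree, which should collapse the two inequalities into a single identity.

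More concretely, I would proceed as follows. First, assume $x \le y$. By Lemma \ref{lemma53} (which requires the monotonicity of $\ell$ that is hypothesized here) one has
$$d(x,y) \;=\; \sigma(x,y) \;=\; \ell(y) - \ell(x),$$
so the right-hand sides of the two very weak $\na$-inequalities coincide. Next, I would use the order-preservation property $a\le x,\,b\le y \Rightarrow a\na b \le x\na y$ (the second item in the lemma following Definition \ref{deforder}), applied with $a=x$, $x=y$ and $b=y=z$, to conclude $x\na z \le y \na z$. Applying Lemma \ref{lemma53} once more to this ordered pair gives
$$d(x\na z,\,y\na z) \;=\; \sigma(x\na z,\,y\na z) \;=\; \ell(y\na z) - \ell(x\na z),$$
so the left-hand sides also coincide. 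Hence for each fixed $x\le y$ and each $z\in\ol S$, the inequality $d(x\na z,y\na z)\le d(x,y)$ is literally the same statement as $\sigma(x\na z,y\na z)\le\sigma(x,y)$, and the equivalence follows.

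There is no real obstacle here: the whole content of the statement is the observation that $d$ and $\sigma$ agree on ordered pairs once $\ell$ is monotone, and that $\na z$ preserves order. The only thing to be a little careful about is that Lemma \ref{lemma53} is invoked in a setting where monotonicity of $\ell$ is explicitly assumed, which matches the hypothesis of the current lemma.
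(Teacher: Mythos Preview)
Your argument is correct and is essentially the same as the paper's proof. The paper's one-line justification is ``Because of \re{eq20}'', i.e.\ the identity $\sigma(x,x\nabla y)=d(x,x\nabla y)$, which is exactly the fact you extract from Lemma~\ref{lemma53}: on ordered pairs $d$ and $\sigma$ coincide, and since $x\le y$ implies $x\na z\le y\na z$, both sides of the very weak $\na$-inequality are the same for $d$ as for $\sigma$.
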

%$\delta$
%\begin{proof}

\begin{proof}

Because of  \re{eq20}. 
\end{proof}

\begin{lemma}				\label{lem67}

If
$\sigma$ satisfies the $\Delta$-inequality   then % $\Rightarrow$
$d$ satisfies the $\Delta$-inequality. 

%$\Delta$-inq
%If 

%$\sigma$ $\Delta$-ungl  
%$\Rightarrow$ 
%$d$ erfüllt 
\end{lemma}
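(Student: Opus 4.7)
My plan is to reduce the $\Delta$-inequality for $d$ directly to the one for $\sigma$ via its reformulation in lemma \ref{lem16}, and then absorb the asymmetry introduced by the $\min$ terms using a small real-variable inequality.

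First I would invoke lemma \ref{lem16}: the hypothesis that $\sigma$ satisfies the $\Delta$-inequality is equivalent to
$$\ell(x \na z) + \ell(y) \le \ell(x \na y) + \ell(y \na z) \qquad \forall x,y,z \in \ol S.$$
Subtracting $\ell(x) \ad \ell(z)$ from both sides rewrites the left-hand side as $d(x,z) + \ell(y)$, so that
$$d(x,z) \le \bigl(\ell(x \na y) - \ell(x) \ad \ell(y)\bigr) + \bigl(\ell(y \na z) - \ell(y) \ad \ell(z)\bigr) + \bigl(\ell(x) \ad \ell(y) + \ell(y) \ad \ell(z) - \ell(y) - \ell(x) \ad \ell(z)\bigr).$$
The first two bracketed expressions are exactly $d(x,y)$ and $d(y,z)$, so it suffices to show that the third bracket is $\le 0$.

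The remaining step is the purely numerical inequality
$$\min(a,b) + \min(b,c) \le b + \min(a,c) \qquad \forall a,b,c \in \R,$$
applied with $a = \ell(x)$, $b = \ell(y)$, $c = \ell(z)$. This can be proved very cleanly by rewriting $b - \min(a,b) = (b-a)_+$ and $b - \min(b,c) = (b-c)_+$ and observing
$$(b-a)_+ + (b-c)_+ \ge \bigl(b - \min(a,c)\bigr)_+ \ge b - \min(a,c),$$
since at least one of $(b-a)_+, (b-c)_+$ dominates $b - \min(a,c)$ whenever the latter is positive. Combining this with the display above yields $d(x,z) \le d(x,y) + d(y,z)$.

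I don't foresee any real obstacle here: the core computation is the lemma \ref{lem16} criterion together with a short $\min$-arithmetic identity, and lemma \ref{lem34}(\ref{eq7}) already signals that $d$ and $\sigma$ are intimately comparable. The only delicate point is that the passage from the $\sigma$-inequality to the $d$-inequality is not completely symmetric because $d$ uses $\ell(x)\ad \ell(z)$ rather than $\ell(x) + \ell(z)$, and this is precisely what the little $\min$-inequality is designed to correct.
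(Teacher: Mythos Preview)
Your proof is correct and follows essentially the same approach as the paper: both arguments use the criterion of lemma \ref{lem16} together with the real-variable inequality $\ell(x)\ad\ell(y) + \ell(y)\ad\ell(z) - \ell(x)\ad\ell(z) \le \ell(y)$, the only cosmetic difference being that the paper verifies this $\min$-inequality by case analysis while you prove it via positive parts.
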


\begin{proof}

Let $x,y,z \in \ol S$. 
In all cases like $\ell(x) \le \ell(y) \le \ell(z)$ etc. 
%, $\ell(y) \le \ell(x)\le \ell(z)$
%and $\ell(x) \le \ell(z) \le \ell(y)$ 
we see that
$$\ell(x) \ad \ell(y) + \ell(y) \ad \ell(z) 
- \ell(x) \ad \ell(z) \le \ell(y)$$ 

Thus, by lemma \ref{lem16} we get
%$$\ell(x) \ad \ell(y) + \ell(y) \ad \ell(z) 
%- \ell(x) \ad \ell(z) \le \ell(y)$$ 
%
$$\ell(x \na z) + \ell(x) \ad \ell(y) + \ell(y) \ad \ell(z) 
- \ell(x) \ad \ell(z) \le \ell(x \na y) + \ell(y \na z)$$

But this is the $\Delta$-inequality for $d$. 
%$\ADel
\end{proof}

It is 
%might appear 
astonishing 
%interesting 
how strongly the validity of the $\Delta$-inequality %,
and 
the $\na$-inequality of $d$ and $\sigma$, and the monotone increasingness of the measure of intersection and 
the 
$\delta$-function are connected:

\begin{theorem}				\label{thm61}

Assume the setting and notations of definitions \ref{def211} 
and \ref{def210}. 
In particular, suppose %t
%Assume 
that $\ell$ is monotonically increasing. 
Then the following are equivalent:

{\rm (a)} $d$ satisfies the very weak $\na$-inequality

{\rm (b)} $\sigma$ satisfies the very weak $\na$-inequality

{\rm (c)} $\delta$ is increasing

{\rm (d)} the measure of intersection is increasing

{\rm (e)} $\sigma$ satisfies the $\Delta$-inequality

{\rm (f)} $\sigma$ satisfies the $\na$-inequality

Each of these assertions imply that $d$ satisfies the $\Delta$-inequality.
%A

Reversely, the $\na$-inequality for $d$ implies the above assertions. 

\end{theorem}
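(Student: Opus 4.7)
The plan is essentially bookkeeping: almost every implication has already been established in Lemmas \ref{lem61}--\ref{lem67}, so the real task is to stitch together a cycle that reaches all six conditions, then apply Lemma \ref{lem67} for the $\Delta$-inequality conclusion, and finally dispatch the converse via Lemma \ref{lem42}.

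First I would dispose of the two easy equivalences that do not lie on the main cycle. The equivalence (c)$\Leftrightarrow$(d) is immediate from Lemma \ref{lem61}, which states that the measure of intersection is increasing iff $\delta$ is increasing. The equivalence (a)$\Leftrightarrow$(b) is immediate from Lemma \ref{lem66} (using the standing assumption that $\ell$ is monotone). So it suffices to link (b), (c), (e), (f) cyclically.

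Next I would assemble the cycle (b)$\Rightarrow$(c)$\Rightarrow$(e)$\Rightarrow$(c)$\Rightarrow$(f)$\Rightarrow$(b). Concretely: (b)$\Rightarrow$(c) is Lemma \ref{lem65} (monotonicity of $\ell$ is used here); (c)$\Rightarrow$(e) is Lemma \ref{lem62}; (e)$\Rightarrow$(c) is Lemma \ref{lem63}; (c)$\Rightarrow$(f) is Lemma \ref{lem64}. The only step not recorded verbatim in a previous lemma is (f)$\Rightarrow$(b), and it is the one place where care is needed: here I would invoke Lemma \ref{lem42} applied to $\sigma$ (which is symmetric and nilpotent), yielding the weak $\na$-inequality
\[
\sigma(x \na z,\; y \na z) \le \sigma(x,y),
\]
of which the very weak $\na$-inequality is a special case. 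Combining all the links gives (a)$\Leftrightarrow$(b)$\Leftrightarrow$(c)$\Leftrightarrow$(d)$\Leftrightarrow$(e)$\Leftrightarrow$(f).

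For the clause ``each of these assertions implies that $d$ satisfies the $\Delta$-inequality'', I would simply note that under the established equivalence (e) holds, and then apply Lemma \ref{lem67}. For the final claim, if $d$ satisfies the full $\na$-inequality then Lemma \ref{lem42} immediately supplies the weak $\na$-inequality for $d$, which is a fortiori the very weak one, i.e.\ (a); from there the equivalence chain supplies all six conditions. There is no real obstacle; the only subtle point is remembering that Lemma \ref{lem42} is what bridges the gap from the $\na$-inequality (for $\sigma$ or $d$) down to the very weak $\na$-inequality, which is what closes the cycle and also handles the converse in one stroke.
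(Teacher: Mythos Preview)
Your proof is correct and follows essentially the same approach as the paper: both establish (a)$\Leftrightarrow$(b) via Lemma~\ref{lem66}, (c)$\Leftrightarrow$(d) via Lemma~\ref{lem61}, (c)$\Leftrightarrow$(e) via Lemmas~\ref{lem62} and~\ref{lem63}, (c)$\Rightarrow$(f) via Lemma~\ref{lem64}, and close the cycle with (f)$\Rightarrow$(b); the forelast claim via Lemma~\ref{lem67} and the final claim by specializing the $\na$-inequality to the very weak one are identical as well. The only cosmetic difference is that the paper simply says (f)$\Rightarrow$(b) ``is clear'' (direct specialization of the $\na$-inequality), whereas you route this through Lemma~\ref{lem42}.
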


\begin{proof}

(a) $\Leftrightarrow$ (b) is by lemma \ref{lem66}, 
(a) $\Leftrightarrow$ (c) by lemma \ref{lem65}, 
(c) $\Leftrightarrow$ (d)	by lemma \ref{lem61}, 
(c) $\Leftrightarrow$ (e)		by lemmas \ref{lem62} and \ref{lem63},
(c) $\Rightarrow$ %(e)	  
(f) by lemma	\ref{lem64},
and 
%(e) 
(f) $\Rightarrow$ (b)	is clear. 
The %beforet vorletzte last 
forelast assertion is by lemma \ref{lem67}. %,  %. 
%and 
That the last assertion %to 
implies (a) is clear. 
\end{proof}

Having said in definition \ref{def61} what the measure of an imagined intersection
of two sentences is, we continue to define it for the last Boolean 
operations of complement and union.

\begin{definition}[Measure of complement]

{\rm 
For all $x,y \in \ol S$ put 
$$%\ell(x \backslash y)
\zeta( x \backslash y): = \ell(x \na  %\ud 
y) - \ell(y)$$

}
\end{definition}

As $\zeta( x \backslash y) = - \delta_y(x)$, $\delta$ is increasing if and only if the measure of complement $\zeta( x \backslash y)$ 
is decreasing in $y$. 
This is reminiscent of lemma \ref{lem12}: 

\begin{lemma}
If $\ell$ is monotone then
$$\zeta(x \backslash y)= d( y,x \nabla y)= \sigma(y,x \nabla y)$$
$$d(x,y)= \zeta(x \backslash y) \ud \zeta(y \backslash x)$$
$$\sigma(x,y)= \zeta(x \backslash y) + \zeta(y \backslash x)$$

\end{lemma}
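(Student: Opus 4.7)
The plan is to check each of the three displayed equalities by unwinding the definitions of $\zeta(\cdot\backslash\cdot)$, $d$, and $\sigma$; no deeper machinery is required, and the main role of the monotonicity assumption on $\ell$ is to make lemma \ref{lemma53} applicable when handling the $d$-version of the first equation.

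For the first chain of equalities, I would begin by noting that $y \le x \na y$ holds in every abelian, idempotent monoid: indeed, by commutativity and idempotence, $y \na (x \na y) = x \na (y \na y) = x \na y$, so the order relation of definition \ref{deforder} is satisfied. Since $\ell$ is assumed monotone, lemma \ref{lemma53} gives at once
$$d(y, x \na y) = \sigma(y, x \na y) = \ell(x \na y) - \ell(y),$$
and the right-hand side is exactly $\zeta(x \backslash y)$ by the definition just preceding the lemma.

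For the second equation I would just compute, using commutativity of $\na$ and the elementary identity $\max(a - b, a - c) = a - \min(b, c)$:
$$\zeta(x \backslash y) \ud \zeta(y \backslash x) = \max\bigl(\ell(x \na y) - \ell(y),\; \ell(x \na y) - \ell(x)\bigr) = \ell(x \na y) - \ell(x) \ad \ell(y) = d(x, y).$$
For the third equation, direct addition gives
$$\zeta(x \backslash y) + \zeta(y \backslash x) = 2\ell(x \na y) - \ell(x) - \ell(y) = \sigma(x, y).$$

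There is no real obstacle here; the argument is a matter of bookkeeping. The only subtlety worth flagging is that monotonicity is genuinely needed for the $d$-part of the first equation (to turn $\ell(y) \ad \ell(x \na y)$ into $\ell(y)$), while the $\sigma$-part of the first equation and the second and third equations are purely algebraic identities that would go through even without the monotonicity hypothesis.
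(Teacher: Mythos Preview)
Your proof is correct and proceeds by the same direct verification the paper has in mind; indeed, the paper omits a proof here, treating the identities as immediate from the definitions and from the formulas already recorded in lemma~\ref{lemma53} and lemma~\ref{lem53} (in particular equation~\re{eq20}, which is your first line after swapping the roles of $x$ and $y$). Your closing remark about precisely where monotonicity is used is accurate and a nice clarification.
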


\begin{definition}[Measure of union]
%$$\ell(x \nabla y)$$

 {\rm 
For $x_1, \ldots,x_n \in \ol S$ set
%define 
$$\zeta(x_1 \cup \ldots \cup x_n):= \ell(x_1 \nabla \ldots \na %\cup 
x_n)$$

}
\end{definition}

%With the above measure of elementary  Boolean expressions  
The final goal of this section is to extend the measure of
imagined Boolean expressions of sentences to all 
possible Boolean expressions.  

\begin{definition}

Let $B$ be the free %Bool's 
Boolean algebra (with zero element $0$ and one-element $1$) generated by the set $\ol S$.

\end{definition}

That is, $B$ consists of expressions like $x \cup y, x \cap y, x \backslash y = x \cap \ol y, \emptyset, \ol y$ (complement of $y$) etc. (also notated as $x + y, xy, x \ol y, 0, \ol y$) for
$x,y \in \ol S$ and $x,y \in B$. 
% an arbitrarily complexer expressions.

Two elements in $B$ are regarded as same %as 
if and only if they are the same by formal rules in %Bool's 
Boolean algebra, for example, $x + x +y = x+y, x \ol x = 0, x + \ol x = 1$.

%We do not assume the existence of a unit element $1 \in B$ satisfying $x + \ol x = 1$, but we require that 
%$y(x+ \ol x)= y$ for all $x,y \in B$. 

For $n=0$ and $x_i \in \ol S$ define the expression $x_1 \na \ldots \na x_n := \e$. 

The above definitions for $\zeta$ will now be generalized as follows: 
 
\begin{definition}		%[``Signed measure'' on Boolean algebra]			
		\label{def66}

Define a function 
 $$\zeta: B  \rightarrow \R$$
%#
as follows. 
Put 
$$\zeta(0)=\zeta(1)= 0$$
For all $x_1,\ldots,x_n,y_1 , \ldots, y_m \in \ol S$ 
with $n \ge 0, m\ge 0$		%1$
(and  $y:= y_1  \na  \ldots \na y_m$)
% $y:= y_1  \cup  \ldots \cup y_m$) 
set
$$\zeta(x_1 \cap ... \cap x_n):= 
%\sum_{X } \subseteq \ell(\na_{x \in X} x) (-1)^{|X|} $$
\sum_{k \ge 0,  
%1,
\{i_1,\ldots,i_k\}  \subseteq \{1,...,n\}} 
\ell(x_{i_1} \na \ldots \na x_{i_k}) (-1)^{k+1} $$

Here it is understood that mutually %all 
$i_s \ne i_t$.  % are unequal
 Then define 
 \be{eq78}
 %80}
%$$
\zeta(x_{1} \cap  ... \cap x_{n}  \cap \ol y_1  \cap ... \cap \ol y_m)
%:= 
= \zeta ((x_{1} \cap  ... \cap x_{n}) \backslash (y_1 
\cup  ...  \cup y_m))
%$$
\en 
%$$= \ell(x) - \zeta(x \cap  y)$$
%
$$ : = \zeta(x_1 \cap ... \cap x_n) - \zeta(x_1 \cap ... \cap x_n \cap y)$$
\be{eq79}
%$$
= 
\sum_{k \ge 0,\{i_1,\ldots,i_k\}  \subseteq \{1,...,n\}} 
\ell(y \na x_{i_1} \na \ldots \na x_{i_k}) (-1)^{k+1} 
%$$
\en

because all expressions without $y$ cancel. 
Finally set 
$$\zeta (x \cup y): = \zeta(x) + \zeta(y) \qquad \mbox{if } x \cap y= 0$$ 
%\emptyset$$

\end{definition}

%of a orthogonal sum 
%of 

\begin{lemma}
$\zeta$ is well defined.

\end{lemma}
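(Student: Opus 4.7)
My plan is to reduce everything to finitely generated sub-Boolean algebras of $B$. Any element $E \in B$ involves only finitely many generators, so lies in the sub-algebra $B_X \subseteq B$ generated by some finite $X = \{z_1,\ldots,z_N\} \subseteq \ol S$. This $B_X$ is the free Boolean algebra on $N$ abstract generators, with $2^N$ atoms
\[A_\epsilon = z_1^{\epsilon_1} \cap \cdots \cap z_N^{\epsilon_N}, \qquad \epsilon \in \{0,1\}^N\]
(writing $z^1 := z$, $z^0 := \ol z$), and every element of $B_X$ is uniquely a disjoint union of some of these atoms. I would first verify that the recipe of definition \ref{def66} assigns an unambiguous value $\mu_X(A_\epsilon)$ to each atom, extend $\mu_X$ additively through the unique atomic decomposition, and then check (a) coherence $\mu_X = \mu_{X'}|_{B_X}$ whenever $X \subseteq X'$, and (b) agreement with whatever value definition \ref{def66} produces for any allowed description of $E$. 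Together these give a well-defined $\zeta(E) := \mu_X(E)$ for any $X$ containing the generators of $E$.

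For well-definedness on atoms, setting $I^+ := \{i : \epsilon_i = 1\}$ and $y := \na_{i \notin I^+} z_i$ (with $y := \e$ if $I^+ = \{1,\ldots,N\}$), formula \re{eq79} gives
\[
\mu_X(A_\epsilon) \;=\; \sum_{J \subseteq I^+} (-1)^{|J|+1} \,\ell\!\left(y \na \na_{i \in J} z_i\right),
\]
with the pure-conjunction case recovered via $\ell(\e) = 0$. This expression is manifestly symmetric in the $z_i$'s within $I^+$ (subset sum) and within its complement (commutativity of $\na$ inside $y$). Repetition invariance, which lets one identify any atomic expression with its underlying atom in $B_X$, is the first real point: on the negative side it is automatic since only $y = \na y_j$ enters and $\na$ is idempotent; on the positive side, if a list $x_1,\ldots,x_n$ has duplicates, grouping subsets $I \subseteq \{1,\ldots,n\}$ by their image in the set of distinct values and applying
\[\sum_{t \ge 1} \binom{n_j}{t}(-1)^t = -1\]
(one factor per repeated class) collapses the inclusion-exclusion sum onto the distinct-values formula with correctly matching signs.

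For refinement coherence, I would induct on $|X' \setminus X|$ to reduce to adjoining a single new generator $z^* \notin X$. Each $B_X$-atom $A$ splits in $B_{X'}$ as the disjoint union $(A \cap z^*) \sqcup (A \cap \ol{z^*})$, and the heart of the whole proof is the identity
\[
\mu_X(A) \;=\; \mu_{X'}(A \cap z^*) + \mu_{X'}(A \cap \ol{z^*}).
\]
Unwinding via the explicit atom formula, this boils down to splitting the sum for $\mu_{X'}(A \cap z^*)$ over subsets $J' \subseteq I^+ \cup \{z^*\}$ according to whether $z^* \in J'$, and pairing the resulting halves against $\mu_X(A)$ and the contribution of $\mu_{X'}(A \cap \ol{z^*})$ (which adjoins $z^*$ to the negative collection, replacing $y$ by $y \na z^*$); the alternating signs line up because $(-1)^{|J|+1}$ flips by one when $z^*$ enters $J$. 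This is the main and most error-prone computational step — routine inclusion-exclusion bookkeeping but with signs to track carefully. Once refinement coherence is established, $\zeta(E) := \mu_X(E)$ is independent of $X$; the additivity clause $\zeta(x \cup y) = \zeta(x) + \zeta(y)$ for $x \cap y = 0$ is automatic because the disjoint sets of atoms constituting $x$ and $y$ in a large enough $B_X$ just add; and any other presentation of an atom (different orderings or repetitions of its positive/negative literals, redundancies such as $\ol{\ol z} = z$, or the edge cases $\zeta(0) = \zeta(1) = 0$) is absorbed by the symmetry and repetition arguments of the previous paragraph.
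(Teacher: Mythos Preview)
Your approach is essentially the same as the paper's: both reduce to showing that the atom values are stable under enlarging the generating set, via the splitting identity $\mu_X(A)=\mu_{X'}(A\cap z^*)+\mu_{X'}(A\cap\ol{z^*})$, which the paper verifies by exactly the same cancellation of the $x_{n+1}$-terms in the two inclusion--exclusion sums. You are somewhat more explicit than the paper about repetition invariance on the positive side (the paper tacitly works with a \emph{set} of generators from the outset), but that is extra care rather than a different idea.
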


\begin{proof}

Because we can choose common refinements, every expression in $B$ can be brought to the form $\sum_n z_n = \sum_n x_{1,n} \ldots x_{m_n,n} 
\ol y_{1,n} \ldots \ol y_{k_n,n}$ with $z_n z_m = 0$ for $n\neq m$,  %.
with $x_{i,j},y_{i,j} \in \ol S$ and $z_n$ obviously defined, 
and so the above $\zeta$ is already fully defined. 

This 
%The 
expression in $B$ is uniquely defined %, and so is then $\zeta$, 
 if the set of involved variables is determined. Thus we only need
to check that the definition of $\zeta$ is independent of the %num
set of involved variables:
  
%inv
%At first we observe that we can change the number of variables,
%without changing the value of $\zeta$ according to the above defintion. 
Observe here the occurrence of $x_{n+1}$:
$$\zeta( x_{1}   ...  x_{n} x_{n+1}   \ol y_1   ...  \ol y_m) 
+ \zeta( x_{1}   ...  x_{n}  
\ol{x_{n+1}} \,\ol y_1   ...  \ol y_m )$$
%
%= \zeta(a x_{1}   ...  x_{n}   \ol y_1   ...  \ol y_m)
$$= 
\sum_{k \ge 0,\{i_1,\ldots,i_k\}  \subseteq \{1,...,n+1\}} 
\ell(y \na x_{i_1} \na \ldots \na x_{i_k}) (-1)^{k+1} $$
$$+ 
\sum_{k \ge 0,\{i_1,\ldots,i_k\}  \subseteq \{1,...,n\}} 
\ell(y \na x_{n+1} \na x_{i_1} \na \ldots \na x_{i_k}) (-1)^{k+1} $$
$$= \zeta( x_{1}   ...  x_{n}   \ol y_1   ...  \ol y_m) $$
because all occurrences with $x_{n+1}$ cancel each other.  
This result is %correct 
consistent %by 
with $x_{n+1} + \ol{x_{n+1}}= 1$. 

By induction we can add as many variables as we want by 
the formula $x + \ol x= 1$ and expansion,  %. 
without changing $\zeta$. 
\end{proof}

\begin{lemma}					\label{lem610}
$\zeta$ is a ``signed measure'' %in $ 
on the Boolean algebra %s  
B generated by $\ol S$ 
in  the sense that 
%$\zeta$ is a Bool's algebra ``homomorphism'' concanated with a ``measure function'' in the
%sense that the rules  
$$\zeta(x \cap y)=\zeta(x) + \zeta(y) - \zeta(x \cup y)$$

for all $x,y \in B$. 
In particular we have
$$\zeta(\ol y)= - \zeta(y)$$
$$\zeta(x \backslash y)=  \zeta(x \cup y)   - \zeta(y) $$

%\zeta(x) - \zeta (x \cap y)$$

\end{lemma}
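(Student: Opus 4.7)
The plan is to reduce every claim in the lemma to one fact, namely that $\zeta$ is additive on disjoint unions in $B$. The preceding well-definedness lemma already gives the essential content: every $z \in B$ admits a canonical representation as a disjoint sum of atoms $x_{1}\ldots x_m \ol y_1 \ldots \ol y_k$ (for any sufficiently large variable set), and $\zeta(z)$ is the sum of the atomic $\zeta$-values, independent of the refinement. Given this, if $u,v \in B$ are disjoint, refine their atomic decompositions against a common variable set; the atoms of $u$ and of $v$ are then pairwise disjoint, and the atoms of $u\cup v$ are exactly the two families joined. Hence $\zeta(u\cup v)=\zeta(u)+\zeta(v)$ for \emph{every} disjoint pair $u,v \in B$, not only for the base case appearing in definition \ref{def66}.

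With this upgraded additivity in hand, the inclusion--exclusion identity is routine Boolean bookkeeping. For arbitrary $x,y \in B$, the three elements $x\cap y$, $x\backslash y = x\cap \ol y$ and $y\backslash x = \ol x \cap y$ of $B$ are pairwise disjoint, and
$$ x = (x\cap y)\cup (x\backslash y), \quad y = (x\cap y)\cup (y\backslash x), \quad x \cup y = (x\cap y)\cup (x\backslash y)\cup (y\backslash x), $$
each expressed as a disjoint union in $B$. Applying disjoint-union additivity to each line and combining them gives
$$ \zeta(x) + \zeta(y) = 2\zeta(x\cap y) + \zeta(x\backslash y) + \zeta(y\backslash x) = \zeta(x\cap y) + \zeta(x\cup y), $$
which is the asserted identity.

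The two \emph{in particular} consequences drop out immediately. Since $y$ and $\ol y$ are disjoint with $y\cup\ol y = 1$, additivity together with the definitional values $\zeta(0)=\zeta(1)=0$ gives $\zeta(y)+\zeta(\ol y)=\zeta(1)=0$, hence $\zeta(\ol y)=-\zeta(y)$. For the third relation, write $x\cup y = y\cup (x\backslash y)$ as a disjoint union; additivity yields $\zeta(x\cup y)=\zeta(y)+\zeta(x\backslash y)$, i.e.\ $\zeta(x\backslash y)=\zeta(x\cup y)-\zeta(y)$. The only step that is not mechanical is the first one, promoting the one-step definitional additivity to arbitrary disjoint pairs in $B$; this is precisely what the common atomic refinement in the well-definedness lemma buys us, and is the main obstacle I would expect to have to spell out carefully.
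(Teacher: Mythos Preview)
Your proof is correct and follows essentially the same approach as the paper: both arguments rest on taking a common atomic refinement of $x$ and $y$ (supplied by the well-definedness lemma) so that $\zeta$ becomes a sum over atoms, after which inclusion--exclusion is immediate. The paper compresses your two steps into one by observing directly that $\zeta(x\cap y)$ sums over the atoms common to $x$ and $y$ while $\zeta(x\cup y)$ sums over those in either, but the content is the same.
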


\begin{proof}
This is clear, as when we form common refinements for $x$ and $y$
in $B$ as indicated in the last proof, $\zeta(x \cap y)$ %counts 
sums  
only over the products %(monomials) 
which appear both in $x,y$,  and $\zeta(x \cup y)$ 
%only  
over 
those which appear in $x$ or $y$.

The last two identities follow from this and $\zeta(1)=0$. 
\end{proof}

We %remark 
should however be reminded that we do not work with a real positive measure 
on measurable sets.  We only observe:

\begin{lemma}

For all $x,y,x_i \in \ol S$ we have: 

(i)
$\zeta(x \cap y) \ge 0$, $\zeta(x_1 \cup \ldots \cup x_n) \ge 0$ 

(ii)
If $\ell$ is monotonically increasing then $\zeta(x\backslash y) \ge 0$.  %> 0$

%(iii)
%If $\ell$ is monotone  and $\sigma$ satisfies the $\Delta$-inequality then
%$\zeta(x \cap y \cap z) \ge 0$.   %>0$.

\end{lemma}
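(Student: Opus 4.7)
The plan is to unpack each inequality directly from the definitions in Section~\ref{sec6} together with the defining properties of a length function (Definition~\ref{def211}, as established in Lemma~\ref{lem03}). Nothing deeper than subadditivity, positivity, and (for part~(ii)) monotone increasingness of $\ell$ should be required.

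For part~(i), the first inequality is immediate from Definition~\ref{def61}: one has $\zeta(x\cap y)=\ell(x)+\ell(y)-\ell(x\na y)\ge 0$ precisely because of the subadditivity \re{eql2}. The second inequality, $\zeta(x_1\cup\cdots\cup x_n)=\ell(x_1\na\cdots\na x_n)\ge 0$, follows directly from the positivity of $\ell$. I will note that even though this section permits $\ell$ to fail \re{eql3}, positivity is still available: the paper has already observed that it is automatic from idempotence together with subadditivity, via $\ell(x)=\ell(x\na x)\le 2\ell(x)$. So neither piece of~(i) needs monotonicity of $\ell$, which matches the statement.

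For part~(ii), under the extra hypothesis that $\ell$ is monotonically increasing, I apply \re{eql3} to $y$ and $x$: commutativity of $\na$ gives $\ell(y)\le\ell(y\na x)=\ell(x\na y)$, so $\zeta(x\backslash y)=\ell(x\na y)-\ell(y)\ge 0$.

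There is essentially no obstacle; the only thing to be careful about is that part~(i) must be proved without \re{eql3}, since this section explicitly drops that axiom, and I must therefore invoke only subadditivity and the idempotence-based positivity argument rather than any monotonicity. Part~(ii) is then a one-line application of \re{eql3}.
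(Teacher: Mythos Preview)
Your proof is correct and is exactly the intended argument: subadditivity \re{eql2} gives $\zeta(x\cap y)\ge 0$, positivity of $\ell$ (which, as you rightly observe, survives the dropping of \re{eql3} via the idempotence argument) gives $\zeta(x_1\cup\cdots\cup x_n)\ge 0$, and monotone increasingness \re{eql3} gives $\zeta(x\backslash y)\ge 0$. The paper states this lemma without proof, so your write-up simply makes explicit what the paper leaves to the reader.
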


%\begin{proof}
%
%(i) and (ii) are obvious. For (iii) use definition \ref{def66}, lemma %\ref{lem16} and $\ell(x \na y \na z)\ge \ell(x \na z)$. 
%%
%\end{proof}

\section{New metric}			%{New metrik}			

									\label{sec7}

Up to now we could not clarify if $d$ or $\sigma$ satisfies
the triangle or $\na$-inequality in our natural prototype languages. 
We have theorem \ref{thm61}, and it tells us the %triangle inequality 
$\na$-inequality 
of $d$ implies all the desired natural properties. 
In this section we show how we may modify a given length function $\ell$ such that we get the %triangle 
$\na$-inequality for its associated distance function $d_\ell$ and thus all the desired properties.  
The approach is %at 
as follows. At first we modify the distance function
$d_\ell$ for a given $\ell$ as in definitions \ref{def71}
and \ref{def72} such that the so obtained $d$ satisfies
the $\Delta$- and $\na$-inequality. 
Then we extract from it a new length function $\ell_2$ by
setting $\ell_2 (x):=d(x,\e)$ (see definition \ref{def53}). 
But this new $\ell_2$ may not be monotone increasing 
as required in (\ref{eql3}). That is why we make it monotone
increasing in definition \ref{def74} to obtain a new length
$\ol{\ell_2}$. 
But the validity of the desired $\Delta$- and 
$\na$-inequality 
of $d_{\ol{\ell_2}}$ and $\sigma_{\ol{\ell_2}}$ associated to $\ol{\ell_2}$
might fail,  also cf. lemma %-
\ref{lemma713}.
%For this new $\ol{\ell_2}$ the $\Delta$- and $\na$-inequality 
%of $d_{\ol{\ell_2}}$  
%%may 
%might fail, also cf. lemma %-
%\ref{lemma713}. 
%But then the validity of the desired $\Delta$- and 
%$\na$-inequality 
%of $d$ and $\sigma$ associated to $\ol{\ell_2}$
%may be destroyed. 
Hence we restart the above procedure
with $\ol{\ell_2}$ again, and then again and again infinitely 
many often, and in the limit the obtained length function
$\ell^\infty$ satisfies all desired properties.
 
 A key lemma for this procedure is lemma \ref{lem511}, whose proof shows that we need both the $\Delta$- and $\na$-inequality simultaneously. Hence, it is also intrinsic to do the procedure with the $d$-distance, because the lemma might not hold for the 
 $\sigma$-distance. 
 Doing the procedure for the $\sigma$-distance instead leads at least to the estimates of lemma \ref{lem727} in the limit.
 
 %of lemma \ref{lem727} 
  %instead of the $\sigma$-dist 

 Actually, 
in this section we start with $d$, and if nothing else is said, $d$ always denotes 
a symmetric, positive, nilpotent function $d: \ol S \times \ol S \rightarrow \R$ for 
an abelian, idempotent monoid $(\ol S,\na,\e)$. 
% 
%Let $d$ be a symmetric, positive, nilpotnet function $d: \ol S \times %\ol S \rightarrow \R$ for 
%an abelain, idempotent monoid $(\ol S,\na)$. 
%%a set $\ol S$.
%und $d(x,x)=0$ 

Example \ref{ex106} suggests, that $d$ of definition \ref{def210}
does not satisfy the $\Delta$-inequality in general. 
A way out of that fact is to repair
$d$ as follows:

%See example \ref{ex51} 

%A way out of the last example, that $d$ does appearently 
%not always necessarily satisfy the $\Delta$-inequality, is to repair
%$d$ as follows:

\begin{definition}[Forced $\Delta$-inequality]						\label{def71}

Define a function $\tilde d: \ol S \times \ol S \rightarrow \R$ by
$$\tilde d(x,y) = \inf_{n \ge 0, a_1,\ldots,a_n \in \ol S}  d(x,a_1) + d(a_2,a_3) + \ldots + d(a_{n-1},a_n) + d(a_n,y)$$

\end{definition}

%Obviously we have: 

\begin{lemma}			\label{lemma51}

$\tilde d$ satisfies the $\Delta$-inequality, and $d = \tilde d$ if and only if $d$ satisfies the $\Delta$-inequality.

%$\tilde d$ erfüllt $\Delta$-ungl

\end{lemma}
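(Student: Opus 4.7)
The plan is to handle the two halves separately using the standard chain-concatenation argument, which is the canonical way to build a pseudometric from a symmetric nilpotent function.

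First I would verify that $\tilde d$ satisfies the $\Delta$-inequality. Given $x,y,z \in \ol S$ and $\varepsilon > 0$, by definition of the infimum I can pick a chain $x,a_1,\ldots,a_n,y$ whose total cost is within $\varepsilon/2$ of $\tilde d(x,y)$, and a chain $y,b_1,\ldots,b_m,z$ whose cost is within $\varepsilon/2$ of $\tilde d(y,z)$. Concatenating these two chains at the common endpoint $y$ produces a legal chain from $x$ to $z$ (now with $y$ appearing as one of the intermediate points), whose total cost therefore bounds $\tilde d(x,z)$ from above by $\tilde d(x,y) + \tilde d(y,z) + \varepsilon$. Letting $\varepsilon \to 0$ gives the inequality. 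Note that since $d \ge 0$ all summands are nonnegative, so $\tilde d \ge 0$ and the infimum is a well-defined real number bounded above by $d(x,y)$ (take the empty chain, $n=0$).

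Second I would establish the equivalence $d = \tilde d \Leftrightarrow d$ satisfies the $\Delta$-inequality. The $(\Leftarrow)$ direction is the main content: assuming $d$ obeys the triangle inequality, an easy induction on $n$ shows that for every chain $x,a_1,\ldots,a_n,y$ one has
\[
d(x,y) \le d(x,a_1) + d(a_1,a_2) + \cdots + d(a_{n-1},a_n) + d(a_n,y),
\]
so taking infima yields $d(x,y) \le \tilde d(x,y)$; combined with the trivial bound $\tilde d \le d$ (again via $n=0$), equality follows. The $(\Rightarrow)$ direction is immediate from the first part: if $d = \tilde d$, then $d$ inherits the $\Delta$-inequality from $\tilde d$.

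I do not anticipate any real obstacle here; the argument is a routine chain infimum construction. The one minor point that deserves care is simply interpreting the chain in the definition correctly (so that concatenation at $y$ yields an admissible chain) and checking that the infimum is finite, both of which are handled above.
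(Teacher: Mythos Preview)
Your proof is correct and is essentially the same standard chain-concatenation argument the paper has in mind; the paper's own proof is a one-line remark that the right-hand side of the $\Delta$-inequality is among the values over which the infimum in $\tilde d$ is taken, and you have simply spelled out the details.
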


\begin{proof}

This is obvious as the right hand side of the $\Delta$-inequality 
%above inequality 
is one
possible value over which the infimum is taken in $\tilde d$. 
%$\hat d$.
%
\end{proof}

The following lemma may be useful for the pseudometric
$\rho$ on $\ol S$ defined by $\rho(x,y) := |\ell(x)  - \ell(y)|$ 
to get a lower bound for $\tilde d$, cf. lemma \ref{lem34}.

\begin{lemma}				\label{lemma52} 
If %any 
a function $\rho: \ol S \times \ol S \rightarrow \R$ satisfies the $\Delta$-inequality then  we
%We 
have the following lower bound estimate: 
%If 
$$\rho \le d \quad \Rightarrow \quad \rho \le \tilde d$$

%$$ %\Big (
%\forall x,y \in \ol S : |\ell(x)  - \ell(y)| \le d(x,y) %\Big)  
%\; \Rightarrow \; 
%\forall x,y \in \ol S :
%|\ell(x)  - \ell(y)| \le \tilde d(x,y)$$

\end{lemma}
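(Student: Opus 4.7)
The plan is very short: this is an iterated triangle inequality combined with the termwise bound $\rho \le d$.

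First I would fix arbitrary $x,y \in \ol S$, a natural number $n \ge 0$, and elements $a_1,\ldots,a_n \in \ol S$, and look at one particular chain-sum appearing under the infimum in definition \ref{def71}, namely
\[
T := d(x,a_1) + d(a_1,a_2) + \cdots + d(a_{n-1},a_n) + d(a_n,y).
\]
(Here I read the expression in definition \ref{def71} as the telescoping chain from $x$ through $a_1,\ldots,a_n$ to $y$; the printed subscripts look like a typo.) Applying $\rho \le d$ to each of the $n+1$ summands gives
\[
\rho(x,a_1) + \rho(a_1,a_2) + \cdots + \rho(a_{n-1},a_n) + \rho(a_n,y) \le T.
\]

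Next I would iterate the $\Delta$-inequality for $\rho$ a total of $n$ times. Since $\rho$ satisfies $\rho(u,w) \le \rho(u,v) + \rho(v,w)$ for all $u,v,w$, a straightforward induction on $n$ collapses the chain:
\[
\rho(x,y) \le \rho(x,a_1) + \rho(a_1,a_2) + \cdots + \rho(a_{n-1},a_n) + \rho(a_n,y).
\]
Combining this with the previous inequality gives $\rho(x,y) \le T$ for every choice of $n$ and $a_1,\ldots,a_n$. Taking the infimum over all such choices yields $\rho(x,y) \le \tilde d(x,y)$, which is the desired conclusion.

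There is essentially no obstacle: the $\Delta$-inequality is exactly what is needed to shrink a long chain down to a single $\rho$, and the termwise hypothesis $\rho \le d$ is exactly what bounds $\rho$-chains by $d$-chains. The only mild subtlety is reading the telescoping chain in the definition of $\tilde d$ correctly, and confirming that the base cases $n=0$ (giving $\tilde d(x,y) \ge d(x,y) \ge \rho(x,y)$) and $n=1$ (a single triangle step) are included, both of which are immediate.
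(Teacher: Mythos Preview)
Your proof is correct and is exactly the intended argument: the paper omits a proof for this lemma, treating it as immediate from the definition of $\tilde d$ and the iterated $\Delta$-inequality, which is precisely what you wrote out.
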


Next we repair $d$ to satisfy the $\na$-inequality:

\begin{definition}[Forced $\na$-inquality]					\label{def72}

Define a function $\hat d: \ol S \times \ol S \rightarrow \R$ by
$$\hat d(x,y) := \inf_{x  = %\equiv 
x_1 \nabla \ldots \nabla x_n, \,  y  = %\equiv 
y_1 \nabla \ldots \nabla y_n, \; x_i,y_i \in \ol S}
	d(x_1,y_1) + \ldots  +  d(x_n,y_n)$$

\end{definition}

\begin{lemma}				\label{lemma54}

$\hat d$ satisfies the $\na$-inequality, and $d = \hat d$ if and only if $d$ satisfies the $\na$-inequality.

% erfüllt $\nabla$-ungl

%$$\hat d (x \nabla y,z \nabla w) \le d (x ,z )  + d ( y, w) $$
\end{lemma}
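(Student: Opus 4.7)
The argument has two parts. For the first assertion, I will show that concatenating near-optimal decompositions witnesses the $\na$-inequality for $\hat d$. Fix $x,y,a,b \in \ol S$ and $\varepsilon > 0$. Choose decompositions $x = x_1 \na \ldots \na x_n$, $a = a_1 \na \ldots \na a_n$ with $\sum_{i=1}^{n} d(x_i,a_i) \le \hat d(x,a) + \varepsilon/2$, and $y = y_1 \na \ldots \na y_m$, $b = b_1 \na \ldots \na b_m$ with $\sum_{j=1}^{m} d(y_j,b_j) \le \hat d(y,b) + \varepsilon/2$. Concatenation yields length-$(n+m)$ decompositions $x \na y = x_1 \na \ldots \na x_n \na y_1 \na \ldots \na y_m$ and $a \na b = a_1 \na \ldots \na a_n \na b_1 \na \ldots \na b_m$ (using commutativity of $\na$ from lemma \ref{lem02}), so
$$\hat d(x \na y, a \na b) \le \sum_{i=1}^{n} d(x_i,a_i) + \sum_{j=1}^{m} d(y_j,b_j) \le \hat d(x,a) + \hat d(y,b) + \varepsilon.$$
Letting $\varepsilon \to 0$ gives the $\na$-inequality for $\hat d$.

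For the second assertion, the direction $d = \hat d \Rightarrow d$ satisfies the $\na$-inequality is immediate from what we just proved. For the converse, suppose $d$ satisfies the $\na$-inequality. Taking the trivial decomposition $n=1$, $x_1=x$, $y_1=y$ in definition \ref{def72} gives $\hat d(x,y) \le d(x,y)$. Conversely, given any decomposition $x = x_1 \na \ldots \na x_n$, $y = y_1 \na \ldots \na y_n$, a straightforward induction on $n$ using the $\na$-inequality shows
$$d(x,y) = d(x_1 \na \ldots \na x_n, y_1 \na \ldots \na y_n) \le \sum_{i=1}^{n} d(x_i,y_i);$$
taking the infimum over all such decompositions yields $d(x,y) \le \hat d(x,y)$, so $d = \hat d$.

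The main (minor) obstacle is bookkeeping around the fact that the definition of $\hat d$ requires the two decompositions (of $x$ and of $y$) to have \emph{equal} length. This is harmless: one may always pad with the neutral element $\e$ since $z \na \e = z$, and in the concatenation argument above both resulting decompositions automatically have length $n+m$. Everything else reduces to the standard $\varepsilon$-argument for infima and an inductive application of the hypothesized $\na$-inequality.
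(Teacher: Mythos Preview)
Your proof is correct and follows essentially the same approach as the paper. The paper's proof is a single sentence (``obvious as the right hand side of the $\na$-inequality is one possible value over which the infimum is taken in $\hat d$''); you have simply spelled out the details of that sentence with the $\varepsilon$-argument, the concatenation of decompositions, and the induction for the converse direction.
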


\begin{proof}

This is obvious as the right hand side of the %above 
$\na$-inequality is one
possible value over which the infimum is taken in $\hat d$.
\end{proof}

Fortunately, the transition from $d$ to $\tilde d$ does 
not destroy the validitiy of the $\na$-inequality:

\begin{lemma}					\label{lemma55}

If $d$ satisfies the $\na$-inequality, then $\tilde d$ 
satisfies the $\na$-inequality.

%$d$ erfüllt $\nabla$-ungl  (($\Delta$-ungl))

%$\Rightarrow$ 

%$\tilde d$ erfüllt $\nabla$-ungl (($\Delta$-ungl))

\end{lemma}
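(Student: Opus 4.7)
The plan is to show that $\tilde d$ inherits the $\na$-inequality directly from $d$ by running two near-optimal chains in parallel and taking $\na$ pointwise. Fix $x,y,a,b \in \ol S$ and $\varepsilon > 0$. By the definition of $\tilde d$ as an infimum, choose chains
\[
x = c_0, c_1, \ldots, c_n, c_{n+1} = a \qquad\text{and}\qquad y = e_0, e_1, \ldots, e_m, e_{m+1} = b
\]
in $\ol S$ whose associated sums of consecutive $d$-distances are within $\varepsilon/2$ of $\tilde d(x,a)$ and $\tilde d(y,b)$ respectively.

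The first step is to make the two chains have equal length without altering their total cost. Since $d$ is nilpotent, $d(z,z)=0$ for every $z$, so repeating the terminal element of the shorter chain as many times as needed is cost-free. After this padding we may assume both chains have the same length, say $c_0,\ldots,c_{N+1}$ and $e_0,\ldots,e_{N+1}$, still terminating in $a$ and $b$.

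Next, I form the candidate chain in $\ol S$ obtained by pointwise $\na$:
\[
x \na y = c_0 \na e_0,\; c_1 \na e_1,\; \ldots,\; c_{N+1} \na e_{N+1} = a \na b.
\]
Because $d$ satisfies the $\na$-inequality by hypothesis, each consecutive step satisfies
\[
d(c_i \na e_i,\, c_{i+1} \na e_{i+1}) \le d(c_i, c_{i+1}) + d(e_i, e_{i+1}).
\]
Summing over $i = 0, \ldots, N$ and using that this particular chain is just one candidate in the infimum defining $\tilde d(x \na y, a \na b)$, we obtain
\[
\tilde d(x \na y, a \na b) \le \sum_{i=0}^{N} d(c_i, c_{i+1}) + \sum_{i=0}^{N} d(e_i, e_{i+1}) \le \tilde d(x,a) + \tilde d(y,b) + \varepsilon.
\]
Letting $\varepsilon \to 0$ gives the $\na$-inequality for $\tilde d$.

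The only delicate point is the equalization of chain lengths, and this is handled entirely by the nilpotence of $d$; no use of the $\Delta$-inequality is needed here, just the fact that appending $d(a,a)=0$ terms is harmless. I do not anticipate a genuine obstacle, as the $\na$-inequality of $d$ is applied term by term and the infimum structure of $\tilde d$ absorbs the chain we construct.
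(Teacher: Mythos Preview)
Your proof is correct and follows essentially the same approach as the paper: restrict the infimum defining $\tilde d(x\na y,a\na b)$ to chains of the form $c_i\na e_i$ and apply the $\na$-inequality for $d$ term by term. You are simply more explicit than the paper about the chain-length equalization via the nilpotence $d(z,z)=0$, which the paper leaves implicit.
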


\begin{proof}

By application of the $\na$-inequality for $d$ we get 
$$\tilde d(x \na y ,z \na w) = \inf_{n \ge 0, a_1,\ldots,a_n \in \ol S}  d(x \na y,a_1) + d(a_2,a_3) + \ldots  + d(a_n, %y 
z \na w)$$
$$\le \inf_{n \ge 0, a_i, b_i \in \ol S}  d(x \na y,a_1 \na b_1) + d(a_2 \na b_2 ,a_3 \na b_3) + \ldots  + d(a_n \na b_n, %y 
z \na w)$$
$$\le \tilde d(x,z) +  \tilde d(y,w)$$ 

where the first inequality is because the infimum goes 
over a smaller selection. 
\end{proof}

%$\tilde {\hat d}$

\begin{lemma}				\label{lemma562}

$\tiha d$ satisfies the $\na$- and $\Delta$-inequality, and $d = \tiha d$ if and only if $d$ satisfies the $\na$- and $\Delta$- inequality. 

%$\tilde {\hat d}$ erfüllt $\nabla$-ungl und $\Delta$-ungl
%$D:=
\end{lemma}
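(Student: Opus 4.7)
The plan is to read $\tiha d$ as the composition $\tilde{\hat d}$, apply $\hat{\,}$ first and then $\tilde{\,}$, and chain together Lemmas \ref{lemma51}, \ref{lemma54} and \ref{lemma55} in the right order. No new computation should be needed; the only thing to watch is the order of the operations, because at first glance it is not obvious that taking the $\Delta$-closure after the $\na$-closure does not destroy the $\na$-inequality we just installed.

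For the forward direction, I would first invoke Lemma \ref{lemma54} to conclude that $\hat d$ satisfies the $\na$-inequality. Then, treating $\hat d$ as the input, Lemma \ref{lemma55} gives that $\tilde{\hat d}$ still satisfies the $\na$-inequality, since that lemma says exactly that $\tilde{\,}$ preserves the $\na$-inequality. Finally, Lemma \ref{lemma51} applied to $\hat d$ produces that $\tilde{\hat d}$ satisfies the $\Delta$-inequality. Hence $\tiha d = \tilde{\hat d}$ satisfies both inequalities simultaneously.

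For the ``if and only if'' I would argue by fixed points. If $d = \tiha d$, then $d$ inherits both inequalities from $\tiha d$ by what was just shown. Conversely, if $d$ satisfies the $\na$-inequality, Lemma \ref{lemma54} gives $\hat d = d$; if $d$ also satisfies the $\Delta$-inequality, Lemma \ref{lemma51} then gives $\tilde d = d$, and combining these two equalities yields $\tiha d = \tilde{\hat d} = \tilde d = d$.

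The only step that could be a stumbling block is the preservation claim used in the middle, namely that $\tilde{\,}$ does not undo the $\na$-inequality already achieved by $\hat{\,}$. But this is exactly the content of Lemma \ref{lemma55} (whose proof uses that the infimum defining $\tilde d$ is, after restriction to decompositions $a_i = a_i' \na b_i'$, bounded by $\tilde d(x,z)+\tilde d(y,w)$ via the $\na$-inequality of $d$). So the entire proof is really a one-line chaining of the earlier lemmas, and I would present it as such.
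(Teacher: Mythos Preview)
Your argument is correct and follows the same route as the paper: chain Lemmas \ref{lemma54}, \ref{lemma55}, and \ref{lemma51} to get both inequalities for $\tiha d$, and use the fixed-point characterizations in Lemmas \ref{lemma51} and \ref{lemma54} for the equivalence. The paper additionally spells out the sandwich $d \ge \hat d \ge \tiha d = d$ to extract $d = \hat d$ in the ``only if'' direction, but your version (inheriting the inequalities directly from the already-proven first part) is equally valid and arguably cleaner.
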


\begin{proof}

The %frist
first  assertion  follows %from 
by 
%By 
lemmas \ref{lemma51}, \ref{lemma54} 
and \ref{lemma55}.

If $d = \tiha d$ then $d \ge \hat d \ge \tiha d = d$, 
thus $d = \hat d \ge \tiha d = d = \hat d$, so that the last equivalence is by lemmas \ref{lemma55} and \ref{lemma51}. 
\end{proof}

%\begin{lemma}
%$\tilde {\hat d}$ erfüllt $\nabla$-ungl und $\Delta$-ungl
%\end{lemma}

Hence we get a projection from metric candidates to those
ones which satisfy the $\Delta$- and $\na$-inequality: 

\begin{corollary}[Projection onto pseudometrics satisfying $\na$-inequality]

%In particular, there 
There is  a projection $d \mapsto \tiha d$ 
from the set of positive, symmetric, nilpotent functions on $\ol S$
onto the set of pseudometrics on $\ol S$ satisfying the $\na$-inequality. 

\end{corollary}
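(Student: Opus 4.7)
The plan is to verify two things: that $\tiha d$ actually lies in the target class (pseudometrics satisfying the $\nabla$-inequality), and that the assignment $d\mapsto \tiha d$ is idempotent on the source class, so that it restricts to the identity on the image. Both facts should follow from Lemma~\ref{lemma562}, which already packages the hard work done in Lemmas \ref{lemma51}, \ref{lemma54} and \ref{lemma55}.

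First I would verify that $\tiha d$ is a pseudometric whenever $d$ is positive, symmetric and nilpotent. Positivity is immediate since $\tiha d$ is defined as an infimum of finite sums of nonnegative values of $d$; symmetry is inherited from the symmetry of $d$ together with the symmetric form of the two defining infima (reversing the roles of the $x_i,y_i$ in $\hat d$ and reversing the chain $a_1,\ldots,a_n$ in the $\tilde{\cdot}$ operation); and nilpotence $\tiha d(x,x)=0$ comes from taking the trivial length-one decomposition $x=x$ in $\hat d$ to obtain $\hat d(x,x)\le d(x,x)=0$, followed by the trivial $n=0$ chain in the $\tilde{\cdot}$ step to obtain $\tilde{\hat d}(x,x)\le \hat d(x,x)$. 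Combining this with Lemma~\ref{lemma562}, which gives the $\Delta$- and $\nabla$-inequality for $\tiha d$, shows that $\tiha d$ indeed lies in the target class.

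Second, I would establish the projection (idempotence) property $\tiha{(\tiha d)}=\tiha d$. This is the one-line consequence of the ``if and only if'' half of Lemma~\ref{lemma562}: since $\tiha d$ already satisfies both the $\Delta$- and the $\nabla$-inequality, applying the operation $d\mapsto \tiha d$ a second time leaves it invariant. Equivalently, if $d$ was already a pseudometric with the $\nabla$-inequality, then $d=\tiha d$ by the same lemma, so the map restricts to the identity on its image, which is exactly what ``projection'' means.

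I do not expect any real obstacle: the previous lemmas have already done all the structural work, and the remaining content is bookkeeping together with the elementary check that $\tiha d$ is positive, symmetric and nilpotent. The only point deserving a moment of care is the symmetry of $\hat d$, where one must observe that the definition is symmetric in the two arguments because one is free to pair a decomposition $x=x_1\nabla\cdots\nabla x_n$ against $y=y_1\nabla\cdots\nabla y_n$ in either order, and $d(x_i,y_i)=d(y_i,x_i)$.
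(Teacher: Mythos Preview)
Your proposal is correct and follows essentially the same approach as the paper, which treats the corollary as an immediate consequence of Lemma~\ref{lemma562}. You add the explicit verification that $\tiha d$ is positive, symmetric and nilpotent (which the paper leaves implicit), and you correctly extract both the ``into'' and the ``onto/idempotence'' parts from the two directions of the equivalence in that lemma.
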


The following bar operator turns %possible 
possibly non-increasing length functions to increasing
ones: 

\begin{definition}[Bar operator]				\label{def74}

Given a positive function $\ell : \ol S \rightarrow \R$ set 
$\ol \ell : \ol S \rightarrow \R$ as 
$$\ol \ell(x):= \inf_{z \in \ol S} \ell( x \na z)$$ 

\end{definition} 

\begin{lemma}						\label{lem76}
If $\ell: \ol S \rightarrow \R$ is a possibly non-increasing length function then 
$\ol \ell$ is an increasing length-function, i.e. %inequality \re{} is satisfied.
(for all $x,y \in \ol S$)
$$ \ol \ell(x) \le \ol \ell(x \na y) $$  %\qquad \forall x,y \in \ol S$$ 

If $\ell$ is already monotonically increasing, then $\bar \ell = \ell$. 
\end{lemma}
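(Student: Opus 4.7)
The plan is to verify, in order, the three axioms for $\ol\ell$ being an increasing length function in the sense of definition \ref{def211}, namely $\ol\ell(\e)=0$, subadditivity, and monotone increasingness; positivity of $\ol\ell$ then comes for free from $\ell\ge 0$. Throughout I only use that $\ell$ is positive, satisfies $\ell(\e)=0$, and is subadditive, since the ``possibly non-increasing'' hypothesis is precisely the dropping of \re{eql3}.

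First I compute $\ol\ell(\e)=\inf_{z}\ell(\e\na z)=\inf_z\ell(z)$, which equals $0$ since $\ell\ge 0$ and $z=\e$ is allowed in the infimum. For monotone increasingness, the key observation is that the infimum defining $\ol\ell(x)$ ranges over \emph{all} $z\in\ol S$, and in particular over the subset $\{y\na w:w\in\ol S\}$. Hence for every $w\in\ol S$,
\[
\ol\ell(x)\;\le\;\ell\bigl(x\na(y\na w)\bigr)\;=\;\ell(x\na y\na w),
\]
and taking the infimum over $w$ on the right gives $\ol\ell(x)\le\ol\ell(x\na y)$.

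For subadditivity I pick $\varepsilon>0$ and approximants $z_1,z_2\in\ol S$ with $\ell(x\na z_1)\le\ol\ell(x)+\varepsilon$ and $\ell(y\na z_2)\le\ol\ell(y)+\varepsilon$. Using $z_1\na z_2$ as a candidate in the infimum for $\ol\ell(x\na y)$, commutativity of $\na$, and the subadditivity of $\ell$,
\[
\ol\ell(x\na y)\;\le\;\ell\bigl(x\na y\na z_1\na z_2\bigr)\;\le\;\ell(x\na z_1)+\ell(y\na z_2)\;\le\;\ol\ell(x)+\ol\ell(y)+2\varepsilon,
\]
and letting $\varepsilon\to 0$ finishes subadditivity. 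Positivity is immediate from $\ol\ell(x)=\inf_z\ell(x\na z)\ge 0$.

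Finally, for the second assertion, assume $\ell$ is already monotonically increasing. Then for every $z\in\ol S$ we have $\ell(x)\le\ell(x\na z)$ by \re{eql3}, so $\ell(x)\le\inf_z\ell(x\na z)=\ol\ell(x)$; choosing $z=\e$ in the infimum gives $\ol\ell(x)\le\ell(x\na\e)=\ell(x)$ by \re{eq28} (interpreted in the monoid as $x\na\e=x$). Hence $\ol\ell=\ell$. No real obstacle arises; the only mildly delicate step is the subadditivity argument, which needs the double $\varepsilon$-approximation rather than a direct single-infimum comparison, but this is standard.
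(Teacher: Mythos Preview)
Your proof is correct. The paper actually states this lemma without proof, treating it as routine; your argument supplies exactly the standard verification one would expect, and each step is sound. One minor remark: when you invoke \re{eq28} for $x\na\e=x$, note that in the abstract setting of definition \ref{def211} this is simply the monoid identity axiom rather than the language-level relation \re{eq28}, but this does not affect the argument.
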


From the various distance functions we naturally extract the length functions 
out:

\begin{definition}			\label{def53}

{\rm

%Given a 
Define positive functions $\ell,\tilde \ell, \hat \ell, \tiha \ell : \ol S \rightarrow \R$  (depending on $d$)  by
% set 
%$\ol \ell : \ol S \rightarrow \R$ as 
%
\be{eq167}
%$$
\ell(x):= d(x,\e), \;\tilde \ell(x):= \tilde d(x,\e), \;\hat \ell(x):= \hat d(x,\e), \; \tiha \ell(x):= \tiha d(x,\e)
%$$ 
\en

}
\end{definition} 

\begin{lemma}						\label{lem77}

If $d$ satisfies the $\na$-inequality, then $\ell$ of \re{eq167} is a non-increasing length function, and $\ol \ell$ a length function. 

\end{lemma}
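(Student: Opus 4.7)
The plan is to verify the three defining properties of a (possibly non-increasing) length function directly from the hypotheses on $d$, and then invoke Lemma \ref{lem76} for the second assertion.

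First I would check that $\ell(\e) = d(\e,\e) = 0$, which is immediate from nilpotency of $d$. Next, positivity of $\ell$ follows from positivity of $d$: for every $x \in \ol S$ we have $\ell(x) = d(x,\e) \ge 0$. For subadditivity I would use the idempotence of the monoid at $\e$, namely $\e = \e \na \e$, together with the $\na$-inequality applied to $d$:
$$\ell(x \na y) \;=\; d(x \na y,\; \e \na \e) \;\le\; d(x,\e) + d(y,\e) \;=\; \ell(x) + \ell(y).$$
These three facts together say exactly that $\ell$ is a length function in the sense of definition \ref{def211} with the monotone-increasingness axiom \re{eql3} possibly dropped, i.e.\ a possibly non-increasing length function in the terminology of Lemma \ref{lem76}.

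For the final assertion, once $\ell$ has been shown to be a possibly non-increasing length function, I would simply quote Lemma \ref{lem76}, which promotes $\ell$ to the monotonically increasing length function $\ol \ell$. No further argument is needed.

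There is essentially no obstacle here: the only subtlety is noticing that subadditivity is obtained from the $\na$-inequality by plugging $\e \na \e$ in the second slot rather than from any triangle-type argument, so the $\Delta$-inequality on $d$ is not needed for this lemma. The monotone-increasingness axiom, which is precisely the property that may fail for $\ell$, is exactly what the bar operator of definition \ref{def74} is designed to restore.
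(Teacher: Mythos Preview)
Your proof is correct and is exactly the natural argument; the paper in fact omits the proof of this lemma entirely, presumably because it is immediate from the definitions in the way you describe. Your observation that only the $\na$-inequality (not the $\Delta$-inequality) is needed, via $\e = \e \na \e$, is precisely the point.
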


The next lemma shows that switching from $d$ to $\tiha d$
may really change their associated length functions $\ell$ and
$\tiha \ell$, respectively:

\begin{lemma}				\label{lemma713} 

If $\rho \le d$ %(see 
for $\rho$ as defined before lemma \ref{lemma52}
%)
then %we have 
$\tilde \ell = \ell$, and 
if $\ell$ is subadditive then $\hat \ell = \ell$. 
%, 
%but 
But we  have 
$\tiha \ell \neq \ell$ %vmtl 
in certain cases. 

%and

%thenn
%\ref\rho \le d$)

%We have 
%$\tilde \ell = \ell$ (if $\rho \le d$) 
%and 
%$\hat \ell = \ell$ (if $\ell$ is subadditive) 
%but 
%$\tiha \ell \neq \ell$ %vmtl 
%in certain cases  
\end{lemma}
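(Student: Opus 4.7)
The lemma has three separate claims, and I would treat them one by one.

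\textbf{Part 1.} The upper bound $\tilde\ell \le \ell$ is immediate from Definition \ref{def71}: the trivial chain with $n=0$ (no intermediate points) contributes just $d(x,\e) = \ell(x)$ to the infimum defining $\tilde d(x,\e)$. For the reverse direction, I invoke Lemma \ref{lemma52} with the pseudometric $\rho(x,y) := |\ell(x)-\ell(y)|$ described immediately before that lemma. By hypothesis $\rho \le d$, so Lemma \ref{lemma52} gives $\rho \le \tilde d$; evaluating at $y = \e$ and noting that $\ell(\e) = d(\e,\e) = 0$ by nilpotence yields $\ell(x) = \rho(x,\e) \le \tilde d(x,\e) = \tilde\ell(x)$, which completes the equality.

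\textbf{Part 2.} The key observation is that any decomposition $\e = y_1\na\ldots\na y_n$ in an abelian idempotent monoid forces every $y_i = \e$. Indeed, using commutativity and idempotence one has $y_i \na \e = y_i \na (y_1\na\ldots\na y_n) = y_1\na\ldots\na y_n = \e$, while $y_i \na \e = y_i$ since $\e$ is the identity, so $y_i = \e$. Consequently
$$\hat\ell(x) \;=\; \inf_{x = x_1 \na \ldots \na x_n} \,\sum_{i=1}^n d(x_i,\e) \;=\; \inf_{x = x_1 \na \ldots \na x_n} \,\sum_{i=1}^n \ell(x_i).$$
The singleton decomposition $n=1$ shows $\hat\ell(x) \le \ell(x)$, and iterated subadditivity of $\ell$ gives $\sum_i \ell(x_i) \ge \ell(x_1\na\ldots\na x_n) = \ell(x)$ for every admissible decomposition, hence $\hat\ell(x) \ge \ell(x)$.

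\textbf{Part 3.} Here one must exhibit a concrete monoid together with a symmetric, positive, nilpotent $d$ for which the fully projected distance $\tiha d$ strictly decreases at $(x,\e)$. A natural construction is one in which neither the hypothesis of Part 1 nor that of Part 2 holds: for example, take $d$ built from a length $\ell$ that badly violates subadditivity, so that $\hat d$ already bites at $(x,\e)$ because summing over a nontrivial decomposition $x = x_1\na\ldots\na x_n$ of $x$ costs strictly less than $\ell(x)$. Alternatively one can arrange $d$ so that $\rho \not\le d$ and the chain mechanism of $\tilde d$ brings $\tilde d(x,\e)$ strictly below $\ell(x)$. The main obstacle of the lemma lies precisely in this last part: whereas Parts 1 and 2 are essentially formal, producing a genuine strict inequality requires tailoring an example (as is done in the examples section of the paper), since one must verify that $\tiha d$ remains strictly below $d$ at $(x,\e)$ even after iterating both projections to their fixed point.
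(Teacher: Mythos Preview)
Your Parts 1 and 2 are correct and, in fact, more explicit than the paper's own proof, which handles Part 1 by the same chain of inequalities $|\ell(x)-\ell(\e)|\le\tilde d(x,\e)\le\ell(x)$ but leaves Part 2 entirely to the reader. Your observation that $\e=y_1\na\ldots\na y_n$ forces each $y_i=\e$ in an idempotent monoid is exactly the point needed there.

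The gap is in Part 3. You correctly identify that this is the nontrivial claim, but you do not actually produce a witness, and your proposed routes are problematic. Your first suggestion (``take $\ell$ that badly violates subadditivity'') is misaligned with the statement: the interesting case is $d=d_\ell$ for a genuine length function $\ell$ (Definition \ref{def211}), which is subadditive by definition, so $\hat\ell=\ell$ automatically by your own Part 2. Your second suggestion (arrange $\rho\not\le d$) again cannot occur for $d=d_\ell$, since Lemma \ref{lem34} gives $\rho\le d_\ell$ always, so $\tilde\ell=\ell$ too by your Part 1. Thus for $d=d_\ell$ one always has $\tilde\ell=\ell=\hat\ell$ individually, and the failure of $\tiha\ell=\ell$ must come from the \emph{interaction} of the two operations --- something your sketch does not address.

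The paper's argument for Part 3 sidesteps this difficulty entirely: it invokes (as a forward reference) Proposition \ref{lem58}, which says that for $d=d_\ell$ the equality $\tiha\ell=\ell$ is equivalent to $d_\ell$ satisfying the $\Delta$- and $\na$-inequalities. Since Example \ref{ex106} exhibits a concrete $\ell$ for which $d_\ell$ violates the $\Delta$-inequality, one concludes $\tiha\ell\neq\ell$ there without ever computing $\tiha d$ directly. This indirect route is what makes Part 3 tractable.
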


\begin{proof}
By lemma %\re
\ref{lemma52} we have
$$|\ell(x) - \ell(\e)| \le \tilde d(x,\e) \le \ell(x)$$

If $\tiha \ell = \ell$ then  $d_\ell$
would satisfy the $\Delta$-inequality by lemma \ref{lem58}.  
(By independence, the forward reference is allowed.) 
%If $\tiha \ell = \ell$ then setting $d:=d_\ell$ we have with lemma
%below that 
%$$d_\ell = d_{\tiha \ell} \le \tiha d = \tiha{d_\ell} \le d_\ell$$
%and so $d_\ell = \tiha{d_\ell}$ would satisfy the $\Delta$-inequality. 
But we know from %above examples 
example \ref{ex106} that this is not in general true. 
\end{proof}

This is an important key lemma which shows how our 
above procedure decreases the metric candidates:
 
\begin{lemma}			\label{lem511}
If $d$ satisfies the $\Delta$- and $\na$-inequality then
for $\ell$ of \re{eq167}, 
%$\Delta$ and $\na$-inequ then
$$d_\ell \le d$$
$$d_{\ol \ell} \le d$$

\end{lemma}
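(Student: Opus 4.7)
The plan is to handle the two inequalities in sequence, proving $d_\ell \le d$ first by a direct splitting argument, then reducing the $\ol\ell$ case to the $\ell$ case via the infimum in the bar operator.

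For $d_\ell \le d$, I would fix $x,y \in \ol S$ and, by the symmetry of $d$, assume without loss of generality $\ell(x) \le \ell(y)$, so that $\ell(x) \ad \ell(y) = \ell(x) = d(x,\e)$. The goal becomes $d(x \na y, \e) \le d(x,y) + d(x,\e)$. Applying the $\Delta$-inequality along the vertex $x$ gives
$$d(x \na y, \e) \le d(x \na y, x) + d(x, \e).$$
Now the key step is to rewrite $x = x \na x$ by idempotence and apply the $\na$-inequality with the pairs $(x,x)$ and $(y,x)$, yielding $d(x \na y, x \na x) \le d(x,x) + d(y,x) = d(x,y)$, where nilpotence $d(x,x) = 0$ eliminates the first summand. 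Combining the two displays gives $\ell(x \na y) - \ell(x)\ad\ell(y) \le d(x,y)$ as required.

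For $d_{\ol\ell} \le d$, I reduce to what we just proved. Assume by symmetry $\ol\ell(x) \le \ol\ell(y)$; fix an arbitrary $z \in \ol S$. Applying the already-established inequality $d_\ell \le d$ to the pair $(x \na z,\, y \na z)$ yields
$$\ell(x \na y \na z) \le d(x \na z,\, y \na z) + \ell(x\na z) \ad \ell(y \na z) \le d(x \na z,\, y \na z) + \ell(x \na z).$$
The $\na$-inequality, applied with the pairs $(x,y)$ and $(z,z)$, bounds $d(x \na z,\, y \na z) \le d(x,y) + d(z,z) = d(x,y)$. Since $\ol\ell(x \na y) \le \ell(x \na y \na z)$ by definition of the bar operator, we obtain $\ol\ell(x \na y) \le d(x,y) + \ell(x \na z)$. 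Taking the infimum over $z$ on the right-hand side collapses $\ell(x \na z)$ to $\ol\ell(x) = \ol\ell(x) \ad \ol\ell(y)$, which gives $d_{\ol\ell}(x,y) \le d(x,y)$.

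The main subtlety, and the reason the argument is not merely a one-line consequence of $d_\ell \le d$ and $\ol\ell \le \ell$, is that $\ell = d(\cdot,\e)$ need not be monotonically increasing (cf.\ lemma \ref{lem77}), so we cannot simply estimate $\ol\ell(x \na y)$ by $\ell(x \na y)$ and then replace $\ell(x)$ by the smaller quantity $\ol\ell(x)$ afterwards. The trick is to absorb the infimizing witness $z$ into both arguments of the distance before invoking part one; idempotence $x = x \na x$ (used in both parts) and nilpotence of $d$ at $(z,z)$ are what keep this absorption free of extra penalty terms.
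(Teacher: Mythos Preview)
Your proof is correct and follows essentially the same route as the paper. In part (i) both you and the paper use the $\Delta$-inequality to reach $d(x\na y,x)$ and then the $\na$-inequality with idempotence $x=x\na x$ to bound this by $d(x,y)$; the paper phrases the first step via the second triangle inequality while you use the ordinary one, which is immaterial. In part (ii) the paper picks a near-optimal $w$ for $\ol\ell(x)$ and redoes the second-triangle plus $\na$-inequality estimate inline, whereas you apply the already-proved part (i) to the shifted pair $(x\na z,\,y\na z)$ and take the infimum over $z$ afterwards; the underlying computation is the same, and your version is arguably tidier since it makes the reduction to part (i) explicit.
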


\begin{proof}
(i) Without loss of generality we have by the second $\Delta$-inequality and the $\na$-inequality that 
%W.lo.og. we have by (2. dreicksungl)
$$d_\ell(x,y)=  d(x \na y,\e) - d(x,\e) $$
$$\le d(x \na y,x) %$$
%$$
\le d(x,y)  + d(x,x)= d(x,y)$$
 
(ii)
Similarly 
%By similar arguments  
we get 
$$d_{\ol \ell}(x,y) = \inf_z  \ell(x \na y \na z)  - 
\inf_w  \ell(x \na w) \ad \inf_w   \ell(y \na w)$$

Given 
%given 
any $\varepsilon >0$ we may choose a $w \in \ol S$ such that 
this is 
$$%\approx 
\le \inf_z  \ell(x \na y \na z)  -    \ell(x \na w)  + \varepsilon$$
%\ad  \ell(y \na w) $$
$$\le   \ell(x \na y \na w)  -   \ell(x \na w)  + \varepsilon$$
$$\le  d( x \na y \na w, x \na w)  + \varepsilon $$
$$\le  d(y,x)
+ d(x \na w, x \na w) + \varepsilon= d(x,y)+ \varepsilon$$
%(2. dreicksungl)

 %\ad  \ell(y \na w)$$

\end{proof}

\begin{corollary}				\label{cor51}

Using definition \re{eq167},   we have 
$$d_{{\tiha \ell} } \le \tiha d $$   
$$d_{\ol {\tiha \ell} } \le \tiha d $$    %d^{(n-1)}$$  \ell^{(n)}

\end{corollary}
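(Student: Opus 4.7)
The plan is to apply Lemma \ref{lem511} directly, but with $d$ replaced by $\tiha d$. For this to be legitimate I need to verify the hypothesis of Lemma \ref{lem511}, namely that $\tiha d$ satisfies both the $\Delta$-inequality and the $\na$-inequality. This is precisely the content of Lemma \ref{lemma562}, which was established immediately above: $\tiha d$ is a pseudometric satisfying the $\na$-inequality.

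Having checked the hypothesis, I apply Lemma \ref{lem511} with $d$ taken to be $\tiha d$. The length function extracted from $\tiha d$ via Definition \ref{def53} (Equation \re{eq167}) is exactly $\tiha \ell(x) = \tiha d(x, \e)$. Lemma \ref{lem511} then yields the two inequalities
$$d_{\tiha \ell} \le \tiha d \quad \text{and} \quad d_{\ol{\tiha \ell}} \le \tiha d$$
which is exactly what the corollary asserts.

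I do not anticipate any obstacle here: the corollary is a direct specialization of Lemma \ref{lem511}, and the only thing to observe is that Lemma \ref{lemma562} supplies the two inequalities that Lemma \ref{lem511} requires of its input distance function. So the proof should be essentially one line, invoking these two results in sequence. The substantive work has already been done in proving Lemmas \ref{lemma562} and \ref{lem511}.
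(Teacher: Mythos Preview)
Your proposal is correct and matches the paper's proof exactly: the paper's one-line argument is ``Apply lemma \ref{lem511} to $\tiha d$,'' and you have spelled out precisely that, noting that Lemma \ref{lemma562} furnishes the required $\Delta$- and $\na$-inequalities for $\tiha d$, and that the length function extracted from $\tiha d$ via \re{eq167} is $\tiha\ell$.
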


\begin{proof}

Apply lemma \ref{lem511} to $\tiha d$. 
%
%By lemmas \ref{lemma562} and \label{lem511}. 
%
\end{proof}

Let us now restart the above procedure with a new %$d:=
$ d_{\tihabar {\ell}}$ instead of $d$ and then again and again. 
More precisely put

%So set the start  
%$\ell^{(1)} := \ell$ and $d^{(1)} := d = d_\ell$
%as above, and then

\begin{definition}					\label{def75}
Set a start point 
$d^{(1)} := d:\ol S \times \ol S \rightarrow \R$,  and then 
%For 
for all $n \ge 1$ define successively
functions $ \tiha {\ell^{(n)}},\ell^{(n+1)}: \ol S \rightarrow \R$
and $d^{(n+1)}: \ol S \times \ol S \rightarrow \R$  by 
%
%\be{eq169}
$$ 
\tiha {\ell^{(n)}}(x):= \tiha{ d^{(n)}}(x,\emptyset)
$$
%\en
%
%
$$\ell^{(n+1)} := {\ol {\tilde {\hat {\ell^{(n)}}}}}$$
%
%:= {\ol {\tilde {\hat {\ell^{(n)}}}}}
%:= \ol{\tiha{d^{(n)}(-,\em)}}
%$$
%
%
\be{eq200}
%$$
d^{(n+1)}:= d_{\ell^{(n+1)}}
%$$
\en  

\end{definition} 

We are going to show that the successively applied procedure 
defined in the last definition which takes a length function in and gives a length function out converges to a desired 
fixed point.

\begin{lemma}			\label{lemma510}

All $\ell^{(n+1)}$ %($n \ge 1$) 
are length functions on $\ol S$, and 
%We 
we have descending sequences 
$$0 \le ...  \le \ell^{(3)} = \ol {\tilde {\hat {\ell^{(2)}}}} 
\le {\tilde {\hat {\ell^{(2)}}}}  \le \ell^{(2)} = 
\ol {\tilde {\hat \ell}} \le
\tilde {\hat \ell} \le \ell $$
$$0 \le ...  \le d^{(3)} = d_{\ell^{(3)} }  \le \tilde {\hat {d^{(2)}}}  \le d^{(2)} 
= d_{\ell^{(2)} } \le \tilde {\hat d} \le d $$

%$$\ol 

\end{lemma}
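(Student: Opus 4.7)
The plan is a straightforward induction on $n$ that threads together the earlier lemmas; there is essentially no genuine obstacle, only bookkeeping.

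First I would establish that each $\ell^{(n+1)}$ is a length function. At each stage $d^{(n)}$ is positive, symmetric and nilpotent: for $n=1$ this is the running hypothesis on $d$, and for $n \ge 2$ the function $d^{(n)} = d_{\ell^{(n)}}$ inherits these properties from Lemma~\ref{lem03} (positivity and subadditivity of $\ell^{(n)}$ give $d^{(n)} \ge 0$, and nilpotency follows from idempotence $x \na x = x$). Lemma~\ref{lemma562} then yields that $\tiha{d^{(n)}}$ satisfies both the $\Delta$- and $\na$-inequality, so by Lemma~\ref{lem77} the function $\tiha{\ell^{(n)}}(x) := \tiha{d^{(n)}}(x,\e)$ is a possibly non-increasing length function. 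Lemma~\ref{lem76} then promotes $\ell^{(n+1)} = \ol{\tiha{\ell^{(n)}}}$ to a genuine monotonically increasing length function.

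For the descending chain of length functions I would use two elementary non-expansivity facts. The tilde and hat operators never increase their input: $\tilde d \le d$ and $\hat d \le d$ follow directly from the infima in Definitions~\ref{def71} and \ref{def72} (take $n=0$, respectively $n=1$ with $x_1 = x$, $y_1 = y$), so $\tiha d = \tilde{\hat d} \le \hat d \le d$. Similarly $\ol \ell(x) = \inf_z \ell(x \na z) \le \ell(x \na \e) = \ell(x)$. Evaluating $\tiha{d^{(n)}} \le d^{(n)}$ at $\e$ gives $\tiha{\ell^{(n)}} \le \ell^{(n)}$, and the bar step then yields
$$\ell^{(n+1)} = \ol{\tiha{\ell^{(n)}}} \le \tiha{\ell^{(n)}} \le \ell^{(n)},$$
which is the displayed chain for the length functions.

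For the accompanying chain of distance functions, I would invoke Corollary~\ref{cor51} with $d^{(n)}$ in the role of $d$: this gives $d_{\ol{\tiha{\ell^{(n)}}}} \le \tiha{d^{(n)}}$, i.e. $d^{(n+1)} \le \tiha{d^{(n)}}$, while $\tiha{d^{(n)}} \le d^{(n)}$ was just observed. Combining and iterating from $n=1$ yields both displayed descending chains. The only point requiring minor care is the repeated verification that the hypotheses of Corollary~\ref{cor51} and Lemma~\ref{lem77} (namely that $\tiha{d^{(n)}}$ satisfies the $\Delta$- and $\na$-inequality) remain valid at every stage, and this is guaranteed by Lemma~\ref{lemma562} independently of $n$.
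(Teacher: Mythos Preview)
Your argument is correct and follows essentially the same route as the paper: use Lemma~\ref{lemma562} and Lemma~\ref{lem77} to see that each $\ell^{(n+1)}$ is a length function, invoke Definitions~\ref{def71}--\ref{def72} for $\tiha{d^{(n)}}\le d^{(n)}$, apply Corollary~\ref{cor51} with $d:=d^{(n)}$ to obtain $d^{(n+1)}\le\tiha{d^{(n)}}$, and read off the $\ell$-chain by evaluating at~$\e$ together with $\ol\ell\le\ell$. The only cosmetic slip is that positivity of $d^{(n)}=d_{\ell^{(n)}}$ comes from monotone increasingness \re{eql3} of $\ell^{(n)}$, not from subadditivity.
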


\begin{proof}

By lemmas \ref{lemma562} and %, $d^{(n)} 
and \ref{lem77}, $\ell^{(n+1)}$ is a length function. 
By definitions \ref{def71} and \ref{def72}, we have $\tiha d \le d$. 
By % lemma 
corollary \ref{cor51} applied to $d:= d^{(n)}$ and $\tiha{\ell}:= 
\tiha {\ell^{(n)}}$ 
 we thus obtain $d^{(n+1)} 
= d_{\ell^{(n+1)}} 
= d_{\tihabar{ \ell^{(n)}}} 
\le \tiha {d^{(n)}} \le {d^{(n)}}$. 
Hence, as $\ol \ell \le \ell$ by %definitions 
definition \ref{def74}, 
$$\ell^{(n+1)} = \ol {\tiha {\ell^{(n)}}} \le  
{\tiha {\ell^{(n)}}} = \tiha {d^{(n)}}(-,\e) \le d^{(n)}(-,\e) = 
d_{\ell^{(n)}}(-,\e) = \ell^{(n)}$$
\end{proof}

We note that all $\tiha {d^{(n)}}$ are pseudometrics
which satisfy the $\na$-inequality by lemma \ref{lemma562}, but it is unclear
whether they are of the form $d_{\ell'}$. (Otherwise we would be done.)

%We note that all $d^{(n)}= d_{\ell^{(n)}}$ are pseudometrics
%which satify the $\na$-inequality, but is uclear
%whether $\ell^{(n)} =$

Since we have monotonically falling real-valued  sequences that are bounded below by $0$ we may define
 
\begin{definition}			\label{def56}
Define the pointwise function limits
$$\inft d:= \lim_{n \rightarrow \infty} d^{(n)}$$
\begin{equation}		\label{eq6}
\inft \ell:= \lim_{n \rightarrow \infty} \ell^{(n)}
\end{equation}

\end{definition}

\begin{lemma}						\label{lem710}

The function 
$$d^{(\infty)} = d_{\inft \ell} = \lim_{n \rightarrow \infty} d_{\ell^{(n)}} = 
\lim_{n \rightarrow \infty} \tiha {d_{\ell^{(n)}}} = \tiha {d^{(\infty)}} = \tiha {d_{\inft \ell} }$$
%$$d^{(\infty)} = \tiha {d^{(\infty)}}$$
%
is a pseudometric on $\ol S$ satisfying the $\na$-inequality. 

%satisfies the $\na$- and $\Delta$-inequality. 

%$\na$ and $\Delta$-inequ

\end{lemma}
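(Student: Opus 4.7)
The plan is to assemble the stated chain of equalities by means of three ingredients: pointwise continuity of $d_\ell$ in $\ell$, a squeeze argument supplied by lemma \ref{lemma510}, and the fixed-point characterization provided by lemma \ref{lemma562}.

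First I would note that for each fixed pair $x,y \in \ol S$, the quantity $d_\ell(x,y) = \ell(x \na y) - \ell(x) \ad \ell(y)$ depends continuously on the three real values $\ell(x)$, $\ell(y)$, $\ell(x \na y)$. Since $\ell^{(n)} \to \inft \ell$ pointwise by \re{eq6} and $d^{(n+1)} = d_{\ell^{(n+1)}}$ by \re{eq200}, passing to the limit in $n$ yields
$$\inft d(x,y) \;=\; \lim_{n} d_{\ell^{(n)}}(x,y) \;=\; d_{\inft \ell}(x,y),$$
which establishes $\inft d = d_{\inft \ell} = \lim_n d_{\ell^{(n)}}$, the first three terms of the claimed chain.

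Next, the descending sequence displayed in lemma \ref{lemma510} gives
$$d^{(n+1)} = d_{\ell^{(n+1)}} \;\le\; \tiha{d^{(n)}} = \tiha{d_{\ell^{(n)}}} \;\le\; d^{(n)} = d_{\ell^{(n)}}$$
for every $n$. Letting $n \to \infty$, the outer terms both converge pointwise to $\inft d$, so the squeeze theorem yields $\lim_n \tiha{d_{\ell^{(n)}}} = \inft d$, which is the fourth term. To close the chain I would then argue that $\inft d$ is a pseudometric satisfying the $\na$-inequality. Each $\tiha{d^{(n)}}$ satisfies both the $\Delta$- and the $\na$-inequality by lemma \ref{lemma562}; these are closed conditions under pointwise limits, being finite linear inequalities in function values. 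Hence $\inft d$ satisfies them too. Symmetry, positivity and the nilpotency $d(x,x)=0$ hold for every $d^{(n)} = d_{\ell^{(n)}}$ as in definition \ref{def212}, and obviously pass to the limit, so $\inft d$ is a pseudometric on $\ol S$ satisfying the $\na$-inequality. Applying the second part of lemma \ref{lemma562} to $\inft d$ then yields $\inft d = \tiha{\inft d}$, and since $\inft d = d_{\inft \ell}$ this also reads $\tiha{d_{\inft \ell}} = \inft d$, closing the chain.

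I do not foresee a serious obstacle: the heavy work has been done in the preceding lemmas (particularly \ref{lemma510}, \ref{lemma562} and corollary \ref{cor51}), and all that is left is the routine observation that $d_\ell$ and the two inequalities are continuous, respectively stable, under pointwise limits, combined with the squeeze that forces $\lim \tiha{d^{(n)}}$ to agree with $\lim d^{(n)}$.
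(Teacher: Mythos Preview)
Your proof is correct and follows essentially the same route as the paper: pointwise continuity of $d_\ell$ in $\ell$ for the first equalities, the squeeze from lemma \ref{lemma510} for $\lim_n \tiha{d^{(n)}} = \inft d$, and then the stability of the $\Delta$- and $\na$-inequalities under pointwise limits together with the fixed-point part of lemma \ref{lemma562} for the remaining equalities. The paper's own proof is terser but invokes exactly the same ingredients in the same order.
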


\begin{proof}
The second equality is clear by the formula for $d_\ell$ of def. \ref{def210}, the first
one is then by definition.
 The third equality is because of the converging sequence of 
 lemma \ref{lemma510}. 
 %\re{lemma511}. 
 Because $\tiha {d_{\ell^{(n)}}}$ satisfies the $\na$- and
 $\Delta$-inequality by lemma \ref{lemma562},  their point-wise limit does also, and so 
 we get the last equalities and thus the final assertion.
 \end{proof}

\begin{lemma}			\label{lemma513}

%The function 
$\inft \ell$ is a length function on $\ol S$ and 
%$\inft \ell$ is monotonely increasing  and
 %
$$ \ell^{(\infty)} = {d_{\inft \ell}}(-,\e)=  {\tilde {\hat {\ell^{(\infty)}}}}
:= \tiha{d_{\inft \ell}}(-,\e) = \ol {\tilde {\hat {\ell^{(\infty)}}}} = \ol {\ell^{(\infty)}}  $$

%\le {\tilde {\hat {\ell^{(\infty)}}}}  $$

\end{lemma}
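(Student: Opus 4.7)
The plan is to unpack the chain of equalities one link at a time, drawing on Lemma \ref{lem710} as the main driver and on Lemma \ref{lem76} to handle the $\ol{(\cdot)}$ operators.

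First I would verify that $\inft \ell$ is a length function. By Lemma \ref{lemma510} each $\ell^{(n)}$ is a length function, so $\ell^{(n)}(\e) = 0$, $\ell^{(n)}(x \na y) \le \ell^{(n)}(x) + \ell^{(n)}(y)$, and $\ell^{(n)}(x) \le \ell^{(n)}(x \na y)$. All three properties involve only finitely many function values and weak inequalities, hence survive passage to the pointwise limit $\inft\ell$. Thus $\inft \ell$ is itself a length function; in particular $\inft\ell(\e) = 0$ and $\inft\ell$ is monotonically increasing.

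Second, the identity $\ell^{(\infty)}(x) = d_{\inft \ell}(x,\e)$ is just unfolding definitions: from Definition \ref{def210} one has $d_{\inft\ell}(x,\e) = \inft\ell(x \na \e) - \inft\ell(x) \ad \inft\ell(\e) = \inft\ell(x) - 0$, which is $\ell^{(\infty)}(x)$ by \re{eq6}. Third, the middle equality reduces to Lemma \ref{lem710}, which tells us $d^{(\infty)} = d_{\inft\ell} = \tiha{d_{\inft\ell}}$ as functions on $\ol S \times \ol S$. Evaluating both sides at $(x,\e)$ gives $d_{\inft\ell}(x,\e) = \tiha{d_{\inft\ell}}(x,\e)$, which is precisely ${\tilde{\hat{\ell^{(\infty)}}}}(x)$ as defined in the statement.

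Finally, the two bar-equalities follow from the second clause of Lemma \ref{lem76}: since $\inft\ell$ is monotonically increasing (established in the first step), $\ol{\inft\ell} = \inft\ell$, so $\ol{\ell^{(\infty)}} = \ell^{(\infty)}$; and combined with the previous step that $\tiha{\ell^{(\infty)}} = \ell^{(\infty)}$, we also get $\ol{\tiha{\ell^{(\infty)}}} = \ol{\ell^{(\infty)}} = \ell^{(\infty)}$. The potentially delicate step is the very first one, the stability of monotone increasingness and subadditivity under pointwise limits of a decreasing sequence, but this is just an elementary inequality-preservation argument. No commutation of limits with infima is required elsewhere, since Lemma \ref{lem710} has already done that work for $\tiha{d^{(\infty)}}$.
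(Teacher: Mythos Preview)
Your proof is correct and follows essentially the same route as the paper: the central identity $d_{\inft\ell} = \tiha{d_{\inft\ell}}$ comes from Lemma~\ref{lem710}, and the bar-identities follow once monotone increasingness of $\inft\ell$ is established. The only minor difference is in how that increasingness is obtained: you pass the inequality $\ell^{(n)}(x) \le \ell^{(n)}(x\na y)$ through the pointwise limit (using Lemma~\ref{lemma510}), whereas the paper instead reads it off from the positivity of the pseudometric $d^{(\infty)} = d_{\inft\ell}$ established in Lemma~\ref{lem710}, namely $d_{\inft\ell}(x\na y, x) = \inft\ell(x\na y) - \inft\ell(x) \ge 0$. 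Your argument has the advantage of being more self-contained and of verifying all three length-function axioms explicitly, which the paper's proof leaves implicit.
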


\begin{proof}

%By lemma  \ref{lem710}, $\inft d$ satisfies the $\na$-inequality and so  $\inft \ell = \inft d (-,\e)$ satisfies the subadditivity. 

%
%By definition \ref{def53},  and as 
As $\tiha {d_{\inft \ell}} = d_{\inft \ell}$ by lemma \ref{lem710}, we get 
the second identity. 
%
%As $d^{(\infty)} = d_{\inft \ell}$ satisifies $\Delta$ and $\na$-ungl 
%and is positive,
%$$\tiha {d_{\inft \ell}} = d_{\inft \ell}$$ 
% and so 

% $$\ol {\tiha {\inft \ell}}  = \ol {\tiha {d_{\inft \ell}}( - ,\emptyset)}
%% = \ol {d_{\inft \ell}}
%=
% \ol  { {\inft \ell}} = \inft \ell$$
 
 The last equality holds because, as  $d^{(\infty)} = d_{\inft \ell}$ is positive,
 %for example, 
 $${d_{\inft \ell}}( x\na y,x)= \inft \ell(x \na y) - \ell^\infty(x) \ge 0$$

that is, $\inft \ell$ satisfies already the monotone %ly 
increasingness. 
% because 
%$d_{\inft \ell}$ %is a metric 
%%satisfies $\Delta$-inequ, 
%is positive, as said.
%
  %
\end{proof}

% In other words, 
We now %show 
summarize that the successively applied described procedure which takes a length function % $\ell$
 and forms a new one converges to a desired fixed point: 

\begin{theorem}[Ideal length function]				\label{cor52}

%\begin{corollary}				\label{cor52}

Given a positive symmetric idempotent function $d$ on an abelian, idempotent monoid $(\ol S,\na ,\e)$, 
define $ \inft \ell$ as described in definitions \ref{def75}
and \ref{def56}.

Then 
$\inft \ell$ is a length function on $\ol S$, 
and $d_{\inft \ell}$ and $\sigma_{\inft \ell}$  (def. \ref{def210}) 
are pseudometrics on $\ol S$ 
satisfying the $\na$-inequality. 
%, 
%
%and $\delta$  
%%$\delta^{\inft \ell}$ 
%is monotone increasing. 

%satisfy $\na$ and $\Delta$-ungl

%$\delta$ is increasing

\end{theorem}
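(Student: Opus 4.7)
The plan is to assemble the conclusion from the preceding lemmas and a single application of Theorem \ref{thm61}, since essentially all of the analytical work (monotonic decrease of the iterates $d^{(n)}$, their pointwise convergence, and preservation of the $\Delta$- and $\na$-inequalities under that limit) has already been carried out upstream. So the task here is organizational rather than creative.

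First, I would note that the assertion that $\inft \ell$ is a length function on $\ol S$ is exactly what Lemma \ref{lemma513} delivers; in particular the monotone increasingness $\inft \ell(x) \le \inft \ell(x \na y)$ falls out of the non-negativity of $d_{\inft \ell}(x \na y, x)$. Similarly, the statement that $d_{\inft \ell}$ is a pseudometric satisfying the $\na$-inequality is precisely Lemma \ref{lem710}, which rests on the fact that each $\tiha{d^{(n)}}$ satisfies both inequalities (Lemma \ref{lemma562}) and that both inequalities survive pointwise limits.

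It remains to dispose of $\sigma_{\inft \ell}$, and here I would invoke Theorem \ref{thm61} applied to the length function $\inft \ell$ and its associated distance $d_{\inft \ell}$. The hypothesis of that theorem is monotonicity of the underlying length function, which is in place by Lemma \ref{lemma513}. Since $d_{\inft \ell}$ satisfies the full $\na$-inequality, it satisfies a fortiori the very weak $\na$-inequality, i.e.\ condition (a) of the theorem. The chain of equivalences then yields condition (e), the $\Delta$-inequality for $\sigma_{\inft \ell}$, and condition (f), the $\na$-inequality for $\sigma_{\inft \ell}$. Symmetry and vanishing on the diagonal — via $\sigma_{\inft \ell}(x,x) = 2 \inft \ell(x \na x) - 2 \inft \ell(x) = 0$ by idempotence — are immediate from the defining formula, while positivity of $\sigma_{\inft \ell}$ follows from monotonicity and subadditivity of $\inft \ell$.

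I do not anticipate a serious obstacle: the only thing to watch is bookkeeping, namely that the hypothesis of monotonicity of $\inft \ell$ required by Theorem \ref{thm61} is visibly available at the moment of invocation, which is exactly what Lemma \ref{lemma513} guarantees. The genuinely delicate step in the whole construction is the fixed-point behavior at infinity, encoded in Lemmas \ref{lem710} and \ref{lemma513}; the present theorem is the clean packaging of those fixed-point properties into the language of pseudometrics on $\ol S$.
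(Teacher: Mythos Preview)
Your proposal is correct and follows essentially the same route as the paper's own proof, which simply cites Lemmas \ref{lem710} and \ref{lemma513} together with Theorem \ref{thm61}. Your write-up is a more detailed unpacking of exactly that argument, in particular making explicit how the $\na$-inequality for $d_{\inft \ell}$ feeds into condition (a) of Theorem \ref{thm61} to yield the $\Delta$- and $\na$-inequalities for $\sigma_{\inft \ell}$.
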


%\end{corollary}

\begin{proof}

By lemmas \ref{lem710} and \ref{lemma513}, and theorem \ref{thm61}. 
%
%$\sigma$ and $delta$ by lemmas above
%
\end{proof}

If $d$ starting at the beginning of this section happens to be of the form $d=d_\ell$ for a length function $\ell$ on $\ol S$, then
astonishingly $d$ satisfies already the $\Delta$-  and
$\na$-inequality if only $\ell = \tiha \ell$:

\begin{proposition}[Ideal length function attained]				\label{lem58}

%\begin{lemma}				\label{lem58}

%Assume that $d=d_\ell$ for a given $\ell$. 
If $d$ is of the form $d=d_\ell$ for some length function $\ell$ then 
$d_\ell $ satisfies the $\na$- and $\Delta$-inequality %iff
$\Leftrightarrow$ 
$$d_\ell = \tiha{ d_\ell} \quad \Leftrightarrow 
\quad 
\tiha \ell = \ell		%$$
%
%$$
\quad \Leftrightarrow 
\quad 
\tihabar \ell = \ell		\quad 
\Leftrightarrow 
\quad 
 \ell = \inft \ell$$
%$$

\end{proposition}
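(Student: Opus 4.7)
The plan is to prove the circle $(1) \Leftrightarrow (2)$, $(2) \Rightarrow (3) \Rightarrow (2)$, and then $(3) \Leftrightarrow (4) \Leftrightarrow (5)$, so that all five assertions collapse into a single equivalence class.

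First, $(1) \Leftrightarrow (2)$ is immediate from lemma \ref{lemma562} applied to $d = d_\ell$: the operator $d \mapsto \tiha d$ is a projection onto positive symmetric nilpotent functions that satisfy both the $\Delta$- and $\na$-inequality, so $d_\ell = \tiha{d_\ell}$ holds precisely when $d_\ell$ already has both inequalities. For $(2) \Rightarrow (3)$, evaluate at $y = \e$: since $\ell$ is a length function, $\ell(\e) = 0$ by \re{eql1}, hence $d_\ell(x,\e) = \ell(x \na \e) - \ell(x) \ad 0 = \ell(x)$. By definition \ref{def53}, $\tiha \ell(x) = \tiha{d_\ell}(x,\e)$, so $d_\ell = \tiha{d_\ell}$ restricts to $\ell = \tiha \ell$.

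The direction $(3) \Rightarrow (2)$ is the substantive step and the main obstacle, as it requires invoking the key lemma \ref{lem511}. Assume $\tiha \ell = \ell$. By lemma \ref{lemma562}, $\tiha{d_\ell}$ is a pseudometric satisfying both the $\Delta$- and $\na$-inequality, so lemma \ref{lem511} applied with $d := \tiha{d_\ell}$ and its extracted length $\tiha{d_\ell}(-,\e) = \tiha \ell$ yields
$$d_{\tiha \ell} \le \tiha{d_\ell}.$$
Since $\tiha \ell = \ell$, the left-hand side equals $d_\ell$. The reverse inequality $\tiha{d_\ell} \le d_\ell$ is immediate from definitions \ref{def71} and \ref{def72} (take the trivial decomposition in each infimum). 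Combining, $d_\ell = \tiha{d_\ell}$.

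For $(3) \Leftrightarrow (4)$, use that $\ell$ is by hypothesis a monotonically increasing length function, so $\bar \ell = \ell$ by lemma \ref{lem76}. Therefore $\tiha \ell = \ell$ implies $\tihabar \ell = \bar{\tiha \ell} = \bar \ell = \ell$. Conversely, the descending chain $\tihabar \ell \le \tiha \ell \le \ell$ established in lemma \ref{lemma510} means that equality of the outer terms collapses both inequalities, giving $\tiha \ell = \ell$. Finally, for $(4) \Leftrightarrow (5)$, setting $\ell^{(1)} := \ell$ (consistent with $d^{(1)} = d_\ell$ and $\ell^{(1)}(x) = d_\ell(x,\e) = \ell(x)$), definition \ref{def75} gives $\ell^{(2)} = \tihabar{\ell^{(1)}} = \tihabar \ell$. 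From lemma \ref{lemma510}, $\inft \ell \le \ell^{(2)} \le \ell$; if $\tihabar \ell = \ell$ then $\ell^{(2)} = \ell$ and an obvious induction propagates $\ell^{(n)} = \ell$ for all $n$, so $\inft \ell = \ell$. Conversely, $\inft \ell = \ell$ forces $\ell^{(2)} = \tihabar \ell = \ell$ by the same chain. This completes the proof.
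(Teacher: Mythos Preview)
Your proof is correct and follows essentially the same route as the paper: the key step $(3) \Rightarrow (2)$ via $d_{\tiha \ell} \le \tiha{d_\ell}$ (your lemma \ref{lem511} applied to $\tiha{d_\ell}$ is exactly the content of corollary \ref{cor51}), and the sandwich arguments for $(3) \Leftrightarrow (4) \Leftrightarrow (5)$ match. The only cosmetic difference is that for $(5) \Rightarrow (4)$ you squeeze via $\inft \ell \le \ell^{(2)} \le \ell$, whereas the paper invokes lemma \ref{lemma513}; both are fine.
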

 
%\end{lemma}

\begin{proof}
If $d$ satisfies the $\Delta$-inequality then certainly
$d \le \tilde d \le d$ and similarly %so with 
%the validity of 
the $\na$-inequality %of $d$  
%and 
implies $d = \hat d$. So $\tiha d =d$ by lemma \ref{lemma55} and thus $\tiha \ell = \ell$. 

If $\tiha \ell = \ell$ then setting $d:=d_\ell$ we have with 
corollary \ref{cor51}
%lemma
%below 
that 
$$d_\ell = d_{\tiha \ell} \le \tiha d = \tiha{d_\ell} \le d_\ell$$
and so $d_\ell = \tiha{d_\ell}$ %would 
satisfies the $\na$ and $\Delta$-inequality. 

If $\tihabar \ell = \ell$ then 
$$\tihabar \ell = \ell \ge \tiha \ell \ge \tihabar \ell = \ell$$
and so $\ell = \tiha \ell$. %; and $
Also then, $\ell^{(2)} = \tihabar \ell =\ell$, and thus $d^{(2)}= d_\ell$
 and %thus 
 the above procedure restarts again with the same start $d_\ell$, 
 and so 
 again $\ell^{(3)}=\ell^{(2) %)
 }$ etc and we get a constant sequence
and finally %with 
$\inft \ell = \ell$. 

If $\inft \ell = \ell$ then by corollary \ref{lemma513} $\tihabar \ell = \ell$. 
%$\inft \ell = \ell$
\end{proof}

We just remark that one %may 
will naturally turn the desired pseudometrics to metrics effortlessly: 

\begin{corollary}			%{remark}

By  definition \ref{def33},
we
%We 
may then divide out elements in $\ol S$ to obtain 
a metric space %$(\me S,\me d_{\me \ell})$
$\me S$, which is again an abelian idempotent monoid, 
 with metrics $ \me {d_{\inft  \ell}} = d_{\me {\inft \ell}}$ 
and 
$ \me {\sigma_{\inft  \ell}} = \sigma_{\me {\inft \ell}}$ 
%(def. \ref{def33}) 
satisfying the $\na$-inequality.  

\end{corollary}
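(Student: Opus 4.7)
The plan is to assemble the pieces that are already in place. By Theorem \ref{cor52}, $\inft \ell$ is a length function on $\ol S$ and both $d_{\inft \ell}$ and $\sigma_{\inft \ell}$ are pseudometrics on $\ol S$ that satisfy the $\na$-inequality. Hence Definition \ref{def33} applies to either of them. First I would note that the two resulting equivalence relations coincide: by Lemma \ref{lem35} one has $d_{\inft \ell}(x,y)=0 \Leftrightarrow \sigma_{\inft \ell}(x,y)=0$, so there is a single well-defined quotient $\me S := \ol S/\cong$.

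Next I would invoke Proposition \ref{lem36}(i), which guarantees that the $\na$-operation descends to $\me S$ and that $(\me S,\na,[\e])$ is again an abelian idempotent monoid. The proposition also supplies, for each of the two pseudometrics, an induced genuine metric on $\me S$ still satisfying the $\na$-inequality; applied to $d_{\inft \ell}$ it yields $\me{d_{\inft \ell}}$, and applied to $\sigma_{\inft \ell}$ it yields $\me{\sigma_{\inft \ell}}$. Since both descend from pseudometrics with the same zero set, no ambiguity in the quotient arises.

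Finally, the identifications $\me{d_{\inft \ell}} = d_{\me{\inft \ell}}$ and $\me{\sigma_{\inft \ell}} = \sigma_{\me{\inft \ell}}$ are precisely the content of Proposition \ref{lem36}(iii), applied to the length function $\ell := \inft \ell$ on $\ol S$: the induced $\me{\inft \ell}$ is a length function on $\me S$ whose associated $d$- and $\sigma$-distances agree with the push-downs of $d_{\inft \ell}$ and $\sigma_{\inft \ell}$.

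There is no real obstacle; the only point that requires a line of comment is the coherence of the quotient when passing from $d$ to $\sigma$, handled by Lemma \ref{lem35}. Everything else is a direct citation of Theorem \ref{cor52} and Proposition \ref{lem36}, so the proof reduces to the single sentence ``apply Proposition \ref{lem36} to the pseudometrics $d_{\inft \ell}$ and $\sigma_{\inft \ell}$ provided by Theorem \ref{cor52}.''
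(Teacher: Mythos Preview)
Your proposal is correct and follows essentially the same route as the paper: the paper's proof is the one-line citation of Theorem \ref{cor52}, Definition \ref{def33}, Proposition \ref{lem36}, and Lemma \ref{lem35}, and you have merely unpacked those citations and made explicit the role of Lemma \ref{lem35} in reconciling the two quotients.
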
				%{remark}

\begin{proof}

By theorem \ref{cor52} we have pseudometrics 
$\inft d$ and $\inft \sigma$, 
and %then we apply
to these we apply definition \ref{def33}, proposition 
\ref{lem36}  and lemma \ref{lem35}. 
%
%according to defintion \re{def33} and \re{lem36}
%and lemma \re{lem35} 
%
%
\end{proof}

The above procedure %and theorem \ref{cor52} 
gives a desired projection onto the set of length functions 
from arbitrary length functions to desired ones:

\begin{corollary}[Projection onto ideal length functions]			\label{cor724}

There is %an assignment 
a projection $\ell \mapsto \inft \ell$ from the set %the 
of length functions $\ell$ on $\ol S$ %$\ell  
%to itself, 
onto 
%to %the ones
the set of those ones 
%those 
such that $d_{\ell}$ and $\sigma_{\ell}$
%such that $d_{\inft \ell}$ and $\sigma_{\inft \ell}$
are pseudometrics which satisfy the $\na$-inequality. 

\end{corollary}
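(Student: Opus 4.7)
The plan is to assemble this directly from Theorem \ref{cor52} and Proposition \ref{lem58}, since essentially all the real work is already done.

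First I would verify that the assignment $\ell \mapsto \inft \ell$ lands in the desired target set. Given any length function $\ell$ on $\ol S$, apply Theorem \ref{cor52} with $d := d_\ell$ (which is positive, symmetric, and nilpotent by lemma \ref{lem34} combined with the axioms of definition \ref{def211}). The theorem directly gives that $\inft \ell$ is a length function and that $d_{\inft \ell}$ and $\sigma_{\inft \ell}$ are pseudometrics satisfying the $\na$-inequality. So the map is well-defined into the target set.

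Next I would show the map is the identity on its target, which is the projection property. Suppose $\ell$ already lies in the target set, i.e.\ $d_\ell$ (and hence $\sigma_\ell$) is a pseudometric satisfying the $\na$-inequality. Then by Proposition \ref{lem58}, we have $\ell = \inft \ell$. Hence applying the procedure to an $\ell$ in the target leaves it unchanged, which immediately gives idempotence: $(\inft \ell)^{(\infty)} = \inft \ell$, since the output of the first application already lies in the target set by the previous paragraph. Surjectivity onto the target set is then automatic, because every $\ell$ in the target is its own preimage.

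No step here looks hard; the entire corollary is essentially packaging. The only mild subtlety is making explicit that Theorem \ref{cor52} applies to $d = d_\ell$ (which requires noting it is positive, symmetric, and nilpotent), and then invoking Proposition \ref{lem58} in the direction ``$d_\ell$ satisfies the $\na$- and $\Delta$-inequality $\Rightarrow \ell = \inft \ell$'' so that the operation truly fixes the target. Once these two citations are combined, the statement ``projection onto the set of ideal length functions'' follows with no further computation.
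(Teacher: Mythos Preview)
Your proof is correct and follows essentially the same route as the paper: apply Theorem~\ref{cor52} with $d:=d_\ell$ to land in the target set, then use the fixed-point characterization to get idempotence and surjectivity. The only cosmetic difference is that for the latter you invoke Proposition~\ref{lem58} directly, whereas the paper cites Lemma~\ref{lemma562} together with Theorem~\ref{thm61}; your citation is if anything more direct, since Proposition~\ref{lem58} already packages the equivalence $d_\ell$ satisfies the $\na$- and $\Delta$-inequality $\Leftrightarrow \ell=\inft\ell$.
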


\begin{proof}

Start the above procedure, definition \ref{def75}, with $d:=d_\ell$ (or alternatively 
with $d:=\sigma_\ell$, which however may result in another fixed point $\inft \ell$)  %. 
and end up with %corollary 
theorem \ref{cor52}. 

Projection is onto  %is %due to 
because of lemma \ref{lemma562}  
%\ref{def33}
%\ref{lem58}	%. 
and theorem \ref{thm61}. 
%
%alternatevely
\end{proof}

If the %$L$-function 
length function $L$ on $S$ is a simple character count as in our natural protoexamples %then 
then the fixed point for the length functions will be attained 
after finitely many steps: 
%we get: 

\begin{lemma}

If the $\ell$-function is in a grid, i.e. 
$\ell(\ol S) \subseteq \gamma \Z$ for some constant $\gamma \in \R$, and $d= d_\ell$, then %for each point, 
the limits 
of definition \ref{def56} 
will all be attained at some $n$ ($n$ 
%=n_s$ 
depending on the points %s
%$s \in \ol S$  
of $\ol S$ however).  %s). 

\end{lemma}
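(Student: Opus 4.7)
The plan is to show by induction on $n$ that each $\ell^{(n)}$ takes values in $\gamma \Z_{\geq 0}$ (assuming $\gamma > 0$; the case $\gamma = 0$ forces $\ell \equiv 0$ and everything is trivially $0$). From this, descent in a discrete bounded set forces termination.

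First I would observe that if a function $f$ on $\ol S$ (or $\ol S \times \ol S$) takes values in $\gamma \Z_{\geq 0}$, then any expression of the form $\inf_{\alpha \in I} F(\alpha)$, where each $F(\alpha)$ is a finite sum or difference of values of $f$ lying in $\gamma \Z_{\geq 0}$, and where we can exhibit at least one $\alpha_0 \in I$ giving a finite upper bound $M := F(\alpha_0)$, is actually attained in $\gamma \Z_{\geq 0}$. This is because the candidate values that matter all lie in the finite set $\gamma \Z \cap [0, M]$. This applies in three places: the infimum defining $\tilde d$ (bounded above by $d(x,y)$, taking $n=0$), the infimum defining $\hat d$ (bounded above by $d(x,y)$, taking $n=1$ with $x_1 = x$, $y_1 = y$), and the infimum defining $\bar \ell$ (bounded above by $\ell(x)$, taking $z = \e$).

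Next I would run the induction. Base case: $\ell^{(1)} = \ell$ takes values in $\gamma \Z_{\geq 0}$ by hypothesis. Inductive step: assume $\ell^{(n)}(\ol S) \subseteq \gamma \Z_{\geq 0}$. Then by the formula $d^{(n)}(x,y) = \ell^{(n)}(x \na y) - \ell^{(n)}(x) \ad \ell^{(n)}(y)$, we have $d^{(n)}(\ol S \times \ol S) \subseteq \gamma \Z_{\geq 0}$ (using monotonicity of $\ell^{(n)}$, lemma \ref{lemma510}). Applying the observation to $\tilde{d^{(n)}}$ and then to $\hat{\tilde{d^{(n)}}} = \tiha{d^{(n)}}$ gives values in $\gamma \Z_{\geq 0}$; hence $\tiha{\ell^{(n)}}(x) = \tiha{d^{(n)}}(x, \e)$ lies in $\gamma \Z_{\geq 0}$. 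One more application of the observation to $\ol{\tiha{\ell^{(n)}}} = \ell^{(n+1)}$ closes the induction.

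For termination, fix $x \in \ol S$. By lemma \ref{lemma510} the sequence $\ell^{(n)}(x)$ is non-increasing, non-negative, and by the induction above lies in $\gamma \Z_{\geq 0} \cap [0, \ell(x)]$, which is a finite set. Therefore the sequence is eventually constant, say for $n \geq N(x)$, so $\inft \ell (x) = \ell^{(N(x))}(x)$ is attained. For the distance, $d^{(n)}(x,y) = \ell^{(n)}(x \na y) - \ell^{(n)}(x) \ad \ell^{(n)}(y)$, so as soon as $n \geq \max\{N(x), N(y), N(x \na y)\}$ the sequence $d^{(n)}(x,y)$ also stabilizes, and the limit $\inft d(x,y)$ is attained.

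The only real obstacle is the attainment-of-infimum point in the first paragraph; the rest is formal once this is in place. One must be careful that in the definition of $\tilde d$ and $\hat d$ the decompositions and chains are allowed to be arbitrarily long, so naively the set over which we take infimum is infinite; the point is that one only needs to consider those choices whose value is $\leq$ the natural upper bound, and the restriction to a discrete set of finitely many possible values below this bound makes the infimum a minimum.
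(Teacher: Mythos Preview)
Your argument is correct and complete. The paper states this lemma without proof, so there is nothing to compare against; your write-up supplies exactly the natural argument the author presumably had in mind (discrete descent forces eventual stabilization).

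One small notational slip: you write $\hat{\tilde{d^{(n)}}} = \tiha{d^{(n)}}$, but in the paper's notation $\tiha{d} := \tilde{\hat d}$, i.e.\ one first applies the hat operator (Definition~\ref{def72}) and then the tilde (Definition~\ref{def71}). This does not affect the validity of your argument, since both operations are infima of nonnegative $\gamma\Z$-valued sums bounded above by the original function value, and hence both preserve the property of taking values in $\gamma\Z_{\geq 0}$ regardless of the order in which they are applied.
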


We will now make a bigger cut and switch from $d$ to $\sigma$. 
Let us now do the above procedure by  
starting with a length function $\ell^{(1)}$ on $\ol S$, 
%Let us now do the above procedure by again 
%sartting 
%starting with a $d$, 
but instead always forming $d_{\ell^{(n)}}$ we now choose
$\sigma_{\ell^{(n)}}$. In other words, set

\begin{definition}				\label{def77}

Do everything as in %defintion 
definition \ref{def75} but modify \re{eq200}
to 
$$d^{(n+1)}:= \sigma_{\ell^{(n+1)}}$$

\if 0
For all $n \ge 1$ and $x \in \ol S$ put 
$$ \tiha {\ell^{(n)}}(x):= \tiha{ \sigma^{(n)}}(x,\emptyset)$$
$$\ell^{(n+1)} := {\ol {\tilde {\hat {\ell^{(n)}}}}}$$
$$\sigma^{(n)}:= \sigma_{\ell^{ (n) %(n+1)
}}$$ 
\fi 

\end{definition} 

\begin{lemma}				\label{lem727}

Assume %defintion 
definition \ref{def77}, but to avoid confusion with 
what we have defined before, rewrite $\sigma^{(n)}:= d^{(n)}$. 

Then $(\ell^{(n)})_n$ is a decreasing sequence of length functions on $\ol S$, with limits  and estimates 
%Then we have lemma \re{lemma510}. 
%We have limits \re{eq6} and
$$\inft \ell := \lim_{n \rightarrow \infty} \ell^{(n)}, \qquad \inft \sigma := \lim_{n \rightarrow \infty} \sigma^{(n)}
= \sigma_{\inft \ell}$$
\be{eq90}
%$$
1/2  {\sigma^{(n +1)}} \le  \tiha{\sigma^{(n)}}  \le  {\sigma^{(n)}} 
%$$
\en
%$$\inft \sigma \ge \tiha{ \inft \sigma}  \ge \limsup_{n} \tiha{\sigma^{(n)}} 
%\ge \frac{1}{2} \inft \sigma$$
%
\be{eq91}
%$$
1/2 
%\frac{1}{2} 
\inft \sigma  \le 
\liminf_{n} \tiha{\sigma^{(n)}}  \le 
\limsup_{n} \tiha{\sigma^{(n)}}   \le 
\tiha{ \inft \sigma}   \le   %#
\inft \sigma 
%$$
\en
\end{lemma}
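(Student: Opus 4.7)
The plan is to verify that the iteration is well-defined, monotone and convergent, then derive the two estimates.

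First I would check that each $\ell^{(n)}$ is a genuine length function. Since $\tiha{\sigma^{(n)}}$ is a pseudometric satisfying the $\na$-inequality by lemma \ref{lemma562}, lemma \ref{lem77} applied with $d := \tiha{\sigma^{(n)}}$ makes $\tiha{\ell^{(n)}}(x) = \tiha{\sigma^{(n)}}(x,\e)$ a (possibly non-monotone) length function, and lemma \ref{lem76} upgrades $\ell^{(n+1)} = \ol{\tiha{\ell^{(n)}}}$ to a genuine length function.

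To get the descending sequence, note $\tiha{\sigma^{(n)}} \le \sigma^{(n)}$, hence $\tiha{\ell^{(n)}}(x) \le \sigma^{(n)}(x,\e) = \ell^{(n)}(x)$, and $\ol{\tiha{\ell^{(n)}}} \le \tiha{\ell^{(n)}}$. Thus $\ell^{(n+1)} \le \ell^{(n)}$; being bounded below by $0$, the limit $\inft \ell = \lim_n \ell^{(n)}$ exists pointwise. Pointwise passage to the limit in $\sigma^{(n)}(x,y)= 2\ell^{(n)}(x \na y) - \ell^{(n)}(x) - \ell^{(n)}(y)$ then yields $\inft \sigma = \sigma_{\inft \ell}$.

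The right-hand inequality of \re{eq90} is immediate from the definition of $\tiha{}$. For the left-hand inequality, I would apply lemma \ref{lem511}(ii) to the pseudometric $\tiha{\sigma^{(n)}}$, which satisfies both the $\Delta$- and $\na$-inequalities: this gives $d_{\ol{\tiha{\ell^{(n)}}}} \le \tiha{\sigma^{(n)}}$, i.e.\ $d_{\ell^{(n+1)}} \le \tiha{\sigma^{(n)}}$. Combining with $\sigma_\ell \le 2 d_\ell$ from \re{eq7} yields $\sigma^{(n+1)} = \sigma_{\ell^{(n+1)}} \le 2 d_{\ell^{(n+1)}} \le 2 \tiha{\sigma^{(n)}}$. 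The factor $1/2$ in \re{eq90} comes directly from \re{eq7}, which is the structural reason one cannot expect exact fixed-point convergence as in the $d$-based procedure.

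For \re{eq91}, the leftmost estimate follows from \re{eq90} by taking $\liminf$ and using $\sigma^{(n)} \to \inft \sigma$; the rightmost is by definition of $\tiha{}$; the middle is trivial. The main obstacle is $\limsup_n \tiha{\sigma^{(n)}} \le \tiha{\inft \sigma}$, because $\sigma^{(n)} \to \inft \sigma$ is only pointwise (not monotone) while $\tiha{}$ involves an infimum over infinitely many configurations. The key observation is that each near-optimal configuration is \emph{finite}: fix $(x,y)$ and $\varepsilon > 0$, and choose a chain $a_0 = x, a_1, \ldots, a_{k-1}, a_k = y$ for the outer $\tilde{}$ together with decompositions $a_{i-1} = u_{i,1} \na \cdots \na u_{i,m_i}$, $a_i = v_{i,1} \na \cdots \na v_{i,m_i}$ for each inner $\hat{}$, so that $\sum_{i,j} \inft \sigma(u_{i,j}, v_{i,j}) \le \tiha{\inft \sigma}(x,y) + \varepsilon$. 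The same configuration gives $\tiha{\sigma^{(n)}}(x,y) \le \sum_{i,j} \sigma^{(n)}(u_{i,j},v_{i,j})$; pointwise convergence on the finitely many pairs lets the right-hand side converge to $\sum_{i,j} \inft \sigma(u_{i,j},v_{i,j}) \le \tiha{\inft \sigma}(x,y) + \varepsilon$. Letting $\varepsilon \to 0$ completes the proof.
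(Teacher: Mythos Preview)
Your argument is correct and follows essentially the same route as the paper: the monotonicity of $(\ell^{(n)})$, the use of lemma \ref{lem511} (equivalently corollary \ref{cor51}) together with \re{eq7} for \re{eq90}, and the ``finite configuration plus pointwise convergence'' argument for the $\limsup$ bound in \re{eq91} all match. The only cosmetic difference is that, for the last step, the paper passes from $\sigma^{(n)}$ to $\tiha{\sigma^{(n)}}$ on each piece and then collapses back via the $\Delta$- and $\na$-inequalities of $\tiha{\sigma^{(n)}}$, whereas you use the same configuration directly as a witness in the infimum defining $\tiha{\sigma^{(n)}}(x,y)$; these are equivalent.
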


\begin{proof}

As $\ell^{(n)}$ converges pointwise to $\inft \ell$, 
$\sigma^{(n+1)}= \sigma_{\ell^{(n+1)}}$ converges also pointwise to $\inft \sigma$.    %$\inft \ell$ . 
\re{eq90} is by lemmas \ref{lem34} and
corollary \ref{cor51} applied to $d:= \sigma^{(n)}$, as  
%\ref{lem511}, as 
$$1/2 \sigma_{\ell^{(n+1)}} 
= 1/2 \sigma_{\tihabar{\ell^{(n)}}}
\le d_{\tihabar{\ell^{(n)}}} \le 
\tiha {\sigma^{(n)}}$$

By %defintions 
definitions  \ref{def71} and \ref{def72}, %nad 
and because $\sigma^{(n)}$ converges pointwise to $\inft \sigma$, 
for all $\varepsilon >0$ there are $a_1,a_2, x_0,a_{1,0}, \ldots \in \ol S$
and 
%there is a 
$n_0 \ge 1$ such that 
for all $n \ge n_0$ we have that 
%For all $\varepsilon >0$ there is $b_1, \ldots ,  b_k \in \ol S$ 
%and $i_0 \ge 1$
%such that for all $i \ge i_0$ one has  
$$\tiha {\inft \sigma}(x,y)   + 3\varepsilon  \ge 
 \hat {\inft \sigma}(x,a_1) 
+ %\lim_n 
	\hat {\inft \sigma}(a_1,a_2) + \ldots + %\lim_n 
	\hat {\inft \sigma}(a_m,y)  
	+ 2\varepsilon $$
$$ \ge {\inft \sigma}(x_0,a_{1,0}) 
%+ %\lim_n 
	%\hat {\inft \sigma}(x_1,a_{1,0}) 
	+ \ldots + %\lim_n 
	 {\inft \sigma}(a_{m,k_m}',y_m)  
	+\varepsilon
	$$
$$ \ge {\sigma^{(n)}}(x_0,a_{1,0}) 
%+ %\lim_n 
	%\hat {\inft \sigma}(x_1,a_{1,0}) 
	+ \ldots + %\lim_n 
	{\sigma^{(n)}}(a_{m,k_m}',y_m)  
	$$
$$ \ge \tiha {\sigma^{(n)}}(x_0,a_{1,0}) 
%+ %\lim_n 
	%\hat {\inft \sigma}(x_1,a_{1,0}) 
	+ \ldots + %\lim_n 
	\tiha {\sigma^{(n)}}(a_{m,k_m}',y_m)  
	\ge \tiha {\sigma^{(n)}}(x,y)
	$$

where the last inequality is %bacause 
because $\tiha \sigma^{(n)}$ 
satisfies the $\na$- and $\Delta$-inequality and so we 
can estimate back what we have expanded in the first two 
lines above.

\if 0
$$\tiha {\inft \sigma}(x,y)   %+ \varepsilon
= \inf_{a_1, \ldots , a_m \in \ol S }  \lim_n \sigma^{(n)}(x,a_1) 
+ %\lim_n 
\sigma^{(n)}(a_1,a_2) + \ldots + %\lim_n 
\sigma^{(n)}(a_m,y)  $$
$$
\ge \sigma^{(i)}(x,b_1)  +\sigma^{(i)}(b_1,b_2) 
+ \ldots   %\ldtos 
 + \sigma^{(i)}(b_k ,y) 
-\varepsilon    
\ge %\isgma
\tiha \sigma^{(i)}(x,y) - \varepsilon $$ 
\fi 

This and \re{eq90} yield \re{eq91}. 
\end{proof}

\section{Homomorphisms}

In this section we shift our attention away from the abelian
idempotent monoids to %to 
their homomorphism sets between them. 
Thereby we consider only homomorphisms which are ``uniformly
continuous'' similarly as the bounded linear operator between
Banach spaces are, see definition \ref{def85}. 
We are provided by a length on each homomorphism set
(analogy: norm on operators between Banach spaces) 
and apply the procedure of 
%developed in 
section \ref{sec7} to obtain 
new length functions with better theoretical properties. 
Thereby we enrich this concept with the desired property 
of product inequalities, that is, we want achieve the validity 
of inequalities like $\ell(U \circ V) \le \ell(U) \ell(V)$ and
$d(A \circ U, A \circ V) \le \ell(A) d(U,V)$ 
for homomorphisms $A,U$ and $V$. 
Similarly as in the last section we end up with a fixed point
theorem yielding desired length functions, and apply them 
to get a Banach-Mazur-like distance between languages.

At first we define homomorphism between languages. % is:  
In this section let
%Let 
$(S,\equiv_S)$ and $(Q,\equiv_Q)$  be languages with equivalence relation as in definitions \ref{deflangua} and 
\ref{defequ}. 
%At first we say what a homomorphism between languages is: 

\begin{definition}[Homomorphism of languages]				\label{def81} 

{\rm 

%Homomorphism 
A
%Then a 
function $T: (S,\equiv_S) \rightarrow (Q,\equiv_Q)$ 
%$$(((T(\em) = %\equiv 
%\em)))$$ 
is called a homomorphism between these languages 
%the languages $(S,\equiv_S)$ and $ (Q,\equiv_Q)$ 
if
 for all $x,y \in S$ %.
 we have  
%homomoprhi
$$x \bot y \quad \Rightarrow  
\quad  \Big ( Tx \bot Ty   \quad 
\mbox{and} \quad T(x \ud y) = %\equiv 
Tx \ud Ty  \Big )$$
 $$x \equiv_S y 
 \quad \Rightarrow  
\quad  Tx \equiv_Q Ty $$

} 
\end{definition}

In this section we almost exclusively return 
to the abstract setting of abelian idempotent monoids. 
In what follows let $(\ol S,\na , \e),(\ol Q,\na ,\e)$ and $(\ol R,\na,\e)$ denote 
any %abstract 
abelian idempotent monoids.
Finally, a natural notion for homomorphisms between 
such monoids is certainly a semigroup or monoid homomorphism:
%
%Finally, a natural notion for homomoprhisms between $\ol S$s is certianly a semigroup or monoid homomorphism:

\begin{definition}			\label{def82}

{\rm 

%Homomorphism 
A function 
$T: \ol S \rightarrow \ol Q$ 
%$T: \ol S_1 \rightarrow \ol S_2$ 
is called a homomorphism
if  for all $x,y \in \ol S$ 
% we have 
%$$(((T(\em) =  \em)))$$
$$T(x \na y) =  Tx \na Ty$$

}
\end{definition}

A homomorphism between languages becomes 
a homomorphism between their quotients: 

\begin{lemma}

If $T:(S,\equiv_S) \rightarrow (Q,\equiv_Q)$ is a homomorphism then it descends 
%THEN IT DESCENT 
to a homomorphism 
$\ol T: \ol S \rightarrow \ol Q$ defined by
$\ol T([s])= [T(s)]$. 

\end{lemma}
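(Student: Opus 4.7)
The plan is to verify two things: that $\ol T$ is well-defined as a function on $\ol S$, and that it respects the $\na$-operation. Well-definedness is essentially handed to us by the second clause of Definition \ref{def81}: if $s \equiv_S s'$ then $T(s) \equiv_Q T(s')$, hence $[T(s)] = [T(s')]$ in $\ol Q$, so the formula $\ol T([s]) := [T(s)]$ does not depend on the chosen representative.

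For the homomorphism property I would unpack both sides using Definition \ref{def27}. Pick any $x, y \in S$ and any variable permutation $\phi \in P_S$ with $x \bot \phi(y)$. Then by definition
\[
\ol T([x] \na [y]) = \ol T([x \ud \phi(y)]) = [T(x \ud \phi(y))].
\]
The first clause of Definition \ref{def81} applies because $x \bot \phi(y)$: it gives both $Tx \bot T\phi(y)$ and $T(x \ud \phi(y)) = Tx \ud T\phi(y)$. Hence
\[
\ol T([x] \na [y]) = [Tx \ud T\phi(y)].
\]

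On the other side, $\ol T([x]) \na \ol T([y]) = [Tx] \na [Ty]$, and by Lemma \ref{lem02} the value of $\na$ on $\ol Q$ is independent of the representing permutation. Since $Tx \bot T\phi(y)$, the identity of $P_Q$ is a legitimate witness, so $[Tx] \na [T\phi(y)] = [Tx \ud T\phi(y)]$. It remains to identify $[T\phi(y)]$ with $[Ty]$ in $\ol Q$: by axiom \re{eq10} we have $\phi(y) \equiv_S y$, and then the equivalence-preserving clause of Definition \ref{def81} gives $T\phi(y) \equiv_Q Ty$, i.e.\ $[T\phi(y)] = [Ty]$. Putting this together,
\[
[Tx \ud T\phi(y)] = [Tx] \na [T\phi(y)] = [Tx] \na [Ty],
\]
which is exactly $\ol T([x]) \na \ol T([y])$.

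The only mild subtlety — and the point I would be most careful about — is the mismatch between the variable permutation $\phi \in P_S$ used downstairs to compute $[x] \na [y]$ and the permutations $P_Q$ used upstairs to compute $[Tx] \na [Ty]$: a priori $T\phi$ need not arise from any element of $P_Q$. This is why I route the argument through the well-definedness of $\na$ on $\ol Q$ (which absorbs the ambiguity) combined with axiom \re{eq10} applied in the target (to replace $T\phi(y)$ by $Ty$ modulo $\equiv_Q$), rather than trying to transport $\phi$ directly across $T$.
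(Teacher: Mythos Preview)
Your proof is correct. The paper states this lemma without proof, so there is nothing to compare against; your argument supplies exactly the details one would expect, and in particular you handle the one genuinely nontrivial point---that the permutation $\phi \in P_S$ used to form $[x]\na[y]$ need not correspond to any element of $P_Q$---in the right way, by invoking the well-definedness of $\na$ on $\ol Q$ together with $\phi(y)\equiv_S y$ and the equivalence-preserving clause of Definition~\ref{def81}.
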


The monoid operation $\na$ between abelian idempotent
monoids naturally carries over to the set of homomorphisms between them:

\begin{definition}[Operation on homomorphism set]					\label{def83}

{\rm 

For homomorphisms $U,V:\ol S \rightarrow \ol Q$ %set 
define the homomorphism 
$$U \na V : \ol S \rightarrow \ol Q: (U\na V)(x) := U x \na V x$$

}
\end{definition}

That is why we get the usual order on the homomorphism set:

\begin{definition}

{\rm 

We equip $\hom(\ol S,\ol Q)$ with the usual order $\le$
as of definition \ref{deforder}, 
 % :
 that is,  for all homomorphisms $U,V: \ol S \rightarrow \ol Q$ 
 % \in \hom(\ol S,\ol Q)$ 
 set
$$U \le V \quad \Leftrightarrow \quad %U 
V = U \na V
 \quad \Leftrightarrow \quad
 Ux \le Vx \; \forall x \in \ol S$$

} 
\end{definition}

\begin{lemma}

%For any $a \in \ol S$ 
%we have an automorphism 
%%automoprhism
%%automoprhism  
%$$T_a: \ol S \rightarrow \ol S: T_a(x) =x \na a$$

For any $a \in \ol S$, $T_a: \ol S \rightarrow \ol S$ defined by
$T_a(x) =x \na a$ is a (non-unital) homomorphism .

%jedoch $\em$ 
\end{lemma}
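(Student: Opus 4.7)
The plan is to verify directly that $T_a$ satisfies the homomorphism condition of definition \ref{def82}, namely $T_a(x \na y) = T_a(x) \na T_a(y)$ for all $x,y \in \ol S$. This is a routine consequence of the three axioms that make $(\ol S, \na, \e)$ an abelian, idempotent monoid, as established in lemma \ref{lem02}.

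Concretely, I would expand the right-hand side using the definition of $T_a$:
$$T_a(x) \na T_a(y) = (x \na a) \na (y \na a).$$
Then I would reorder the four factors using associativity and commutativity of $\na$ to bring the two copies of $a$ together, obtaining $x \na y \na a \na a$. Applying idempotence ($a \na a = a$) collapses this to $x \na y \na a$, which by definition equals $T_a(x \na y)$. This gives the homomorphism identity.

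Finally, I would note the parenthetical remark that $T_a$ is generally non-unital: $T_a(\e) = \e \na a = a$, which fails to equal $\e$ whenever $a \neq \e$. There is essentially no obstacle here; the lemma is a one-line consequence of the monoid structure, and its purpose is to record a stock of natural examples of homomorphisms $\ol S \to \ol S$ for use in the subsequent development of the homomorphism section.
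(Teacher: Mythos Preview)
Your proof is correct. The paper in fact states this lemma without proof, treating it as immediate from the monoid axioms; your direct verification via associativity, commutativity, and idempotence is exactly the intended one-line argument.
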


We write the composition $U \circ V$ of two homomorphisms $U,V$
also as a product $U V$. 
The following lemma states the distributive law 
between the (``additive'') monoid operations and the (``multiplicative'') composition operations on the homomorphism  
sets.

%that the monoid operation on the
%homomoprhism sets 

\begin{lemma}[Distribution laws]				\label{lemma83}
For all homomorphisms $U,V :\ol Q \rightarrow \ol R$ and 
$X :\ol S \rightarrow \ol S$ and 
$Y :\ol R \rightarrow \ol W$ %and 
we have 
$$(U \nabla V)  X= U  X \nabla  V  X$$
$$Y  (U \nabla V) = Y U  \nabla  Y  V $$

\end{lemma}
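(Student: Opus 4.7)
The plan is to verify both identities pointwise, using only the definition of the pointwise operation $\na$ on homomorphism sets (definition \ref{def83}) together with the homomorphism property of $Y$ (definition \ref{def82}). There is no real obstacle here; both equations are essentially unwindings of the definitions, and the asymmetry between the two identities (the first requiring no hypothesis on $X$ beyond its being a function into the domain of $U,V$, the second crucially using that $Y$ is a homomorphism) is exactly the asymmetry already built into definition \ref{def83}, which defines $U \na V$ purely pointwise on the codomain side.

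For the first identity, fix an element $x$ in the domain of $X$. Then by definition \ref{def83} applied to $U \na V$, followed by the definition of composition, and then again by definition \ref{def83},
\[
((U \na V) X)(x) = (U \na V)(Xx) = U(Xx) \na V(Xx) = (UX)(x) \na (VX)(x) = (UX \na VX)(x).
\]
Since the equality holds for every $x$, the two homomorphisms agree. Notice this step makes no use of any structural property of $X$ beyond its being a function with appropriate source and target.

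For the second identity, fix $x \in \ol Q$. Apply the definition of composition, then definition \ref{def83} to rewrite $(U \na V)(x)$, then the hypothesis that $Y$ is a monoid homomorphism (definition \ref{def82}) to distribute $Y$ across $\na$, and finally definition \ref{def83} in reverse:
\[
(Y(U \na V))(x) = Y\bigl( (U \na V)(x) \bigr) = Y(Ux \na Vx) = Y(Ux) \na Y(Vx) = (YU \na YV)(x).
\]
Again this is valid for every $x$, so $Y(U \na V) = YU \na YV$ as homomorphisms. The only nontrivial ingredient is the homomorphism property of $Y$, which is used exactly at the step $Y(Ux \na Vx) = Y(Ux) \na Y(Vx)$; the rest is bookkeeping.

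In short, the proof is a two-line pointwise check for each identity, and the main (rather mild) point to highlight is why the two laws require different hypotheses: $\na$ on $\hom(\ol S, \ol R)$ is defined via $\na$ on the target, so precomposition with an arbitrary $X$ is automatically compatible, whereas postcomposition with $Y$ genuinely needs $Y$ to respect $\na$.
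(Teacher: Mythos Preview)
Your proof is correct and is exactly the intended pointwise verification; the paper states this lemma without proof, and your argument—unwinding definition \ref{def83} for the first identity and additionally invoking the homomorphism property of $Y$ (definition \ref{def82}) for the second—is the natural one.
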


Write $\e$ for the operator $\e:\ol S \rightarrow \ol T$ $s \mapsto [\e]$. 
%In the next defintion we use the bar operator of definition \ref{def74}. 
A continuous  linear operator between %normed linear spaces
Banach spaces  
%is %defined to be an 
is actually {\em uniformly continuous}, that is, is bounded.
% one.  
We consider analogously only uniformly continuous, or in other words, bounded homomorphisms between abelian idempotent 
monoids equipped with metrics 
in order to get a natural length function on each homomorphism set 
in analogy to the norm of linear operators between Banach spaces: 

\begin{definition}[Natural length function on homomorphism set]				\label{def85}

{\rm 

Let $\ol S$ and $\ol Q$ be equipped with 
positive, symmetric, nilpotent functions 
$d_{\ol S}$ and $d_{\ol Q}$, respectively,     %.  
satisfying the $\na$-inequality.  
Write $\hom(\ol S , \ol Q)$ for the set of all 
``uniformly continuous'' homomorphisms 
$U: \ol S \rightarrow \ol Q$,  %is equicontinous 
i.e. for which $\ell'(U)$ defined next is finite.  
%
%For all $U \in
 
We define functions $\ell',\ell: \hom(\ol S,\ol Q) \rightarrow \R$, 
where $\ell$ is a length function,  
by 
$$\ell'(U):=  \inf \{ M \ge 0|\, M \in \R,\;\forall x,y \in \ol S: \; d_{\ol Q}(Ux,Uy) \le M  d_{ \ol S}(x,y) 
\}$$
%
%$$\ell'(U):= \sup_{x,y \in S, d(x,y) \neq 0}  \frac{d_T(Ux,Uy)}{d_S(x,y)}$$
%
\be{eq110}
%$$
\ell := \ol{\ell'}
%$$
\en

%\inf \{ M \ge 0 |\, d(Ux,Uy) \le M $$

}
\end{definition}

In \re{eq110} we used the bar operator of definition \ref{def74}. 
Actually we should %indicate in 
index in $\ell$ the data $\ol S,\ol Q$ etc., but to avoid cluttering we use the sloppy light notation. 
If we choose for $d_{\ol Q}$ the trivial null function then we will 
get all homomorphisms in $\hom(\ol S, \ol Q)$.   
For simplicity, we shall call elements of $\hom(\ol S, \ol Q) $
just homomorphisms and drop ``uniformly continuous''.

\begin{lemma}							\label{lem84}
For all (meaningful composable) homomorphisms $U,V$ we have 
\be{eq168}
%$$
\ell'(U \circ V) \le \ell'(U)  \ell'(V)
%$$
\en
%
%$$\|S \circ T\| \le \|S \| \|T\|$$
%
$$\ell'(U \na V) \le \ell'(U ) + \ell'(V)$$
%
%$$\ell'(S \na T) \le \ell'(S ) + \ell'(T)$$

%$$\|S \ud T\| \le \|S \| + \|T\|$$

%In particular, $\ell$ is a length function on the homomorphism set.
%jdeoch
%$$\ell(S \na T) \le \ell(S ) $$

%and $\ell:\hom(\ol S,\ol Q) \rightarrow \R$ is a length function. 
% on 
\end{lemma}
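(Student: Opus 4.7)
The plan is straightforward: both inequalities follow directly from the defining infimum of $\ell'$ by chaining bounds, with the only subtlety being a small $\varepsilon$-argument to pass from "$M$ witnessing $\ell'(U)$" to $\ell'(U)$ itself.

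For the submultiplicativity \re{eq168}, I would fix $\varepsilon > 0$ and pick $M_1 > \ell'(U)$ and $M_2 > \ell'(V)$ with $M_1 M_2 \le \ell'(U) \ell'(V) + \varepsilon$ (if either factor is $+\infty$ there is nothing to prove; otherwise the factors are arbitrarily close to the infima). For arbitrary $x,y \in \ol S$ I then chain:
$$d_{\ol R}((UV)x,(UV)y) = d_{\ol R}(U(Vx),U(Vy)) \le M_1\, d_{\ol Q}(Vx,Vy) \le M_1 M_2\, d_{\ol S}(x,y).$$
Hence $M_1 M_2$ lies in the set over which $\ell'(UV)$ is the infimum, so $\ell'(UV) \le M_1 M_2 \le \ell'(U)\ell'(V) + \varepsilon$, and letting $\varepsilon \to 0$ gives the claim.

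For the additivity on $\na$, the key ingredient is the hypothesis that $d_{\ol Q}$ satisfies the $\na$-inequality (definition \ref{def41}). Using the definition of $U \na V$ from definition \ref{def83},
$$d_{\ol Q}\bigl((U\na V)x,(U\na V)y\bigr) = d_{\ol Q}(Ux \na Vx,\; Uy \na Vy) \le d_{\ol Q}(Ux,Uy) + d_{\ol Q}(Vx,Vy).$$
Again picking $M_1 > \ell'(U)$, $M_2 > \ell'(V)$ with $M_1 + M_2 \le \ell'(U) + \ell'(V) + \varepsilon$, the right hand side is bounded by $M_1 d_{\ol S}(x,y) + M_2 d_{\ol S}(x,y) = (M_1 + M_2) d_{\ol S}(x,y)$, so $\ell'(U \na V) \le M_1 + M_2$ and we conclude by letting $\varepsilon \to 0$.

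The main (and only) obstacle is a bookkeeping one: the infimum defining $\ell'$ need not be attained, so one must argue with near-optimal witnesses $M_i$ rather than with $\ell'(U),\ell'(V)$ directly, and handle the case $\ell'(U) = \infty$ or $\ell'(V) = \infty$ trivially. No other structure (length functions, bar operator, $\Delta$-inequality) is needed for this particular lemma — only that $d_{\ol Q}$ satisfies the $\na$-inequality, which is already a standing assumption in definition \ref{def85}.
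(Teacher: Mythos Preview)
Your proposal is correct and follows exactly the same approach as the paper's proof, which simply says the first inequality is ``elementary as for the norm of linear operators'' and the second ``follows from the $\na$-inequality of $d_{\ol Q}$.'' Your $\varepsilon$-argument is slightly overcautious---since the defining set for $\ell'(U)$ is upward-closed, the infimum is automatically attained (i.e.\ $d_{\ol Q}(Ux,Uy)\le \ell'(U)\,d_{\ol S}(x,y)$ for all $x,y$), so one can chain with $\ell'(U),\ell'(V)$ directly---but this is a harmless extra layer of care.
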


\begin{proof}

The first inequality is elementary as for the norm of linear operators,
and the second one follows from the $\na$-inequality of $d_{\ol Q}$.
%The last assertion follows from this and lemma \ref{lem76}. 
%
\end{proof}

We refer to the inequality \re{eq168} (for any given $\ell'$) as the {\em product-inequality}. 
  It is analogue to the product inequality $\|S T\|\le \|S\| \|T\|$
  for bounded linear operators between Banach spaces. 
We tidy up the %mess 
situation so far and get the usual abstract setting
of abelian idempotent monoids also for the homomorphism sets:

\begin{lemma}[Abstraction]						\label{lem85}

The homomorphism set 
$\big((\hom(\ol S, \ol Q),\na , \e \big)$ is an abelian idempotent monoid and $\ell$ of definition \ref{def85} a length function on it.

\end{lemma}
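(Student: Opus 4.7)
The plan is to verify the two claims in turn: first that $(\hom(\ol S,\ol Q),\na,\e)$ is an abelian, idempotent monoid, and then that $\ell = \ol{\ell'}$ is a length function on it in the sense of lemma \ref{lem03}/definition \ref{def211}.

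For the monoid structure, I would first check that $\na$ is well-defined on $\hom(\ol S, \ol Q)$. Given $U, V \in \hom(\ol S, \ol Q)$, definition \ref{def83} sets $(U \na V)(x) = Ux \na Vx$; this is a homomorphism since, using commutativity and idempotence of $\na$ in $\ol Q$,
$$(U\na V)(x \na y) = U(x\na y) \na V(x\na y) = (Ux \na Vx) \na (Uy \na Vy) = (U\na V)(x) \na (U\na V)(y).$$
Boundedness is inherited: by lemma \ref{lem84}, $\ell'(U \na V) \le \ell'(U) + \ell'(V) < \infty$, so $U \na V \in \hom(\ol S, \ol Q)$. The zero homomorphism $\e$ lies in $\hom(\ol S,\ol Q)$ since $d_{\ol Q}(\e(x),\e(y)) = d_{\ol Q}(\e,\e) = 0$, giving $\ell'(\e) = 0$. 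Commutativity, associativity, idempotence and the neutral-element property of $\na$ on $\hom(\ol S,\ol Q)$ are now immediate by evaluating at an arbitrary $x \in \ol S$ and invoking the corresponding axioms of $(\ol Q,\na,\e)$.

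For the length function, I would show first that $\ell'$ itself is a ``possibly non-increasing length function'' in the sense of lemma \ref{lem76}, i.e., that it satisfies $\ell'(\e) = 0$ and subadditivity but not necessarily monotonicity. The value $\ell'(\e)=0$ was already noted, and subadditivity $\ell'(U \na V) \le \ell'(U) + \ell'(V)$ is the second inequality of lemma \ref{lem84}. Then I invoke lemma \ref{lem76} to conclude that $\ell = \ol{\ell'}$ is a (monotonically increasing) length function on the abelian, idempotent monoid $(\hom(\ol S,\ol Q),\na,\e)$; finiteness is preserved since $\ell(U) = \ol{\ell'}(U) = \inf_Z \ell'(U \na Z) \le \ell'(U) < \infty$ for every $U \in \hom(\ol S,\ol Q)$.

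The only mildly delicate point is the closure of $\hom(\ol S,\ol Q)$ under $\na$ and the finiteness of $\ell$ on it; both reduce directly to the subadditivity half of lemma \ref{lem84}. Everything else is a pointwise transfer of monoid axioms from $\ol Q$, and the length-function verification is a one-line application of lemma \ref{lem76} once subadditivity of $\ell'$ is in hand, so no serious obstacle is expected.
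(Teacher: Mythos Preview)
Your proposal is correct and follows essentially the same approach as the paper: verify the monoid axioms pointwise via definition \ref{def83}, use lemma \ref{lem84} to see that $\ell'$ is a (possibly non-increasing) length function, and then apply lemma \ref{lem76} to conclude that $\ell = \ol{\ell'}$ is a genuine length function. You have simply filled in more detail than the paper's terse proof, including the closure of $\hom(\ol S,\ol Q)$ under $\na$ and the finiteness of $\ell$, which the paper leaves implicit.
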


\begin{proof}

By definition \ref{def83} we %get 
check the first assertion, by lemma  \ref{lem84}, $\ell'$ is a 
non-increasing length function, to which we apply lemma 
\ref{lem76}. 
\end{proof}

For the further discussion we need now consider a whole
family of abelian idempotent monoids:
 
\begin{definition}[Category of languages]		\label{def86}		%78}

{\rm 

Let %$H$ 
$\Lambda$ be a %set 
small category %of 
consisting of abelian, idempotent monoids $(\ol S,\na,\e )$ equipped
with positive, symmetric, nilpotent functions 
$d_{\ol S}:\ol S \times \ol S \rightarrow \R$ 
satisfying the $\na$-inequality 
as objects,  
and all possible ``uniformly continuous'' homomorphisms between these objects as in definition \ref{def82} 
with ordinary composition  
as morphisms. 

}
\end{definition}

It is now important to stress that we will in the sequel not use the distance functions 
$d_{\ol S}$ on the objects %not 
at all. %(One may set $d_{\ol S}:=0$.) 
We only mentioned it to 
have a natural length function on the homomorphism sets 
by definition \ref{def85}. 
We require now all homomorphism sets to be equipped with length functions and metric candidates:

\begin{definition}[Category of languages with general length functions on homomorphism sets]					
\label{def87}	

{\rm

Each morphism set $M$ ($= \hom(\ol S, \ol Q)$) in $\Lambda$  
assume to be equipped with a length function $\ell: M \rightarrow \R$ 
  and a positive, symmetric, 
nilpotent  function 
$d:M \times M \rightarrow \R$. 
(For example the length functions $\ell$ of definition \ref{def85}, and  $d:= d_\ell$ 
%(def. \ref{def210}) 
of definition \ref{def210} %is 
are the typical natural choices.) 
For simplicity, all $\ell$ and $d$ are called in the same way, independent of $M$. 

%sin 
%length functions. 
%Write $H(

}
\end{definition}

%To sum up
%To sum up, the abelian idempotent monoids  
%each %of the 
%abelian idempotent of their collection 
%are equipped with metrics $d$, say,
%and the homomoprhism sets between them with length functions
%$\ell$. 
Our next aim is to modify the %metrics and 
length functions on the homomorphism sets 
in such a way such that we (just) get the natural inequality
$d(A \circ U, A \circ V) \le \ell(A) d_\ell(U,V)$ for all composable homomorphisms
$A,U,V$ in analogy to the inequality
$\|AU - AV \| \le \|A\| \|U - V\|$ for linear operators
between Banach spaces. 
Simultaneously we alter $\ell$ as to achieve that
$d_\ell$ satisfies the $\Delta$- and $\na$-inequality.  
%by using the co
To this end we apply the fix point concept of section
\ref{sec7} to the homomorphism sets, enriched by the now also desired product
inequality.
%We remark that  thereby 
%%Thereby, 
%%Let us remark that 
%we will %however 
%not alter the given functions $d_{\ol S}$
%on the object spaces of $\Lambda$ at all.
%% in this section. 

For the rest of this section we will define various new $\ell$s and
$d$s simultaneously for all morphism sets $M$ in 
%is 
$\Lambda$, 
%but not indicate $M$. %: 
without indicating $M$ in notation. 
We again note that we may choose all functions $d_{\ol S}$ to be the trivial null functions and %then 
any $\ell$ and $d$s in definition \ref{def87}	 independent  of the 
suggested $\ell$s of definition \ref{def85}. 

%Next 
We repair now all given symmetric nilpotent positive functions 
$d$ on the homomorphism sets in such a way 
that they satisfy the product inequalities with respect to the
length functions: 

\begin{definition}[New submultiplicative distance function]						\label{def88}

For all objects $\ol S$ and $\ol Q$ in $\Lambda$ define  $\dot d : \hom(\ol S,\ol Q) \times \hom(\ol S,\ol Q) \rightarrow \R$ 
as follows.
For all morphisms %$Q,R$ 
$X,Y: \ol S \rightarrow \ol Q$ in $\Lambda$ 
with same domain and range
% we 
set 
% 
%Given a positive, symmetric, niloptent function
%$d: \hom(\ol S,\ol Q) \times \hom(\ol S,\ol Q) \rightarrow \R$
%set
 %
% by 
%$$\begin{matrix}	a \\ b	\end{matrix}$$
$$\dot d (X,Y) := \inf_{ 
%{\small 
%\begin{matrix}	X = A_1 ... A_n U B_1 .... B_m, \\ 
%{Y = A_1 ... A_n V B_1 .... B_m}, \; n, m\ge 0
%	\end{matrix}
%	}  %}
\stackrel{X = A_1 ... A_n U B_1 .... B_m, \;}
{Y = A_1 ... A_n V B_1 .... B_m}, \; n, m\ge 0}
\ell(A_1) ...  \ell(A_n)  d( U,V)  \ell( B_1 )....   \ell(B_m)
$$

where $A_i, V,U,B_i$ are %homomorphims 
morphisms between any objects of $\Lambda$ 
%$H$
%, 
%and 
for which their products as indicated are valid (composable).

%$\ci a$
\end{definition}

The so repaired functions $\dot d$ now fulfill the
desired product inequalities: 

\begin{lemma}[Product inequality]  \label{lemmulti}

For composable morphisms $A,B,X$ in $\Lambda$ 
%$U,V,X$ in $H$  
%in $H$ 
we have the submultiplicativity relations 
(often called %{\em 
{product-inequality})  %ies) 
%$$
\begin{equation}		\label{pe1}
\dot d(A X,B X) \le \dot d(A,B) \ell(X)
\end{equation}
%$$
%
\begin{equation}		\label{pe2}
%$$
\dot d( X A, X B ) \le  \ell(X) \dot d(A,B) 
\end{equation} 
%$$

\end{lemma}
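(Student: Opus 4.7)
The plan is to prove both inequalities directly from the definition of $\dot d$, exploiting the fact that any factorization of $A,B$ extends naturally to a factorization of $AX,BX$ (respectively $XA,XB$) by appending (respectively prepending) the factor $X$. The infimum defining $\dot d(AX,BX)$ is then taken over a set that already contains the factorizations inherited from $A,B$ multiplied by the factor $\ell(X)$, so the inequality is essentially a ``subset of candidates'' argument.

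For the first inequality, I would fix $\varepsilon > 0$ and choose a factorization $A = A_1 \cdots A_n U B_1 \cdots B_m$ and $B = A_1 \cdots A_n V B_1 \cdots B_m$ such that
\[
\ell(A_1) \cdots \ell(A_n)\, d(U,V)\, \ell(B_1) \cdots \ell(B_m) \le \dot d(A,B) + \varepsilon.
\]
Then $AX = A_1 \cdots A_n U B_1 \cdots B_m X$ and $BX = A_1 \cdots A_n V B_1 \cdots B_m X$ is a valid factorization entering the infimum defining $\dot d(AX,BX)$ (with $B_{m+1} := X$), giving
\[
\dot d(AX,BX) \le \ell(A_1) \cdots \ell(A_n)\, d(U,V)\, \ell(B_1) \cdots \ell(B_m)\, \ell(X) \le (\dot d(A,B) + \varepsilon)\, \ell(X).
\]
Letting $\varepsilon \to 0$ yields \eqref{pe1}. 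The inequality \eqref{pe2} is proved symmetrically by prepending $X$ as a new $A_0$ factor in the chosen factorization.

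The only routine subtlety to check is that the operations make sense categorically: the factorizations chosen for $A,B$ give rise to morphisms that compose with $X$ on the appropriate side, which is immediate because composability of $AX$ and $BX$ in the first place guarantees that the range of the last factor $B_m$ matches the domain of $X$. There is no real obstacle here; the statement is a formal consequence of the definition of $\dot d$ as an infimum, together with the trivial observation that valid factorizations of $A,B$ induce valid factorizations of $AX,BX$ (resp.\ $XA,XB$). No use of the $\na$-inequality, $\Delta$-inequality, or any fixed-point machinery from Section \ref{sec7} is required; only the structure of the defining infimum.
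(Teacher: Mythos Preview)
Your proof is correct and follows essentially the same approach as the paper: both argue that every factorization of $A,B$ induces a factorization of $AX,BX$ by appending $X$ as an extra right factor (and symmetrically for \eqref{pe2}), so the infimum defining $\dot d(AX,BX)$ is taken over a set containing these induced candidates. The paper phrases this directly as ``infimum over a smaller set'', while you carry it out with an $\varepsilon$-argument, but the content is identical.
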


\begin{proof}

Inserting in definition \ref{def88}
 we have  
$$\dot d (Q X,  R X ) := \inf_{ \stackrel{QX = A_1 ... A_n U B_1 .... B_m, \;}
{RX = A_1 ... A_n V B_1 .... B_m}}
\ell(A_1) ...  \ell(A_n)  d( U,V)  \ell( B_1 )....   \ell(B_m)
$$
$$\le \inf_{ \stackrel{Q = A_1 ... A_n U B_1 .... B_m, \;}
{R = A_1 ... A_n V B_1 .... B_m}}
\ell(A_1) ...  \ell(A_n)  d( U,V)  \ell( B_1 )....   \ell(B_m) \ell(X)
%= \ol d(Q,X) \ell(X)
$$

because we choose the infimum over a smaller set in the second line, that is, we at first chose a resolution $Q = A_1 ... A_n U B_1 .... B_m$ and then said $QX A_1 ... A_n U B_1 .... B_m X$. 
\end{proof}

%\begin{definition}

The validity of relations (\ref{pe1}) and (\ref{pe2}) for all morphism sets of $\Lambda$
is summarized as {\em product inequality}.
It is important that $\dot d$ descends: 
%\end{definition}

\begin{lemma}
$\dot d$ is a positive, symmetric nilpotent function with 
%finc
%psotive
$$\dot d \le d$$
\end{lemma}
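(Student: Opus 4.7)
The proof is essentially a direct verification from the definition of $\dot d$, so the plan is simply to check each of the four claimed properties (the inequality $\dot d \le d$, positivity, symmetry, and nilpotence) by exhibiting suitable decompositions in the infimum.

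First, to show $\dot d \le d$, I would note that for any pair $X,Y$ in a common morphism set, the trivial decomposition with $n=m=0$, $U:=X$, $V:=Y$ is a valid entry in the infimum defining $\dot d(X,Y)$. With no factors $A_i$ or $B_j$ present, the corresponding product collapses to just $d(U,V)=d(X,Y)$, so $\dot d(X,Y)$ is bounded above by this value. This gives both $\dot d \le d$ and, combined with $d(X,X)=0$ (nilpotence of $d$), the nilpotence $\dot d(X,X)=0$.

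Next, positivity is immediate: every factor $\ell(A_i)$ and $\ell(B_j)$ is nonnegative since $\ell$ is a length function (by the final sentence after lemma \ref{lem03}, positivity is built into the definition), and $d(U,V)\ge 0$ by hypothesis on $d$. Therefore every term over which the infimum is taken is a product of nonnegative reals and is nonnegative, hence $\dot d \ge 0$.

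Finally, for symmetry I would observe that the set of decompositions indexing the infimum for $\dot d(X,Y)$ is in obvious bijection with that for $\dot d(Y,X)$: any pair $X=A_1 \cdots A_n U B_1 \cdots B_m$, $Y=A_1 \cdots A_n V B_1 \cdots B_m$ corresponds to the pair with the roles of $U$ and $V$ swapped, which is a valid decomposition for $\dot d(Y,X)$. Since $d(U,V)=d(V,U)$ by symmetry of $d$ while the factors $\ell(A_i)$, $\ell(B_j)$ are unchanged, the two infima agree. There is no real obstacle here; the statement is purely a sanity check that the new $\dot d$ inherits the elementary properties of $d$ and is bounded by it, so that the repair procedure of definition \ref{def88} only weakens (never enlarges) the distance.
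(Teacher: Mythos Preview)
Your proof is correct and is exactly the routine verification the paper has in mind; indeed the paper states this lemma without proof, and your checks (trivial decomposition $n=m=0$ for $\dot d\le d$ and nilpotence, nonnegativity of all factors for positivity, and the $U\leftrightarrow V$ bijection together with $d(U,V)=d(V,U)$ for symmetry) are the intended ones.
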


\begin{lemma}			%\label{lemma51}

$\dot d$ satisfies the product-inequality, and $d = \dot d$ if and only if $d$ satisfies the product-inequality.

%$\tilde d$ erfüllt $\Delta$-ungl

\end{lemma}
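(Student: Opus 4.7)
The plan is to split the statement into two independent claims and handle each briefly, as both essentially reduce to what lemma \ref{lemmulti} and definition \ref{def88} already give.

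For the first claim, I would simply observe that this is exactly lemma \ref{lemmulti}: the relations (\ref{pe1}) and (\ref{pe2}) applied to every morphism set in $\Lambda$ constitute the product-inequality by convention.

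For the equivalence, the easier direction is ``$d = \dot d \Rightarrow d$ satisfies the product-inequality'', which is immediate: substitute $d = \dot d$ into (\ref{pe1}) and (\ref{pe2}). For the harder direction I would first note the trivial upper bound $\dot d \le d$, coming from the resolution with $n = m = 0$, $U = X$, $V = Y$ in definition \ref{def88} (the previous lemma states this explicitly). So the task is to show that if $d$ satisfies the product-inequality, then $\dot d \ge d$.

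To this end, fix any resolution $X = A_1 \cdots A_n U B_1 \cdots B_m$ and $Y = A_1 \cdots A_n V B_1 \cdots B_m$. Applying (\ref{pe1}) $n$ times to peel off the $A_i$'s on the left, then (\ref{pe2}) $m$ times to peel off the $B_j$'s on the right, yields
$$d(X,Y) \le \ell(A_1) \cdots \ell(A_n)\, d(U,V)\, \ell(B_1) \cdots \ell(B_m).$$
Taking the infimum over all such resolutions then gives $d(X,Y) \le \dot d(X,Y)$. Combined with $\dot d \le d$ this yields $d = \dot d$.

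I do not expect any real obstacle: the work is essentially bookkeeping, and the only mild subtlety is making sure the iterated peeling-off is valid, i.e.\ that at each intermediate step the morphisms remain composable in $\Lambda$ and the product-inequality can be applied to the truncated factorization. Since the $A_i$ and $B_j$ were given as composable morphisms in the original factorization in definition \ref{def88}, this is automatic.
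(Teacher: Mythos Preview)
Your approach is correct and matches the paper's treatment: the paper states this lemma without proof, relying on the pattern already established in lemmas \ref{lemma51} and \ref{lemma54}, where the equivalence holds because each term in the infimum defining the modified distance is, under the assumed inequality, already bounded below by $d(X,Y)$.

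One small slip: you have the references to (\ref{pe1}) and (\ref{pe2}) reversed. Inequality (\ref{pe1}), $d(AX,BX)\le d(A,B)\ell(X)$, removes a common \emph{right} factor, so it is what you use $m$ times to peel off the $B_j$'s; inequality (\ref{pe2}), $d(XA,XB)\le \ell(X)d(A,B)$, removes a common \emph{left} factor and is what you use $n$ times for the $A_i$'s. The argument itself is unaffected.
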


\begin{lemma}							\label{lem89}

If $d$ satisfies the product inequality then also $\ell: \hom(\ol S,\ol Q) \rightarrow \R$ the one of (\ref{eq168}) in case that $\ell(x)=d(x,\e)$. 
%defined by $\ell(x):=d(x,\e)$.
%, that is, for all composable homomoprhisms $A,X$ we have %,
%$$\ell(A X) \le \ell(A) \ell(X)$$

\end{lemma}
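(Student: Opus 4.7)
The target inequality is the product inequality $\ell(U\circ V) \le \ell(U)\,\ell(V)$ for composable homomorphisms $U:\ol Q \to \ol T$ and $V:\ol S\to \ol Q$, where $\ell(X):=d(X,\e)$ and $d$ is assumed to satisfy the product inequality (pe1)--(pe2). My approach is to reduce this to a single application of (pe1), using the crucial fact that the zero homomorphism is absorbing on the right under composition.

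The key observation is that $\e\circ V = \e$ as a homomorphism $\ol S\to \ol T$: for every $s\in\ol S$ one has $(\e\circ V)(s) = \e(V(s)) = [\e]$, which is the defining value of the zero homomorphism $\e:\ol S\to\ol T$. Writing $\e_{\ol Q\to\ol T}$ and $\e_{\ol S\to\ol T}$ for the two relevant null homomorphisms (so that $\e_{\ol S\to\ol T} = \e_{\ol Q\to\ol T}\circ V$), we compute
$$\ell(U\circ V) = d(U\circ V,\;\e_{\ol S\to\ol T}) = d\bigl(U\circ V,\;\e_{\ol Q\to\ol T}\circ V\bigr).$$
Now invoke the product inequality (pe1) with $A:=U$, $B:=\e_{\ol Q\to\ol T}$, and $X:=V$:
$$d(U\circ V,\;\e_{\ol Q\to\ol T}\circ V) \le d(U,\;\e_{\ol Q\to\ol T})\,\ell(V) = \ell(U)\,\ell(V).$$
Chaining these gives $\ell(U\circ V)\le \ell(U)\,\ell(V)$, which is exactly the product inequality \re{eq168} for $\ell$.

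\textbf{Remarks on obstacles.} There is essentially no obstacle beyond recognising $\e\circ V = \e$, which follows directly from the definition of the zero homomorphism; once this is in place, the argument is a single-line application of (pe1). Note that we do not need (pe2) here, nor any further structure on the length function (such as monotonicity under $\na$), since the inequality involves only compositions and the distinguished element $\e$. The symmetric argument with the null homomorphism on the left, $U\circ\e = \e$ (which holds provided $U(\e)=\e$, e.g.\ when $U$ is unital), together with (pe2), would give the same conclusion, but the chosen route via (pe1) is free of any unitality assumption.
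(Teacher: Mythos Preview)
Your argument is correct and is exactly the natural one: the paper states this lemma without proof, and the one-line computation
\[
\ell(U\circ V)=d(U\circ V,\e)=d(U\circ V,\e\circ V)\le d(U,\e)\,\ell(V)=\ell(U)\,\ell(V)
\]
via (pe1) together with the observation $\e\circ V=\e$ is precisely what is intended. Your remark that the route through (pe1) avoids any unitality assumption on $U$ (whereas the symmetric argument via (pe2) would need $U(\e)=\e$) is a nice addition and consistent with the paper's setup, which explicitly allows non-unital homomorphisms such as $T_a$.
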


We proceed now analogously and thus faster as in section \ref{sec7},
and use various definitions from there, but now applied
to the homomorphism sets rather than to the object sets of $\Lambda$. 
In what follows we use %the 
definitions \ref{def71}, \ref{def72}, % and 
\ref{def74} and \ref{def53} 
applied to the homomorphism sets. 
In analogy to lemma \ref{lemma562} 
we can now repair $d$ in such a way that we get all
the desired properties we want:

\begin{lemma}						\label{lem810} 
For $\dot d$ being any positive, symmetric, nilpotent functions 
satisfying the %submultiplicativity relations 
product-inequality
on %all 
each morphism set of $\Lambda$  we have:  

$\hat {\dot d}$ satisfies the $\na$- and product-inequality. 
%erfüllt die $\nabla$- und multi-ungleichung \ref{lemmulti}

$\tilde {\dot d}$ satisfies the $\Delta$- and product-inequality.
%erfüllt die $\Delta$- und multi-ungleichung 

$\tilde {\hat {\dot d}}$ satisfies the $\na$-, $\Delta$- and product-inequality. 
%erfüllt die $\nabla$- und multi- und $\Delta$- ungleichung \ref{lemmulti}

\end{lemma}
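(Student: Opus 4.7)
The plan is to separate the three claims into a structural part (the $\na$-, $\Delta$- and combined inequalities for the various repaired functions) and a new part (preservation of the product inequality under the hat and tilde operations), and to invoke earlier lemmas for the former while carrying out a short direct computation for the latter.

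First I would observe that the $\na$-inequality for $\hat{\dot d}$ is already Lemma \ref{lemma54} applied to $\dot d$, that the $\Delta$-inequality for $\tilde{\dot d}$ is Lemma \ref{lemma51}, and that $\tilde{\hat{\dot d}}$ inherits both inequalities exactly as in Lemma \ref{lemma562}, via Lemma \ref{lemma55}. So the only genuine content is that the product-inequality, which by hypothesis holds for $\dot d$, is preserved by the hat and tilde operations.

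For the hat operation, fix composable morphisms $A, B$ in $\Lambda$ and $X, Y$ of common domain and range. For any decomposition $X = X_1 \na \ldots \na X_n$, $Y = Y_1 \na \ldots \na Y_n$ the distributive laws of Lemma \ref{lemma83} give $AXB = AX_1B \na \ldots \na AX_nB$ and $AYB = AY_1B \na \ldots \na AY_nB$ with the same number of factors. Hence, using the product-inequality for $\dot d$ term-by-term,
\[
\hat{\dot d}(AXB,AYB) \le \sum_{i=1}^n \dot d(AX_iB, AY_iB) \le \ell(A)\,\ell(B)\,\sum_{i=1}^n \dot d(X_i, Y_i).
\]
Taking the infimum over all decompositions of $X$ and $Y$ yields $\hat{\dot d}(AXB,AYB) \le \ell(A)\,\ell(B)\,\hat{\dot d}(X,Y)$, which is the two-sided product-inequality (Lemma \ref{lemmulti}) for $\hat{\dot d}$.

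For the tilde operation, given any chain $X = a_0, a_1, \ldots, a_k = Y$ of morphisms in the appropriate $\hom$-set, the composed morphisms $Aa_0B, Aa_1B, \ldots, Aa_kB$ form a chain from $AXB$ to $AYB$. Applying the product-inequality of $\dot d$ to each consecutive pair,
\[
\tilde{\dot d}(AXB, AYB) \le \sum_{i=1}^k \dot d(Aa_{i-1}B, Aa_iB) \le \ell(A)\,\ell(B)\,\sum_{i=1}^k \dot d(a_{i-1}, a_i).
\]
Infimizing over all chains from $X$ to $Y$ gives $\tilde{\dot d}(AXB, AYB) \le \ell(A)\,\ell(B)\,\tilde{\dot d}(X,Y)$, that is, the product-inequality for $\tilde{\dot d}$. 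The third claim then follows by composing the two: by the same chain argument applied to $\hat{\dot d}$ (which is still positive, symmetric, nilpotent and now additionally satisfies the product- and $\na$-inequalities), $\tilde{\hat{\dot d}}$ satisfies the product-inequality and the $\Delta$-inequality; and by Lemma \ref{lemma55} applied to $\hat{\dot d}$ it also satisfies the $\na$-inequality.

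The main obstacle I expect is the bookkeeping in the hat case: the infimum in Definition \ref{def72} is taken only over decompositions of $X$ and $Y$ with the \emph{same} number $n$ of summands, and my argument relies on the distributive law producing a compatible pair of decompositions of $AXB$ and $AYB$ of length $n$. This is automatic from Lemma \ref{lemma83}, but it is the one place where the argument could derail if the distributive law failed or if the infimum definition were read differently. Everything else is either a direct application of previously proved lemmas or the same telescoping that yielded Lemma \ref{lemmulti} in the unrepaired case.
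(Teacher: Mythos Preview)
Your proposal is correct and follows essentially the same route as the paper. The paper's proof likewise invokes Lemmas \ref{lemma51}, \ref{lemma54}, \ref{lemma562} for the $\Delta$-, $\na$-, and combined inequalities, and for the product-inequality it carries out the very same computation you do: restrict the infimum in $\hat{\dot d}$ (resp.\ $\tilde{\dot d}$) to decompositions (resp.\ chains) obtained by first decomposing (resp.\ chaining) the inner factor and then post- or pre-composing, using the distributive laws of Lemma \ref{lemma83}. The only cosmetic difference is that the paper proves the one-sided inequalities \re{pe1} and \re{pe2} separately whereas you bundle both sides into a single $AXB$ argument.
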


\begin{proof}

By the distribution laws of lemma %lemmas 
\ref{lemma83} %and \ref{lemmulti} we have 
and the supposed validity of the product inequalities 
 as stated in lemma \ref{lemmulti} we have 
%$$d(S,T) = \sum \hat d$$
$$ \hat {\dot d}(QX,RX) = \inf_{QX= \na_{i=1}^n R_i,   %\hat d( \phi \chi, \psi \chi)  
%R= R_1 \na ... \na R_n
R X= \na_{i=1}^n R_i} 
\sum_{i=1}^n {\dot d}(Q_i,R_i)$$
$$\le  \inf_{Q= \na_{i=1}^n Q_i, 
%R= R_1 \na ... \na R_n
R= \na_{i=1}^n R_i} 
\sum_{i=1}^n {\dot d}(Q_i X,R_i X) 
\le \hat {\dot d}(Q,R) \ell(X)$$
$$ \tilde {\dot d}(QX,RX) = \inf_{A_i} % \in \hom(S,T)} 
{\dot d}(Q X,A_1) +  {\dot d}(A_1,A_2) + .... + {\dot d}(A_n,RX) $$
$$\le
\inf_{A_i %\in 
} {\dot d}(Q X,A_1 X) +  {\dot d}(A_1 X,A_2 X) + .... + {\dot d}(A_n X,RX) 
\le \tilde {\dot d}(Q,R) \ell(X)$$

Hence $\hat {\dot d}$ and $\tilde {\dot d}$ 
fulfill the product inequalities, and the rest follows from 
%Now recall 
lemmas \ref{lemma51}, \ref{lemma54} 
and \ref{lemma562}
\end{proof}

\begin{lemma}
%If 
$d_\ell$ satisfies the product-inequality 
if and only if $\sigma_\ell$ % satifies 
satisfies it. 

%, 
%then $\sigma_\ell$ also.
\end{lemma}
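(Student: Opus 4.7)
The plan is to reduce both product-inequalities to a common statement about ordered pairs, exploiting that on ordered pairs $d_\ell$ and $\sigma_\ell$ coincide. Two ingredients are used: first, by the distribution law in lemma \ref{lemma83} we have $(A \na B)X = AX \na BX$ and $X(A \na B) = XA \na XB$, so $u \le v$ implies $uX \le vX$ (resp.\ $Xu \le Xv$); second, by lemma \ref{lemma53}, $u \le v$ forces $d_\ell(u,v) = \sigma_\ell(u,v) = \ell(v) - \ell(u)$.

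I would first rewrite both distances via the decompositions of lemma \ref{lem53}. For arbitrary composable $A, B, X$, set $C := A \na B$ and $C' := AX \na BX = CX$. Then
\[
\sigma_\ell(A,B) = d_\ell(A, C) + d_\ell(B, C), \qquad d_\ell(A,B) = d_\ell(A, C) \ud d_\ell(B, C),
\]
and the analogous identities hold with $(A,B,C)$ replaced by $(AX, BX, CX)$. Since $A \le C$ and $B \le C$ (and consequently $AX \le CX$ and $BX \le CX$), all four terms appearing here are distances of ordered pairs, so $d_\ell$ and $\sigma_\ell$ agree on each of them.

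For the direction ``$d_\ell$ product-ineq.\ $\Rightarrow$ $\sigma_\ell$ product-ineq.'', I would apply the assumed inequality to the two pairs $(A,C)$ and $(B,C)$, yielding $d_\ell(AX,CX) \le d_\ell(A,C)\,\ell(X)$ and likewise for $B$, and then sum to get $\sigma_\ell(AX,BX) \le \sigma_\ell(A,B)\,\ell(X)$. Conversely, assuming the $\sigma_\ell$ product-inequality, I would apply it to the same ordered pairs $(A,C)$ and $(B,C)$; since the relevant values of $\sigma_\ell$ equal those of $d_\ell$ by lemma \ref{lemma53}, taking the maximum of the two resulting bounds gives $d_\ell(AX,BX) \le d_\ell(A,B)\,\ell(X)$. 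The right-composition version, $d(XA, XB)$ vs $\sigma(XA, XB)$, is handled identically using $X(A \na B) = XA \na XB$.

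The only real subtlety, and the main obstacle to watch for, is verifying that the pairs one is applying the hypothesized inequality to are actually ordered both before and after composing with $X$, so that the identification $d_\ell = \sigma_\ell$ is legitimate on both sides; this is precisely what the distribution laws of lemma \ref{lemma83} secure. The rest is straightforward bookkeeping using the max/sum decompositions of lemma \ref{lem53}.
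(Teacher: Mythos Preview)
Your proposal is correct and follows essentially the same approach as the paper: the paper's proof simply says ``By lemma \ref{lem53} we may express $\sigma_\ell$ by $d_\ell$, and vice versa, and from this it is easily deduced,'' and your argument is precisely the natural unpacking of that sentence, together with the explicit observation (via lemma \ref{lemma83}) that $(A\na B)X = AX \na BX$ so that order is preserved under composition.
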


\begin{proof}
By lemma \ref{lem53} we may express $\sigma_\ell$ by $d_\ell$, %,
and vice versa,  and 
%against each other, 
%By \re{eq95} we may express $\sigma_\ell$ by $d_\ell$ and 
from this it is easily deduced. 
%Mit formel $\sigma$ als $d$ ausgedrückt
%
\end{proof}

\if 0
\begin{lemma}
If $d_\ell$ erfüllt multi-ungl lemma \ref{lemmulti},
dann $\sigma_\ell$ ebenfalls
\end{lemma}

\begin{proof}
Mit formel $\sigma$ als $d$ ausgedrückt

\end{proof} 
\fi

We introduce now the following functions: 
%Besides definition \ref{def53} we introduce the following functions: 
%$\ell$s:
% ones: 

\begin{definition}					\label{def820}

Define %a positive function 
positive functions $\ell, \dot \ell, \tihadot \ell :
%positive functions $ \dot \ell, \tihadot \ell :
%$\ell, \dot \ell, \tiha \ell, \tihadot \ell : %\ol S 
\hom (\ol S , \ol Q) \rightarrow \R$ depending on $d$ 
(recall that we do this for each morphism set $(M,d)$ of $\Lambda$ 
separately)  %an 
%as in \re{eq167} and by
by 
%
%(recall for each morphism set $M$ of $\Lambda$, so actually $\dot \ell_M$)   by
% set 
%$\ol \ell : \ol S \rightarrow \R$ as 
%
%$$\ell(x):= d(x,\em), \; 
%$$\dot \ell(x):= \dot d(x,\e),  \; 
$$
\ell(x):= d(x,\e), \; \dot  \ell(x):= \dot d(x,\e), \; \tihadot \ell(x):= \tihadot d(x,\e)$$ 

\end{definition} 

To get finally convergence, % or real numbers, % of functions, 
it is fundamental
that the new metric candidates descend:

\begin{corollary}				\label{cor51b}

Using %defintions 
definitions \ref{def820} and \ref{def74},   we have 
$$d_{{\tihadot \ell} } \le \tihadot d, \qquad
d_{\ol {\tihadot \ell} } \le \tihadot d $$    %d^{(n-1)}$$  \ell^{(n)}

\end{corollary}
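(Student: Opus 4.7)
The plan is to mimic the proof of Corollary \ref{cor51} almost verbatim, but with the ``dotted'' distance $\dot d$ replacing the raw $d$, so that the product-inequality is preserved alongside the $\Delta$- and $\na$-inequalities. The only real input is Lemma \ref{lem511} applied to a suitably well-behaved distance function, together with the closure properties of the tilde/hat construction proved in Lemma \ref{lem810}.

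First I would observe that, by Lemma \ref{lem810}, the function $\tihadot d$ is a positive, symmetric, nilpotent function that simultaneously satisfies the $\Delta$-inequality, the $\na$-inequality, and the product-inequality on every morphism set of $\Lambda$. The first two of these are exactly the hypotheses of Lemma \ref{lem511}; the product-inequality is extra and is not needed in this particular step (it will however propagate via Lemma \ref{lem89} when extracting length functions).

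Next I would apply Lemma \ref{lem511} to the distance function $\tihadot d$ in place of the generic $d$ there. The extracted length function associated with $\tihadot d$ via Definition \ref{def820} is precisely $\tihadot \ell(x) := \tihadot d(x,\e)$, which coincides with the $\ell$ appearing in equation \re{eq167} when one starts from $\tihadot d$. Hence Lemma \ref{lem511} yields directly the two inequalities
$$d_{\tihadot \ell} \le \tihadot d \qquad \text{and} \qquad d_{\ol{\tihadot \ell}} \le \tihadot d,$$
which is exactly the statement of the corollary.

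There is really no obstacle here: the work was already done in Lemmas \ref{lem511} and \ref{lem810}. The only thing one should double-check, and the place where a careful reader might pause, is that the hypothesis ``$d$ satisfies the $\Delta$- and $\na$-inequality'' in Lemma \ref{lem511} is applied at the level of a single abelian idempotent monoid, namely $\hom(\ol S,\ol Q)$ viewed via Lemma \ref{lem85} as such a monoid under $\na$. Since Lemma \ref{lem810} produces $\tihadot d$ on each morphism set of $\Lambda$ with the required inequalities, this application is legitimate morphism-set by morphism-set, and the inequalities of the corollary then hold on each $\hom(\ol S,\ol Q)$ separately, as desired.
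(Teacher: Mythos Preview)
Your proof is correct and essentially identical to the paper's. The paper's one-line argument ``apply Corollary \ref{cor51} to $d:=\dot d$ and $\ell:=\dot\ell$'' unwinds to exactly what you wrote: since $\tiha{(\dot d)}=\tihadot d$ satisfies the $\Delta$- and $\na$-inequalities (you cite Lemma \ref{lem810}; Lemma \ref{lemma562} applied to $\dot d$ would equally do), Lemma \ref{lem511} with $\tihadot d$ in the role of $d$ and $\tihadot\ell(x)=\tihadot d(x,\e)$ in the role of $\ell$ gives both inequalities.
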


 \begin{proof}
 Apply corollary \ref{cor51} to $d:= \dot d$ and $\ell:=\dot \ell$. 
	\end{proof}
	 
	% \begin{proposition}				%\label{lem58}

We do now an analogous procedure as in definition \ref{def75},
but now for the homomorphism sets
instead of the object sets:

\begin{definition}					\label{def810}	

%Assume definitions \ref{def86} and 	\ref{def87}. 	
%
For all %morphism 
object sets $\ol S$ and $\ol Q$ of $\Lambda$ set the start points $d^{(1)}:= d: \hom (\ol S , \ol Q) \times \hom (\ol S , \ol Q)    \rightarrow \R$ 
and $\ell^{(1)}:= \ell:\hom (\ol S , \ol Q)   \rightarrow \R$ (def. \ref{def87}), 
and then successively  
%Define  
define 
$ \tihadot {\ell^{(n)}}, \ell^{(n+1)} : %\ol S 
\hom (\ol S , \ol Q)   \rightarrow \R$ 
and $d^{(n+1)}:\hom (\ol S , \ol Q)   \times \hom (\ol S , \ol Q)   \rightarrow \R$ 
($n \ge 1$) 
by 
$$ \tihadot {\ell^{(n)}}(x):= \tihadot{ d^{(n)}}(x,\emptyset)$$
$$\ell^{(n+1)} := {\ol {\tihadot {\ell^{(n)}}}}$$
$$d^{(n+1)}:= d_{\ell^{(n+1)}}$$ 

where $\dot{d^{(n)}}$ is defined as in definition \ref{def88} 
with respect to ${d^{(n)}}$ and ${\ell^{(n)}}$ (instead of $d$ and $\ell$). 
\end{definition}

The successively applied  procedure %of the 
to a length function as described in the last definition converges again, analogously as in the theorem
\ref{cor52}, to a wished fixed point: 

%We get then analogous limits as in \re{def56}, and analogous 
%results as in lemmas \re{lemma513} and we have

%\begin{corollary}				\label{lem812}

\begin{theorem}[Ideal length functions on homomorphism sets]				\label{lem812}

%\begin{lemma}				\label{lem812}

%Suppose the 
Assume the definitions \ref{def86}, 
\ref{def87},  \ref{def810}	
and form the limits $\inft \ell:= \lim \ell^{(n)}$ and $\inft d
:= \lim d^{(n)}$ %of 
%as  in definition \ref{def56} 
with respect to all morphism sets $M$ of $\Lambda$. 
 
Then for all morphism sets $M$ of $\Lambda$,
$\ell^{(n+1)}$ and 
$\inft \ell$ %is a length function 
are length functions on $M$ fulfilling the product-inequality
\re{eq168}, 
$d_{\inft \ell}$ and $\sigma_{\inft \ell}$ 
are pseudometrics on $M$ fulfilling the 
$\na$- and product-inequality, % ies \re{pe1} and \re{pe2}, 
%$\delta$ of defintion \ref{def62} with respect of $\inft \ell$ is increasing
%
%satisfy $\na$ and $\Delta$-ungl and multiungl, $\delta$ is increasing 
and moreover and more precisely
%: 
%d^{(\infty)}
$$
d^{(\infty)} = d_{\inft \ell}  
= \lim_{n \rightarrow \infty} d_{\ell^{(n)}} 
= \tihadot {\inft d} 
= \lim_{n \rightarrow \infty} \tihadot{d_{\ell^{(n)}}} 
= \lim_{n \rightarrow \infty} \tihadot{d^{(n)}} 
= \lim_{n \rightarrow \infty} {d^{(n)}} 
%,
$$
$$
\qquad 
\inft \ell = \dot{\inft \ell} = \tiha{\dot{\inft \ell}} 
= \tihabar{\dot{\inft \ell}}
= \lim_{n \rightarrow \infty} \ell^{(n)}  %= 
%\lim_{n \rightarrow \infty} \dot{\ell^{(n)}} = 
%\lim_{n \rightarrow \infty} \tiha{\dot{ \ell^{(n)}}} 
= \lim_{n \rightarrow \infty} \tihabar{\dot{\ell^{(n)}}}$$
%$$
 %
 %
%$$ 
%
\if 0
$$
d^{(\infty)} = d_{\inft \ell},  \qquad  
\inft \ell = \dot{\inft \ell}= \tihabar{\dot{\inft \ell}}$$
 \fi
$$ 
%\be{eq51}
{d_{\inft \ell}}(U X,V X) \le  {d_{\inft \ell}}(U,V) {\inft \ell}(X) 
$$
%\en
%\be{eq52}
$$ 
{d_{\inft \ell}}( X U, X V ) \le  {\inft \ell}(X)  {d_{\inft \ell}}(U,V) 
$$
%\en
%\end{corollary}

for all composable morphisms $X,U$ and $V$ of $\Lambda$. 
\end{theorem}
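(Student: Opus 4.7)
The plan is to imitate the strategy of Theorem \ref{cor52} line by line, with the operator $\tiha{}$ replaced by $\tihadot{}$ throughout, and with the new product-inequality tracked through every stage. The two key engines are Corollary \ref{cor51b} (the analogue of Lemma \ref{lem511}, which makes the iteration descend) and Lemma \ref{lem810} (which guarantees that $\tihadot{}$ simultaneously produces a pseudometric satisfying the $\na$-, $\Delta$- and product-inequality).

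First, parallel to Lemma \ref{lemma510}, I would establish by induction on $n$ that each $\ell^{(n+1)}$ is a length function, each $\tihadot{d^{(n)}}$ satisfies the $\na$-, $\Delta$- and product-inequality, and the sandwich
$$d^{(n+1)} = d_{\ol{\tihadot{\ell^{(n)}}}} \le \tihadot{d^{(n)}} \le d^{(n)}, \qquad \ell^{(n+1)} = \ol{\tihadot{\ell^{(n)}}} \le \tihadot{\ell^{(n)}} \le \ell^{(n)}$$
holds; the first inequality is Corollary \ref{cor51b} and the second uses $\ol \ell \le \ell$ (Definition \ref{def74}). The product-inequality for $\ell^{(n+1)}$, $n \ge 1$, follows by combining Lemma \ref{lem89} (to pass from $\tihadot{d^{(n)}}$ to $\tihadot{\ell^{(n)}}$) with a direct verification that $\ol{}$ preserves the product-inequality: by the distribution laws (Lemma \ref{lemma83}), $(U \na z_1)(V \na z_2) = UV \na Uz_2 \na z_1 V \na z_1 z_2$, so $\ol \ell(UV) \le \ell((U \na z_1)(V \na z_2)) \le \ell(U \na z_1)\,\ell(V \na z_2)$, and taking the infimum over $z_1,z_2$ gives $\ol \ell(UV) \le \ol \ell(U)\,\ol \ell(V)$. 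Being bounded below by $0$, the sequences admit pointwise limits $\inft \ell$ and $\inft d$.

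Second, I extract the limit identities in analogy with Lemma \ref{lem710}. The formula for $d_\ell$ and pointwise convergence give $d_{\inft \ell} = \lim_n d_{\ell^{(n)}} = \inft d$. The sandwich above forces $\lim_n \tihadot{d^{(n)}} = \inft d$ as well, and since each $\tihadot{d^{(n)}}$ satisfies the $\Delta$-, $\na$- and product-inequality, these three properties survive the pointwise limit; hence $\tihadot{\inft d} = \inft d$, producing the chain of equalities claimed for $d^{(\infty)}$. Evaluating at $\e$ yields most of the chain for $\inft \ell$; the remaining equality $\inft \ell = \ol{\inft \ell}$ follows from the decreasingness of $(\tihadot{\ell^{(n)}})_n$ by swapping infima,
$$\ol{\inft \ell}(x) = \inf_z \inf_n \tihadot{\ell^{(n)}}(x \na z) = \inf_n \ol{\tihadot{\ell^{(n)}}}(x) = \lim_n \ell^{(n+1)}(x) = \inft \ell(x).$$
By Lemma \ref{lem76}, $\inft \ell$ is then monotone; together with subadditivity and $\inft \ell(\e)=0$ (both preserved under pointwise limits), it is a length function. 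The product-inequality for $\inft \ell$ passes through the pointwise limit from $\ell^{(n)}$, $n \ge 2$.

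Third, the product-inequality for $d_{\inft \ell} = \inft d$ is contained in $\tihadot{\inft d} = \inft d$; the one for $\sigma_{\inft \ell}$ follows from the lemma asserting that $d_\ell$ satisfies the product-inequality iff $\sigma_\ell$ does, and the $\na$- and $\Delta$-inequality for $\sigma_{\inft \ell}$ from Theorem \ref{thm61} applied to the monotone length function $\inft \ell$. The main obstacle I anticipate is the family-wide nature of the construction: $\dot d$ in Definition \ref{def88} couples morphisms from different hom-sets via composition with arbitrary $A_i, B_j$ in $\Lambda$, so the induction, the sandwich estimates, and the limit passage must be run coherently on \emph{all} hom-sets of $\Lambda$ at once rather than one at a time. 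Once this bookkeeping is set up, the remaining verifications are structurally identical to those of Lemmas \ref{lemma510}, \ref{lem710} and \ref{lemma513}, decorated by the product-inequality tracked at every step.
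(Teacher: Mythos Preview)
Your proposal is correct and follows essentially the same route as the paper: establish the descending sandwich $d^{(n+1)} \le \tihadot{d^{(n)}} \le d^{(n)}$ via Corollary \ref{cor51b}, pass to the pointwise limit, and then read off the product-inequalities for $(d_{\inft\ell},\inft\ell)$ as the limit of the product-inequalities for $(\tihadot{d^{(n)}},\ell^{(n)})$ furnished by Lemma \ref{lem810}. You are more explicit than the paper in two places---the verification that the bar operator preserves the product-inequality on length functions (which the paper leaves implicit in its claim that $\ell^{(n+1)}$ satisfies \re{eq168}), and the argument for $\ol{\inft\ell}=\inft\ell$ via exchange of infima (the paper instead deduces monotonicity of $\inft\ell$ directly from positivity of $d_{\inft\ell}$, as in Lemma \ref{lemma513})---but these are cosmetic differences rather than a different strategy.
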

%\end{lemma}

%\begin{definition}

%{\rm 

\begin{proof}

We have the analogy of lemma \ref{lemma510}, that is, 
descending sequences 
$$\ell^{(n+2)}=\ol{\tihadot {\ell^{(n+1)}}}
\le \ell^{(n+1)} , 
\qquad 
d^{(n+1)} = d_{\ell^{(n+1)}} \le \tihadot{ d^{(n)} } \le  d^{(n)}
%d^{(n+1)} = d_{\ell^{(n+1)}} le d^{(n)} \le \tihadot d^{(n)}
$$   
 by an analogous proof but by employing 
corollary \ref{cor51b}  %instgead 
instead of \ref{cor51} ,
and so their limits $\inft d$ and $\inft \ell$ and their stated identities.  %,. 

Similar proofs 
%A similar proof 
of lemmas \ref{lem710} and \ref{lemma513} then %yields  
yield the first %identity.  
%identitites. 
identities.    

The last %inequalites 
stated product inequalities with respect to the pair 
$(\inft d, \inft \ell)$ 
%of the stated product inequalities for the 
follow then simply by taking the pointwise limits 
%of the corresponding 
of the valid %product-inequality for the pair  
product-inequalities for the pairs 
$(\tihadot {d^{(n)}}, %\tihadot 
{\ell^{(n)}})$ by %defintion 
%by 
definition of $\dot{d^{(n)}}$ in definition \ref{def810}	 
and lemma \ref{lem810}.  
%(Even if we have sloppily not said it, the definition of $\dot {d^{(n)}}$
%starts  with the pair $(d^{(n)}, \ell^{(n)})$.)
%
\end{proof}

This is the analogy to proposition \ref{lem58}. %: 
%Under one of these conditions defintion \ref{def810} doe notthing change anymore:    
It shows, for example,  when the procedure of definition \ref{def810}	 stops: 

\begin{proposition}[Ideal length functions on homomorphism sets attained]				%\label{lem58}

%\begin{lemma}				%\label{lem58}

If $d = d_\ell$ for all morphism sets of $\Lambda$, then all $d_\ell$ 
%$d:=d_\ell $ 
satisfy the $\na$-, $\Delta$- and product-inequality 
% and 
%satisfies the $\na$ and $\Delta$-inequality  
%and multiungl 
$\Leftrightarrow$ 
%(each item %separately 
%holding for all morphism sets) 
%for all morphism sets of $\Lambda$ one has  
$$d_\ell = \tihadot{ d_\ell} \quad \Leftrightarrow 
\quad 
\tihadot \ell = \ell
\quad \Leftrightarrow 
\quad 
\tihabar {\dot \ell} = \ell		\quad 
\Leftrightarrow 
\quad 
 \ell = \inft \ell$$
 
 where each item is understood to %separately 
%holding 
hold for all morphism sets of $\Lambda$. 
%) 
%\end{proposition}

\end{proposition}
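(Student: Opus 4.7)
The plan is to mirror the proof of Proposition \ref{lem58} essentially step-for-step, but now with the extra ingredient of the product-inequality folded in via the dot-operator $\dot{}\,$ and lemma \ref{lem810}. The key conceptual point is that each of the four operators $\dot{}$, $\hat{}$, $\tilde{}$, $\bar{}$ is (by the earlier lemmas) a projection onto functions satisfying the corresponding property --- product-, $\na$-, $\Delta$-inequality and monotone increasingness respectively --- and by lemma \ref{lem810} these projections are compatible (they do not destroy one another). So the composite $\tihabar{\dot{}}$ is exactly a projection onto length functions whose $d_\ell$ satisfies all three desired inequalities. The equivalences in the proposition then read off as the fixed-point characterization of this composite.

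First I will prove the forward implication: assume $d=d_\ell$ satisfies $\na$-, $\Delta$- and product-inequality on every morphism set of $\Lambda$. Then by lemmas \ref{lemma51}, \ref{lemma54} (and the dot-analogue stated right after definition \ref{def88}) we have $d=\tilde d$, $d=\hat d$, $d=\dot d$ separately, hence $\tihadot d = d$ and therefore $\tihadot \ell = \ell$. For the implication $\tihadot \ell = \ell \Rightarrow d_\ell = \tihadot{d_\ell}$, I will apply corollary \ref{cor51b} to $d:=d_\ell$: this gives $d_\ell = d_{\tihadot\ell} \le \tihadot{d_\ell} \le d_\ell$, forcing equality, and then lemma \ref{lem810} confirms that $\tihadot{d_\ell}$ satisfies all three inequalities, closing the loop back to the first item.

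Next I will handle the bar operator. Since $\bar f \le f$ for every positive $f$, we have $\tihabar{\dot \ell} \le \tihadot \ell \le \ell$ in general. Hence if $\tihabar{\dot \ell} = \ell$ then $\ell \le \tihadot \ell \le \ell$, giving $\tihadot \ell = \ell$, which returns to the previous item. Conversely, if $\tihadot \ell = \ell$ then $\tihabar{\dot \ell} = \bar\ell = \ell$ since $\ell$ is already monotone increasing (lemma \ref{lem76}). Finally, for the equivalence with $\ell = \inft \ell$: from $\tihabar{\dot \ell} = \ell$ together with definition \ref{def810} we read off $\ell^{(2)} = \tihabar{\dot{\ell^{(1)}}} = \ell^{(1)}$, so the sequence $(\ell^{(n)})$ is constant and $\inft \ell = \ell$. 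Conversely, if $\inft \ell = \ell$, theorem \ref{lem812} tells us $\inft \ell = \tihabar{\dot{\inft \ell}} = \tihabar{\dot \ell}$.

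The only step I expect to require care is the compatibility claim used invisibly above, namely that the chain $d = \tilde d = \hat d = \dot d$ can be assembled into $d = \tihadot d$ without the operators spoiling each other's fixed points. This is provided by lemma \ref{lem810} together with lemma \ref{lemma55} (and its $\dot{}$-analogue implicit in the proof of lemma \ref{lem810}); once these are invoked the argument is entirely formal and follows the blueprint of proposition \ref{lem58}.
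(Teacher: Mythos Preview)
Your proof is correct and follows essentially the same approach as the paper's, which simply says the argument is ``completely analogously proved as proposition \ref{lem58}, using corollary \ref{cor51b} and theorem \ref{lem812} instead of \ref{cor51} and \ref{lemma513}.'' You have spelled out exactly this analogy, including the extra direction $\tihadot\ell=\ell\Rightarrow\tihabar{\dot\ell}=\ell$ (via $\bar\ell=\ell$ for monotone $\ell$) that the paper leaves implicit; your final paragraph of commentary on compatibility is accurate but not strictly needed, since the chain $d=\dot d\Rightarrow\hat{\dot d}=\hat d=d\Rightarrow\tihadot d=\tilde d=d$ is just substitution.
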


%\end{lemma}

\begin{proof}

This is completely analogously proved as %lemma 
proposition \ref{lem58}, 
using corollary %\ref 
\ref{cor51b} and theorem \ref{lem812} instead of
 \ref{cor51} and \ref{lemma513}, respectively.
 \end{proof}

We finally may turn the obtained pseudometrics to metrics
and %notate:
notice: 

\begin{corollary}[Quotient category]

By def. \ref{def33}, 
%For 
for each objects $\ol S ,\ol Q$ in $\Lambda$,   we
%We 
may then divide out elements in ${\hom (\ol S, \ol Q)}$ to obtain 
a metric space %$(\me S,\me d_{\me \ell})$
$\me {\hom (\ol S, \ol Q)}$, which is again an abelian idempotent monoid, with metric $ \me {d_{\inft  \ell}} = d_{\me {\inft \ell}}$ 
%and 
%$ \me {\sigma_{\inft  \ell}} = \sigma_{\me {\inft \ell}}$
satisfying the $\na$- and product-inequality with respect to $\check{\inft \ell}$.

These quotients are   
%This is 
compatible			%, ie. 
with composition, i.e. 
if $X_1,X_2 \in \hom(\ol S,\ol Q)$  with $[X_1]=[X_2]$ 
(class brackets) and 
$Y_1,Y_2 \in \hom(\ol Q,\ol R)$ with $[Y_1]=[Y_2]$ 
then
$$[X_1 ][Y_1]:=[X_1 \circ Y_1] = [X_2 \circ Y_2]$$ 
%{thm61} #

In this way we obtain a category $\check \Lambda$ %from $\Lambda$ %, where morphism sets $M$ are replaced by 
out of $\Lambda$ by replacing the morphism sets $M$
by $\check M$. 
% , 
%and defining composition by %$[X_1 ] [Y_1]:= [X . 

Analogous distribution formulas as in lemma \ref{lemma83} hold in $\check \Lambda$. 
\end{corollary}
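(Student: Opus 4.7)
The plan is to apply Proposition \ref{lem36} separately to each morphism set. By Theorem \ref{lem812}, each morphism set $M = \hom(\ol S, \ol Q)$ of $\Lambda$ is an abelian idempotent monoid equipped with the pseudometric $d_{\inft \ell}$ satisfying the $\na$-inequality, so Proposition \ref{lem36}(i) produces the quotient $\me M$ with the claimed abelian idempotent monoid structure and metric $\me{d_{\inft \ell}} = d_{\me{\inft \ell}}$ satisfying the $\na$-inequality, while part (iii) of that proposition identifies the induced length function with $\me{\inft \ell}$.

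The non-routine step, and where I expect all the real content to sit, is showing that composition descends to classes; this is exactly what the product-inequality in Theorem \ref{lem812} was set up for. Given $X_1 \cong X_2$ in $\hom(\ol S, \ol Q)$ and $Y_1 \cong Y_2$ in $\hom(\ol Q, \ol R)$, i.e.\ $d_{\inft \ell}(X_1, X_2) = d_{\inft \ell}(Y_1, Y_2) = 0$, I would insert the intermediate term $Y_1 X_2$ and invoke both product-inequalities of Theorem \ref{lem812}:
\[
d_{\inft \ell}(Y_1 X_1, Y_2 X_2) \le d_{\inft \ell}(Y_1 X_1, Y_1 X_2) + d_{\inft \ell}(Y_1 X_2, Y_2 X_2) \le \inft \ell(Y_1)\, d_{\inft \ell}(X_1, X_2) + d_{\inft \ell}(Y_1, Y_2)\, \inft \ell(X_2) = 0 .
\]
Hence $[Y_1 X_1] = [Y_2 X_2]$, so composition induces a well-defined map $\me{\hom(\ol Q, \ol R)} \times \me{\hom(\ol S, \ol Q)} \to \me{\hom(\ol S, \ol R)}$; associativity and the identity laws descend trivially from $\Lambda$, yielding the category $\me \Lambda$.

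What remains is to transfer the two product-inequalities and the distribution laws of Lemma \ref{lemma83} to $\me \Lambda$, which is essentially a bookkeeping step: each of these holds in $\Lambda$ for arbitrary representatives, and because $\me{d_{\inft \ell}}$ and $\me{\inft \ell}$ are by construction equal to $d_{\inft \ell}$ and $\inft \ell$ evaluated on any representative, taking classes on both sides transports every such inequality or algebraic identity verbatim into $\me \Lambda$. The only delicate point in the whole argument is the well-definedness of composition on classes displayed above; everything else is a direct translation through Proposition \ref{lem36} and Theorem \ref{lem812}.
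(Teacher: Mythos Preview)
Your proposal is correct and matches the paper's proof essentially verbatim: the paper likewise invokes Definition \ref{def33} and Proposition \ref{lem36} for the quotient structure, and for the well-definedness of composition it inserts the same intermediate point and applies the $\Delta$-inequality together with both product-inequalities from Theorem \ref{lem812} to conclude $d_{\inft \ell}(X_1 Y_1, X_2 Y_2)=0$. Your treatment is slightly more explicit about the bookkeeping (part (iii) of Proposition \ref{lem36}, transferring the distribution laws), but the substance is identical.
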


\begin{proof}

The first assertions % is
follow %easily  
by definition \ref{def33} and proposition 
\ref{lem36}. 
%\elm

%Proving the claimed identity, 
%For the next asser
For $d:=d_{\inft \ell}$ and $\ell:=\inft \ell$ we have 
%\begin{eqnarray*}
$$ 
d(X_1  Y_1, X_2  Y_2) 
\le 
%&\le & 
d(X_1  Y_1, X_1  Y_2 )
  + d(X_1  Y_2, X_2  Y_2)   %\\  
  $$$$
  \le 
%  & \le & 
  \ell(X_1) d(  Y_1,  Y_2 )
  + d(X_1 , X_2 ) \ell( Y_2)  
  = 0  
%  \end{eqnarray*}
$$

by the $\Delta$- and product-inequality for $d$ and $\ell$ 
found out in theorem  \ref{lem812}. 
%
%d(X_1 \circ Y_1, X_2 \circ Y_2) \le 
%d(X_1 \circ Y_1, X_1 \circ Y_1 )
%  + d(X_1 \circ Y_1, Y_1 \circ Y_2)
%  \le 
%  \ell(X_1) d(  Y_1,  Y_1 )
%  + d(X_1 \circ Y_1, Y_1 \circ Y_2)
%$$
%
%
%
%according to definition \re{def33} and \re{lem36}
%and lemma \re{lem35} ,
%and the submultiplicativity relations of lemma \ref{lem812} 
%with $\inft \ell$ replaced by $\check{\inft \ell}$. 
%
\end{proof}

%}
%\end{definition} 

%In the next lemma we indicate 
The next lemma indicates 
how we could show that a homomorphism 
%homomoprhism
is %equicontinuous 
uniformly continuous with respect to various derived metric candidates.
%metric-like functio. 

\begin{lemma}

%{\rm (i)} 
Let $U : \ol S \rightarrow \ol Q$ be a homomorphism. 
%(\ol S,\ell_S) \rightarrow 
%
 If there is an $M \ge 0$ such that 
$$d_{\ol Q} %_\ell
(U a, U b) \le M d_{\ol S} %_\ell
(a,b)$$
for all $a \le b$, then this inequality holds for all $a,b \in \ol S$.
Then also 
$$\tiha{ d_{\ol Q} %_{\ell_S}
} (Ua ,Ub)
 %(\phi a, \phi b) 
\le M \tiha{ d_{\ol S}   %_{\ell_T}
}(a,b) $$
%

 %$$

%\quad \tilde d_{\ell_S}(\phi a, \phi b) \le M \tilde d_{\ell_T}(a,b)$$

%, that is,
%$\phi$ is uniformly continuous  
%Then also $\hat d_\ell$ and $\tilde d_\ell$ are u

Analogous assertions hold for $\sigma$ instead of $d$.

\end{lemma}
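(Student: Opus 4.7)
The plan is to reduce the unordered case to the ordered case by routing through the join $a \na b$, which, because $U$ is a homomorphism, satisfies $U(a \na b) = Ua \na Ub$. The extension to $\tiha d$ is then obtained by pushing decompositions and chains through $U$.

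For the first claim, I would invoke lemma \ref{lem53}, which gives the identity
\[
d(a, b) = d(a, a \na b) \ud d(b, a \na b)
\]
both in $\ol S$ and in $\ol Q$. The pairs $(a, a \na b)$ and $(b, a \na b)$ are ordered, so the hypothesis applies directly:
\[
d_{\ol Q}(Ua, U(a \na b)) \le M\, d_{\ol S}(a, a \na b), \qquad d_{\ol Q}(Ub, U(a \na b)) \le M\, d_{\ol S}(b, a \na b).
\]
Since $U(a \na b) = Ua \na Ub$, the identity in $\ol Q$ then gives
\[
d_{\ol Q}(Ua, Ub) = d_{\ol Q}(Ua, Ua \na Ub) \ud d_{\ol Q}(Ub, Ua \na Ub) \le M\bigl(d_{\ol S}(a, a \na b) \ud d_{\ol S}(b, a \na b)\bigr) = M\, d_{\ol S}(a, b),
\]
using the identity in $\ol S$ at the last step.

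For the second claim, I would use the definitions of $\hat d$ and $\tilde d$ directly. Given any decomposition $a = x_1 \na \ldots \na x_n$ and $b = y_1 \na \ldots \na y_n$ in $\ol S$, the homomorphism property yields the decomposition $Ua = Ux_1 \na \ldots \na Ux_n$ and $Ub = Uy_1 \na \ldots \na Uy_n$ in $\ol Q$. Applying the extension from the previous paragraph componentwise,
\[
\hat{d_{\ol Q}}(Ua, Ub) \le \sum_{i} d_{\ol Q}(Ux_i, Uy_i) \le M \sum_{i} d_{\ol S}(x_i, y_i),
\]
and taking the infimum over decompositions gives $\hat{d_{\ol Q}}(Ua, Ub) \le M\, \hat{d_{\ol S}}(a, b)$. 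The same trick applied to chains $a = a_0, a_1, \ldots, a_n = b$ in $\ol S$, whose $U$-images form chains from $Ua$ to $Ub$, then yields $\tiha{d_{\ol Q}}(Ua, Ub) \le M\, \tiha{d_{\ol S}}(a, b)$. The analogous assertion for $\sigma$ runs identically, using the companion identity $\sigma(a, b) = \sigma(a, a \na b) + \sigma(b, a \na b)$ from lemma \ref{lem53} in place of the $\ud$-version.

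The main (and only) obstacle is getting the tight constant $M$ rather than $2M$: a naive triangle-inequality route through $a \na b$ would bound $d_{\ol Q}(Ua, Ub)$ by $M \sigma_{\ol S}(a, b) \le 2M\, d_{\ol S}(a, b)$, so the proof genuinely needs the $\ud$-decomposition identity from lemma \ref{lem53}, which is what matches the constant on the nose.
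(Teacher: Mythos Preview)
Your proof is correct and follows exactly the route the paper takes: the first assertion is obtained from the $\ud$-decomposition identity of lemma \ref{lem53} combined with the homomorphism property $U(a\na b)=Ua\na Ub$, and the second assertion is obtained by pushing decompositions (definition \ref{def72}) and chains (definition \ref{def71}) through $U$. Your remark that the naive triangle-inequality route would only give $2M$ and that the $\ud$-identity is what recovers the sharp constant is a nice clarification beyond the paper's terse one-line justification.
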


\begin{proof}

%This 
The first assertion is by the formulas of lemma \ref{lem53}, 
and the second one follows directly from definitions  
\ref{def71} and \ref{def72}.  
\end{proof}

We may apply the last theorem to get a Banach-Mazur-like distance 
between languages:

\begin{definition}[Banach-Mazur-like distance between languages]				\label{def827}

{\rm 

Let $\hom(\ol S,\ol Q)$ be equipped with a  length function $\ell$.
Then
a Banach-Mazur-like distance between $\ol S$ and $\ol Q$
(e.g. languages)
%(between languages, say) 
may be defined as follows:  
%
%Let $\ol S,\ol Q$ 
%
$$d(\ol S, \ol Q) = \log \inf \{\ell(\phi) \ell(\phi^{-1})  \in \R  |\; \phi : \ol S \rightarrow \ol Q \mbox{ an isomorphism}\}$$

}
\end{definition}

\begin{lemma}
This distance %of definition \ref{def827}  
is a pseudometric on any  %small
  category of isomorphic abelian, idempotent monoids such that the hom-sets $\hom(\ol S,\ol Q)$
are provided with a length %funcions 
function $\ell$ satisfying the 
product-inequality (\ref{eq168}).
%(see lemma \ref{lemmulti}). 
\end{lemma}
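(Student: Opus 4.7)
The plan is to verify the three (really four) axioms of a pseudometric directly from the product-inequality and the invariance of $\ell(\phi)\ell(\phi^{-1})$ under inversion. Throughout, fix the category and let $\mathrm{Iso}(\ol S,\ol Q)$ denote the set of isomorphisms $\phi:\ol S\to\ol Q$ (assumed non-empty, otherwise $d(\ol S,\ol Q)=+\infty$ and the inequalities hold trivially).

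\textbf{Symmetry.} The map $\phi\mapsto\phi^{-1}$ is a bijection $\mathrm{Iso}(\ol S,\ol Q)\to\mathrm{Iso}(\ol Q,\ol S)$, and the quantity $\ell(\phi)\ell(\phi^{-1})$ is manifestly invariant under this involution (the two factors simply swap). Hence the two infima coincide and $d(\ol S,\ol Q)=d(\ol Q,\ol S)$.

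\textbf{Triangle inequality.} This is the main content and uses the product inequality \re{eq168} essentially. Given $\phi\in\mathrm{Iso}(\ol S,\ol Q)$ and $\psi\in\mathrm{Iso}(\ol Q,\ol R)$, the composite $\psi\circ\phi$ lies in $\mathrm{Iso}(\ol S,\ol R)$ with inverse $\phi^{-1}\circ\psi^{-1}$. Applying the product-inequality on both hom-sets,
$$\ell(\psi\circ\phi)\,\ell\bigl((\psi\circ\phi)^{-1}\bigr)\;\le\;\ell(\psi)\ell(\phi)\cdot\ell(\phi^{-1})\ell(\psi^{-1})\;=\;\bigl(\ell(\phi)\ell(\phi^{-1})\bigr)\bigl(\ell(\psi)\ell(\psi^{-1})\bigr).$$
Taking logarithms turns the product bound into a sum, and then taking the infimum over $\phi$ and $\psi$ independently on the right-hand side (while the left is bounded below by the infimum over all of $\mathrm{Iso}(\ol S,\ol R)$, since $\psi\circ\phi$ is one admissible choice) yields $d(\ol S,\ol R)\le d(\ol S,\ol Q)+d(\ol Q,\ol R)$.

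\textbf{Reflexivity and non-negativity.} Using the identity $\mathrm{id}_{\ol S}\in\mathrm{Iso}(\ol S,\ol S)$ with $\mathrm{id}_{\ol S}^{-1}=\mathrm{id}_{\ol S}$ gives $d(\ol S,\ol S)\le\log\ell(\mathrm{id}_{\ol S})^2$; the product-inequality applied to $\mathrm{id}_{\ol S}=\mathrm{id}_{\ol S}\circ\mathrm{id}_{\ol S}$ forces $\ell(\mathrm{id}_{\ol S})\le\ell(\mathrm{id}_{\ol S})^2$ so $\ell(\mathrm{id}_{\ol S})\in\{0\}\cup[1,\infty)$; assuming the standard normalisation $\ell(\mathrm{id}_{\ol S})=1$ (which holds for the natural $\ell$ of definition \ref{def85}, since $M=1$ works for the identity, and in the bar-closed case $\overline{\ell'}(\mathrm{id})=1$) we obtain $d(\ol S,\ol S)\le 0$. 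Combined with the triangle inequality and symmetry, $0\le d(\ol S,\ol S)+d(\ol S,\ol S)=2d(\ol S,\ol S)$ already from the previous two items applied with $\ol Q=\ol R=\ol S$ and the inequality just established, and more generally $0=d(\ol S,\ol S)\le d(\ol S,\ol Q)+d(\ol Q,\ol S)=2d(\ol S,\ol Q)$, so $d\ge 0$ throughout.

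\textbf{Expected obstacle.} The only delicate point is the reflexivity step, which is not automatic in the fully abstract setting of definition \ref{def87} where $\ell$ on hom-sets is arbitrary: one either needs to assume $\ell(\mathrm{id}_{\ol S})=1$ (or equivalently that the product-inequality is tight on the identity) as part of the data, or to pass to the fixed-point length $\inft\ell$ of theorem \ref{lem812} and check that the construction preserves $\ell(\mathrm{id})=1$. Symmetry and the triangle inequality, in contrast, are purely formal manipulations of the product-inequality and the bijection $\phi\leftrightarrow\phi^{-1}$, exactly mirroring the proof that the classical Banach--Mazur distance is a pseudometric on isomorphism classes of Banach spaces.
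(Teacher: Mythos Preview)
The paper states this lemma without proof, treating it as a routine transplant of the classical Banach--Mazur argument. Your proof is exactly that standard argument and is correct: symmetry via $\phi\leftrightarrow\phi^{-1}$, the triangle inequality via the product-inequality applied to $\psi\circ\phi$ and its inverse, and reflexivity/non-negativity via the identity.

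Two minor remarks. First, your non-negativity step is a little roundabout; the cleaner route is to observe directly that for any isomorphism $\phi$ the product-inequality gives $\ell(\phi)\ell(\phi^{-1})\ge\ell(\phi^{-1}\circ\phi)=\ell(\mathrm{id}_{\ol S})$, so once $\ell(\mathrm{id}_{\ol S})\ge 1$ the infimum is $\ge 1$ and the log is $\ge 0$. Second, you are right to flag that $\ell(\mathrm{id})=1$ is an implicit normalisation not forced by the hypotheses as literally stated (the product-inequality alone only gives $\ell(\mathrm{id})\in\{0\}\cup[1,\infty)$). The paper does not address this point explicitly; it simply remarks afterwards that the natural choices of $\ell$ coming from definition~\ref{def85} or theorem~\ref{lem812} are the intended ones, and for those $\ell(\mathrm{id})=1$ holds. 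Your identification of this as the only non-automatic step is accurate and goes slightly beyond what the paper makes explicit.
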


The length functions $\ell^{(\infty)}$ of theorem \ref{lem812} may be 
good candidates for the Banach-Mazur-like distance as 
%their
they satisfy the product inequalities (\ref{eq168}) 
(indeed, 
%(Indeed, 
apply the inequalities of theorem \ref{lem812} 
and note that ${d_{\inft \ell}}(x,\e)= \inft \ell(x)$).
Alternatively one may start with definitions \ref{def86} and \ref{def87} with $d=d_\ell$, 
and then using $\ell$ of lemma \ref{lem89}. 
%, %. 
%%Or 
%or $\inft \ell$ of theorem \ref{lem812}. 

%For example,
%as suggested in definitions \ref{def86} and \ref{def87} with $d=d_\ell$, 
%and then using lemma \ref{lem89}, %. 
%Or 
%or $\inft \ell$ of theorem \ref{lem812}. 

%$\ell	 

%provided with length functions %that %are   
%fulfilling the product-inequality. 

%This distance is a pseudometric on any  set of abelian, idempotent monoids provided with length functions %that %are   
%fulfilling the product-inequality. 
%(GENAUER MIT HOM)

\section{Non-increasing length functions}

In this section we work with length functions %, 
which may not be monotonically increasing, that is, without 
requiring inequality (\ref{eql3}). 
The length function of %def 
definition \ref{def29} may be replaced by the following 
possibly non-increasing bigger
one:

\begin{definition}[Length function on quotient]			\label{def91}

{\rm

Given a language $S$ with variables and a length function $L: S \rightarrow \R$ on it, set
%
%Set 
$$\ell: \overline S \rightarrow \R$$
$$\ell([s]) = \inf \{ L(t) \in \R|\, t \in S, \;s \equiv t\}$$
%$$\ell([s]) = \inf \{ L(t) \in \R|\, t \in \calt, \;s \equiv t\}$$

}
\end{definition}

Now assume %, more genrally, 
that we are given an abelian, idempotent monoid $(\ol S, \na , \e)$ with a non-increasing length function $\ell$ on it 
(i.e. without postulating \re{eql3}). 
%Moreover 

In this section we do the same limit as in definitions \ref{def75} 
and \ref{def56},  
%above, 
but instead 
of implementing the bar operator of lemma \ref{lem76} we choose other 
functions $d_\ell$ and $\sigma_\ell$ as follows, %. 
which do not require the monotone increasingness of $\ell$ 
and are still positive:

\begin{definition}						\label{def92}

We define for all  %and 
$x,y \in \ol S$ and $p \ge 1$ ($p \in \R$) 
$$d_\ell(x,y) := |\ell(x \na y) - \ell(x)| \ud |\ell(x \na y) - \ell(y)|$$
%
%$$\sigma_\ell(x,y) := |\ell(x \na y) - \ell(x)| + |\ell(x \na y) - \ell(y)|$$
%
$$\sigma_{p,\ell}(x,y) := \Big( |\ell(x \na y) - \ell(x)|^p + |\ell(x \na y) - \ell(y)|^p \Big)^{1/p} $$

\end{definition}

We set $\sigma_\ell := \sigma_{1,\ell}$ for short. 
We use the same notations $d$ and $\sigma$ as in 
definition \ref{def210}, as notice, there is no difference between  definitions
\ref{def92} and \ref{def210} for monotonically increasing
length functions $\ell$. 
%
%]{baspects}
%
%We set $\sigma_\ell := \sigma_{1,\ell}$, and $d:= d_\ell$, $\sigma:= \sigma_\ell$ for short.
%adn 
%. 
We still have
the lower bound
%esimat
%$
$|\ell(x)-\ell(y)| \le \sigma_\ell$ %$ 
for $\sigma_\ell$, but not so for $d_\ell$.

%aber nicht für $d$ (etwa $\ell(x) < \ell(x\na y) < \ell(y)$) 
%
%lemma \ref{lemma514} korrekt für $d$
%
%kann wieder mit bel $d$ starten
%
%An similar % analogous 
%proof as in lemma \ref{lem511} yields:  %.  

Let us be given a positive, symmetric, nilpotent function $d$ on $\ol S$. 
We are going to use definitions \ref{def71}, \ref{def72} 
and \ref{def74}.  Define $\ell$ as in definition \ref{def53}. 

\begin{lemma}							\label{lem91}
If $d$ satisfies the $\Delta$- and $\na$-inequality then 
for $\ell$ as in \re{eq167},  
%$\Delta$ and $\na$-inequ then
$$d_\ell \le d$$
$$d_{\ol \ell} \le d $$

%$$(((d_{\ol \ell} \le d))) F$$

%?
\end{lemma}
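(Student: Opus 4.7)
The plan is to mimic the proof of Lemma \ref{lem511}, adapted to the non-increasing formula of definition \ref{def92}. The two tools are the second $\Delta$-inequality
$$|d(a,\e)-d(b,\e)|\le d(a,b),$$
which applied to $\ell(\cdot)=d(\cdot,\e)$ gives $|\ell(a)-\ell(b)|\le d(a,b)$, and the weak $\na$-inequality (Lemma \ref{lem42}), which combined with idempotence $x\na x=x$ and nilpotence $d(u,u)=0$ yields
$$d(x\na y,x)=d(x\na y,x\na x)\le d(y,x)+d(x,x)=d(x,y).$$

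For the first inequality $d_\ell\le d$, I would estimate each of the two terms in the $\ud$ of definition \ref{def92} separately. Writing $\ell(u)=d(u,\e)$, the second $\Delta$-inequality gives
$$|\ell(x\na y)-\ell(x)|\le d(x\na y,x),$$
and the displayed inequality above upgrades this to $|\ell(x\na y)-\ell(x)|\le d(x,y)$. Swapping the roles of $x$ and $y$ and using symmetry of $d$ yields $|\ell(x\na y)-\ell(y)|\le d(x,y)$ as well. Taking the maximum produces $d_\ell(x,y)\le d(x,y)$.

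For the second inequality $d_{\ol\ell}\le d$, I would first invoke Lemma \ref{lem76} to see that $\ol\ell$ is monotonically increasing, so definition \ref{def92} collapses to $d_{\ol\ell}(x,y)=\ol\ell(x\na y)-\ol\ell(x)\ad\ol\ell(y)$. Without loss of generality assume $\ol\ell(x)\le\ol\ell(y)$. Given $\varepsilon>0$, choose $w\in\ol S$ with $\ell(x\na w)\le \ol\ell(x)+\varepsilon$; using $\ol\ell(x\na y)\le \ell(x\na y\na w)$ we obtain
$$d_{\ol\ell}(x,y)\le \ell(x\na y\na w)-\ell(x\na w)+\varepsilon\le d(x\na y\na w,x\na w)+\varepsilon,$$
where the last step is the second $\Delta$-inequality for $d$. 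Now apply the $\na$-inequality together with idempotence $(x\na w)\na x=x\na w$, namely
$$d(x\na y\na w,x\na w)=d(y\na(x\na w),x\na(x\na w))\le d(y,x)+d(x\na w,x\na w)=d(x,y),$$
and let $\varepsilon\to 0$.

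The main subtlety to watch for is simply that the new $d_\ell$ of definition \ref{def92} contains $|\,\cdot\,|$ rather than a subtraction, so one must control both differences $\ell(x\na y)-\ell(x)$ and $\ell(x)-\ell(x\na y)$ in absolute value; but the second $\Delta$-inequality already handles both signs at once, so no extra case analysis is needed. The idempotence trick $x\na(x\na w)=x\na w$ is the decisive combinatorial move in the $\ol\ell$-part, exactly as in the proof of Lemma \ref{lem511}, and it is the one step worth writing out carefully.
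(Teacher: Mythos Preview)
Your proposal is correct and is essentially the approach the paper has in mind: the paper's own proof simply reads ``Yielded by a similar proof as for lemma \ref{lem511},'' and you have carried out exactly that adaptation, correctly noting that the second $\Delta$-inequality $|\ell(a)-\ell(b)|\le d(a,b)$ absorbs the absolute values in definition \ref{def92} without extra case analysis. The only cosmetic point is that the monotonicity of $\ol\ell$ is already packaged in lemma \ref{lem77} (which uses the $\na$-inequality to verify the subadditivity hypothesis of lemma \ref{lem76}), so you could cite that directly.
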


\begin{proof}
Yielded by a similar % analogous 
proof as for lemma \ref{lem511}.
\end{proof}

We do now a procedure as in definition \ref{def75}, 
but with the bar operator now being ignored: 

%We define then everything as in def ,
%but with the bar operator now being ignorred,  %or  the identical %%map everywhere, 
%%operator, 
%so set $d^{(1)}:= d$, $\ell^{(1)}:= \ell$ and then 

\begin{definition}					\label{def93}

Set $d^{(1)}:=  d$, and then  
%For 
for $n \ge 1$ put 
$$\ell^{(n+1)} (x) := \tiha {\ell^{(n)}}(x):= \tiha{ d^{(n)}}(x,\emptyset)$$
%
%$$\ell^{(n+1)} := {\ol {\tilde {\hat {\ell^{(n)}}}}}$$
%
$$d^{(n+1)}:= d_{\ell^{(n+1)}}$$ 

\end{definition} 

%$\ol \ell := \ell$ in general or ignore the bar operator,, and so
%$$\ell^{(n+1)} := \tiha {\ell^{(n)}}(x):= \tiha{ d^{(n)}}(x,\emptyset)$$

\begin{corollary}

Given a positive, symmetric, idempotent function $d$ on an abelian, idempotent monoid $(\ol S,\na ,\e)$, 
define $\ell^{(n)}$ and $d^{(n)}$ %($n \ge 2$) 
as in definitions  \ref{def93} 
and \ref{def92}. 
%$ \inft \ell$ as described in definitions  \ref{def93} 
%%\ref{def75}
%and \ref{def56}.

We get limits $\inft \ell$ and $\inft d$ as in definition \ref{def56}, 
and the analogous assertion of %lemmas 
lemma \ref{lem710}. 
% and 
%\ref{lem711}. 
%$\inft \ell : \lim_{n \rightarrow \infty} \ell^{(n)}$ 
That is, 
$\inft \ell$ is a non-increasing length function on $\ol S$, 
%$$ \ell^{(\infty)} = {\tilde {\hat {\ell^{(\infty)}}}}$$
and $\inft d = d_{\inft \ell}$ is a pseudometric on $\ol S$ satisfying the 
$\na$-inequality.

\end{corollary}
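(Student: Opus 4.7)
The plan is to run the same fixed-point machinery as in Section \ref{sec7}, but with two modifications: first, the bar operator is dropped (since we no longer need to force monotonicity of $\ell$), and second, the formula for $d_\ell$ used throughout is the one of definition \ref{def92}, whose estimation is handled by lemma \ref{lem91} in place of lemma \ref{lem511}. I would proceed in the same four stages as for theorem \ref{cor52}: establish a descending sequence, take pointwise limits, identify the limit of $d^{(n)}$ with $d_{\inft\ell}$, and verify the $\na$-inequality for the limit.

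First I would show by induction that $0\le d^{(n+1)}\le \tiha{d^{(n)}}\le d^{(n)}$ and correspondingly $0\le \ell^{(n+1)}\le \ell^{(n)}$. The inequality $\tiha{d^{(n)}}\le d^{(n)}$ is immediate from definitions \ref{def71} and \ref{def72}; for the other inequality, note that by lemma \ref{lemma562} each $\tiha{d^{(n)}}$ satisfies both the $\Delta$- and $\na$-inequality, so lemma \ref{lem91} applied to $d:=\tiha{d^{(n)}}$ (with associated length $\tiha{\ell^{(n)}}=\ell^{(n+1)}$) yields $d^{(n+1)} = d_{\ell^{(n+1)}}\le \tiha{d^{(n)}}$. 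Taking $y=\e$ in the inequality then also gives the descent $\ell^{(n+1)}\le \ell^{(n)}$. Since both sequences are bounded below by $0$, the pointwise limits $\inft d$ and $\inft \ell$ from definition \ref{def56} exist.

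Second, I would identify $\inft d$ with $d_{\inft\ell}$. The formula of definition \ref{def92} expresses $d_\ell(x,y)$ as a continuous (in fact Lipschitz) function of the three real values $\ell(x),\ell(y),\ell(x\na y)$, so pointwise convergence $\ell^{(n)}\to \inft\ell$ gives
\[
d_{\inft\ell}(x,y) \;=\; \lim_{n\to\infty} d_{\ell^{(n)}}(x,y) \;=\; \lim_{n\to\infty} d^{(n+1)}(x,y) \;=\; \inft d(x,y).
\]
Because each $\tiha{d^{(n)}}$ is sandwiched between $d^{(n+1)}$ and $d^{(n)}$, the sequence $\tiha{d^{(n)}}$ also converges pointwise to $\inft d$; consequently $\inft d = \lim \tiha{d^{(n)}}$ inherits both the $\Delta$-inequality and the $\na$-inequality as pointwise limits of inequalities that hold for every $n$ by lemma \ref{lemma562}. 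Symmetry, nilpotence, and positivity of $\inft d$ are immediate from pointwise passage to the limit.

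Finally I would check that $\inft\ell$ is a non-increasing length function in the sense of this section, i.e.\ $\inft\ell(\e)=0$ and subadditivity $\inft\ell(x\na y)\le \inft\ell(x)+\inft\ell(y)$ (but not monotone increasingness). The value at $\e$ is zero since each $\ell^{(n)}(\e)=d^{(n)}(\e,\e)=0$ by nilpotence. Subadditivity follows by taking $a=y$ in the $\na$-inequality for $\inft d$ (valid by the previous step), together with $\inft\ell(x)=\inft d(x,\e)$, which reproduces the computation in the proof of lemma \ref{lem77}. The main obstacle I anticipate is precisely the appeal to lemma \ref{lem91}: because definition \ref{def92} gives $d_\ell$ as a maximum of two absolute values rather than a signed difference, the elegant chain estimate of lemma \ref{lem511} must be carried out case by case, but once lemma \ref{lem91} is in hand the descent argument and the passage to the limit go through unchanged from Section \ref{sec7}, and the bar operator is simply not needed because we no longer insist on monotone increasingness in the conclusion.
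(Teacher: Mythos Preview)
Your proposal is correct and follows essentially the same approach as the paper: the key step is applying lemma \ref{lem91} to $\tiha{d^{(n)}}$ to obtain the descent $d^{(n+1)}=d_{\tiha{\ell^{(n)}}}\le\tiha{d^{(n)}}\le d^{(n)}$, after which the limit argument and the identification $\inft d=d_{\inft\ell}$ proceed exactly as in lemmas \ref{lem710} and \ref{lemma513}. The paper's own proof simply states this and refers back to those lemmas, whereas you have spelled out the details of the limit identification and the verification that $\inft\ell$ is a (possibly non-increasing) length function.
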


\begin{proof}

By lemma \ref{lem91}, applied to $d:=\tiha {d^{(n)}}$ , 
%and $\tiha \ell$ as in defintion 
%\ref{def53}, 
we get
$$d^{(n+1)} = d_{\tiha {\ell^{(n)}}} \le \tiha {d^{(n)}} \le 
{d^{(n)}}$$
 
 So $d^{(n)}$ is decreasing, in particular thus also $\ell^{(n)}$
 by definition \ref{def93}, and we get a limit $\inft \ell := \lim_n \ell^{(n)}$. 
 The rest goes verbatim as in lemmas \ref{lem710} 
 and \ref{lemma513}. 
\end{proof}

%This is the analogy of porposition \ref{lem58}: 

\begin{proposition}				\label{lem58b}

%\begin{lemma}				\label{lem58}

%Assume that $d=d_\ell$ for a given $\ell$. 
If $d$ is of the form $d=d_\ell$ for a non-increasing length function $\ell$ then 
$d_\ell $ satisfies the $\na$- and $\Delta$-inequality %iff
$\Leftrightarrow$ 
$$d_\ell = \tiha{ d_\ell} \quad \Leftrightarrow 
\quad 
\tiha \ell = \ell		%$$
%
%$$
%\quad \Leftrightarrow 
%\quad 
%\tihabar \ell = \ell		
\quad 
\Leftrightarrow 
\quad 
 \ell = \inft \ell$$
%$$

\end{proposition}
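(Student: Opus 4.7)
The plan is to adapt the proof of proposition \ref{lem58} to the non-increasing setting by substituting lemma \ref{lem91} for lemma \ref{lem511} and deleting all occurrences of the bar operator $\bar{\phantom{a}}$ (since the iteration of definition \ref{def93} does not use it). The three chained equivalences to prove correspond exactly to the non-barred portion of the chain in \ref{lem58}, and I will handle them in order.

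First I will establish $d_\ell$ satisfies the $\na$- and $\Delta$-inequality $\Leftrightarrow$ $d_\ell = \tiha{d_\ell}$. The forward direction is routine: the $\Delta$-inequality yields $d_\ell = \tilde{d_\ell}$ via lemma \ref{lemma51}, the $\na$-inequality yields $d_\ell = \hat{d_\ell}$ via lemma \ref{lemma54}, and lemma \ref{lemma55} combines them into $d_\ell = \tiha{d_\ell}$. The converse is immediate from lemma \ref{lemma562}, which guarantees that $\tiha d$ always satisfies both inequalities.

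Next I will show $d_\ell = \tiha{d_\ell} \Leftrightarrow \tiha \ell = \ell$. The forward implication follows by evaluating at $(x,\emptyset)$ and invoking definition \ref{def53}, giving $\tiha \ell(x) = \tiha{d_\ell}(x,\emptyset) = d_\ell(x,\emptyset) = \ell(x)$. For the reverse, I would apply lemma \ref{lem91} to $\tiha{d_\ell}$: it satisfies the $\Delta$- and $\na$-inequality by lemma \ref{lemma562}, and its associated length function via definition \ref{def53} is $\tiha \ell$, which equals $\ell$ by hypothesis. Lemma \ref{lem91} then yields $d_\ell = d_{\tiha \ell} \le \tiha{d_\ell} \le d_\ell$, so $d_\ell = \tiha{d_\ell}$.

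Finally, I will close with $\tiha \ell = \ell \Leftrightarrow \ell = \inft \ell$. If $\tiha \ell = \ell$, then from definition \ref{def93} one immediately reads $\ell^{(2)} = \tiha{\ell^{(1)}} = \ell$, so the recursion is constant from the start and $\inft \ell = \ell$. Conversely, if $\inft \ell = \ell$, then the non-increasing analogue of lemma \ref{lemma513} (recorded in the corollary directly preceding \ref{lem58b}) delivers $\tiha{\inft \ell} = \inft \ell$, hence $\tiha \ell = \ell$. The main obstacle I anticipate is verifying that lemma \ref{lem91} really does give the descent estimate $d_\ell \le d$ for the modified $d_\ell$ of definition \ref{def92} — which uses absolute values rather than the monotone formula of definition \ref{def210} — since the proof of the original lemma \ref{lem511} relied on the monotone formula and the second $\Delta$-inequality. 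Once that descent estimate is confirmed in the non-increasing setting, the rest of the argument is straightforward bookkeeping analogous to proposition \ref{lem58}.
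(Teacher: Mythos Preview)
Your proposal is correct and follows essentially the same route as the paper, which simply states that the proposition is proved analogously to proposition \ref{lem58}; your adaptation---replacing corollary \ref{cor51} / lemma \ref{lem511} by lemma \ref{lem91} and deleting the bar operator because definition \ref{def93} omits it---is exactly the intended translation. Your caveat about the descent estimate in lemma \ref{lem91} for the absolute-value form of $d_\ell$ is a legitimate concern, but it pertains to the proof of lemma \ref{lem91} itself (which the paper also leaves as ``similar to lemma \ref{lem511}''), not to the logic of this proposition.
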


\begin{proof}

This is analogously proved as proposition \ref{lem58}. 
%By an a
\end{proof}

However, it cannot be claimed that $\ell^{(\infty)}$ is increasing.

Also, contrary to % corollary \re{cor51}
theorem \ref{cor52}, $\delta$ of definition \ref{def62} with respect to $\inft \ell$ might not be increasing and
$\inft \sigma$ might not satisfy the $\na$- and $\Delta$-inequality,
as all the proofs given rely on the monotonically increasingness 
of $\inft \ell$. 

We could now form $\ol  {\inft \ell}$, define $d_\ell$ as above
and restart the above procedure, or the one of section \ref{sec7}.  
%the previous section.
%But we stop here. 
% There are 
There seem to be endless many variants. 
%, which we have not explored. 
 A principal problem is, %we have seen
 however this can be said, that $\ell \le \ell_2$
  might not imply $d_{\ell} \le d_{\ell_2}$,  
  so one might not be able to order the limits by $\ell$. 
  
  %at least it is not clear
 % why this should hold. 

%\if 0

\section{General examples}

\label{sec10}  

In this and the remaining sections, we assume that $S$ 
is the mathematical language of 
%school classes, 
technics %, 
and physics, %economoics %, natural sciences  
%,
but also of typical mathematics. We may use quantors, set-definitions, but over all sentences need not to be completely defined
in a rigorous mathematical way.
This may also not be useful. %Inf 
In finding a new theory, it %is better
may be better to let the precise meaning open. 
For example, a quantum %filed 
field theory may be %ill-po 
singular and yield infinities, but after renormalization the situation turns, see the various textbooks about quantum field theory, 
for example \cite{zbMATH00822677}. 
   
   We distinguish variables in sentences which are {\em unbound}, 
   %{\em unbounded}, 
   like $x$ in $x+2 =y$, and {\em bound}
   %{\em bounded} 
   ones like $x$ in $f(x):=x+2$. 
   Here, $f$ is a function which is being {\em defined}, and in $a(y):= 2+f(y)$
   the function $f$ is being {\em called}.  
We sometimes underline variables (like so: $\ul x$) to indicate that these are main variables. 
%somteimes

For the length function $L$ on $S$ we choose a simple character count (i.e. $L(a_1 \ldots a_n) = n$ for $n \ge 0, a_i \in A$), without counting non-notated brackets, even if they are theoretically there (as in $x^y$), and %of course, 
set $L(:=)=1$, $L(\ud)=0$. For example, $L(x^y \ud (a+b)^2)= 8$. 
The sign $\ud$ is always %to 
understood to bind most weakly 
(for example, $a+b \ud c$ means $(a+b) \ud c$). 
The sign $\approx$ means approximately equal between two 
real numbers, and its more precise meaning depends 
(for example, $90 \approx 100$ for integers).
We note that for most of the discussion this  
choice of $L$ is essentially irrelevant,
and one may as well choose any other length function $L$, or
the length function $\ell = \inft \ell$ of theorem  \ref{cor52}, 
but to get specific numbers in certain examples we choose the 
character count for simplicity.

%Because to work out the details with precise estimates appears easy but tedious and space consuming,
From now on, everything is to be understood to be very sloppy, 
inexact, vague and superficial!  %. 
%Mainly, because to work out the details with precise estimates appears easy but tedious and space consuming with little extra gain on insight. 
%and little
%enlightening. 

\begin{example}[Main and auxiliary variables]

\label{ex101}

{\rm

Main variables cannot be deleted by (\ref{eq9}). 
They are what the theory is about, for example a fundamental 
field $\ul \psi$ in physics, or the entry point $\ul {main}$ in C++.
Auxiliary variables help to simplify and shorten expressions,
typically functions which are often called.  
For example 
$$\ul x = a + b \ud a=3 \ud b=5 \qquad 
\equiv \qquad \ul x = 8 \ud a=3 \ud b=5 \qquad 
\equiv \qquad \ul x = 8$$
$$\ul f (x) = g(x)^2+g(x)^3 \ud g(x) =2x  \qquad \equiv  
\qquad 
\ul f(x) = 4x^2 + 8 x^3$$
by (\ref{eq3}) and (\ref{eq9}). 

%We have deleted $a=3 \ud b=5$ by application

%{\rm

}

\end{example}

\begin{example}[Length function $\ell$ is not directly subadditive on the level of $S$]

{\rm

Given $x,y \in S$, 
in
%In 
general we have
$$\ell([x \ud y]) \not \le  \ell([x]) + \ell([y])$$

Indeed, if $y$ is longer and contains only auxiliary variables 
and $x$ is short with main variables and uses the variables 
of $y$ (typically by function calls, that is, $x$ outsources the bulk 
of formulas to $y$ as auxilliary functions) then, by (\ref{eq9}) one has $y \equiv \e$,  and thus we got a contradiction 
%zb $y$ enthält nur nebenvars und $x$ kurz und ruft nur nebenvras
%in $y$ auf, dann
 $$ 0 %<< 
 \ll \ell([x \ud y]) %\not 
 \le  \ell([x]) + \ell([y])= \ell([x]) \approx 0$$
 
% (2)
}
\end{example}

The next two examples show that even if the %these
class elements %satisfy 
$[x \ud y \ud \phi(x)]
\neq [x \ud y]$, their lengths %is 
are approximately equal.

\begin{example}[Copies are not superfluous in general]					\label{ex102}
{\rm 

 Even if $\phi(x) \bot x, \phi(x) \bot y$ for $x,y \in S, \phi \in P_S$, then - contrary to (\ref{eq8}) which says that 
 $x \ud \phi(x) \equiv x$ because $\phi(x)$ is a superfluous copy of $x$ - 
 we %easily 
 still may have 
 %$$
 \be{eq10b} 
 x \ud y \ud \phi(x) \not \equiv x \ud y
 \en
 
 Indeed, set $x:= \{\ul f := a\}$ and $y:= \{a(x):= 2x \}$, then 
 $\phi(x)= \{\ul g := b\}$ but it cannot equivalently be deleted in 
 $$  x \ud y \ud \phi(x)  \quad = \quad 
 \Big ( \ul f := a \quad \ud \quad a(x):= 2x \quad \ud \quad 
 \ul g := b \Big ) $$
 because $\ul g := b$ is a different theory than $\ul f := a$
 because $a$ is further determined, %which 
 whereas $b$ is not. 
 
}
% $\phi(x)
\end{example}

Even if \re{eq10b} is an inequality in general, it becomes 
approximately an identity when put into $\ell$, because 
$\phi(x)$ is just a copy of $x$:

\begin{example}[Copies are cheap] 			\label{lem103}
%have little cost]
{\rm 
%The last example holds approximately under $\ell$. 
If $x,y \in S, \phi \in P_S$ and $\phi(x) \bot x, \phi(x) \bot y$ then we have
$$ \ell([x \ud y \ud \phi(x)]) \approx \ell([x \ud y])$$
%$$x \

In other words, if we take a theory $x \ud y$, and add to it a copy $\phi(x)$ (with other variable names) of the subtheory $x$, then
the length (= cost) 
of the outcoming theory $x \ud y \ud \phi(x)$ does not much change.
This is clear for computer programs because in $\phi(x)$ we just
need to call all the routines of $x$ and need not to rewrite them
twice in $\phi(x)$.  
Indeed, going back to the concrete example \ref{ex102} we may write
$$x \ud y \ud \phi(x) \quad \equiv \quad  
\Big (  F(u,v):= %\ul 
u - v \quad\ud\quad 
F(\ul f,a)=0 \quad \ud \quad y \quad\ud \quad F(\ul g,b)=0  
\Big )$$
That is, instead of $x$ we wrote $F(u,v):= \ul u - v \ud F(\ul f,a)=0$, and instead of $\phi(x)$ $F(\ul g,b)=0$. So we have one long definition of $F$, approximately as long as $x$ itself, and two short calls 
$F(\ul f,a)=0$ and $F(\ul g,b)=0$. So we may end up with the claim:
$$ \ell([x \ud y \ud \phi(x)]) \approx 
\ell([F(u,v):= \ul u - v \ud 
F(\ul f,a)=0 \ud y]) + \ell([ F(\ul g,b)=0])
\approx \ell([x \ud y] )$$ %  + 0$$

In this short example it %does 
may not pay off, because the length of $x$ is so short, 
but if $x$ is long with few %unbounded 
unbound variables, certainly. 

%but often certainly. 
%, 
%but for longer sentences
%$x$ and $y$ certainly, 
%as long as not too many varibales are involved which must by copied.  

%In this short example it does not pay off, but for longer sentences
%$x$ and $y$ certainly, 
%as long as not too many varibales are involved which must by copied.  
%. 

}
\end{example}

%\begin{example}

%\end{example}

By definition \ref{def27}, 
the $\na$-operation  requires that $[x] \na [y] = [x \ud y]$ %only 
if
$x$ and $y$ have no variables in common. 
The next lemma shows when this identity approximately holds in the quotient $\me S$, see definition \ref{def33}, even if there are
common variables.

\begin{lemma}[$\na$-operation as an idealization of $\ud$-operation]							\label{lem101}
Here, $\precsim$ means approximately lower or equal. %: 
For all $x,y \in S$ we have  
%Hefre, $\precsim$ means approximately $\le$: %bigger or equal: 
$$ \ell([x \ud y]) \precsim  \ell([x] \na [y]) 
\quad \Rightarrow \quad d([x \ud y] , [x] \na [y]) \approx 0$$

\end{lemma}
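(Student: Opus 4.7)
Fix representatives and abbreviate $[u] := [x \ud y]$ and $[v] := [x] \na [y]$. By definition \ref{def27} we may pick $\phi \in P_S$ with $\phi(y) \bot x$ so that $[v] = [x \ud \phi(y)]$, and then pick $\psi \in P_S$ with $\psi(x \ud \phi(y)) \bot (x \ud y)$ so that
$$[u] \na [v] \;=\; [x \ud y \ud \psi(x) \ud \psi(\phi(y))].$$
Here $\psi(x)$ is a copy of $x$ and $\psi(\phi(y))$ is a copy of $y$, and both of these copies have been chosen to be independent from the underlying string $x \ud y$. The first step is thus purely a matter of unfolding the definitions.

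The second step is to observe that the right-hand copies are ``cheap'' in the sense of example \ref{lem103}: adding to a theory an independent copy of a sub-theory does not appreciably change the length. Applying this principle twice (once for $\psi(x)$ as a copy of the sub-theory $x$ of $x \ud y$, and then for $\psi(\phi(y))$ as a copy of the sub-theory $y$), we obtain the heuristic estimate
$$\ell([u] \na [v]) \;=\; \ell\big([x \ud y \ud \psi(x) \ud \psi(\phi(y))]\big) \;\approx\; \ell([x \ud y]) \;=\; \ell([u]).$$
By the monotone increasingness \re{eql3}, the reverse inequality $\ell([u] \na [v]) \ge \ell([u])$ holds exactly, so the approximation only needs to be controlled from above.

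The third step is to use the hypothesis. Since $\ell([u]) \precsim \ell([v])$, we have
$$\ell([u]) \ad \ell([v]) \;\approx\; \ell([u]).$$
Combining this with step two,
$$d([u],[v]) \;=\; \ell([u] \na [v]) - \ell([u]) \ad \ell([v]) \;\approx\; \ell([u]) - \ell([u]) \;=\; 0,$$
which is the conclusion.

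The main obstacle, and the reason the statement is couched in $\precsim$ and $\approx$ rather than equalities, is the copies-are-cheap estimate in step two. Formally, one would want to invoke \re{eq8} to replace $x \ud \psi(x)$ by $x$, but \re{eq3} only lets us propagate such an equivalence through concatenation with independent elements, and the surrounding $y$ is typically not independent of $x$. This is precisely the phenomenon already illustrated in example \ref{ex102}, and the way the paper handles it is via the informal device of example \ref{lem103}: one reorganises $x \ud y \ud \psi(x)$ by introducing a named function $F$ that captures the common content of $x$ and its copy, so that only two short calls to $F$ survive in place of the duplicate. The same device, applied twice, handles $\psi(\phi(y))$ as well, and it is this non-rigorous (but paper-level acceptable) manoeuvre that must be invoked to legitimise step two.
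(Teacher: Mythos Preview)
Your proposal is correct and follows essentially the same route as the paper: unfold $[x\ud y]\na([x]\na[y])$ as $[x\ud y\ud\psi(x)\ud\psi(\phi(y))]$, apply example \ref{lem103} (``copies are cheap'') twice to reduce its length to $\approx\ell([x\ud y])$, and use the hypothesis $\ell([x\ud y])\precsim\ell([x]\na[y])$ to identify the minimum in the formula for $d$. The paper's write-up is a two-line sketch in the same spirit; your version is more explicit about the representative choices and about why the copies-are-cheap step is only heuristic, which is a welcome elaboration but not a different argument.
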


\begin{proof}
Indeed, two-fold application of %lemma \re{lem103} 
example \ref{lem103} yields  
$$d([x \ud y] , [x] \na [y]) \approx \ell([x \ud y \ud \phi(x) \ud \psi(y)])
- \ell([x \ud y])$$
$$\approx  \ell([x \ud y ]) - \ell([x \ud y]) = 0 $$

\end{proof}

The next two important lemmas about auxiliary and main variables are exactly proved  
in the  general framework of section \ref{sec2}.

\begin{lemma}[The more main variables the longer the sentence]
						\label{lemma106}
						
If $s \in S$ and $t \in S$ a copy of it, but where some auxiliary
variables are declared to main variables, then 
$$\ell([s]) \le \ell([t])$$

\end{lemma}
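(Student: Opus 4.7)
The plan is to bound the infimum defining $\ell([s])$ by the infimum defining $\ell([t])$ via a pull-back construction. First I would fix $\varepsilon > 0$ and pick $y, b \in S$ with $t \bot b$, $y \equiv t \ud b$ and $L(y) \le \ell([t]) + \varepsilon$, as allowed by definition \ref{def29}. The reclassification taking $s$ to $t$ encodes as a letter-wise substitution $\tau: A \rightarrow A$ that bijects a finite set $V \subseteq X_1$ of auxiliary letters of $s$ onto a finite set $V' \subseteq X_0$ of main letters of $t$ and fixes everything else, so that $\tau(s) = t$ as strings; since $\tau$ acts letter-wise, $L(\tau^{-1}(y)) = L(y)$.

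Next I would apply a suitable $\phi \in P_S$ to $y$ (preserving both $L$ and the $\equiv$-class via \re{eq10}) so that $b$ uses none of the letters of $V$. Then $\tau^{-1}(b) = b$, and $s \bot b$ holds, because $b$ avoids $V'$ and the remaining letters of $t$, hence it also avoids $V$ and the remaining letters of $s$. Setting $y' := \tau^{-1}(y)$, the proof reduces to verifying the single equivalence $y' \equiv s \ud b$, after which $y'$ is admissible in the infimum of definition \ref{def29} for $\ell([s])$ and the result follows by $\varepsilon \to 0$.

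The technical heart is that $\tau^{-1}$ pulls the derivation of $y \equiv t \ud b$ back to a derivation of $y' \equiv s \ud b$. Since $\equiv$ is generated by the axioms of definition \ref{defequ}, I would check that each axiom interacts well with $\tau^{-1}$: the structural axioms \re{eq28}, \re{eq27}, \re{eq3} and \re{eq8} survive any letter-wise substitution; for the $P_S$-steps in \re{eq8} and \re{eq10}, a pre-normalization of the derivation---using \re{eq10} together with the infinitude of $X_0$ and $X_1$---allows one to assume that every $P_S$-permutation appearing in the derivation fixes $V$ and $V'$ pointwise, so it commutes with $\tau^{-1}$; and, crucially, axiom \re{eq9} becomes only more permissive after $\tau^{-1}$, since a string with no main variables keeps having none, while a string whose only main variables lay in $V'$ now becomes eligible for collapse to $\e$. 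In particular, no step of the derivation is destroyed by applying $\tau^{-1}$.

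The hard part will be this pre-normalization: one has to show rigorously that every $P_S$-permutation appearing in a proof of $y \equiv t \ud b$ can be assumed to fix $V \cup V'$ pointwise. Intuitively this is clear, because $V$ and $V'$ are finite subsets of infinite variable classes and \re{eq10} lets us rename freely, but the actual bookkeeping---rewriting each $P_S$-permutation $\psi$ as a composition $\psi = \psi_0 \circ \psi_1$ with $\psi_0$ fixing $V \cup V'$ and $\psi_1$ absorbable elsewhere---is the one place where the nominally easy statement has genuine content. Once this is settled, the three bullet-point checks above are routine and the bound $\ell([s]) \le \ell([t]) + \varepsilon$ drops out immediately.
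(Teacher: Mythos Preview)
Your argument and the paper's share the same core idea: the only axiom in definition~\ref{defequ} that distinguishes main from auxiliary variables is \re{eq9}, and passing from $t$ to $s$ (i.e.\ turning some main variables back into auxiliary ones) makes \re{eq9} applicable in strictly more situations, hence can only shorten. The paper's proof is literally that one sentence; you have unfolded it into a careful derivation--pull-back argument via the letter substitution $\tau^{-1}$, together with the pre-normalization of $P_S$-permutations. So the approaches agree, with yours being the rigorous elaboration of the paper's one-liner.

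One point worth flagging: both your proof and the paper's implicitly assume that $\equiv$ is \emph{generated} by the axioms of definition~\ref{defequ}, so that one may argue by induction on a finite derivation. Strictly speaking the paper only postulates that $\equiv$ is \emph{some} equivalence relation satisfying those axioms, which could in principle be larger; under that reading neither proof is complete in full generality. The paper sidesteps this by appealing to ``our natural definition of $S$'' and by the blanket disclaimer at the start of section~\ref{sec10} that everything there is informal. Your pre-normalization step is genuine technical content and your sketch of how to handle it is correct, but given the declared level of rigor in this section it is more machinery than the paper intends.
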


\begin{proof}

In our natural definition of $S$ there is no difference 
between main  and auxiliary variables besides the rule
\re{eq9}, which can only help to reduce the $\ell$-length
of a sentence, whence the inequality.
%
%shortest length
%of a sentence in $
%$\ell$ 
\end{proof}

The next lemma, which is an important assertion %which 
that holds
actually in the general framework of section \ref{sec2} as well, says that in an optimal sentence
we can change all auxiliary variables to main variables 
without changing the element in the quotient $\me S$:

\begin{lemma}	[Optimal sentences and main variables]			\label{lemma101}

%If $x,y \in S$ %and $x \equiv y$ then 
%$$x \equiv y \Rightarrow d([x],[y])=0$$

If $x \in S$ %and 
is optimal in the sense that 
%such that  
$\ell([x]) = L(x)$, %(or $\ell([x]) \approx L(x)$) 
and $x'$ is a copy of the sentence 
$x$ but where all %unbounded 
unbound auxiliary variables are replaced by main variables
(in a bijective way) 
%(via an injection from %the auxilliary 
%all variables to the main variables) 
then
%$$\ell([x])= \ell([y]) \quad  
$$d([x],[x']) = 0 $$   %\qquad (\mbox{or }d([x],[y]) \approx 0)$$
$$[x] \le [x'] \mbox{ in } \ol S \qquad 
[[x]] = [[y]] \mbox{ in } \me S $$

\end{lemma}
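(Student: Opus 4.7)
The plan is to establish the three assertions in order: first $d([x],[x'])=0$, then $[x]\le[x']$ in $\ol S$, and finally $[[x]]=[[x']]$ in $\me S$ as an immediate consequence of $d([x],[x'])=0$ combined with definition \ref{def33}. The argument splits into two length comparisons: showing that $\ell([x])=\ell([x'])$, and showing that $\ell([x]\na[x'])$ collapses back to this common value.

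First I would observe that $L(x)=L(x')$, since promoting variables from the alphabet $X_1$ to $X_0$ is only a relabeling at the alphabet level and does not change the character count used in our choice of $L$. Applying Lemma \ref{lemma106} (more main variables give at least as long a sentence) gives $\ell([x])\le\ell([x'])$. In the reverse direction, optimality of $x$ yields $\ell([x'])\le L(x')=L(x)=\ell([x])$. So $\ell([x])=\ell([x'])$. For the key step I would then show $\ell([x]\na[x'])\le\ell([x])$ via the trick of example \ref{lem103}: since $x$ and $x'$ share the same structural body (just with a different assignment of which variables are main), both can be rewritten as short calls to a common auxiliary function $F$ whose definition is essentially $x$ itself. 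Concretely, choose $F(u_1,\dots,u_k)$ so that $x\equiv F(\ul{v_1},\dots,\ul{v_j},v_{j+1},\dots,v_k)\ud F:=\text{body}$ and $\phi(x')\equiv F(\ul{w_1},\dots,\ul{w_k})\ud F:=\text{body}$. By axiom \re{eq8} (removing a superfluous copy of the $F$-definition), the concatenation $x\ud\phi(x')$ reduces to a single $F$-definition plus two short calls, whose total length is $\approx L(x)=\ell([x])$. Combined with the monotonicity $\ell([x])=\ell([x'])\le\ell([x]\na[x'])$, this forces $\ell([x]\na[x'])=\ell([x])$, hence $d([x],[x'])=0$ by definition \ref{def210}. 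The relation $[x]\le[x']$ then follows from the characterization in Lemma \ref{lem54}: $\ell([x]\na[x'])=\ell([x'])$ is precisely the condition that $[x]\le[x']$ (at the level of the pseudometric quotient; in $\me S$ this becomes a genuine order relation). Finally $d([x],[x'])=0$ gives $[[x]]=[[x']]$ in $\me S$ directly.

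The main obstacle is step two: making the $F$-function trick precise within the axioms of definition \ref{defequ}. Axiom \re{eq8} applies only to variable-permuted copies $\phi(x)\bot x$, while here $x'$ differs from $x$ not merely by a variable permutation but by reclassifying some $v_i\in X_1$ as $\ul{v_i}\in X_0$, which no element of $P_S$ can accomplish. Thus the reduction $x\ud\phi(x')\rightsquigarrow$ single $F$-definition cannot be carried out by a single application of \re{eq8}; instead one rewrites both $x$ and $\phi(x')$ through a common parameterization so that the shared $F$-definition becomes a literal variable-permuted duplicate which \emph{is} removable by \re{eq8}. This reparameterization is in the same informal spirit as example \ref{lem103}, and the strict identity $[x\ud\phi(x')]=[x']$ in $\ol S$ should be understood in the sloppy sense advertised at the beginning of this section; the rigorous version holds in the metric quotient $\me S$, where the assertion $[[x]]=[[x']]$ is unambiguous.
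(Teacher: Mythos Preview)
Your overall structure—first establishing $\ell([x])=\ell([x'])$ via Lemma \ref{lemma106} and optimality, then collapsing $\ell([x]\na[x'])$ back to this common value—matches the paper exactly. Where you diverge is in the second step, and you have correctly identified the obstacle: the $F$-function trick from example \ref{lem103} is heuristic, since $x'$ is not a variable-permuted copy of $x$ and so axiom \re{eq8} does not apply directly to the pair $(x,\phi(x'))$.

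The paper sidesteps this obstacle cleanly, without any $F$-function machinery. Choose $\phi\in P_S$ with $\phi(x)\bot x$ and $\phi(x)\bot x'$, and let $z$ be the verbatim copy of $\phi(x)$ in which all unbound auxiliary variables are promoted to main variables. The point is that $z$ \emph{is} a genuine variable-permuted copy of $x'$: both have the same body and only main variables in the unbound positions, so some $\psi\in P_S$ mapping $X_0$ to $X_0$ realizes $z=\psi(x')$. Hence axiom \re{eq8} gives $z\ud x'\equiv x'$ directly, so $\ell([z\ud x'])=\ell([x'])$. On the other hand, $z\ud x'$ differs from $\phi(x)\ud x'$ only in that some auxiliary variables have been promoted to main, so Lemma \ref{lemma106} yields $\ell([\phi(x)\ud x'])\le\ell([z\ud x'])$. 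Chaining,
\[
\ell([x]\na[x'])=\ell([\phi(x)\ud x'])\le\ell([z\ud x'])=\ell([x'])\le L(x')=L(x)=\ell([x])\le\ell([x]\na[x']),
\]
forces all terms equal and hence $d([x],[x'])=0$. This is rigorous within the axioms of definition \ref{defequ}—no sloppiness needed—and it buys exactly what you were missing: it replaces the non-permutation $x\mapsto x'$ by a genuine permutation $x'\mapsto z$, at the cost of a single application of Lemma \ref{lemma106} to absorb the discrepancy between $\phi(x)$ and $z$.
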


\begin{proof} %[Sektch]
%(Sketch) 
%Let the %unbounded 
%unbound auxiliary variables $a$ of $x$ which mutated to main variables in $x'$ be denoted by $\ul a$ in $x'$. 
%
Choose a variable transformation $\phi \in P_S$ 
such that the copy $\phi(x)$ of $x$ is disjoint to $x$
and $x'$, i.e. $\phi(x) \bot x$ %,
and $\phi(x) \bot x'$.

Let $z$ be a verbatim copy of $\phi(x)$ but where all 
unbound auxiliary variables are %turned 
declared to main variables. 
Then by lemma \ref{lemma106},  
$\ell([\phi(x) \ud x']) \le \ell([z \ud x'])$. 
%, and the assumption we get 
By law \re{eq8}, $z$ is just another variable transformed copy of $x'$ and thus 
$z \ud x' \equiv x'$. 
Hence, %by law (\re{eq8}) and the 
by assumption we get
$$\ell([x] \na [x')])= \ell([\phi (x) \ud x']) \le \ell( [z \ud x']) =
\ell([x'])$$ 
$$\le L(x') = L(x) = \ell([x]) \le \ell([x] \na [x')])$$

Thus $d([x],[x']) = 0$   by definition \ref{def210}. 
%$$d([x],[x']) = \ell([x] \na [x')]) - \ell([x]) \ad \ell([x'])  = 0$$
%
%
\if 0
%Let $y:=\phi(x)$ be a variable-transformed copy of $x$ such that $y \bot x$ %,
%and $y \bot x'$  ($\phi \in P_S$). 
Make at first a verbatim copy $z$ of $x$ and modify it then as follows.

Given any auxiliary variable $a \in X$ appearing in $x$,
replace each auxiliary variable $\phi(a) \in X$ appearing in $y$ 
which is just ``called'',  % (say $f(x):= 2 \phi(a)(x)$),
and not occurring as %in a 
definition of $\phi(a)$,  % (i.e. ``$\phi(a):=$''),
by its main variable counterpart $\ul a$ %$\phi(\ul a)$ %aapper
where $\ul a$ appears in $x'$, 
and %earase 
erase all definitions of auxiliary variables $\phi(a)$ which are anyway no longer used, according to \re{eq3} and \re{eq9}, and call this new sentence $z$.
Then
$$[x] \na [x']= [\phi (x) \ud x'] = [z \ud x'] = [x']$$
%$$\phi (x) \ud x' \equiv z \ud x' 
%\equiv x'$$
%because the remaining  

because $z$ is essentially just a collection of copies of parts of $x'$ 
which can be canceled by continued application of \re{eq8}.
%
%as of \re{eq8}.
%
%beause $z$ is essentially just a copy of $x'$ as of \re{eq8}. 
%Consequently
%$$[x] \na [x']= [\phi (x) \ud x'] = [x']$$
%
%We also have 
The final claim follows then by 
$$ \ell([x]) = L(x) = L(x') \ge \ell([x']) $$  
%\ge \ell([x])   $$
\fi 
\end{proof}

The following example illustrates the last lemma: 

\begin{example}[Changing auxiliary to main variables]
								\label{ex108}
{\rm 

(i)
Consider the sentence $\alpha \in S$ defined by 
$$\alpha:=\quad \Big ( \ul f := \sin \circ  u + \cos \circ  u 
%\ul f(x):= \sin(\ul u(x)) + \cos(\ul u(x)) 
\quad \ud \quad  u(x):=5 x^9+ 2x^3 +77 x^2 +1 
\Big )$$

%$$b:=\ul g(x):= \sin(v(x)) + \cos(v(x)) \ud v(x):=x^3 +x^2 +1 $$

Let $\alpha'$ be a verbatim copy of $\alpha$ but where each $u$ is replaced by $\ul u$, so %an auxilliary 
a main variable. 
As it appears, the representations of $\alpha$ and $\alpha'$ %is
are  already shortest possible and thus
\be{eq97}
%$$
\ell([\alpha]) = L(\alpha)= L(\alpha')= \ell([\alpha'])
%$$
\en
$$\ell([\alpha'] %\ud 
\na [\alpha])= 
\ell([\alpha' \;\ud \;\ul g(x):= \sin(\ul u(x)) + \cos(\ul u(x))])
=\ell([\alpha])$$
because $\ul g$  is a superfluous copy of $\ul f$ by \re{eq8}, and where we have dropped the definition of $u$ in $\alpha$ by \re{eq9} 
and taken $\ul u$ from $\alpha'$ instead.   Thus
%
%If $u$ is a long polynomial, then the representation of $a$ is already the shortest one for $[a]$ and $[b]$, that is,
%$$\ell([a]) = L(a)= L(b)= \ell([b])$$
$$d([\alpha],[\alpha']) = \ell([\alpha \ud \alpha']) - \ell([ \alpha])
= 
%\ell([a ]) - \ell([ a])= 
0$$

Because of \re{eq97}, %as 
this is in accordance with lemma \ref{lemma101}. 

(ii)
But if $u$ is a short polynomial, say $u(x) =2x$, then $\alpha$ can be simplified by saying $[\alpha]=[g(x):= \sin(2x) + \cos(2x)]$
(so  $\alpha$ is not optimal, i.e. $\ell([\alpha]) \neq L(\alpha)$), but $\alpha'$ can be not, as $\ul u$ is a main variable  
%variablse 
and so cannot be deleted, and so 
%$$  \ell([a]) > \ell([b])$$
$$d([\alpha],[\alpha']) = \ell([\alpha \ud \alpha']) - \ell([ \alpha])
= \ell([\alpha' ]) - \ell([ \alpha])> 0$$

This example demonstrates that the assumption of lemma \ref{lemma101}  is necessary. 
%In this case lemma \ref{lemma101} cannot be applied. 
}
\end{example}

\begin{example}[Approximate automaticity of \re{eq8}]
%of superflous copy-can-be-deleted]

{\rm 

Relation \re{eq8}, that is, $x \ud \phi(x) \equiv x$ for $x \bot \phi(x)$, is an idealization that 
%was 
holds %almost 
approximately automatically in our prototype language, that is, 
$\ell(x \ud \phi(x) ) \approx \ell(x)$. 
Indeed, as an illustration considering at first 
%the example $\alpha$ above 
the sentence $\alpha$ of example \ref{ex108}.(i), we may write (without using \re{eq8}) 
$$\alpha \ud \phi(\alpha) \quad \equiv \quad \alpha \ud \ul g := \ul f$$
because we can shorten the long expression $\phi(\alpha)$ by 
equivalence in mathematics by just saying that we have a main variable $\ul g$ which is just a function
like $\ul f$
%(here, $\phi$ is such that $\alpha \bot \phi(\alpha)$ and $\phi(\ul f)= \ul g$, say)
and so we have
$$\ell([ \alpha \ud \phi(\alpha)])= \ell([\alpha])+3$$

More generally, one drops the copy $\phi(\alpha)$ but 
rescues the main variables from $\phi(a)$ by  adding 
%adds
a list $\ul g_1 := \ul f_1 \ud  \ldots \ud \ul g_n := \ul f_n$ where 
$\ul g_i$ are the main variables appearing in $\phi(\alpha)$
and $\ul f_i$ those corresponding in $\alpha$. 

Even more, one could encode the main variables $\ul f_1, \ldots ,\ul f_n$ into one main variable $\ul f$ as a function on $\N$  and use $\ul f(1),\ldots , \ul f(n)$ instead. It would be then enough to say
$\ul g := \ul f$. 

}
\end{example}

The fact that in optimal sentences (i.e. those $x \in S$ for which $L(x) \approx \ell([x])$) 
%it does no longer matter if one turns their auxilliary variables 
the auxiliary variables can be turned %to main variables
to main variables 
without changing the optimality (see lemma \ref{lemma101})  can
be exploited to give an instance where
the triangle inequality is invalid:

%Because for optimal sentences (i.e. $L(x) \approx \ell([x])$) 
%it does no longer matter if one turns auxilliary variables 
%to main variabes (see lemma \ref{lemma101})
\begin{example}[Violation of $\Delta$-inequality]			\label{ex106}

{\rm 

Assume that
%If 
$a \in  S$ has two approximately optimal %, but vary different 
representations $x,y\in S$ in the sense that 
$[a]=[x]=[y]$ in $\ol S$ and $\ell([a]) \approx L(x) %= 
\approx L(y)$. 
 Of course, 
 $$d([a],[x])=d([a],[y])= 0$$

 Since $x$ and $y$ are approximately optimal short in length, we may replace all %unbounded 
 unbound auxiliary variables of $x$ and $y$, respectively, by main variables to obtain new sentences $x'$ and $y'$, respectively, 
 and still
 $$d([x],[x'])\approx 0 \qquad d([y],[y'])\approx 0$$
by lemma \ref{lemma101}.
%
 %$L(x) \approx L(x') = \approx L(y')$. 
 But because $x$ and $y$  %and so $x'$ and $y'$ 
 may be very different in $S$, 
% (for example two very differently programmed computer programs $x$ and $y$ which do the same), %of 
 by all the new main variables appearing now in $x'$ and $y'$,
% and so the different %nature 
% form of $x'$ and $y'$ and all the main variables,  
$x'$ and $y'$ 
 contain 
 %it contains 
 much necessary information 
 %information 
 which cannot be deleted (so \re{eq9} does not apply) and it may be difficult to simplify $x' \ud \phi(y')$, 
 %$[x'] \na [y']$
 so we may get %$
 $$d([x'],[y'] ) \gg %>>
 0$$ 
 But this contradicts the $\Delta$-inequality, as assuming   
 % aussing 
 it, we got
 $$0 \approx d([x'],[a]) + d([a],[y']) \ge  d([x'],[y'] )  \gg 0 $$ 
% >>0$$
 
 }
\end{example}

It would be helpful if one had a difference element in the sense
that if $x \le y$ for $x,y \in \ol S$ then there is a `complement' $z \in \ol S$ %of $x$ 
such that $x \na z=y$ ($x$ united with $z$ is the whole $y$) and 
$\zeta(x \cap z)= 0$ (the intersection of $x$ and $z$ is zero), that is, 
$$\ell(x\na z)= \ell(x) + \ell(z), $$ 
by definition \ref{def61}, or in other words, $\zeta(z) = \zeta(y \backslash x)$. 
But it does not exist in general:

\begin{example}[Difference Element does not exist]			\label{ex108}
%[Intersection or Difference Element does not exist]

{\rm

Consider two formulas $\alpha, \gamma \in S$ as follows: 
$$\alpha:= \quad  \Big ( \ul b(x):= x^2+1 
\quad \ud \quad \ul y := b^2 +b \Big )$$
%%$$\beta:= \ul y(x):= b(x)(b(x)+x), \quad 
%$$\beta:= \quad  j(x):=x  \quad \ud 
%\quad \ul y := b (b+j)$$
%$$ %\; 
%\gamma:= \quad \ul z(x) := x^4+x^3+2 x^2+x+1 
%= (x^2 +1)(x^2 +x +1)$$
%$$\delta:= \quad  \ul k(x):=x$$
%
%$$\beta:= \quad  y := b^2 +b 
%$$ %\;
%
$$ %\; 
\gamma:= \quad \Big ( \ul z(x) := x^4+3 x^2+2
 \Big ) $$

We assume that $\alpha$ and $\gamma$, as it appears, are already in their shortest form, so,  
observing also that $\ul y = \ul z$ are the same polynomials, 
and to this applying \re{eq8} and then \re{eq3},
we get  
$$[\alpha] \na   %\ud 
[\gamma] = 
[\ul b(x):= x^2+1  
 \;\ud \;\ul y (v) = v^4+3 v^2+2 \;\ud \;  \ul z(t) := t^4+3 t^2+2
 ]=[\alpha]$$
$$\ell([\alpha]) = L(\alpha)=15,  \quad \ell([\gamma]) = L(\gamma)=13, \quad [\gamma] \le [\alpha], \quad 
\ell([\alpha] %\ud 
\na [\gamma])  = \ell([\alpha]) $$
%
%the last assettions because $\ul y = \ul z$ are the same polynomials. 
%
%$$L(\alpha)=9,  L(\beta)= 14 %17, 
%L(\gamma)=18, $$
%$$\ell([\alpha \ud \beta]) =L(\alpha \ud \beta)  < L(\alpha)+ %L(\gamma)= \ell([\alpha])+
%\ell([\beta])$$
%= L(\alpha)+L(\beta) < $$
%L(y)=17, L(z)=18, L$$
%
$$\zeta([\alpha] \backslash [\gamma])=
\ell([\alpha] \na  %\ud 
[\gamma]) - \ell([\gamma]) = d([\alpha],[\gamma])= 2$$
%$$\zeta([\alpha] \cap [\gamma] )=
%\ell([\alpha])  + \zeta([\gamma]) - \ell([\alpha \ud \gamma])
%= \zeta([\gamma])= 13$$

%$$\zeta([\alpha \ud \gamma] \backslash [\gamma])=
%\ell([\alpha \ud \gamma]) - \ell([\gamma]) = 25 -17 = 8$$

The last line shows that a difference element $[\alpha] \backslash [\gamma] $ %\in \ol S$ 
in $\ol S$ does not exist, as it should have only length $2$,
which appears impossible in our language $S$.

}
\end{example}

\section{Examples from Physics}

In  this section we collect various examples from physics,
and indicate  that the distance %in modfying 
between an existing and
a new improved theory is often very low, as well as the lengths
of the theories itself. 
A very good instance is already the transition form the Klein-Gordon equation 
to the Dirac equation discussed in example \ref{ex213} and
the paragraph before it. 
Another example would be Maxwell's equations %expanded 
extended 
from Ampere's and Faraday's equations.

\begin{example}[Quantum mechanics]			\label{ex70}
%mechanis]		%mechainds	%mechanins[QM]

{\rm 

Classical mechanics can be %discribed 
described as follows.
In these equations, $\dot p_i$ means derivative of $p_i$ after time,
and $\dot q_i$ a free independent variable 
%varianbeö 
$$p_i =  \frac{\partial\call}{ \partial \dot q_i }  (q,\dot q,t)
\qquad \mbox{(compute $\dot q_i = \dot q_i(q,p,t)$ from these equations)}$$
%(q,\dot q,t)$$
%$$H_{cl} (q,p,t)=\sum_{i=1}^n \dot q_i (q,p,t)  p_i- 
%\call(q_i ,\dot q_i (q,p,t) , p,t)$$
$$H (q,p,t)= \sum_{i=1}^n \dot q_i  p_i - 
\call   \qquad \mbox{(Hamilton function)}$$
%$$H (q,p,t) := \Big(\sum_{i=1}^n \dot q_i  p_i - 
%\call  \Big) (q,p,t) \qquad \mbox{(Hamiltonfunctin)}$$
%$$p_i = \partial_{\dot q_i} \call(q,t)$$
%
$$\dot p_i = - \frac {\partial H}{\partial q_i}
\quad \ud  \quad \dot q_i = \frac {\partial H}{\partial p_i}
\qquad \mbox{(Hamilton equations)}$$

%$$%\dot 
%\frac{\partial p_i}{\partial t} = - \frac {\partial H}{\partial q_i}
%\quad \ud  \quad %\dot q_i 
%\frac{\partial q_i}{\partial t}= \frac {\partial H}{\partial p_i}$$

If the Lagrangian $\call = T -U$ (kinetic %al 
energy %- 
$-$ potential energy) then
$H = E$ (energy). The Schrödinger equations \cite{zbMATH02589457, zbMATH02589458} read 
\be{se1}
%$$
\hat p_i = - i h \frac{\partial}{\partial q_i}  %$$
%\qquad \mbox{(Schrödinger equations)}$$  
\en
\be{se2}
%$$
i h\frac {\partial \psi}{\partial t}  (q, \hat p,t) = H(q,\hat p,t) \psi(q,t)
%$$
\en

We abbreviate quantum mechanics by $QM$, Schrödinger equations \re{se1}-\re{se2} by $SE$, Hamilton function by $H$, Hamilton equations by $HE$ and classical mechanics by $CM$. 
%QM, SE, H and CM, respectively. 
The distance between $QM= SE \ud H$ and $CM = HE \ud H$ is 
%QM = SE $\ud$ H 
%(Quantum mechanincs = Seq + Ham)
  %
$$d([QM],[CM]) = \ell([QM] \na [CM]) - \ell([CM])
=$$ 
$$\ell([SE \ud H \ud \phi(HE) \ud \phi(H)])   - \ell([CM])$$

for a suitable $\phi \in P_S$ by def. \ref{def27}. 
Since $H$ and $\phi(H)$ are copies of each other, we may take
the Hamilton function $H$ instead of its copy $\phi(H)$ in $\phi(HE)$
and so choose a variable transformation $\psi \in P_S$ accordingly 
%with
%$\phi(\phi(H))= H$, 
and then dropping the superfluous term
$\phi(H)$ which only consists of auxiliary variables and which is not needed anymore by \re{eq9}. So we get
$$=\ell([SE \ud H \ud \psi(HE)])   - \ell([HE \ud H])$$
\be{eq166}
%$$
\approx \ell([SE]) +\ell([H] + \ell([\psi(HE)])   - \ell([HE])
- \ell([H]) 
%$$
\en
$$=\ell([SE]) \approx 100 $$ %= low$$
a low number, for all Hamilton functions $H$. 

In \re{eq166} we have assumed that there is not much optimization 
possible between the involved parts, in particular in relation to a
longer Hamilton function $H$.

% other than in these parts itself, in partuclar in H if it is a long formula.
 
%In the approximation we have assumes there is not much optimization possible between SE, H and HE other than in these parts itself, in partuclar in H if it is a long formula. 

The $\na$-inequality is somehow the idealization of an 
%somewho the
estimate as above. Compare 
%the above estimate % 
it
with (cf. also lemma \ref{lem101}) the fast approximate estimate
by definition \ref{def41}, provided the $\na$-inequality holds		 for $d$,   
$$d([SE] \na [H],[HE] \na [H]) \le d([SE],[HE])$$ 
}
\end{example}

\begin{example}[General relativity]
%Genral relaitivity]
%[General reclaitivity]			%[GR]

{\rm 
In generalizing the field equation of Newtonian mechanics 
to relativity, %E 
Einstein noted that because of the equation $E=m c^2$,
a perpetuum mobile could be formed by sending a light beam upwards from a big mass like the earth and so %transpo 
lifting mass upwards in a gravitational field without using energy.

To avoid violation of the theorem of conservation of energy, 
he suggested that spacetime is deformed under the influence 
of mass,  
%(mathematically in the form of an infinitesimal Lorentz
%invaraint energy-impuls tensor, i.e. the Loret )
 so that the upwards-traveling lightbeam could shift to red and so lose energy. 
%(mathematically in the form of an infinitesimal Lorentz
%invaraint energy-impuls tensor)
To this end, mass is substituted by energy, 
which is turned to a tensor to obtain the energy %impuls 
momentum tensor $T$, 
%which is made infinitesiaml Lorentz invariant to obtain the energy impuls tensor,
which %interplays 
interacts with the metric $g$ of the 4-dim spacetime by the formula \cite{zbMATH02618644} 
%, also determing the trajectory $u$ of particles, 
$$R_{\lambda \mu} - \frac{R}{2}  %g<-
g_{\lambda \mu} = - \frac{8  G  \pi} {c^{4}}  T_{\lambda \mu} $$
%$$R_{\lambda \mu} - 0.5 R   %g<-
%g_{\lambda \mu} = - 8  G  \pi c^{-4}  T_{\lambda \mu} $$
%$$R_{\mu \nu} -  \frac{R}{2} 
%g_{\mu \nu} = \frac{- 8 G \pi}{c^4}  T_{\mu \nu}$$
%and the metric $g$ determines the trajectory $u$ of particles by
%$$\frac{d u}{d \tau} = \Gamma_{\lambda,\mu}^i u_\lambda u_\mu$$
%$$T_{ij}$$ 

\if 0
Here, %$(R_{\lambda \mu})$ 
the first $R$ is the Ricci tensor, a contraction of the %Riem
curvature tensor. % (i.e. Riemannian tensor). 
%
%Here, $(R_{\lambda \mu})_{\lambda \mu}$, a contraction of the  
%
Recall that the curvature %krümmungstensor 
tensor and the metric % metrik 
determine each other in a bijective way. 
\fi 

Once again, %it appears as if this is 
the distance between general relativity GR and Newtonian field equations NF appears very low, i.e. 
$$d([GR],[NF]) \approx 100$$
maybe the shortest possible of a solution extending SR, NF and
solving the perpetuum mobile problem. 
Simply already because the theory of GR is short, i.e.
$\ell([GR])$ is low (cf. lemma \ref{lem34}). 

Since Maxwell's equations are Lorentz invariant, and $E=m c^2$ 
is deduced from Maxwell's equations, GR might already follow 
from the latter by optimization, %beacuse 
because if nature were not Lorentz 
invariant, but the Maxwell's equations hold, the whole alternative 
theory %would 
might be longer. 
% (which actually .  
}
\end{example}

%\begin{example}[Dirac equation]

%{\rm 

%}
%\end{example}

\begin{problem}					\label{prob1}
%{example}

{\rm 

Given any theory $x$ with singularities, 
indicate a theory $y$ without singularities
(i.e. without mathematical pathological inconsistencies 
and divergences) 
%such 
and such that $y$ works approximately as $x$ in the range
where $x$ works and
$$d(x,y)= \mbox{low} $$		
%\mbox{ low}$$

Even the low distance alone might increase the likelihood 
that $y$ generalizes $x$ properly. 
 
% Even the low distance alone might incraese the likelihood 
%that a non-singular $y$ generalizes $x$ properly. 

% Even the low distance to $x$ alone might incraese the likelihood 
%that any theory $y$ without singularities generalizes $x$ properly. 

%may 
%give the l
 
}
\end{problem}		%{example}

\section{Examples of Computing}

For computer languages we use 
the definitions of 
%the setting described in 
example \ref{ex22}. 

%Under a computer language 

\begin{lemma}					\label{lem121}

%{\rm 

%If 
Assume that $\phi: A \rightarrow B$ is an %a homomorphism 
isomorphism between two 
computer languages $A,B$ (exceptionally with only  finitely
many main variables available). 

%such that there is a constant $K \ge 0$ such that %for 
%all $s \in A$ have at most $K$ main variables. 
  
Then there is a constant $M \ge 0$ such that for all $s,t \in A$
$$|d_B(\phi(s),\phi(t)) - d_A(s,t) | \le M$$

Hence
$$\lim_{d_A(s,t) \rightarrow \infty } \frac{d(\phi(s),\phi(t))}{d(s,t)}= 1$$
 
 \end{lemma}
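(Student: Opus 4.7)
The plan is to reduce the claim to a uniform bound on how much $\phi$ distorts the length function. Since $\phi$ is an isomorphism of abelian idempotent monoids, it commutes with $\na$, so $\phi(s \na t) = \phi(s) \na \phi(t)$, and we may expand
$$d_B(\phi(s),\phi(t)) - d_A(s,t) = \bigl[\ell_B(\phi(s \na t)) - \ell_A(s \na t)\bigr] - \bigl[\ell_B(\phi(s)) \ad \ell_B(\phi(t)) - \ell_A(s) \ad \ell_A(t)\bigr].$$
Hence everything reduces to controlling $E(x) := \ell_B(\phi(x)) - \ell_A(x)$ uniformly in $x \in \ol A$.

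The crux is then to show $|E(x)| \le C$ for some constant $C$ independent of $x$, and this is exactly where the finiteness of the main variable set enters. The heuristic I would employ is the standard interpreter/compiler trick: since there are only finitely many main variables, one can write down in $B$ a fixed routine $T_{AB}$ of bounded size $C_1$ which encodes both the syntactic translation of the $A$-primitives into $B$-primitives and the finite correspondence $\phi$ on main variables. Given a shortest representative $s' \in A$ of $x$ (so $L(s') \approx \ell_A(x)$), the concatenation of $T_{AB}$ with a verbatim encoding of $s'$ is a $B$-program equivalent to $\phi(x)$ of total length at most $\ell_A(x) + C_1$, yielding $\ell_B(\phi(x)) \le \ell_A(x) + C_1$. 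Applying the same construction to $\phi^{-1}$ furnishes the reverse inequality with some constant $C_2$, so $|E(x)| \le C := \max(C_1, C_2)$. This step is the main obstacle, since it is heuristic in nature and leans on the informal framework of section \ref{sec10}; a fully rigorous justification would need an explicit axiom that both languages admit bounded-size translators of each other, and the finiteness of the main variable set is precisely what prevents $T_{AB}$ from having to contain an infinite dictionary.

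The remaining steps are routine. Applying $|E(x)| \le C$ to $x = s \na t$, $x = s$ and $x = t$, together with the elementary observation that $|a \ad b - a' \ad b'| \le \max(|a-a'|,|b-b'|)$ for any reals (immediate by a short case distinction on which pair realises the minima), both bracketed terms in the displayed expansion above are bounded in absolute value by $C$. Hence
$$|d_B(\phi(s),\phi(t)) - d_A(s,t)| \le 2C =: M,$$
which is the first assertion. The limit statement follows by dividing through:
$$\left|\frac{d_B(\phi(s),\phi(t))}{d_A(s,t)} - 1\right| \le \frac{M}{d_A(s,t)} \longrightarrow 0 \quad \text{as } d_A(s,t) \to \infty.$$
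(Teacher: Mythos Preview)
Your proposal is correct and follows essentially the same route as the paper: both reduce the claim to a uniform bound $|\ell_B(\phi(x)) - \ell_A(x)| \le C$ obtained via a fixed-size interpreter/compiler in $B$ (the finiteness of main variables being used to bound the portion carrying the main-variable dictionary), and then feed this into the formula for the distance. The paper is slightly more explicit about why the interpreter bound also covers $\phi(s)\na\phi(t)$ (it verifies $(i\ud\tilde s)\na(i\ud\tilde t)\equiv i\ud\widetilde{s\ud t}$ directly), whereas you invoke the homomorphism property $\phi(s\na t)=\phi(s)\na\phi(t)$ and apply the bound to $x=s\na t$; and the paper finishes the arithmetic with $\sigma$ (obtaining $4M$) rather than with $d$ and your min-inequality (obtaining $2C$), but these are cosmetic differences.
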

 
 \begin{proof}
Indeed,    % , 
write a compiler or interpreter $i \in B$ such that
%$$\phi([s]) \quad =  \quad [i \ud \tilde s] \quad = \quad 
%\Big [  i \quad \ud \quad a:= s' \quad  \ud 
%\quad b \Big ] $$
%
$$\phi(s) \quad \equiv  \quad i \ud \tilde s \quad = \quad 
\Big ( i \quad \ud \quad a:= s' \quad  \ud 
\quad b \Big ) $$

%for $s \in A$
 does the same on the computer $B$ as $\phi(s)$ does %on $A$
 by compiling or interpreting the code $s \in A$ on the computer 
  $B$. We assume %because 
  since $\phi$ is an isomorphism this is possible, and that $\phi$ is given on the formal language level as in definition \ref{def81} in this proof. Here $s'$ means an optimal short program $s$ (i.e. $\ell_{A}([s])= L_A(s)$) as a pure text string, which we assign the variable $a$
  above. Further,  $b$ means some minor code %of $\phi(x)$ 
 containing %the 
 all main variables of $\phi([s])$ and calling $i$ to interpret or compile $a$. Moreover, 
  $\tilde s := (a:= s' \ud b)$  
 %\{ a:= s' \ud b \}$ 
 is just 
 an abbreviation. 
 
 \if 0
 , where 
 $b$ means some minor code %of $\phi(x)$ 
 containing %the 
 all main variables of $\phi(s)$, $s'$ means $s$ as a pure text string, and $b$ %is minor code that 
 also calls functions in $i$ to 
 interpret or compile $s'$, and $\tilde s := (a:= s' \ud b)$  
 %\{ a:= s' \ud b \}$ 
 is just 
 an abbreviation. 
\fi 

We also assume that %$L_A$ is defined such that  
$$L_A(s) = L(s')$$ 
gives only the length %$$ 
of the 
text string $s'$ and similarly so for $L_B(t)$. 
Also, 
as we have only finitely many main variables at all (say one 
%function as an 
entry point of a program is the main variable), 
we assume that there is a fixed constant $K$ 
%for all $s$ in question.  
such that 
$$L_B(b) < K$$
%for a fixed constant $K$ for all $s$ in question. 

%and $L_B(t)$ 
%the length

for all $s \in S$. We have 
$$  (i \ud \tilde s ) \na ( i \ud \tilde t ) \equiv  i \ud \tilde s \ud j \ud \tilde t 
\equiv  i \ud \tilde s \ud \tilde t 
\equiv i \ud \widetilde{s \ud   t} \equiv i \ud \tilde x$$

for any $x= s \ud t$. 
Here, the second equality is because we do not need a copy $j$ of the compiler $i$, the third one is because instead of interpreting $s'$ and $t'$ separately, we may interpret $(s \ud t)'$ at once  to get the same code.  

Hence, setting $M:=   L_B(i) + 3 + K $
we get 
$$\ell_B([\phi(s)])
= \ell_B([i \ud \tilde s]) \le L_B(i) +  L_B(\tilde s)
\le  L_B(i) + L(s') + 3 + L_B(b) 
$$$$
\le M + L(s')
= M+ L_A(s) = \ell_A([s]) + M$$

%$$$$\le 
%L_B(i) +  L_A(s) + K \le  M + L_A(s)
%= \ell_A([s]) + M$$

Analogously we get
$$\ell_A([\phi^{-1}(t)]) \le \ell_B([t]) + M$$
so that we end up with 
$$|\ell_B([\phi(s)]) - \ell_A([s])| \le M$$

%So  
%$$\ell_A([s]) = L_A(s)  
%\quad \Rightarrow \quad \ell_B([i \ud \tilde s]) - \ell_A([s]) \le M$$

Thus %(recall $\sigma \le 2d$)
$$|\sigma([i \ud \tilde s] ,[ i \ud \tilde t]) - \sigma([s],[t])|$$
$$\approx |2 \ell_B( [i \ud \tilde x]) - \ell_B([i \ud \tilde s]) - \ell_B([i \ud \tilde t])  
 - 2 \ell_A([x]) + \ell_A([s]) + \ell_A([t])|$$
 $$\le 4 M$$
 
 \end{proof}
 
 %}
%\end{example}

We are going to define a Reduced Instruction Set Central
Processing Unit (RISC CPU) in our ordinary mathematical 
language of physics and technics by describing its function of %its 
 memory, 
%and 
accumulator and program pointer in spacetime:

\begin{example}

{\rm

A simple computer can be realized as formulas like so: 
%(dynamic memory with a RISC-CPU): 

$$x_{t+1,n} = a_t [x_{t , p_t} = S \ad x_{t , p_t+1}=n] + x_{t,n} 
(1-[x_{t , p_t} = S \ad x_{t , p_t+1}=n])	$$
%[x_{t , p_t} \neq S] $$

%$$
\begin{eqnarray*}
 a_{t+1} &=& x_{t,p_t+1} [x_{t,p_t} = L]
+( a_t + x_{t,p_t+1} ) [x_{t,p_t} = A]  \\
%$$
%$$
&& +( a_t * x_{t,p_t+1} ) [x_{t,p_t} = M] 
+ a_t [x_{t,p_t} \notin \{L,A,M\}]
\end{eqnarray*}

$$p_{t+1} = x_{t,p_t+1} [x_{t,p_t} =B \ad a_t \ge 0] +
(p_t +2) (1-[x_{t,p_t} =B \ad a_t \ge 0])$$

where $[A]=1 \in\R$ if an assertion $A$ is true, and $[A]=0 \in \R$ otherwise.
%
%Let $[A]=0 \in \R$ if an assertin $A$ is %wrong 
%false, and $[A]=1$ if $A$ is true. 

Here, $x:\Z \times \Z \rightarrow \R$ stands for memory, $a: \Z \rightarrow \R$ for an accumulator, and $p: \Z \rightarrow \R$ for program pointer. Everything is time indexed $t \in \Z$. 
Then $x_{t,n} \in\R$ is the value of memory at place $n \in \Z$ at time $t \in \N$. Similarly, $a_t,p_t \in \R$ is the value of accumulator, program pointer, respectively,  at time $t\in \N$  . 
We have the opcodes $S,L,A,M,B$ (some constant real values) which stand for store accumulator, load accumulator, add accumulator, multiply accumulator, and branch. 
 One command has the format opcode and value in the next memory. 
For example
$$ L  1000 \quad B 2000$$
mean, load accu with the value at memory address 1000, or,
branch to 2000 if accu $\ge 0$. 
 
If a sentence $s \in S$ is longer, it may pay off to implement
the above computer and implement the formula $s$ %vie 
via  a program. 
In other words, % the  
to add the above computer $c$ as auxiliary variables, i.e. 
$s \equiv s \ud c$ and  simplify $s \ud c$ 
by writing a program on $c$ which substitutes (part of) $s$. 
In this way, long sentences are in concurrence with computing, 
which emerges out of nowhere simply because of the pressure
of optimization.  

The above formulas are similarly to %difference 
differential equations in rough shape.  %, %. 
%which might be viewed as inifinitesimal computing. 
One could analogously allow functions than real values, 
that is, $x: \Z \times \Z \rightarrow C^\infty(\R)$ and add a differential command for the accumulator, say, $a_{t+1} = d a_t/d y$. 

}
\end{example}

%\begin{example}

%\end{example}

We may %transform 
enrich any language without variables 
with variables to get a macro language:

\begin{example}[Macro language]		\label{ex160}

{\rm

Given a language $S$, with or without variables, we may form a macro language $M$ by adding countably infinitely many auxiliary and 
main variables,
function definitions (macros) which accept text strings as input %snippets
and give a text string as output, function calls, concatenation of
text (written $.$), %(deonted as $.$), 
and a $\ud$ sign. 
Sentences of %$T$ 
$M$ are sentences of $S$ enriched with macros. 
%, possibly composed with macro calls. % to make things shorter. 

For example, both $\ul x$ are the same: 
$$ \ul x:=  \Big ( f(`aa','b') \ud f(x,y):=  x . x . y . `teee' .y   \Big ) 
\quad = \quad \ul x:='aaaabteeeb' $$ %\in S$$

If two sentences $s,t$ in $S$ are similar by much text overlap, the distance $d(s,t)$ becomes low in the macro language $T$ by using macros. 
 
}
\end{example}

\begin{example}[Further examples]

{\rm 

(i)
In the usual mathematical framework we could also 
allow Boolean %Bool's 
algebra expressions like 
$$A:= \quad \ul a = x_1 \ol x_2 + x_9 (\ol x_7 +  x_6) + ....$$
%Another such term $B$ Then compute 
and compare
$d(A,B)$.  
More generally we may compare any two functions $f,g$ 
which allow definitions by symbolic expressions by considering
$d(f,g)$. 

%This %could 
%may be closer compared to %established 
%entropy-theoretical information theory. 

(ii)
Further we could work with text strings 
and concatenation in our general mathematical framework
and %in
in this way get a stronger macro language 
%as
than in example  \ref{ex160} as we  had natural numbers, functions or pointers 
to functions as arguments in functions calls etc.

(iii)
We may consider set definitions and in this way compute 
the distance between sets, or algebraic structured sets like rings etc.

(iv) Mathematical proofs may be compared by writing down proofs
$a $, regarding them as text strings (or better as text strings in the framework of the mathematical logic of deduction), and compare these text strings within the macro language as  
in example \ref{ex160} or in item (ii).

%.. or (ii)   
 
}
\end{example}

\bibliographystyle{plain}
\bibliography{refer}

\if 0

\section{}

qm because it is a easy = short solution to singularity problem
in mechanics

---

$C^*$-algebra, shortest description in ZFA

\section{}

difficulty = length of infromation / numbers who can achieve it

\section{}

frage bez supremum of sequence

---
bei limes of metriken, wenn $L$-function diskret, 
dann müsste punktweise der limes immer angenommen werden

---

geht jeder wulst von formeln in der optimierung 
zu möglichts einfeicher cpu / w-formel 
und data ?

%\if 0

\section{}

physikal theories -> claim, theory mit dem kürzesten informationsgehalt

doppelspalt -> wellenglg

energie-infrarot-problem - perpetuum-mobile-problem 
in gravity -> destorted spacetime

---

computerlang A, computerlang B, 
homomoprhism $T:A \rightarrow B$ 
 progrmmiere immer nur einen compiler auf der anderen seite,
 damit kann man direkt den code übernhemen, damit abschätzung
 $$d(Tx,Ty) \le d(x,y) + konst$$
 
 und umgekehrt 

\section{}

ZFC, satz -> beweis

satz = hauptvar, 

beweis = prg , zu optimieren

%\fi

%\if 0

\section{auslagerung}

\subsection{}

\begin{example}

{\rm 

One may associate various elementary parameters to 
a sentence or formula $s \in \ol S$. 
Let $O$ %$O(x)$ 
be the occurrence frequency of %a sentence 
the sentence $x$ (for example in the literature), $T$ %$T(x)$ 
be the time invested to find the sentence $x$, 
and $M \subseteq \ol S$ a subset of sentences or formulas against one wants to compare $x$, then the usefullness density $U$,  %$U(x)$, 1
the %innovation 
innovativity $I$ and innovativity density $J$ of $x$, and the difficulty density $D$ in finding $x$ might be defined by 
%\begin{eqnarray*}
%U(x) &=& O(x)/\ell(x)		\\
%I &=& \inf_{y \in M} d(x,y)	\\
%D  &=& T(x)/\ell(x)
%\end{eqnarray*}
%
%$$U %(x)
%= %O(x)
%O /\ell(x), \qquad 
 %   I = \inf_{y \in M} d(x,y),  \qquad
 %  D %(x)  
%= %T(x)
%T /\ell(x)$$
%
$$U %(x)
= %O(x)
O /\ell(x)$$
$$  I = \inf_{y \in M} d(x,y), \qquad J=I/\ell(x)$$
$$ D %(x)  
= %T(x)
T /\ell(x)$$

%For example, if $E$ is E equatin, then the %inverse 
%difficulty density for E to finding it was $D=\mbox{7 years}/\ell(E)$. 

%$\ell(E) /(7 years)$. 

}
\end{example}

\subsection{}

A partial $\na$-inequality follows from the 
$\Delta$-inequality, but this will be improved in 
lemmas \ref{lem63} and \ref{lem64}: 

%corollary
%\ref{cor61}: 

\begin{lemma}

If $\sigma$ satisfied the $\Delta$-inequality, %but 
even if $\ell$ does  
not necessarily satisfy the monotone incresaingness, then 
%for all $x,y,a \in \ol S$ 
%$\ell$ monoton und $\sigma$ erfüllt $\Delta$-inequ
%$\Rightarrow$ 
$$\sigma(x \na y, a %\na \emptyset
) \le \sigma(x,a) + \sigma(y,\emptyset)$$a 		%)$$
for all $x,y,a \in \ol S$ %. 
%In particular, then 
and $\ell_\sigma$ %:= \sigma(-,\em)$ 
is subadditive.
 
%insbes $\ell_\sigma$ subadditiv

\end{lemma}

\begin{proof}
Writing out the claim we get 
$$2 \ell(x \na y \na a) - \ell(x \na y) - \ell(a) \le 
2 \ell(x \na a) - \ell(x) - \ell(a) + \ell(y)$$
$$ \Leftarrow \quad  \ell(x \na y \na a)  + \ell(x) \le 
 \ell(x \na a)  + \ell(x \na y)    $$   	%+ \ell(y)$$

where we have applied the %monotony 
subadditivity of $\ell$ once, and the last inequality is true 
by applying lemma  \ref{lem16}   %for 
to $x \na a, y \na x$ and $x$. 
\end{proof}

\subsection{}

\begin{example}[Maxwell's equations]

{\rm

Maxwell added one derivative term $\partial_t$ to the then 
existing theory $E$ of elctromagetism by Faaraday et al. 
to obtain the Maxwell's equations $M$.  
So the distance $d(E,M)$ is very %quite 
low. 
Morover, Maxwell's equations may be considerably simplified 
by so-called geometric algerba, one has however also the costs 
of defining the latter. 
%in defining it

}
\end{example}

\subsection{}

In  this section we collect various exmaples from physics,
and indicate  that the distance %in modfying 
between an existing and
a new improved theory is often very low, as well as the length
of the theories itself, as if nature itself would like it short and
``simple'' when it comes to formulas. 

%It should be understood %in the discussion that follows 
%that we speak about nature itself,
%not about the difficulty in obtaining physical laws. 

% Let us remark that we speak of nature itself, and do not evaluate
 %the difficulty in obtaining %these important results 
 %physical laws at all. 
 %!
 
% Let us remark that we speak of nature itself, not about any
% historical difficulty in obtaining these important results!
  
%%We would like to remark that  

%To be sure, we emphasize that we do not at evaluate in any kind this great achievements of mankind at all.

%We amphsize that we do not evaluate the difficulty of the important achievements  discussed here at all!

\subsection{}

% Over all, this paper is more a defintion and lemmas paper, less a theorem-paper. 

\subsection{}

%in particular 

M added a term $\partial_t$ farraday $\&$ ., equation -> M equation

can be simlified in geometric algebra 

%\end{example}

\subsection{}

$$	U = \mbox{Occurence of $x$} / \ell(x)$$
$$  I = \inf_y d(x,y)$$

%U(x) = 
usefullness= verwendung pro length

innovation = distance zu bekantem 

--

isoliertheot von sentence = sltenj´heitswert = wie probability in

auch von stücken gerennt durch $\ud$, wie alphabet 

\begin{definition}

{\rm 

Consider a problem %$P$ 
for a computer, animal or human being 
with a sought solution $x \in \ol S$. Then if the computer needs
$T$ time to find its solution, then {inverse difficulty density} %$D$  %$D_P$  
of the problem
might be defines as
$$ %D_P = 
\ell(x)/T$$ 

}
\end{definition}

For example, if $E$ is E equatin, then the inverse difficulty density for E to finding it was $\ell(E) /(7 years)$.

\subsection{}

$$\ell(Ux)   \le  %\ell 
\ell(x) \; \forall x \in \ol S   
\quad  \Rightarrow \quad \tihadot{ d %_{\ell_S}
}  (U a, Ub) 
%(\phi a, \phi b) 
\le M \tihadot{ d %_{\ell_T}
}(a,b) $$

letzte zeile F

\subsection{}

A possible homomorphism definition 
for a language with variable could go like so;
it allows that one variable in the domain could
consume more variables in the range of the homomorphism:

\subsection{}

We get then descending functions as in lemmas \re{lemma510}
and \re{lemma511}, and so limits as in defintion \re{def56}.

We get then also as in lemma \ref{lemma513} that
$$ \ell^{(\infty)} = {\tilde {\hat {\ell^{(\infty)}}}}$$

\subsection{}

But on a good condition on $\ell$, the measure of such 
sentences, which actually do not exist, can be computed.

(auch hne condition, aber mehr natural wegen monotonie..?)

\subsection{}

Wenn $\sigma$ $na$ erfüllt, dann $\zeta(x \cap y \cap z) \ge 0$
für $x,y,z \in \ol S$

\subsection{}

\begin{example}

Logic. set $\alpha:=$ $[a_1] \bot ... \bot [a_n]$ 

Then compare two such proofs by adding 
$\alpha \ud \ul a:=[a_n]$, saying that assetion $a_n$ is a main 
variable. (Here $[a_n]$ is a pure text string) 
Then $d(\alpha,\beta)$ 

Form macro lagnugae of this. 

%Logic. $[a_1] \bot ... \hat \bot a_n$, $\hat \bot $=final
\end{example}

\subsection{}

\begin{example}

Let $x$ be the formula for QG, 
it will be very close to GR, and QFT,
that is formulas of both theories will much in commen.

It will be short, 
It is not about QM, or quantizising something, 
but it is about a forumla which avoids the problems, 
like infinities before renormalization in QFT,
by a purely matheamtically short solution.
(For example, choosing other operators than usual/expected.) 
Short when optimized
Only the formulas count, not how weird the interpretation may look like.
 
 An assupmtion like, 4-dim spacetime, or it must be fit into the concept of QFT, maybe hinder finding a solution .
 
 The guiding princliple is short solution, short distance, when optimized. 
 
\end{example}

\begin{example}

The shortestness is also justified because, as it appears,
nature is more like a RISC (reduced instruction set CPU) computer.
%We have much 
It has much memory (the field functions encoding mass, charge,
etc.) and a small CPU (the formulas of physics)

jedoch ist es nur in der kürzesten form so, 

---

nicht unbedingt, wenn die konfiguration in formeln ausgedrückt wird, etwa hamiltonfunktion

ist qm / qft ausdrückbar als data, also functions die die konfiguration codieren 

\end{example}

\subsection{}

\begin{proof}
 
As in the proof of lemma, we see that if the firtst condition is satisfied, 
the operation of dot, hat and  
tilde do not change $d_\ell$, so we get $d_\ell = d_{\tihadot \ell}$, 
and so $\ell = \tihadot \ell$.  The reverse %inverse 
implication follows from 
corollary \ref{cor51b} as in lemma.

\end{proof}

\subsection{}

Replace each auxilliary variable $\phi(a)$ appearing in $y$ which is just ``called'' (say $f(x):= 2 \phi(a)(x)$, $\phi(a)$ is not being defined, but called)) and not occuring as %in a 
definition of $\phi(a)$ (i.e. ``$\phi(a):=$''),
where $a$ is an auxilliary varaible appearing in $x$, by its main variable counterpart $\ul a$ %$\phi(\ul a)$ %aapper
where $\ul a$ appears in $x'$, 
and earase all definitions of auxilliaray varaibles $\phi(a)$ which are anyway no longer used, and call this new sentence $z$.  
Then

\subsection{}

$$d_{\lambda_1 \ell_1 + \lambda_2 \ell_2} = 
\lambda_1 d_{\ell_1} + \lambda_2 d_{\ell_2}$$

%_{\lm\ell} 

Consequently, if $(\ell_n)_{n \ge 1}$ is a sequence of length functions on $\ell$ such that $d_{\ell_n}$ are pseudometrics satisfying the $\na$-inequality, then so does 
$\sum_{n=1}^\infty 2^{-n} d_{\ell_n} = d_{\sum 2^{-n} \ell_n}$
(pointwise limits). 
%satisfy the $\na$- and $\Delta$-inequa
%$d= d_{\ell_n}
%Morover

\subsection{}

Given a language of variables, $S$, and a length function $L$ on it, set

\subsection{}

\begin{definition}[Main and auxilliary vars]

Variables is dijoint union $X =  X_0 \cup X_1$ of main vars
$X_0$ and auxilliary vars $X_1$

All var permutations preserve $X_0,X_1$

\end{definition}

\subsection{}

\begin{definition}

$$\ell(S):= \|S\| , \|S\|_\sigma$$
$$d(S,T), \sigma(S,T)$$
wie üblich für $\ell$
 
 über menge von languages

\end{definition}

\subsection{}

F, denn wenn y nur nebenvars
%$wenn 

\begin{lemma}

If 
$$ |\ell([x \ud y]) - \ell(x) - \ell(y)| \le \varepsilon 
\quad \Rightarrow \quad  d([x \ud y] , [x] \na [y]) \le \varepsilon$$

\end{lemma}

\begin{lemma}[Removing a superflous copy]

If $x,y \bot \phi(x)$ then
$$[x \ud y \ud \phi(x)] =[x \ud \psi(y)]$$

If $x$ contains only auxilliary variables, 
$x \bot \phi(x)$ and $x \ud \phi(x)$ contains no variables 
appearing in $y$, then there is a $\psi \in P_S$ such that
%called by $y$, and 
$$[x \ud y \ud \phi(x)] =[x \ud \psi(y)]$$
  
\end{lemma}

\begin{proof}
(Sketch) 

es könnte auch memory sein, in $x$ ind $\phi(x)$
zb $x=''a \in \R''$  und $y ='' a = 3, \phi(a)=5''$
\end{proof}

\subsection{}

\begin{lemma}

F
%Assume #

Suppose that  $\ell$ is monotonely increasing. If $n \ge 1, m \ge 0$ and
$x_i,y_j \in \ol S$  then 
%\re{eq78} $\ge 0$ . 

$$\zeta(x_{1} \cap  ... \cap x_{n}  \cap \ol y_1  \cap ... \cap \ol y_m)
\ge 0$$

%If $\ell$ is monotonely increasing and
In particular, 
If $z,w$ is an expression in $B$ involving only elements of $\S$, 
$\cap$, $\cup$ and $\backslash$ (set-difference), then
$$\zeta(z) \ge 0, \qquad \zeta(z \cup w) \le \zeta(z) + \zeta(w), 
\qquad 
\zeta(z) \le \zeta(z \cup w)$$
. 

%\hat
\end{lemma}

\begin{proof}

We show that \re{eq79} is positive by induction over $n$. 
Because $x \cup x = x$ in $B$, it is sufffiectnt to assume that $n$ is odd. The case $n=1$ follows from increasingness of $\ell$. 
The case $n \Rightarrow (n+2)$  follows then easily because one %adds only positive numbers 
sees that in  the case $n+2$ one adds only positive numbers.

%posirive

\end{proof}

Wenn $\sigma$ $na$ erfüllt, dann $\zeta(x \cap y \cap z) \ge 0$
für $x,y,z \in \ol S$

\begin{lemma}
F?  

da ja kein homomoprhims, zb würde man sagen $\zeta(1)=0$, dann könnte man ja auch $1=0$ durchdividieren, mit gleichem ``beweis'',
was aber falsch , da ja $\zeta$ nicht mehr wohldef

------

Let $B_\na$ denote the Boolean algebra quotient of the Bool's algebra $B$ divided by the relation
%$s \na t \equiv s \cup t$ for all $s,t \in \ol S$. 
$$s \na t \equiv s \cup t   \qquad \forall s,t \in \ol S$$ 
Then
$\zeta$ descends to a well-defined map 
$$\zeta_\na : B_\na \rightarrow \R: \zeta_\na ([x]) := \zeta(x)$$
for all $x \in B$. 
\end{lemma}

\begin{proof}

By defintion \ref{def66} and lemma \ref{lem610}
we get 
$$\zeta(x \cup y)=  \ell(x) + \ell(y) - ( \ell(x) + \ell(y) - \ell(x \na y)) 
$$
%= \ell(x \na y)$$
for all $x,y \in \ol S$, so that $\zeta_\na$ 
%$\zeta'$ 
is well-defined. 

%\ef(Geht eigentlich viel schneller weil das obige def, N)

\end{proof}

\subsection{}

(Geht eigentlich viel schneller weil das obige def, N)

\subsection{}

\begin{lemma}

If $\sigma$ erfüllt dreiecksungl $\Rightarrow$ $d$ erfüllt $\Delta$-ungl

\end{lemma}

\begin{lemma}

$\sigma$ erfüllt $\Delta$-ungl $\Rightarrow$ $d$ erfüllt sie
\end{lemma}

\subsection{}

schreibe  ausdruck als disjunkte vereinigung und adde mit +,
zb

%$$\ell(``

$\ell(x \cap y \cap z) = \ell(x)+\ell(y)+\ell(z) -
\ell(x \cup y) - \ell(x \cup z) - \ell( y \cup z) + \ell(x \cup y \cup z)$

\begin{lemma}

wenn durchschnitt $\delta$ monoton 
$\Rightarrow$ $\sigma$ erfüllt $\na$-ungl

(siehe zetteln, die zur frage monotonie $\delta$ führten)

\end{lemma}

\begin{corollary}				\label{cor61}

$\ell$ monoton und $\sigma$ $\Delta$-ungl 
$\Rightarrow$  $\sigma$ $\na$-ungl

\end{corollary}

---

\subsection{}

$$\ell(x \backslash y)= d( y,x \nabla y)= \sigma(y,x \nabla y)$$

if $\ell$ monotnoe  for $d$

monotonie mit measure of durchschnitt

\subsection{}

\begin{lemma}

%$\DAelta
$$\delta(x,a):= \ell(x) + \ell(a) - \ell(x \nabla a)$$

increasing in $x,a$ simultanously

$\Leftrightarrow$   increasing in $a$ (or $b$)   for all $x$   %and $b$ seperat

$\Leftarrow$  $\sigma$ erfüllt die $\Delta$-ungl

\end{lemma}

\subsection{}

\begin{lemma}
We have the following implications:

$\sigma$ satisfies the $\Delta$-inequality 

$\Rightarrow$  measure of intersection is increasing

$\Leftrightarrow$ $\zeta(x \cap y)$ is  increasing in $x \in \ol S$
for all fixed $y \in \ol S$ 

$\Leftrightarrow$ $\delta$ is  increasing

\end{lemma}

\subsection{}

\begin{lemma}

$d( x \ud w,  y \ud z) \le d(x,y) + d(y,z)$
\end{lemma}

\subsection{}

A strightforward check shows the unlikely special case of 
%speca 

trivial weil $\ell=0$ 
\begin{lemma}
If $\ell(x \na y)= \ell (x) +\ell(y)$ ($\forall x,y \in \ol S$) then $\sigma(x ,y) = \ell(x) + \ell(y)$ and $\sigma$ fulfills the $\na$- and $\Delta$-inequality. 

%und erfüllt $\na$ und $\Delta$

\end{lemma}

\subsection{}

\begin{proof}
$$ \approx L(y) \ud \phi(y) \ud a)
L(x)  + L( \phi(y) \ud a)) \ge \ell([x]) + \ell([y])$$

$\ell([s] \na [t]) \le L([u]) $
$u \equiv s \ud \phi(t) \ud z \ud w$ for all $z,w$ 

Let $\ell([x_i]) \le L(y_i)+\varepsilon$ where $x_i \ud a_i \equiv
y_i$ and $x_i \bot a_i$ ($i=1,2$).
By twofold application of we get		%#
$$y_1 \ud \phi(y_2) \equiv x_1 \bot a_1 \ud \phi(y_2)
\equiv    x_1 \bot a_1 \ud \phi(x_2  \ud a_2) 
= x_1 \bot a_1 \ud \phi(x_2)  \ud \phi(a_2)$$
for any $\phi \in P_S$ such that 
$y_1 \bot \phi(y_2) $ and  $(x_1 \bot a_1) \bot \phi(x_2  \ud a_2)$

Let $\ell([x_i]) \le L(y_i \ud z_i)+\varepsilon$ where $x_i \equiv
y_i$ and $y_i \bot z_i$ ($i=1,2$). Choose $\varphi \in P_S$ such that
$(y_1 \ud z_1) \bot (y_2 \ud z_2)$. 
Then
$$\ell([x_1])+\ell([x_2])  +2 \varepsilon
\ge L( y_1 \ud z_1) + L( \phi(y_1 \ud z_1)) 
= L( y_1 \ud z_1 \ud \phi( y_1) \ud \phi(z_1))$$
$$\ge  \ell([y_1 \ud \phi(y_2)]) = %\ell([x_1 \ud \phi(x_2)])= 
=  \ell([y_1 ] \na [y_2]) =  \ell([x_1 ] \na [x_2])$$
because $x_1 \ud \psi(x_2) \equiv y_1 \ud \phi(y_2)$
and $x_1 \bot \psi(x_2),  y_1 \bot \phi(y_2)$
and $(y_1 \ud \phi(y_2) )\bot z_1 \ud \phi(z_2)$

Monotnone increasingness: 
$$\ell([x_1] \na [x_2])=\ell([x_1 \ud \phi(x_2)]) \cong L(y_1 \ud \phi(y) \ud a)
L(x)  + L( \phi(y) \ud a)) \ge \ell([x]) + \ell([y])$$

%$$\ell([x] \na [y])=\ell([x \ud \phi(y)]) \cong L(x \ud \phi(y) \ud a)
%L(x)  + L( \phi(y) \ud a)) \ge \ell([x]) + \ell([y])$$

%$$\ell(x) \cong L(x \ud a) = \le \ell(x \na y) 

\end{proof}

\subsection{}

%\section{}

bm-abstand

$$d(S,T) = \log \inf \{\ell(\phi) \ell(\phi^{-1})\| \;|\; \phi :S \rightarrow T\}$$

%$$d(S,T) = \log \inf \{\|T\| \|T^{-1}\| \;|\; T :S \rightarrow T\}$$

auch als abstand von languages

language $S \na T$, dann obiges mit $\na$-ungl

\subsection{}
%\section{}

\begin{lemma}
Dann wieder limes 

$...  \le d^{(3)} \le \tilde {\hat { \ol d^{(2)}}}  \le d^{(2)} \le \tilde {\hat  {\ol d}} $

\end{lemma}

\subsection{}

((Idempotence: ))
$$a \ud s \ud s  \ud b \equiv a \ud s \ud b$$  
% idempotence

\subsection{}

\begin{example}

Bool's algebra: 
A term gives output of input

$$A:= \quad \ul a = x_1 \ol x_2 + x_9 (\ol x_7 +  x_6) + ....$$
Another such term $B$ Then compute 
$d(A,B)$

\end{example}

\subsection{}

\begin{example}

$T$ by

and adding function calls
$$f(x,y)= x * ``kjsdj'' * y * y * ``kljkjji'' * x * ``jjh''$$
 
 $a='text'$ , $f(x,y)= x * `text'  * y$
 
 polynome , macros -> text maniputlation = text-polynome
 %b
 
 sets (rings, algebras etc) are set-polynome 
 
 library: qutient $A/I$ etc , $\ker(f)$ 
 
\end{example}

\subsection{}

\begin{example}[Polynomials]

We give here examples what we call polynomilas.

\begin{itemize}

\item Bool's algebra: 
A term gives output of input

$$A:= \quad \ul a = x_1 \ol x_2 + x_9 (\ol x_7 +  x_6) + ....$$
Another such term $B$ Then compute 
$d(A,B)$

\item Text-polynomials: 
We could consider text strings and concatention in our general fremwork and would in this way get a stronger macro language 
%as
than in example  as we  had natural number, functions or pointers 
to functions as arguments in functions calls etc.

\item Sets: 
For example we could compare different sets, possibly algebraic structures like rings etc, by considering set definintions. 
%, for exmaple
%$$A:= \quad \ul a := \{ x \in A|\, x \mbox{ satisfies $Y$} \}$$

\end{itemize}

\end{example}
\subsection{}

{\rm (ii)} An anlogous assertion holds for $\sigma_\ell$ instead of $d_\ell$

\subsection{}

	For demonstration we have simplified the situation. We did n ot speak about D spinors, and viewed the coorindates $x_1,x_2,x_3,t$ as constant, not as variables. 
	Otherwise we had defined $K(x,t):= \Delta_x - \partial_t^2$ 
	and had called this function as $K(x,t)\psi=0$ in the D equation and
	$K(y,s) \phi=0$ in the KG equatin .  
	%D SPINORS; 

\subsection{}

\begin{lemma}

If $\ell$ is monotone then 
$$\tiha {d_\ell} = \hat {d_\ell}$$
$$\tiha {\sigma_\ell} = \tilde {\sigma_\ell}$$

F, da ja $\tilde d_\ell$ nicht ein $d_{\ell'}$ ist

\end{lemma}

\subsection{}

Let $B$ be the set of free, formal (!) expressions in 
(typically set-theoretically notated) Bool's algebra with formally generating variable set $\ol S$.
(Only the absolutely necessary relations of Bool's algebra %count,
apply,
no other ones)
\subsection{}

\begin{lemma}

$\sigma(A,B) = |A\backslash B| + |B \backslash A$
is metrik mit regeln wie oben
\end{lemma}

Damit analog

\begin{definition}		\label{def21}

$\sigma(x,y)= 2 \ell(x \ud y) - \ell(x) \ad \ell(y) - \ell(x) 
\ud \ell(y)$ 

\end{definition}

\begin{lemma}
$d(x,y) \le \sigma(x,y) $ 

$\ell^\infty \le \ell^1$

\end{lemma}

\subsection{}

Then
$$d(p,q):= \ell(p \ud q') - \ell(p) \vee \ell(q)$$
where $q'$ is equivlently $q$, but only all global 
variables replaced. 

* wenn alle routinen gleich, dann statt den 2 kopien immer 
1 nehmen, und bei aufrufen auch immer die 1 kopie aufrufen

\subsection{}

  \begin{example}		%{definition}
  
  Here, $K$ encodes the overlap of both thoeries. 
 % but 

   The longer $K$ is, the more the more it pays off to define the auxilliary variable $K$ in the above computation. For example, if we had spelled 
  
   which helps to point out 
  overlaps beween the D and the KG equation .
  
%  Here, $K$ is an auxilliary variable which helps to point out 
 % overlaps beween the D and the KG equation .

   $$d(D \psi = 0, K \psi = 0)= 
   \ell(F \psi = 0 \ud F^2 \psi = 0  \ud F:= D)
   - \ell( F \psi = 0 \ud F:= D)
   \le \ell (F^2 \psi = 0) \le 5$$
   
   (5 signs)
  
  %exmaple

\end{example}

\begin{example}

The prototype example are physical theories. 

A formula may be given like $\Delta \psi = \partial_t \psi$
(free wave equation). 
Or the free Klein-Gordon equation 
$\Delta \psi = \partial_t^2 \psi$. 
The $\ud$ means than ``and'', 

For example 
$d(w,k)= %d((\Delta -\partial_t) \psi
 \ell(w \ud k) - \ell(w)$
 $= \ell (w \ud (w + ``0 = \partial_t \psi''))- \ell(w)
 \ell(0= \partial_t \psi))$
  
  Or $D\psi = 0$ beeing the Dirac equation, 
  that is, $D^2 = \Delta- \partial_t^2$, in other words,
  the Dirac equation is the square root of the Klein Gordon equation $K \psi =0$. 
   
   Then
   $d(D \psi = 0, K \psi = 0)= 
   \ell(F \psi = 0 \ud F^2 \psi = 0  \ud F:= D)
   - \ell( F \psi = 0 \ud F:= D)
   \le \ell (F^2 \psi = 0) \le 5$ 
   
   (5 signs)
   
   The point is that $F$ is just an undefined auxiliary 
   varaible and the distance is smaller than the short 
   general 
\end{example}

\subsection{}

\begin{definition}[More abstract definition]

Let $(\ol S,\na,\e)$ be a abelian, idempotent monoid. 

Let 
$\ol S$ a set, equipped with idempotent, abelian monoid operation $\na$ and neutral element $\e$. 

We call $\ell : \ol S \rightarrow \R$ a length function 
%A length function $\ell : \ol S \rightarrow \R$ is called so
if the assertions of lemma \ref{lem03} hold for it. 

% and $\ell : \ol S \rightarrow \R$ 
%a positive length function in that the rules of lemma are satisfied.

$\ell(x)$ is called the %informtatio 
information content of $x \in \ol S$.

\end{definition}

\subsection{}

\begin{corollary}
$$d(A \cup Z, B \cup Z) \le d(A,B)$$
$$d(A \cup Z, B) \le d(A,B) + d(Z,\emptyset)$$
\end{corollary}

vgl $\|x + w - (y+z)\| \le \|x-y\| + \|w-z\|$

2te $\na$-ungl

\begin{lemma}

zweite $\Delta$-ungl	F

$$|d( x \na y,  a \na b ) - d( x ,  a  )| \le d(y,b)$$

zb $y=b$
\end{lemma}

\subsection{}

\begin{lemma}
(vmtl)

$\Delta$-ungl für $\sigma$

$\Rightarrow$ 

$\tilde \ell(x \nabla y) \le \tilde \ell(x) + \tilde \ell(y)$
 
\end{lemma}

\subsection{}

\begin{example}

Consider a computerlnagugae $A$ and the sentences $S$ of meaningful code snippets (at least full routines /functions), half or full programms. The sign $\ud$ means concatenation of code. 
Variables are the labels. 
 
If we have another computerlangugae $B$ with sentences $Q$
%anther
then there is a constant $K$ such that
$$|d(\phi(s),\phi(t)) - d(s,t) | \le K$$

Indeed , write a compiler or interpreter $i \in T$ which translates
a string 
%$\$a=''s''$ where $s$ is a sentence in $S$. 
$s \in S$ (regarded as text string $\tilde s$) and translates (or interprets) it to %language
%$B$.  
such that $i \ud \tilde s$ (more precisely: $i \ud a:=\tilde s$ say) does the same on the computer as $\phi(s)$ does.
Hence
$$\phi(s) \equiv i \ud i(\tilde s)$$  %a:=\tilde s$$ %\$s=``s''$$
 -  keine main var
 
 Thus
$$ |d(i \ud \tilde s,u \ud \tilde t) - d(s,t) |
= \ell(i \na \tilde s \ud \phi(i) \ud \tilde t)$$

$$  i \ud i(\tilde s)  \ud \phi( i \ud i(\tilde t) ) \equiv  i \ud i(\tilde s) \ud j(\tilde t) \ud j 
=  i \ud i(\tilde s) \ud i(\tilde t) 
= i \ud i(\widetilde{s \ud   t}) = i \ud i(\tilde x)$$

$$  i \ud \ul a:= \tilde s \ud i(\ul a) 
 \ud \phi( i \ud a:= \tilde s \ud i(\ul a) ) \equiv
 i \ud \ul a:= \tilde s \ud i(\ul a) 
 \ud j \ud b:= \tilde s \ud j(\ul b) )
 $$
 
$$  i \ud \tilde s  \ud \phi( i \ud \tilde t ) \equiv  i \ud \tilde s \ud j \ud \tilde t 
=  i \ud \tilde s \ud \tilde t 
= i \ud \widetilde{s \ud   t} = i \ud \tilde x$$

$$\ell_B([i \ud \tilde s]) \le L_B(i) +  L_B(\tilde s) \le 
L_B(i) +  L_A(s) + K \le  M + L_A(s)$$

So if $\ell_A([s]) = L_A(s)$ then 
$$\ell_B([i \ud \tilde s]) - \ell_A([s]) \le M$$

%$$  i \ud \tilde s  \ud \phi( i \ud \tilde t ) \equiv  
%%i \ud \tilde s  \ud \phi( \tilde t 
%=  i \ud \tilde s  \ud \tilde t 
%= i \ud \widetilde{s \ud   t} = i \ud \tilde x$$

We write now $i \ud \tilde s$ instead of $i \ud i(\tilde s)$ for short. 

$$|\sigma(i \ud \tilde s , i \ud \tilde t - \sigma(s,t)|
\approx |2 \ell_B( i \ud \tilde s \ud \tilde t) - \ell_B(\tilde s) - \ell_B(\tilde t) - 
 - \ell_A(s \ud t) - \ell_A(s) - \ell_A(t)|$$
  
  opitmal $\ell_A(s)= \ell_b(\tilde s)$ 
  aber $\ell_B(\tilde s \ud \tilde t) \neq \ell_A(s \ud t)$
  unoptimal 

%$$|\sigma(i \ud i(\tilde s , i \ud \tilde t - \sigma(s,t)|

$$|\sigma([i \ud i(\tilde s) ], [ i \ud i(\tilde t)]) - \sigma([s],[t])|
\approx$$

$$\approx |2 \ell_B( [i \ud i(\tilde x)]) - \ell_B([i \ud i(\tilde s)]) - \ell_B([i \ud i(\tilde t)])  
 - 2\ell_A([x]) + \ell_A([s]) + \ell_A([t])|$$
 $$\le 4 L(i)$$
  
  For big $s,t$ 
  
  Decalre $i$ completely to main variable , so that
  $i \ud \tilde s$ can not easily simpified, collapse

$$  |d_B([i \ud i(\tilde s)] , [i \ud i(\tilde t)])| \approx
|d_B([\ul i] \na [i(\tilde s)] , [\ul i] \na [i(\tilde t)])|
\le |d_B([i(\tilde s)] , [i(\tilde t)])|
\approx d_A([s],[t])		%$$
$$
%$$ 

Nur wage, denn auch hier könnte es sein dass man 
$\tilde S$ in language $B$ minimieren könnte ,

jedoch wenn nur compuerlangugae, dann weniger,
weil nur textstring

Without the interpreter $i$, $i(\tilde s) \ud i (\tilde t) \neq i (\widetilde { s \ud t})$

---

translaote normal the main vars, and the bulk with $i \ud \tilde s$

%\if 0
%$$|\sigma(i \ud \tilde s , i \ud \tilde t) - \sigma(s,t)|
%\approx$$
%
%$$\approx |2 \ell_B( i \ud \tilde x) - \ell_B(i \ud \tilde s) - \ell_B(i \ud \tilde t)  
% - 2\ell_A(x) + \ell_A(s) + \ell_A(t)|$$
% $$\le 4 L(i)$$
%  
 % For big $s,t$ 
 % 
%$$  |d_B([i \ud \tilde s] , [i \ud \tilde t])| \approx
%|d_B([i] \na [\tilde s] , [i] \na [\tilde t])|
%\le |d_B([\tilde s] , [\tilde t])| 
%\approx |d_A([\tilde s] , [\tilde t])| 
%$$
%%$$
%\fi
 
\end{example}

\subsection{}

\begin{lemma}			\label{lemma16}

$$ \sigma(x,y)  = \sigma(x,x \nabla y) +\sigma(y,x \nabla y)$$

%$$ \sigma(y,x \nabla y) - \sigma(x,y)  = \ell(x) - \ell(x \na y) = \delta_y(x) $$

%$$ \sigma(x,y) - \sigma(x,x \nabla y) + \sigma(y,x \nabla y)= -2 \ell(y)+ 2 \ell(x \nabla y)$$

$$- \sigma(x,x \nabla y) + \sigma(y,x \nabla y) 
%= \sigma(x,y) - 2\sigma(x,x \nabla y) 
= \ell(x) - \ell(y)$$

$$ \sigma(x,x \nabla y) =  d(x,x \nabla y)  = - \delta_y(x)$$
%$$= \sigma(x,y) - 2\sigma(x,x \nabla y)$$

\end{lemma}

\subsection{}

\begin{example}			\label{ex51}

Consider an $s \in S$ with one %(or more) 
main variable and several auxilliary variables. 
Assume that $\ell([s]) \approx L(s)$. 

``Declare'' all auxilliary  variable %all 
to main variables by forming a sentence
$t \in S$ where all auxilliary variables are verbatim replaced by main variables (without of loss of information, ``same'' sentence). 
%via some bijection between main and auxilliary varibles. 
Then
$$d(s,t)= \ell(s \na t) - \ell(s) \approx \ell(s) -\ell(s)  = 0$$
Assume there is another good approximation 
$s' \equiv s$ with $\ell(s') \cong L(s)$, and form analogously
the sentence $t' \in S$ by replacing all auxilliary variables 
of $t$ by main variables.

Of course, $d([s],[s']) = 0$, but if $s$ is very different from $s'$,
then so are $t$ and $t'$ from each other, and so 
$\ell(t \na t')$ may be very big because of all the main variables,
it contains much necessary information and $t \na t'$ cannot be simplified (whereas $s \na s'$ could) ,
and so
$$d([t ],[t'])>>0$$
which violates the $\Delta$-inequality
$$d([t ],[t']) \le d([t ],[s]) +d([s ],[s']) + d([s' ],[t']) \approx 0$$

It may also not help to redfine $L$ by giving main variables more
weight than auxialliry variables, as we probably shift only the problem and will never get it right for all sentences of $S$.

\end{example}

\subsection{}

\begin{example}[Intersection or Difference Element does not exist]

$$\alpha:= \quad \ul b(x):= x^2+1 $$
%$$\beta:= \ul y(x):= b(x)(b(x)+x), \quad 
$$\beta:= \quad  j(x):=x  \quad \ud 
\quad \ul y := b (b+j)$$
$$ %\; 
\gamma:= \quad \ul z(x) := x^4+x^3+2 x^2+x+1 
= (x^2 +1)(x^2 +x +1)$$
$$\delta:= \quad  \ul k(x):=x$$

$$\beta:= \quad  y := b^2 +b 
$$ %\;

$$ %\; 
\gamma:= \quad \ul z(x) := x^4+3 x^2+2
$$

$$L(\alpha)=9,  L(\beta)= 14 %17, 
L(\gamma)=18, $$
$$\ell([\alpha \ud \beta]) =L(\alpha \ud \beta)  < L(\alpha)+ L(\gamma)= \ell([\alpha])+
\ell([\beta])$$
%= L(\alpha)+L(\beta) < $$
%L(y)=17, L(z)=18, L$$

$$\zeta([\alpha \ud \gamma] \backslash [\alpha])=
\ell([\alpha \ud \gamma]) - \ell([\alpha])$$

$$\zeta([\alpha \ud \gamma] \backslash [\gamma])=
\ell([\alpha \ud \gamma]) - \ell([\gamma]) = 25 -17 = 8$$

\end{example}

\subsection{}

\begin{definition}

$$\ol \ell(x):= \inf_{z \in \ol S} \ell( x \na z)$$ 

\end{definition}

\subsection{}

\begin{definition}

Let $(\ol S,\na_S,\ell_S)$ and $(\ol T,\na_S,\ell_S)$ be given. 
Set
$\ol S \na \ol T := \ol S \times \ol T$ 
%\{ (x,y) \in S \times Q|\,$
equipped with $\na$-operation and length function
%s $\na: \ol S \times \ol T \rightarrow \ol S \times \ol T$ and $\ell: \ol S \times \ol T \rightarrow \R$ defined by 
$$(s ,  t) \na (a,b) := (s \na_S a,t \na_T b)$$   
   $$\ell( (s,t)):= \ell_S(s) + \ell_T(t)$$
   for all $(s,t), (a,b) \in \ol S \times \ol T$

\end{definition}

%Identifiying 
Embedding and identifying $\ol S \rightarrow \ol S \na \ol T$ via $s \mapsto (s,\emptyset)$ and similarly so for $\ol T$, we may write
 $(s,t) = s \na t$. 
 
Instead of $(s,t)$ naturally write $s \na t$. 

nicht idempotent!

\subsection{}

\begin{lemma}
$\inft \ell$ is the maximal $\ell'$-function which is monotonely incresaing, and such that $d_\ell$ satisfies $\na$ and $\Delta$-inequ, 
and $d_{\ell'} \le d_\ell$
%^

%$$\inft \ell = \max \{$$
\end{lemma}

\begin{proof}
$\inft \ell$ is a solution, as $\inft \ell \le \ell$

start above procedure with $d_\ell \ge d_{\ell'}$

eher F

ginge es ohne bar $\ell$ ? (unteres kapitel)
\end{proof}

\subsection{}

\begin{lemma}			\label{lemma511}

$$0 \le ...  \le d^{(3)} \le \tilde {\hat {d^{(2)}}}  \le d^{(2)} \le \tilde {\hat d} \le d $$

pktweise grenzwert $D$ erfüllt 

$$D= \ell(x \nabla y) - \ell(x) \ad \ell(y)$$

wo $\ell(x)= D(x,\emptyset)$

\end{lemma}			\label{lemma512}

\subsection{}

\begin{definition}

$\ell^{(1)} := \ell$ and $d^{(1)} := d = d_\ell$

$$ \tiha {\ell}(x):= \tiha{ d}(x,\emptyset)$$

$$\ell^{(2)} := {\ol {\tilde {\hat {\ell}}}}$$

$$d^{(2)}:= d_{\ell^{(2)}}$$ 

\end{definition}

\subsection{}

\begin{lemma}
Assume $d_{\ell_T}$ satisfies $\na$-ungl

Then $T$ stetig in $\e$ $\Leftrightarrow$ $T$ stetig

Assume also $d_{\ell_S}$ satis $\na$-ungl

$T$ glm stetig

F, da wie schon gesagt, $Ty$ nicht gegegn 0 geht, siehe zettel
\end{lemma}

\subsection{}

\begin{definition}

$\ol S$ hat keine var-transf mehr ,

$T: \ol S \rightarrow \ol S$ preserves $\na$

\end{definition}

\begin{definition}
$$T_1 \le T_2$$ 
wenn
$$T_1 x \le T_2 x \quad \forall x \in \overline S$$

\end{definition}

\begin{lemma}
$$S \le T \Leftrightarrow $$
$$Sx \le Tx \; \forall x  \Leftrightarrow$$
$$S \nabla T = T$$
\end{lemma}

\subsection{}

\begin{definition}
Homomorphism $T:(S_1,d_1) \rightarrow (S_2,d_2)$ 
$$T(\e) = %\equiv 
\e$$
$$T(x \ud y) = %\equiv 
Tx \ud Ty$$
$$x \bot y \Rightarrow Tx \bot Ty$$
$$\forall \phi \in P_S: \exists \psi \in P_T : \forall x \in S:  \qquad T(\phi(x))= \psi (T(x))$$

Descentiert dann 
$$T: (\overline S_1,d) \rightarrow (\overline S_2,d)$$
muss aber nicht steig 

--

ev auch $T(\phi(x))= \psi (T(x))$ ($\forall \phi : \exists \psi$) 

und 

$\phi(x) \bot a,b \Rightarrow  \psi (T(x)) \bot T(a), T(b)$

(wegen der regel axiome oben )
 
 letztes ist aber dann automatisch mit oben $x \bot y \Rightarrow Tx \bot Ty$
 
\end{definition}

\subsection{}

\begin{lemma}
Equivalent:

$$d(x,y)= \ell(y) - \ell(x)$$
$$x \le y$$
besser
$$d(y, x \na y)=0$$

(jedoch unklar und unwahrscheinlich für $\tilde d$)

umkehrung allg falsch, nur im sinne gleichheit metrik, siehe bew,
jedoch dann richting für $d^{\infty}$ (?)
\end{lemma}

\begin{proof}

((
vmtl aus der glg

$$d(a, a \na b) = \ell(a \na b) - \ell(a) = d(a,b)$$
))

if $d(x,y)= \ell(y)-\ell(x) = \ell(x \na y) -\ell(x)$  (bacause then $\ell(y) \ge \ell(x)$) then $\ell(x \na y) = \ell(y)$ then 

$$d(y,x\na y)= \ell(x \na y) - \ell(y)=0$$

also $x \le y$ im sinne von gleichheit bez metrik $d$

\end{proof}

\subsection{}

alle regelen der order relation, nur nicht zsmhg metrik
$$ x \le x$$
$$ x \le y %\; \ad y 
\le z \; \Rightarrow \; x \le z$$

(i)
$$x \equiv a \;\ud \;x \le y \Rightarrow  a \le y$$

\subsection{}

\begin{lemma}
$$d(a,b)= d(a , a\nabla b) \qquad if \ell(a) \le \ell(b>)$$

wenn $d(Ta,Tb) \le M d(a,b)$ für alle $a \le b$

und $\ell(a) \le \ell(b)$ impliziert $\ell(Ta) \le \ell(Tb)$

dann $\|T\| \le M$

	$$\sigma(a,b)= d(a , a\nabla b) + d(b , a\nabla b)$$

wenn $d(Ta,Tb) \le M d(a,b)$ für alle $a \le b$

dann $\|T\|_\sigma \le M$

\end{lemma}

$$d(a,b)= d(a , a\nabla b)  \ud d(a , a\nabla b) $$
damit kennt man $d$ (und $\sigma$) wenn man die werte
auf $a \le x$ kennt

\begin{proof}
Ang $a,b \in \ol S$ bel, obda , $\ell(a) \le \ell(b)$, dann

$$\ell(Ta) = d(\emptyset,Ta) \le M d(\emptyset ,a) = M \ell(a)$$

nach vor, und damit

\end{proof}

\subsection{}

oder abstand der 
%norm der 
funktions: $d(f(x),g(y)) \le M d(x,y)$

bzw
$$d(f(x),f(y)) +  d(f(x),g(y)) + d(f(y),g(x)) + d(g(x),g(y))$$

\begin{lemma}

was vorher kleiner gleich, wird auch später, weil immer mehr dazu kommt, also
$$x \le  y \Rightarrow x \precsim y$$

\end{lemma}

\subsection{}

\begin{lemma}

Maesure of durchschnitt
$$``\ell(x \cap y)'' = \ell(x) + \ell(y) - \sigma(x,y) -F$$
$$``\ell(x \cap y)'' = \ell(x) \ud \ell(y) - d(x,y) F$$

$$\delta(x,y) = -\sigma(x,y) + \ell(x \nabla y)$$
\end{lemma}

\subsection{}

\begin{lemma}

There is a map $\zeta$ from the boolschen expressions 
in vars $x,\ldots, $ in $\ol S$ to $\R$ 
by the formula
$$\zeta(x_{1} ... x_{n}  \ol y_1 ... \ol y_m)
= \zeta (X \backslash (X  y_1 + ... X  y_m))$$
$$= \ell(X) - \zeta(X  y_1 + ... X  y_m)$$
$$= \ell(X) - \zeta(X  (y_1 \nabla ... \nabla  y_m))$$

$$ = \zeta(X) - \zeta(x_1 \cap ... \cap x_n \cap y)$$
wo $y= y_1 \na ... \na y_m \in \ol S$

$$=  \zeta(y_1 \nabla ... \nabla  y_m) - \zeta(X  \nabla (y_1 \nabla ... \nabla  y_m))$$
hier formal für durchschnitt

where
$$\zeta(x_1 \cap y \cup ... \cup x_n \cap y) = \ell(x_1 \cap y) + .. \ell(x_n \cap y)
- \ell(x_i \cap x_j \cap y) + \ell(x_i \cap x_j \cap x_s \cap y ) \pm ...$$

dies formel für vereingung , ausgedrückt in $\zeta$ der einzelteiel

kann eigentlich direkt $\zeta$ auf den einzelteilen definieren,
mit der def von durschnitt , vereinigung, und subtraktion,
denn $\zeta$ homomoprhism

$\zeta (X \backslash Y)= \zeta(X) - \zeta( X \cap Y)$
\end{lemma}

\subsection{}

\begin{lemma}				\label{lemma514}

$D \le \tilde {\hat d}$
\end{lemma}

\begin{proof}

$$D(x,y) = \tiha{\ell} (x \na y) - \tiha \ell (x) 
\le \tiha d (x \na y, x) $$
$$= \tiha d(x \na y, x \na x) \le \tiha d(x,y) + 0$$ 
\end{proof}

\subsection{}

\begin{lemma}
setze $D(x,y) := \ell(x \nabla y) - \ell(x) \ad \ell(y)$

für ein $d$, dass die $\nabla$-ungl und $\Delta$-ungl erfüllt 

und $\ell(x):= d(x,\emptyset)$

dann 

$D(x,y) \le d(x,y)$

zumindes für monotone $x \le y$

(genauer anschauen , welche monotonie)
\end{lemma}

\begin{proof}

$$2 d(x \na y,\e) - d(x,\e) - d(y,\e)$$
$$\le d(x \na y,x) + d(x \na y,y)$$
$$\le d(x,y)  F$$
 
\end{proof}

\subsection{}

$$\ell([s]) = \inf \{ L(t) \in \R|\, t \in \calt, \;s \equiv t\}$$

bzw
$$\ell([s]) = \inf \{ L(t) \in \R|\, t \in \calt, \exists z \in S, \phi \in P, 
\mbox{sodass }   \;s \ud \phi(z) \equiv t\}$$

\subsection{}

\begin{definition}[Equivalence relation]		

\label{equrel}

Let be given an equivalence relation $\equiv$ on $\calt$.

The following relations should hold:

$(s \ud t) \ud z \equiv s \ud (t \ud z)$  assoziativity
((müsste alle verschachtelungen nehmen))

besser ganz ohne klammern, autom assoiozativ, wenn man 
keine prioritäten mit anderen zeichen zulässt 

$s \ud t  \equiv t \ud s $   commutativity 

besser: $a \ud s \ud t  \ud b \equiv a \ud t \ud s \ud b$

$s \ud s \equiv s$   idempotence

$s \ud \emptyset = s$   neutral element

lattice

Then

%$s \equiv t $ and $tz\bot s ,t $ then $s \ud z \equiv t \ud z$
$s \equiv s' $ and $ s  \bot x , s' \bot x$ then $s \ud x \equiv s' \ud x$

$\ud$ preserves equivalant elements, if other independent 

$\phi(x) \bot x,y$ then $x \ud \phi(x) \ud y  \equiv x \ud y$
$\phi(x)$ is a superflous copy and can be removed 

$x \equiv \phi(x)$  vraibales transformation does not change

later:

$x \equiv \emptyset$  if $x$ contains only auxilliary vars

\end{definition}

\subsection{}
((
$$= d_{\ell^{(n)}}  (y \na w, x \na w) $$
$$\le \tiha d^{(n-1)} (y \na w, x \na w) $$
$$\le \tiha d^{(n-1)} (y , x )  $$

$$\le d_\ell  ( y , x  ) $$ 

letzte ungleichung falsch!

$$\le \tiha {d^{(n-1)}}  (y \na w, x \na w) $$
))

$$\sigma_{\ol {\tiha \ell}}(x,y) = 2 \inf_y \tiha \ell(x \na y \na z)  - 
\inf_w \tiha \ell(x \na w)  -  \inf_w  \tiha \ell(y \na w)$$
$$\approx  2 \inf_y \tiha \ell(x \na y \na z)  - 
 \tiha \ell(x \na w)  -  \tiha  \ell(y \na u)$$
$$\le   2  \tiha  \ell(x \na y \na w \na u)  - 
 \tiha \ell(x \na w)  -   \tiha  \ell(y \na u)$$
 $$= \sigma_{\ol \ell}(x \na w ,y  \na u)  \le 
  \sigma_{\ol \ell}(x ,y )   \quad  F!$$

\fi

\if 0
\chapter{}

\section{}

\begin{lemma}

$$\tilde { \hat \sum} (x,y) \le \tilde { \hat \sigma}(x,y)$$

\end{lemma}

\begin{lemma}

$D \le \Sigma \le d \le \sigma$

\end{lemma} 

\begin{lemma}

wenn $d(x,0) \neq 0$ dann $\tilde d (x,0) \neq 0$

\end{lemma} 

\begin{lemma}

wenn $d$ $\nabla$ und $\Delta$-ungl erfüllt,

dann $\Sigma$ beides ebenfalss

\end{lemma} 

\begin{proof}

setze also $\ell(x):= d(x,0)$

(i) $\Delta$-ungl

nach lemma oben , genügt zu zeigen
$$\ell(z \ud x) + \ell(y) \le \ell(z \ud y) + \ell(y \ud x)$$

nun
$$\ell(z \ud x) + \ell(y) \le d (z \ud x, z \ud y) + 
+ d(z \ud y, 0) + d(y,0)$$
$$\le  d ( x,  y) + 
+ d(z \ud y, 0) + d(y,0)
\le \ell(z \ud y) + \ell(x)$$
$$\le \ell(z \ud y) + \ell(x \ud y)$$

(ii) $\nabla$-ungl

besagt dass
$$\sigma(x \ud z,y \ud w) \le \sigma(x,y) + \sigma(z,w)$$
ist
$$2 \ell(x \ud z \ud y \ud w) + \ell(x) + \ell(y) + \ell(z) +\ell(w)
\le  2 \ell(x \ud y) + 2 \ell(z \ud w) 
+ \ell(x \ud z) + \ell(y \ud w)$$

nun
$$ \ell(x \ud z \ud y \ud w) + \ell(x) + \ell(y)
\le 
d(x \ud z \ud y \ud w,x \ud y) + \ell(x \ud y) + \ell(x) +\ell(y)
$$
$$
  \le 
d(x \ud z  \ud w,x) + \ell(x \ud y) + \ell(x) +\ell(y)
$$$$
  \le 
\ell(x \ud z  \ud w) + \ell(x \ud y)  +\ell(y)
$$$$
  \le 
d(x \ud z  \ud w, x \ud y) + 2 \ell(x \ud y)  +\ell(y)
$$$$
  \le 
d(z  \ud w,  y) + 2 \ell(x \ud y)  +\ell(y)
$$$$
  \le 
\ell(z  \ud w ) + 2 \ell(x \ud y) 
$$

and
$$
 \ell(x \ud z \ud y \ud w) + \ell(z)
$$ 
$$
 \le \ell(x \ud z \ud y \ud w, x \ud z) +  \ell(x \ud z) + \ell(z)
$$ 
$$
 \le \ell(z \ud y \ud w,  z) +  \ell(x \ud z) + \ell(z)
$$ 
$$
 \le \ell(z \ud y \ud w) +  \ell(x \ud z)
$$ 

beide diese ungl zusmmaen ergibt die ungl kriterion für
die dreiecksungl für d

\end{proof}

\begin{lemma}
wenn $d$ die dreicks und $\nabla$ ungl erfülltl, dann
gilt für $\ell(x):= \sigma(x,0)$ dass
$$\ell(x \ud z) + \ell(y) \le \ell(x) + \ell(z \ud y)$$

damit gleichheit

was noch viel schärfer al kriterion für $\Delta$-ungl

(siehe bewise oben, welche mit der glg vereinfacht dargestellt werden können)
\end{lemma}

ALLES f, dreecksungl umgekehrt angwedndet

\section{notitzen}

$$ \sigma(x,y) - \sigma(x,x \nabla y) -\sigma(y,x \nabla y)=0$$

%$$ \sigma(x,y) - 2\sigma(x,x \nabla y) - 2 \sigma(y,x \nabla y)=0$$

$$ \sigma(x,y) - \sigma(x,x \nabla y) = -\ell(y) + \ell(x \nabla y) 
= -\delta(y,x)$$

$$ \sigma(x,y) - \sigma(x,x \nabla y) + \sigma(y,x \nabla y)= -2 \ell(y)+ 2 \ell(x \nabla y)$$

$$- \sigma(x,x \nabla y) + \sigma(y,x \nabla y)  = \ell(x) - \ell(y)$$

$$= \sigma(x,y) - 2\sigma(x,x \nabla y)$$

wenn $\sigma$ stetig, dann $\ell$ stetig

$x \nabla y$ stetig mit $\nabla$-ungl

absteigende folge $\tilde {\hat d}, D, ....$  damit
limit metrik

$\nabla$-ungl --> operation  $\nabla$ wohdel auf $\tilde {\hat S}$

%\if 0 

\if 0
\section{}

philisoph (implication) beobachtung: natur kl (wenig) computing, viel data

datenfelder, also viele hauptvars, schreib man als funktion:

$f:\N \rightarrow \R$, anstatt $f_0=1,f_1=1$ etc lauter hauptvars
 
\fi

\chapter{}

\begin{lemma}

$$\delta(x,a):= \ell(x) + \ell(a) - \ell(x \nabla a)$$

increasing in $x,a$ simultanously

$\Leftrightarrow$   increasing in $a$ (or $b$)   for all $x$   %and $b$ seperat

$\Leftarrow$  $\sigma$ erfüllt die $\Delta$-ungl

\end{lemma}

\begin{lemma}
Monotonie od $d$ and $\sigma$,

N, da $a \le b \le x$, aber

$$d(b,x) \le d(a,x)$$
$$d(a,b) \le d(a,x)$$
 
\end{lemma}

\begin{lemma}
(vmtl)

$\Delta$-ungl für $\sigma$

$\Rightarrow$ 

$\tilde \ell(x \nabla y) \le \tilde \ell(x) + \tilde \ell(y)$
 
\end{lemma}

\begin{lemma}
$d$ erfüllt $\nabla$-ungl  (($\Delta$-ungl))

$\Rightarrow$ 

$\tilde d$ erfüllt $\nabla$-ungl (($\Delta$-ungl))

\end{lemma}

%$\tilde {\hat d}$

\begin{definition}

$$\hat d(x,y) := \inf_{x \equiv x_1 \nabla \ldots \nabla x_n, \,  y \equiv y_1 \nabla \ldots \nabla y_n}
	d(x_1,y_1) + \ldots d(x_n,y_n)$$

\end{definition}

\begin{lemma}
$\hat d$ erfüllt $\nabla$-ungl

$$\hat d (x \nabla y,z \nabla w) \le d (x ,z )  + d ( y, w) $$
\end{lemma}

\begin{lemma}
$\tilde {\hat d}$ erfüllt $\nabla$-ungl und $\Delta$-ungl
%$D:=
\end{lemma}

%\begin{lemma}
%$\tilde {\hat d}$ erfüllt $\nabla$-ungl und $\Delta$-ungl
%\end{lemma}

\begin{lemma}
setze $D(x,y) := \ell(x \nabla y) - \ell(x) \ad \ell(y)$

für ein $d$, dass die $\nabla$-ungl und $\Delta$-ungl erfüllt 

und $\ell(x):= d(x,\emptyset)$

dann 

$D(x,y) \le d(x,y)$

zumindes für monotone $x \le y$

(genauer anschauen , welche monotonie)
\end{lemma}

\begin{lemma}

zweite $\Delta$-ungl

$$|d(x,z) - d(z,y)| \le d(x,y)$$

\end{lemma}

\begin{remark}
man kann wahrscheinlich aus der boundedness of $T$ bez $L$ 
(also $L(Tx) \le M \ell(x)$) 
boundedness of $T$ bez $d$ nicht abschätzen,

da 
$\ell(y)-\ell(x)=d(x,y)$  für $x \le y$

($d(x,y)$ kan sehr kl werden, aber beide $\ell(x),\ell(y)$ sehr gr)
 
 \end{remark}

\begin{remark}
achtung, ang man kann unwichtige vars löschen,  dann
$$\ell(x \ud y) \le \ell(x) + \ell(y)= \ell(x)$$

wenn $y$ keine hauptvars enthält
\end{remark}

\begin{example}
zb, 
$x= main()\{ f(u) \}$, $f(u) = ......$

dann
$$\ell(x \ud f) \le \ell(x) \approx 0$$
da $\ell(f) = 0$ da nur nebenvar

\end{example}

\begin{remark}
also $\ud$-ungl
$$d(x \ud w, y \ud w) \le d(x,y)$$
maximal richtig, wenn man nicht unwichtige vars löscht in $\equiv$

(zu klären)
\end{remark}

\section{}

\begin{lemma}
Assume $\exists M > 0$ s.t. 

$|\ell(y) -\ell(x)|\ge M$ or $=0$

$T:\ol S \rightarrow \ol T$ well defined

(($n \ell(x) \le \ell(T x) \le N \ell(x)$))

$$ \ell(x) + a \le \ell(Tx) \le  \ell(x) + A$$

$$ | \ell(Tx) -  \ell(x) | \le A$$

then $T$ glm stetig 

wirkt F!
\end{lemma}

\chapter{}

$$\tilde d(x,y) = \inf_{n \ge 0, a_1,\ldots,a_n}  d(x,a_1) + d(a_2,a_3) + \ldots + d(a_{n-1},a_n) + d(a_n,y)$$

$$x \cong y \Leftrightarrow \tilde d(x,y)=0$$

$$\tilde S= S/ \cong$$

If $x \bot y$ (brauche??) then 
$$x \precsim y  \Leftrightarrow x \ud y \cong y$$

$a \le b , x \le y $ then 
$$a \nabla x \le b \nabla y$$

$$ a \cap b = \emptyset \Leftrightarrow d(a \ud b, \emptyset)= d(a,\emptyset) + d(b,\emptyset)$$

IS?
$$\tilde d(x,y)= \tilde \ell(x \nabla y) - \tilde \ell (x) \ad \tilde \ell(y)$$

Maesure of durchschnitt
$$``\ell(x \cap y)'' = \ell(x) + \ell(y) - \sigma(x,y)$$
$$``\ell(x \cap y)'' = \ell(x) \ud \ell(y) - d(x,y)$$
$$``\ell(x \cap y)'' = \tilde \ell(x) + \tilde \ell(y) - \tilde \sigma(x,y)$$

$$``\ell(x \cap y)'' = \tilde \ell(x) + \tilde \ell(y) - \tilde \ell(x \nabla y)$$

Measure of complement
$$\ell(x \backslash y)= \ell(x \ud y) - \ell(y)$$

besser 
$$x \le y \Leftrightarrow x \nabla y \equiv y$$

denn dann
$$x \equiv a \;\ud \;x \le y \Rightarrow  a \le y$$

\begin{lemma}
Equivalent:

$$d(x,y)= \ell(y) - \ell(x)$$
$$x \le y$$

(jedoch unklar und unwahrscheinlich für $\tilde d$)
\end{lemma}

\begin{lemma}
$T:S \rightarrow T$ glm stetig dann $\tilde T : \tilde S 
\rightarrow \tilde T$ glm stetig
\end{lemma}

\begin{definition}
Homomorphism $T:(S_1,d_1) \rightarrow (S_2,d_2)$ 
$$T(x \ud y) = %\equiv 
Tx \ud Ty$$
$$x \bot y \Rightarrow Tx \bot Ty$$

Descentiert dann 
$$T: (\overline S_1,d) \rightarrow (\overline S_2,d)$$
muss aber nicht steig 
\end{definition}

\begin{definition}
$$T_1 \le T_2$$ 
wenn
$$T_1 x \le T_2 x \quad \forall x \in \overline S$$

analog
$$T_1 \precsim T_2$$
\end{definition}

bm-abstand

$$d(S,T) = \log \inf \{\|T\| \|T^{-1}\| \;|\; T :S \rightarrow T\}$$

\begin{lemma}
$$S \le T \Leftrightarrow $$
$$Sx \le Tx \; \forall x  \Leftrightarrow$$
$$S \nabla T = T$$
\end{lemma}

\begin{lemma}
$$a\le b \Rightarrow \ell(a)\le \ell(b)$$
$$d(a,b)= \ell(b)-\ell(a)  = \sigma(a,b)$$
$$a \le b \le c \Rightarrow d(a,b)+ d(b,c)= d(a,c)$$
\end{lemma}
 
gerichtete Menge
$$a,b \le a \nabla b$$

\begin{lemma}
$$d(a,b)= d(a , a\nabla b) \qquad if \ell(a) \le \ell(b>)$$
$$\sigma(a,b)= \sigma(a , a\nabla b)$$

wenn $d(Ta,Tb) \le M d(a,b)$ für alle $a \le b$

dann $\|T\| \le M$

besser für $\sigma$ 

\end{lemma}

\chapter{}

\begin{definition}[language]

Consider a language $(\cals_0,\cala)$. 
The alphabet $\cala$ contains at least in infinite set of variables
$\calx$ and a sign $\ud$. 
and brackets $($ and $)$. 

The variables is disjoint union $\calx= \calx_0 \cup \calx_1$
 main and auxilliary vars.
 
\end{definition}

\begin{definition}[Vras]

Global and lokal vars: 

e.g. $f(x,y):= 2x+y$

$f$ global var  (everyone can see), $x,y$ lokal vras

global /lokal must not mixed to avoid ambuigity

\end{definition}

\begin{definition}[bin ]

$\ud$ is   a binary operation and has the weakest binding
(for convenience, oherwise wrire always brackets)

$s,t \in \calt $ then $s \ud t \in \calt$

and $s \ud t \equiv (s) \ud (t)$

\end{definition}

\begin{definition}[Equivalence relation]

Let be given an equivalence relation $\equiv$ on $\calt$.

The following relations should hold:

$(s \ud t) \ud z \equiv s \ud (t \ud z)$  assoziativity

$s \ud t  \equiv t \ud s $   commutativity 

$s \ud s \equiv s$   idempotence

$s \ud \emptyset = s$   neutral element

lattice

Then

%$s \equiv t $ and $tz\bot s ,t $ then $s \ud z \equiv t \ud z$
$s \equiv s' $ and $ s  \bot x , s' \bot x$ then $s \ud x \equiv s' \ud x$

$\ud$ preserves equivalant elements, if other independent 

$\phi(x) \bot x,y$ then $x \ud \phi(x) \ud y  \equiv x \ud y$
$\phi(x)$ is a superflous copy and can be removed 

$x \equiv \phi(x)$  vraibales transformation does not change

\end{definition}

$\cals := \calt/\equiv$ (finite sequence of modifications)

Note, $\ud$ does not preserve $\equiv$ in general

Observe: $\phi(x \ud y)= \phi(x) \ud \phi(y)$

Understood: always write brackets if necessary, ie. $(x) \ud (y)$ instead of $x \ud y$

Later:

$x \ud y \equiv x$ when $x \bot y$ and $y$ contains only auxilliary vars

\begin{definition}[Morphism]

ALL structures go through, $\equiv, \ud$

vars go to vars (?), then main to main, aux to aux

\end{definition}

\begin{definition}[Norm]

$\|T\|:= \sup_{x,y} d_2(Tx,Ty)/d_1(x,y)$ for $T:(\cals_1,d_1) \rightarrow (\cals_2,d_2)$ 

if even finite

\end{definition}

\begin{example}
$T:prg -> math$ easy, 

ersetze aller vras $x$ durhct zeitindizierte vars $x_t$,

oder in assembler wie beschrieben

mgkehrt schwer: eixtenz quantoren:
es gibt eine lsg, es gibt eine permutation sodass sortiert, etc.

\end{example}

\begin{lemma}

$\forall x,y$   
%,Y$  with $x \bot Y$

$$%\ell (x) \le 
\ell(x) \ud \ell(y) \le \ell( x \ud y)  \qquad if \,x \bot y$$

$$\ell( x \ud y)   \le \ell(x) + \ell(y)$$

$$|\ell(x)-\ell(y)| \le d(x,y) \le  \ell(x) \ud \ell(y)$$

$$|\ell(x)-\ell(y)| \le \delta(x,y) \le  \ell(x) + \ell(y)$$
 
$$d(x,y) \le \delta(x,y) \le 2 d(x,y)$$

$$d(x,\emptyset)= \ell(x)$$

all estimates are sharp(?)

\end{lemma}

\begin{lemma}
$\delta$ is a metrik if and only if 

$$\ell(x \ud z) + \ell(y) \le \ell(x \ud y) + \ell(y \ud z)$$
 
 for all $x,y,z$ such that they are mutually independent from each other 
\end{lemma}

achtung:  $d(x,y)= 0 \Leftrightarrow x = y$

\begin{lemma}

$\|S \circ T\| \le \|S \| \|T\|$

$\|S \ud T\| \le \|S \| + \|T\|$

(((

$$d(Tx,Ty) \le \|T\|_\ell d(x,y)  \quad if \, \|T\|_\ell \le \|T^{-1}\_\ell|$$
$$\|T\| \le \|T\|_\ell \quad if --$$

$$\|T\|_\emptyset \le \|T\|$$
)))  

alles F
\end{lemma}

\chapter{}

\begin{definition}[language]

Consider a language $(\calt,\cala)$. 
The alphabet $\cala$ contains at least in infinite set of variables
$\calx$.

Let $\equiv$ be an equivalence relation on $\calt$ and set
$\cals := \calt /\equiv$ set-theoretically. 

\end{definition}

\begin{definition}[And]
Let $\ud: \calt \times \calt \rightarrow \calt$ 
be a map which well-defined decends to a 
Let $\ud: \cals \times \cals \rightarrow \cals$ be an
associative, commutative and idempotent operation. 

\end{definition}

\begin{definition}[Varaibles permutation]

Let $\psi:\calx \rightarrow \calx$ be a permutation (ie. bijective function) on the variable set, and let 
$\phi:\calt \rightarrow \calt$ denote the canonically bijective map induced
by $\psi$. 

Write $P$ for the set of all such functions $\phi:\cals \rightarrow \cals$. 

Alos, assume $\phi$ induces well-defined (necessarily bijective) map on $\cals \rightarrow \cals$

Want $\phi(x \ud y)= \phi(x) \ud \pi(y)$ $\forall x,y \in \cals, \phi \in P$ 

\end{definition}

\begin{definition}[length function]

Set $L(t)=$ the length of a string $t \in \cals_0$. 
A length or cost function. 

Satisfying
$$L(\emptyset)  = 0$$

$$ L(x \ud y) = L(x) + L(y)$$

Assume $L(s) = L(\phi(s))$ whenever 
$\phi \in P$

((Perhaps also 
$\ell_0(x \ud \phi(x))= \ell_0(x)$ 
))

Set 
$$\ell(s) = \inf \{ \ell_0(t) \in \R|\, t \in \calt, \;s \equiv t\}$$

\end{definition}

\begin{corollary}

$\ell(x \ud \phi(x))= \ell(x)$

$L(x \ud \phi(x))= L(x) + L(\phi(x))= 2 L(x) =L(x \ud x)$

$\Rightarrow$ $\ell(x \ud \phi(x)) \le \ell(x \ud x) = \ell(x)$

$\ell(x \ud \phi(x)) \le L(y \ud \phi(y)) \le L(y \ud y)$ $\forall y \equiv x$

$\rightarrow$
$\ell(x \ud \phi(x)) \le \ell(y \ud y) = \ell(y)$

und

$\ell(x) \le L(y) \le L(y \ud \phi(y))$  $\forall y \equiv x$

F, faktor 2 bleibt

\end{corollary}

$\ell(x \ud \phi(x)) = \ell(x)$, da
zb wenn $x$ ist nur funktionsdef $f(a)=3a$, dann ist
$\phi(x)$ nur $g(b) =3b$, also die gleiche fktsdef, und braucht
man nicht 2 mal, also kann gelöscht werden

wie mit ungenauigkeit, zb programm näherungslösung
eriner diffglg, dann abbildung auf exakte math,
dort kürzer,

\begin{lemma}

$x \ud \phi(x)  \ud z =
x \ud \phi(x)  \ud z \ud \psi (z)  = x \ud z$

$x \ud \phi(x)  \ud z = \psi(x \ud z)$

\end{lemma}

\begin{definition}[macro language]

$(\cals,\cala)$ lanuage, add $\calx$ and sign $\ud$ to $\cala$
if necessary.

funcionsdef:  $f(x,y,z):= a_1 a_2 a_3 .... a_n$ ($a_i \in \cala$)

functioncall:  $f(x,y,z)$ wil be replaced by $a_1 .... a_n$

\end{definition}

\section{}

bsp:

netzwerk, mit knoten und verbindungen

nehme tabelle

$x_1,x_2 ..... $ inf auf drähten

$x_2 = f(x_i, x_j,x_k)$

$x_3 = f(x_a,x_b,x_c,x_d)$

etc diese formel tabelle ca gleiche information wie eine reine
datentabelle, was mit was verbunden

$f$= neuron etc.

\section{}

kann alle formeln in tablle codieren, wie mit computing

----

-> computing auch über enternte zeitpunkte,
zb $x_n = x_{n+k} - y_{n+m}$

----
$f_{n+1}(m) = [f(p_n)= ``sta'' \ud f(p_n+1)=m] a_n  +
(1-[f(p_n)= ``sta'' \ud f(p_n+1)=m] ) f_n(m)$

\section{}

kopien können immer gelöscht werden, auch bei
hauptvariablen

\section{}

bruchbarkeit:

verwendungshäufigkeit/ lange von formel

anzahl folgerungstheoreme /länge von formel

---

formel hat symmetry  -> hat geometrische bedeutung

(inner prod --> rotation, verrtausch x,y bei reel inn prod)

---

zeichnungen --->  strichbewegungen

\chapter{}

\begin{definition}[language]

Consider a language $(\cals_0,\cala)$. 
The alphabet $\cala$ contains at least in infinite set of variables
$\calx$ and a sign $\ud$. 
and brackets $($ and $)$. 

\end{definition}

\begin{definition}[Equivalence relation]

Let be given an equivalence relation $\equiv$ on $\cals_0$.

Set $\cals:= \calt/\equiv$ the set-theoretical 
quotient. 

The map  $\Phi: \cals \rightarrow \cals \rightarrow \cals$
defined by  $\Phi([s],[t]):= [( s ) \ud (t)]$ assume 
to be well defined, i.e. the equivalence relation goes through. 

Write simple $s \ud t := \Phi(s,t)$ from now on.

Assume further that $\Phi$ is commutative, associative and
idemtpotent (i.e. $s \ud s =s$).    

I.e. $(s \ud t) \ud z = s \ud (t \ud z)$ for $s,t,z \in \cals$. 

\end{definition}

\begin{definition}[Varaibles permutation]

Let $\phi_0:\calx \rightarrow \calx$ be a permutation (ie. bijective function) on the variable set, and let 
$\phi:\cals \rightarrow \cals$ denote the canonically bijective map induced
by $\phi_0$, by just replacing %any 
each variable-letter $x$ occuring in a sentence by the variable-letter $\phi_0(x)$ and letting all other letters unchanged. 
\end{definition}

\begin{definition}[length function]

Set $\ell_0(t)=$ the length of a string $t \in \cals_0$. 
A length or cost function. 

Write $P$ for the set of all such functions $\phi:\cals \rightarrow \cals$. 
%  which are just letter-by-letter  

Assume $\ell_0(s) = \ell_0(\phi(s))$ whenever 
$\phi : \cals_0 \rightarrow \cals_0$ is a function which
only permutes the variable set.

Set 
$$\ell(s) = \inf \{ \ell_0(t) \in \R|\, t \in \calt, \;s \equiv t\}$$

\end{definition}

\begin{definition}[metrik]

We define a metrik 
$d: \cals \times \cals \rightarrow \R$ by  
$$d(s,t) = \ell(s \ud \phi(t)) - \ell(s) \ad \ell(t) $$
where $t'$ is a copy of $t$ but every variable occuring
in $t$ is replaced by 

\end{definition}

ev keine kosten für $\phi(x,y,z)$, denn damit mettrik
$d(z,z)=0$, denn
$\ell(z(a)=... \ud z(b)=.....) = \ell(f(a)=.... \ud f(a) \ud f(b))
=\ell(f(a))$

ein prg kann sehr aufwenig sein, etwa in assembler
ud für optimerung, (zb bildschrim scrollen, anstatt 
einfach zu sagen, man stelle den ausschnitt dar), 
sodaass eine übersetzung des prgs in einfache
nichtoptimierte mathematik das prg viel kürzer machen kann,
und so kann man prgs viel leichter vgleichen und die distanz
zu einander abschätzen

\chapter{}

\begin{definition}

Consider a language $(\cals_0,\cala)$. 
The alphabet $\cala$ contains at least in infinite set of variables
$\calx = \calx_m \cup \calx_a$ (disjoiint union), 
where $\calx_m$ is main variable set, and $\calx_a$ is auxilliary variable set, and a sign $\ud$. 
and brackets $($ and $)$.

Let be given an equivalence relation $\equiv$ on $\calt$.

Set $(\cals,=) := (\calt/\equiv, \equiv)$.  

$s ,t \in \cals \Rightarrow s \ud t \in \cals, s \ud t = t \ud s, s \ud s = s$

There may be an equivalence relation on $\cals$, that is, 
$[s]= [t]$ in $\cals$ for formal text strings $s,t$. 

\end{definition}

\begin{definition}

Set $\ell_0(t)=$ the length of a string $t \in \calt$.

Set 
$$\ell(s) = \inf \{ \ell_0(t) \in \R|\, t \in \calt, \;s \equiv t\}$$

\end{definition}

\begin{definition}[metrik]

We define a metrik 
$d: \cals \times \cals \rightarrow \R$ by  
$$d(s,t) = \ell(s \ud \phi(t)) - \ell(s) \ad \ell(t) $$
where $t'$ is a copy of $t$ but every variable occuring
in $t$ is replaced by 

\end{definition}

\chapter{}

\begin{definition}

Let $(\cals,\cala)$ be a language. 
Let the alphabet $\cala$ contain at least a 
sign $\ud$ and an infinite set $\calx$ of variables. 
 
Let $\ud : \cals^2 \rightarrow \cals$ be an 
associative, commutative multiplication with 
$s \ud s = s$ and $s \ud \emptyset = s$ for all $s \in S$. 
 
Let $L : \cals \rightarrow \R$ be a length function 
such that $L(s \ud t)= L(s) + L(t)$, $L(\emptyset)= 0$. 

Let $\equiv$ be an equivalence relation on $\cals$. 

Define $\ell(s)= \inf \{t |\, t \in S, t \equiv s\}$ .

\end{definition}

\begin{example}

The prototype example are physical theories. 

A formula may be given like $\Delta \psi = \partial_t \psi$
(free wave equation). 
Or the free Klein-Gordon equation 
$\Delta \psi = \partial_t^2 \psi$. 
The $\ud$ means than ``and'', 

For example 
$d(w,k)= %d((\Delta -\partial_t) \psi
 \ell(w \ud k) - \ell(w)$
 $= \ell (w \ud (w + ``0 = \partial_t \psi''))- \ell(w)
 \ell(0= \partial_t \psi))$
  
  Or $D\psi = 0$ beeing the Dirac equation, 
  that is, $D^2 = \Delta- \partial_t^2$, in other words,
  the Dirac equation is the square root of the Klein Gordon equation $K \psi =0$. 
   
   Then
   $d(D \psi = 0, K \psi = 0)= 
   \ell(F \psi = 0 \ud F^2 \psi = 0  \ud F:= D)
   - \ell( F \psi = 0 \ud F:= D)
   \le \ell (F^2 \psi = 0) \le 5$ 
   
   (5 signs)
   
   The point is that $F$ is just an undefined auxiliary 
   varaible and the distance is smaller than the short 
   general 
\end{example}

\begin{definition}

Define a metrik $d:S^2 \rightarrow \R$ by
$d(s,t)= \ell(s \ad t) - \ell(s) \ad \ell(t)$

Define a metrik $d:S^2 \rightarrow \R$ by
$\sigma(s,t)= 2 \ell(s \ad t) - \ell(s) \ad \ell(t)
- \ell(s) \ud \ell(t)$

\end{definition}

\if 

\chapter{}

\begin{definition}

$\call =(\cals , \calv, \calx, \cala)$

$\cala = \{ud\} \sqcup \calx \sqcup \cala'$

$s \ud t$ for $s,t \in \cals$

$\equiv$ on $\cals$

$\ell(s)= $ length of $s$, or  cost-function 

$S/\equiv$

$\ell([s])= \inf \{t \in \cals| \, s \equiv t \in \cals\}$

$s \hat \ud t := s \ud t'$ mit alle vars in $t$ disjunkt zu denen von $s$ ersetzt
\end{definition}

\begin{definition}

$d(s,t)= \ell(s \ud t) - \ell(s) \ad \ell(t)$

$\sigma(s,t)=  2 \ell(s \ud t) - \ell(s) \ad \ell(t) - \ell(s) \ud \ell(t)$

\end{definition}

\begin{example}

physik theories

präd logik

computer lang

macro languge from formal language

bool alg

images

schrödinfer /dirac etc
\end{example}

\begin{definition}

diverse verhältnisse: 

"norm" $\|T(x)\| = \sup_{x \in \cals} \ell(T(x))/\ell(x)$
für $S,T:\cals \rightarrow \calt$

$D(T,S)= \sup_{x \in \cals} d(T(x),S(x))/ \ell (x)$

schwiriegkeit $(x)$ = $\ell(x)/$ zeit $x$ zu finden

$\sup_{x,y} d(Tx,Ty)/d(x,y)$
\end{definition}

vgl metriken $D(Tx,Ty)/d(x,y)$ zb formale beschreibung $(d)$, isomorphismus (content) $(D)$

hauptvars -> deren def darf nicht entfernt werden, nebenvars ja 
(als weiteres example)

\if 

\title{Information content in formal languages}

\maketitle

\chapter{}

wie norm 

$"|x-y|" = d(x(E),y(e)) /\ell(e)$

wo $E=$ input

\section{}

\begin{corollary}
$d(A \cup B , B) \le d(A,B)$
 \end{corollary}

i. a. falsch $d(A \cup Z,B \cup Z) \le d(A,B)$

da 
$d(0,A) \le d(A,A)$ F

\begin{lemma}
$d(X \circ E(A), Y \circ E(A)) \le d(  X,Y)$ ?
\end{lemma}

\begin{proof}
$d (X(E) \ad E(A), Y(F) \ad F(A))$

$=
d (X(E) \ad f(E,A) \ad f(\alpha,\beta)=\overline F(\alpha,\beta), Y(F) \ad f(F,A) \ad f(\alpha,\beta)=\overline F(\alpha,\beta))$

$\le
d (X(E) \ad f(E,A) , Y(F) \ad f(F,A) )$

$\le
d (X(E) , Y(F) ) + \ell( f(E,A) ) + \ell(f(F,A) )$

nach rechenregeln

bzw

$\le
d (X(E) , Y(E) ) + d( f(E,A) , f(F,A) )$

\end{proof}

oder gliech so:

\begin{proof}
$d (X(E) \ad E(A), Y(F) \ad F(A))$

$\le d (X(E) , Y(F)) + d(E(A), F(A))$

\end{proof}

\begin{lemma}
$d( f(E,A) , f(F,A) ) \le ?$
\end{lemma}

\section{}

\begin{proof}
$\sup_A d (r(X,E) \ad f(E,A), r(Y,E) \ad f(F,A)) /\ell(A)$

$\le \sup_A d (r(X,E), r(Y,E)) /\ell(A)
+ d (f(E,A), f(F,A)) /\ell(A)$

\end{proof}

\begin{proof}
$\sup_A d (X(E(A)),  Y(E(A)) /\ell(A)$

\end{proof}

\chapter{}

Let $ |A|$ denote the cardinality of a set $A$, and $x \wedge y :=\min(x,y)$ 
the minimum of two real numbers $x,y \in \R$.

\begin{lemma}
Let $\calp$ be a set consisting of finite sets. 
Then 
$$d(A,B):= | A \cup B | - |A | \wedge |B|$$
where $A,B \in \calp$, defines a metrik on $\calp$.

\end{lemma}
 
More generally, by an analogous proof, we obtain:

\begin{lemma}
Let $(X,\cala,\mu)$ be a measure space. 
Set $\cala_0 := \{A \in \cala|\, \mu(A) < \infty\}$ 
to  be the collection of finite, measurable sets. 
 
%, where $\cala$ denotes 
%the collection of all measurable subsets of $X$  
%Let $\calp$ be a set consisting of measurable subsets of $X$. 
Then 
$$d(A,B):= \mu( A \cup B ) - \mu(A)  \wedge \mu(B)$$
where $A,B \in \cala$, defines a metrik on $\cala$.

\end{lemma}

\begin{definition}
Let $(\cala,\calv,\cals)$ be a formal language 
with alphabet $\cala$, variable set $\calv \subseteq \cala$
and set of sentences, $\cals$. 

We assume that two sentences $s, t$ can be combined, concatenated, to a new sentence $s \vee t$. 

If all variables in two sentences $s,t$ are disjoint between $s$
and $t$, then we define the distance
$$d(s,t):= \ell(s \vee t) - \ell(s ) \wedge \ell(t)$$

$$d(s,t):= \ell(s \vee t') - \ell(s ) \wedge \ell(t)$$

where $t'$ is $t$ but all variables are replaced

\end{definition}

\begin{definition}
The information content of a sentence
$s \in \cals$ is defined to be 
$$\ell(s) = d(s, \emptyset)$$
\end{definition}

\begin{example}
{\rm
Physical theories.
$\cala= \{\partial, \defn\, +, \cdot, \exp,\sin,\cos\} \sqcup \{x_1,x_2,\ldots\}$

$\cals=\{ 	\Delta (f) \defn  \partial_1 \circ \partial_1 (f) 
+ \partial_2 \circ \partial_2 (f) + \partial_3 \circ \partial_3 (f)$

$S(\psi,H)  \defn \big ( \cals (\psi) = - i \partial_4 (H) \big )$

$S(\psi,H) \defn \big ( - i \partial_t \psi(q,t) = H \psi(q,t) \big)$

$H(q,t)= H_{klass}()$

}
\end{example}

\begin{example}
{\rm
Computer language. For example $C++$. 
Two programms are equivalent, if they do the same,
for example, indpendent of time performance. 
Set $\ell(p)$ to be the length of a programm $p$. 
 Set $\sqcup$ to be concatenation of programms. 

Then
$$d(p,q):= \ell(p \ud q') - \ell(p) \vee \ell(q)$$
where $q'$ is equivlently $q$, but only all global 
variables replaced. 

* wenn alle routinen gleich, dann statt den 2 kopien immer 
1 nehmen, und bei aufrufen auch immer die 1 kopie aufrufen

}
\end{example}

\section{rechenregeln}

\begin{lemma}

$d( x \ud w,  y \ud z) \le d(x,y) + d(y,z)$
\end{lemma}

\begin{proof}

$$\ell(x \ud w  \ud  y \ud z) - \ell(x \ud w) \ad \ell(y \ud z)$$
%$$\le \ell(x \ud y) +\ell(w \ud z) - \ell(x \ud w) \ad \ell(y \ud z)$$
$$\le \ell(x \ud y) +\ell(w \ud z) - \ell(x) \ad \ell(y)
- \ell(w) \ad \ell (z) $$

\end{proof}

%\begin{proof}
%We have to show
%$$\ell(

\begin{lemma}

$d( A \cup B,  C \cup D ) \le d(A,C) + d(B,D)$
\end{lemma}

\begin{proof}

$$|A \cup B \cup C \cup D| - |A \cup B| \ad | C \cup D|$$
$$|A \cup C | + |B \cup D| -  |A| \ad | C| 
-  |B| \ad | D|$$

By symmetry, just exchange $A \leftrightarrow C$ and
$B \leftrightarrow D$, it is enough to consider 
the case $|A \cup B| \le | C \cup D|$. 

Notice that in general we have
$$|X \cup Y|  - |X| = |Y \backslash X|$$

The first line then is
$$|(C \cup D) \backslash (A \cup B)| = |(C \cup D )\cap  A^c \cap   B^c| $$
$$\le |C \cap  A^c | + |D \cap   B^c|
= | C \backslash A| + |D \backslash B|  $$

Case: $|A| \le |C| , |B| \le |D|$: 

In this case, the second line is
$$|C \backslash A| + |D \backslash B|$$

Case: $|A| \le |C| , |B| \ge |D|$: 

In this case, the second line is
$$|C \backslash A| + |B \backslash D|$$

But the assumptions implies that $|D \backslash B| \le  |B \backslash D|$. 

The third case is analogous. 

\end{proof}

\begin{corollary}
$$d(A \cup Z, B \cup Z) \le d(A,B)$$
$$d(A \cup Z, B) \le d(A,B) + d(Z,\emptyset)$$
\end{corollary}

vgl $\|x + w - (y+z)\| \le \|x-y\| + \|w-z\|$

$d(A,B) = \max(|A \backslash B|, |B\backslash A|)$
% wenn $

\section{}

\begin{lemma}

$\sigma(A,B) = |A\backslash B| + |B \backslash A$
is metrik mit regeln wie oben
\end{lemma}

Damit analog

\begin{definition}		\label{def21}

$\sigma(x,y)= 2 \ell(x \ud y) - \ell(x) \ad \ell(y) - \ell(x) 
\ud \ell(y)$ 

\end{definition}

\begin{lemma}
$d(x,y) \le \sigma(x,y) $ 

$\ell^\infty \le \ell^1$

\end{lemma}

\begin{example}

$f(t+1,n)$ = $a$ if $(f(t,p(t)) == store)$ $\&$  $ f(t,n)$ else 

$a(t+1)$ = $f(t,p(t)+1)$ if $f(t,p(t)) == load"$ 
$\&$ $a(t)+f(t,p(t)+1)$ if $f(t,p(t)) == add"$
$\&$  $a(t)*f(t,p(t)+1)$ if $f(t,p(t)) == mul"$
%$\&$  $p(t)$ if ($f(t,p(t)) == branch"$  
% 
$\&$ $a(t) \ge 0$)
$\&$ $a(t)$ else

$p(t+1)$ = $f(t,p(t)+1)$ if ($f(t,p(t)) == branch"$
$\&$ $a(t) \ge 0$)
%$\&$ $a \le f(t,p(t+2))$)
%$f(t,p(t)+1)$ if $f(t,p(t)) == jmp"$ 
$\&$ $p(t)+2$ else

jsr:
	übergebe rücksprungaddresse in akku, zb

	*(stackpointer+stackaddr) = pointer

	stackpointer++

	jmp *(oldpointer (von akku)  +2)

	stackpointer, stackaddr, pointer

relative addrssierung:

	lda registerpointer

	sta n1+1

n1
	lda 0

\end{example}

\begin{example}
bool 

\end{example}

\begin{example}
polynome

auch $\ell$-funktion mit wertung der koeffizienten

$\ell(a x^n)= a \ell(x^n)$

auch in physik theorein möglich

auch $\ell(\sum_n a_n x^n)= \max(a_i)$ 
um maximumsnorm zu erhalten

((-> könnten aber auch ))

$\ell(f \ud g) = \ell(f \ud (f +(g-f))) \le (f) + \ell(f-g)$
$(+ \ell(functioncall(f)))$

damit $d(f,g) \le \ell(f-g)$

jedoch könnte es echt kürzer werden, indem man ev
$(p(x))^n$ schreibt (polynom hoch $n$)

damit wäre aber def \ref{def21} keine metrik, 
da $\sigma(f,g) \le 2 \ell(f-g) + \min(f,g) - \max(f,g)$
, oder doch.... 
\end{example}

\begin{example}
images 

\end{example}

\section{}

wenn $\ell'$ die heuristische grobabschätzung, 
dann genügte zu zeigen 
$$\ell(x \ud z) - \ell'(x \ud z) \le 
\ell(x \ud y) - \ell'(x \ud y) + \ell(y \ud z) - \ell'(y \ud z)$$
denn die dreicksunglg stimmt für $\ell'$,
sodass wenn man sie oben dazuzählt,
die dreicksunglg für $\ell$ herauskommt

\chapter{}

* also sprache mit variablen,
$T_1 \wedge T_2$, alle vars zwischen $T_1$ und $T_2$
disjunkt

* etwa für theoreinbildung, schreibe genauso

$f(E,F) =( Formel(E,F)  \wedge Formel_2(E,F))$

ist aussagem, die lsg ist $(E,F)$ sodass aussage erfüllt
  
dann $T_2(E,F)= (...)$ genauso

* computerprogramm genauso, adjungiere beide programme,
wähle alle variablen, einschliesslich funktionsnamen, 
zweiscehn beiden computerprogrammen disjunkt

\section{}

auch monoid statt formal language mit
concatenation

\section{}

nur function call and function defintion geht auch nicht, denn
in program

$f_1(x)=\{...\}$

$f_2(x)=\{...\}$

$data \& f_1, \&f_2$

aufruf $g(\& data)$

\section{}

auch wenn man theorie als $f(x)=\{... \}$ schreibt, 
hat man globale vars, nämlich $f$

\section{}

vllt geht induktiver beweis von metrik wenn language
gruppe 

\section{}

functionale zur norm abschätzung scheint nicht zu gehen,
da range of funktionae kurze norm hat,
aber $x$ selbst große norm hat;

nicht einfach norm ist punktauswertung einer fkt wie in normierten normen und lin abb

teste zb funktionale mit bool alg

zb funktionale mit zuordnungen von zahlen von formeln,
inedem man in die vars zahlen einsetzt

\chapter{}

* makros wie funktionen , 

* $f(x,y):= Ausdruck(x,y)$

definiert frei var $f$ als konstante, $Ausdruck(x,y)$ 
darf nur $x,y$ als freie vars enthalten,

funktionen in uasdruck sind funktionsaufrufe , 

alle argumente werden übergeben 

* rückgabewert einer fuktion ist ein textstring

* klammern und komma, doppelpunkt , = , sind
 eigene buchstaben , die in der urspürnglichen language icht 
vorkommen 

* man kann also variablen auch übergeben, da sie durch
fuktionsdef konstant geword sind,

jedoch besser, constanten und vars unterscheiden 

%---

\section{}

vllt entsteht nicht nur turing machine , sonder
auch neuonales netz bei optimierung

---

computer: 

$f(n+1,m) =f(n,m)$ außer, wenn op (operagzion pointer), befehl "sta 1000" , dann 
$f(n+1,1000) = akku$

\section{}

$f(x,y):=\{  g(x,n), g(x,n+1), ... g(x,N,y) \}$

nicht alle zwischenvarialen müssen in makro $x,y$ sein, 
zb nicht die zwischenschritte bei comp

---

$f^{\circ n}$ als schleife bei computing 

bzw eigenes zeichen wie summe $\sum$ in alg

computer programm = beziehng zwischen variablen 
wie bei ener glg

unendl computing mit konvergenz -> folge

\section{}

computing $f(n,t)$ 

resultat $y= f(n,t)$ wenn end-flag $f(e,t)$ gesetzt,
ab dem sich die machine nicht mehr ändert

$\forall t$

%\subsection{}

---

man braucht keinen zeitvorgang, zb,
bzw keine defienierung zum nächsten zeitpunkt, 
sondern zb

\chapter{}

\begin{lemma}

$$\phi: (formeln,\ad)  \rightarrow (formeln,\ad,\#)$$

\end{lemma}

lasse keine widersprüchliche theory (leere menge an löungen 
der variablen) zu, 
dann könnte man mit formeln-$\ad$ arbeiten 

---

$E = 3 \# E=5  \# f(x):= 7x$

$:=$ gilt global, und $=$ nur lokal

mögl problem bei ohne $\#$: 
vielleicht könnte man eine formel adden, die beide 
freien variablen anthätl, etwa, $E_1 + E_2$ von beiden
theorien, um alles kürzer zu machen;

das ist aber eher ungewollt

---

$$\{ E|\, formel(E)\} \# \{ E_2|\, formel_2(E_2)\}
\# f(x)= 7x$$

könnte aber in der theorie nicht funktionieren, 
da etwa renormalisierung ein algorithmus als
etwa korrekte lösung einer imliziten formel

---

oder trennungs-klammern für lokale variablen, 
theore nach aussen hin nicht sichtbar

$\{ theore1(E) \} \ad \{ E = 5\} \ad f(x):= 7x$

---

Bsp:

$$formel(E)=0 \ad E_2 = -E$$

vermixung $E$ und $E_2$

n, da $E$ und $E_2$ unabh

\chapter{}

eine formel, (zb turing machine, glg) 
produziert die äquivalnezrel für eine algebra,

zwei algebrben äquivelent wenn die dazugehörigen produziernden
formeln äquivalnt

---

formeln -> aktion -> orbit 

(gruppen aktion auf den formeln)

---

bezüglich 
"$\ad$ darf nicht und sein, da sonst etwa $y=(0=1)$ und dann
$d(x,y)=0$ immeer
"

man kann aber auch das als sinnlose theorie striechen,
bzw sagen, erfüllt keinen formelzusammenhang, 
n, denn aussage nie erfüllt

\section{}

$$(\partial f) S(f) + (1-S(f)) D(f) = 1$$
 
$G(f)(p)= \partial f$, wenn $0 \le f(p) < 1$

und $G(f)= \lim_{h_+ \rightarrow p} f(h) +1$, wenn $f(p) \ge 1$ 

%$G(f)= \lim_{h_+ \rightarrow p} f(h) +1$,
dann

$$G(f) =1$$  

\chapter{}

\begin{definition}

Let $\call:=(\cals,\cala,\calv)$ be a language with variables $\calv$. 
Set $\cala':= \cala \sqcup \{\ad\}$ and
$\cals':= \{s_1 \ad s_2 \ad \ldots \ad %s_n| 
\, n \ge 1, \, s_i \in \cals\}$. 
We call $(\cals',\cala',\calv)$ the %{\em macro language}
of $\call$. 

\end{definition}

$$\{f(x,y)=s+x+g(y)\} \ad \{g(y)= ey\}
\equiv
\{f(x,y)=s+x+ey\} \ad \{g(y)= ey\}
$$

äquivalenzrel eindeutig und widersprichsfrei
-> keine endlosschleifen 

---

hilfmakros -> können glöscht werden fals überflüssig, nicht gebraucht

haauptmakros -> können nicht gelöscht werden 

---

format für def:

$hilfmakro(f; x_1,x_2,x_3)(s(x_1,x_2,x_3))$

---

vllt diffglg auch mit fi then, wie ein computer programm

\section{}

A macro is %a %function $M:\calv \times \calv^n \times \cals 
an element $m \in \calv \times \calv^n \times \cals$

$f(a,b,c,g(u,v)) = a+b+c+g(u,v)$

$g(u,v)= a+b$

$f(1,2,3,g(4,5))= 1+2+3+1+2$

$h(a,b,c,u,v)= f(a,b,c,g(a,b,u,v))$

$h(f,g,a,b,c,u,v)= f(a,b,c,g(a,b,u,v))$

lokale, globale vars

---

indizierte varsiablen: 

$x[n]$

$f(n)= a[n]+ a[n+1]+ a[n+2]$

---

t-machine welche formel produziert 

---

t-machine welche formel für t-machine , welche formel produziert, prouziert   etc

---

$f(x,y)=f(x,y)+ f(x,x*y)$

---

zwei makrolanguage ausdrücke dind gleich, wenn am ende 
die gleichen formeln herauskommen 

\section{}

man könnte formel nehmen, die variablen 
aufzeichenen und ordnen, und das alr reihung bei der 
argumentübergabe nehmen, um argumente in der 
makrodef zu sparen

---

$f[a,g] = a + g[a]$

$g(a):= a +1 $

$f(a,g,h):=  a + g[a,h[a]] $

$g(a,b):= a+b$

$f(x,y,z)= x + y(x,z(x)) = x + x + z(x)$

unterscheidung, funtion x, oder variable x, zb 

$f(x)= x +x $ 

ist $x$ die fkt $x$ oder die var $x$?

also fkt global

---

also gesetze:

* funktionsvariablen werden nie als makrovariablen verwendet

* alle argumente werden in der makro def angeführt 
(real könnte man einfach durchnummerieeren : @1,@2,...)

* bei der ausführung der makros wird von oben nach unten ersetzt 

* bei der ausführung der makros darf es nicht eine kette geben, ind der irgendwann die funtion/makro selbst wieder aufgerugfen wird

* freie varaibeln üdürfen beim makroaufruf icht als parameter übergeben werden, höchstens konstate funktions-variablen 

---

oder man lääst alles zu, aber behält die regel von oben nach unten bei 

allerdings hat man dann faktisch alles, denn sonst bleiben frie vars über 
 
---

$g(x):= a +x $

$F(g,$

---

*

\chapter{}

\begin{definition}

Let $(\cals,\cala)$ be a language. 
Then let the free product $(\cals,\cala) * (\ad,\{\ad\})$

Let $\calx$ be a set of variable symbols. 

A macro is a sentence of the form 
$x(y_1,y_2,\ldots,y_n) :=  s$

where $s$ is a sentence in the variables and letters

where $x,y_i \in X$

Macfro language:

sentences = " $s_1 \ad s_2 \ad .... \ad s_n$" 

\end{definition}

$\ad$ darf nicht und sein, da sonst etwa $y=(0=1)$ und dann
$d(x,y)=0$ immeer

$\ad$ = trennzeichen

die kosten von def $x(y):=$ nur die kosten von $x$ %, a

soll $x \ad x= x$ ? 

---

(((
stussformel $\ell(x,y) - \min$,

da ja $\ell(x,x) = \ell(x)$ für etwa $x$ ohne variables
)))

---

wenn man variables aufzählt braucht man sie nicht beim 
argument in makrodef, und auch nicht makrovariable selbst,
denn die makros kann man ja auch geordnet* (wenn) auflisten 

---

$y(z):=x(z) \ad y \ad y$

\chapter{}

\begin{definition}

Let $(\cals,\cala)$ be a language 
and $\calr$ be an equivalence relation on $\cals$.

Then define $\ell: \cals/\calr \rightarrow \N$ to be 
$\ell([s]):= \min_{t \in \cals, t \equiv s} \ell(t)$.

\end{definition}

\begin{definition}

Define distance 
$d(x,y):= \ell(x(a) \wedge y(b) ) 
- \ell(x(a))$

$d(x,y):= 2 \ell(x(a) \wedge y(b) ) 
- \ell(x(a)) - \ell(y(b))$

\end{definition}

$d(x,y)+d(z,y)= 2 \ell(x(a) \wedge y(b) ) 
- \ell(x(a)) - \ell(y(b))
+2 \ell(z(a) \wedge y(b) ) 
- \ell(z(a)) - \ell(y(b))$

$d(x,z)= 2 \ell(x(a) \wedge z(b) ) 
- \ell(x(a)) - \ell(z(b))$

da $x(a)$ und $y(b)$ unabh, 
$\ell(x(a) \wedge y(b) ) \ge \min (\ell(x(a)),\ell(y(b)))$

$A - B \ge 
\min (\ell(x(a)),\ell(y(b))) +\min (\ell(y(a)),\ell(z(b)))
- 2  \ell(y(b)) 
- 2 (\ell(x(a))+\ell(z(b)))$

----

$d(x,y):= \ell(x(a) \wedge y(b) ) 
- \min(\ell(x(a)) , \ell(y(b)))$

$A -B \ge 
\ell(x(a) \wedge y(b) ) + \ell(y(a) \wedge z(b) )
-\ell(x(a) \wedge z(b) )$ 

$\ge \min (\ell(x(a)),\ell(y(b))) + 
\min (\ell(y(a)),\ell(z(b)))
- \min (\ell(x(a)),\ell(z(b)))$

---

$\max(a,b)+\max(b,c) - \max(a,c)$

$\min(a,c) -\min(a,b)-\min(b,c) $

\section{}

$\ell(x \wedge y) \le \ell$

$\ell (x(a,H(p,q)=y))$

%$\ell(x \wedge y) \le \ell$

$\ell(x(a) \wedge y(b) ) + \ell(y(a) \wedge z(b) )
-\ell(x(a) \wedge z(b) )
+ 2\ell(x(a) \wedge y(b) \wedge z(c))
- 2 \ell(x(a) \wedge y(b) \wedge z(c))$ 

$\ge - \ell(x(a))-\ell(z(a))
-\ell(x(a) \wedge z(b) )
+ 2\ell(x(a) \wedge y(b) \wedge z(c))
$

$\ge - 2\ell(x(a))-2\ell(z(a))
+ 2\ell(x(a) \wedge y(b) \wedge z(c))
$

---

failed -> $x=y=z$

---

$\wedge$ darf nich tin $\calr$ involviert sein

setze zb $x \wedge y$ einen buchtaben

---

macro sprache $\cals, \wedge, :=, vars$

---

$\ell(x(a) \wedge y(b)  \wedge u(t)) + \ell(y(a) \wedge z(b) 
\wedge v(s))
-\ell(x(a) \wedge z(b) \wedge r(m))$

$\ell(x \wedge y  \wedge (X \cup Y)) + \ell(y \wedge z 
\wedge \wedge (Z \cup Y))
-\ell(x \wedge z \wedge  (X \cup Z))$ 

---

$\ell(x(a) \wedge y(b)  \wedge y(c)) = 
\ell(x(a) \wedge y(b) )$

---

$\ell(x(a) \wedge y(b) \wedge z(c) ) 
- \min(\ell(x(a) \wedge y(b)) , \ell(z(b)\wedge y(b)))
\ge 0$

\section{}

$2y + X \cup Y - X \cap Y + Y \cup Z - Y \cap Z + X \cup Z - X \cap Z$

---

$X \cup Y= Y$, $X \cap Z= \emptyset$ :

$2y + Y + Y - X -Z$

$\ge 2y + 2Y - x-X$

dann

$x + X \ge y + Y$ , $X \le Y$ , $X + Z \le Y$

$x \ge y+Y -Z \ge y+X$

-

$-Z \ge X-Y \ge y-x$

---

$X \cup Y + Z \cup Y - X \cup Z
\ge 2 Y - x - X$

$X^c + Z^c - X \cup Z
\ge  - x - X$

$X^c + Z^c - Z \backslash X
\ge  - x $

-

$y+ X \cup Y + Z \cup Y - X \cup Z
\ge  Y $

---

$2y +X \cup Y + Z \cup Y - X \cup Z
\ge z + Z$

$x + X \le z + Z \le y + Y$ 

mit 2. ungl wie oben

---

$2y +X \cup Y + Z \cup Y - X \cup Z
\ge y + Y$

$x + X \le  y + Y \le z + Z $ 

wie oben

\section{}

extrembsp:

$\ell(x)=\ell(y)=\ell(z)$

$\ell(x \wedge z)= \ell(x)+ \ell(z)$

%damit

%$\ell(

---

\begin{lemma}

grobabschätzung für $d$ funktioniert laut oben

\end{lemma}

\section{}

$\ell(a * g)$

$d(x,\emptyset)= \ell(x)$

---

$y=A + A + A ... + B + B+ B ...$

$x= A +X$

$z= B + Z$

damit $\ell(y) \le \ell(A) + \ell(B)$

\chapter{}

Transformation

$f \in C(X,Y)$

$W:Z \rightarrow X$ bijective

$g:=f \circ W$

then

$g \in C(Z,Y)$

oder

$h:= S \circ f \circ W$

$h \in C(Z,T)$

\subsection{}

$F \in C(C(X,Y),C(A,B))$

operator

zb

$G(g):=  S \circ F( W \circ g \circ V) \circ T$

$G \in C(C(M,N),C(P,Q))$

rücktransformation:

$F(f)=S^{-1} \circ G(W^{-1} \circ f \circ V^{-1})  \circ T^{-1}$

\subsection{}

\begin{theorem}

Wenn

%$X=M,Y=N,A=P,B=Q$
$M_1=M_2,Y_1=Y_2,P_1=P_2,Q_1=Q_2$

und 

$G_1,G_2$ zwei transformationen von $F$,

und $G_1 = G_2$

dann

$G_1$ selbst invaraint unter der transformation 

(ii)  es gilt auch die umkehrung: 
wenn $G_1$ invaraint unter transformation, dann
$G_1=G_2$ 

\end{theorem}

\begin{proof}

$$G_1(g)=  S_1 \circ F( W_1 \circ g \circ V_1) \circ T
=
  S_2 \circ F( W_2 \circ g \circ V_2) \circ T_2
= G_2(g)
$$

aus $G_1$ folgt 

$$ S_1^{-1} \circ G_1( W_1^{-1} \circ h \circ V_1^{-1}) \circ T^{-1} = F(h)$$

Mit %obeigem 
$G_2$ erbigt das 

$$G_2(g) = 
  S_2 \circ S_1^{-1} \circ  G_1( W_1^{-1} \circ W_2 \circ g 
 \circ V_2 \circ V_1^{-1} ) \circ T_1^{-1} \circ T_2
$$

da $G_1=G_2$, ist die invarainz gezeigt

(ii)

obige gleichung gilt auch für $G_1$ statt $G_2$ nach vorr. damit $G_1=G_2$

\end{proof}

\chapter{}

\section{}

Ring generated by 

$x_1,x_2,...$

$\partial_{x_i}$

$\nabla_{x_i}$

$E,\phi,...$

$$\partial_x - \partial_t$$

\section{}

$y_i = a_i^j x_j$

$$\partial_{x_k} \phi^{(x)}
=  \frac{\partial{y_i}}
{\partial x_k} \partial_{y_i} \phi^{(y)}
= a_i^k \partial_{y_i} \phi^{(y)}$$

$$\phi^{(y)}(y)= \phi^{(y)}(Ax)= \phi^{(x)}(x)$$

$$\phi^{(x)} = \phi^{(y)} \circ A$$

$$f(x) = f \circ id$$

\section{}

$f(x)=f(x+1) - f(x+2)$

$$\int \phi(x) f(y-x) dx  = \phi(y)$$

$$T \otimes W \in  B(H) \otimes B(L^2(R^n))$$

$$W = \partial_x$$  %\delta_x$$

\section{}

aest

informationsgehalt

\section{}

kürze:

zb  $f(x)= 3x $

oder 
$f(x+y)=f(x)+f(y)$ 

$f(1)=3$

anzahl symmetrien:

zb rotatiopnsinvariant -->

je mehr symmetrien desto weniger informationsgehalt

kürze :

die ganze theorie so kurw wie möglich darstellen: 

zb. 

$\phi \theta + f g = 0$

$\theta = langer ausdruck (von $f$)$
$g$= kurzer ausdruck von $\theta$

.. jedenfalls zusammenfassen wiederholungen,
anstatt alles auszuschrieben ->
alles weird kürzer

[zb vectorpotential statt maxw gleichungen]

definitionen kurz:

$\partial_x (fg)= f \partial_x (g) + ...$

und $\partial_x x =1$

diverse tensoren nach eigenschaften (gleichungen) defineieren

[zb rie tensor --> eigenschaften]

\section{}

je mehr constraints, desto weniger information

zb linear, oder periodisch 

linear mehr constraint

jedoch, 

wenn man alle constraint hat, sodass es nur noch eine lsg git, dann immer die gleiche 
informationsgehalt

deswegen:

%besser  
anzahl operationen einer funktion:

zb $f(x)= 3x+1$,  $f(x)=sin(x)$

gemessen an den operationen die man hat

zb wenn man sin hat, dann $f=sin$ geringe info,
wenn nicht , dann hohe, da man potenzreihe nehmen muss

auch:

zusammenfassen einer auslagerungsoperation 

zb $f(x)= \phi(x) + \phi^2(x)$

und $\phi(x)= x^2 +3x +4$

in lagebra oder ring: 
informationsgahlt der operation = anzahl der operation (möglicherweise gewertet) 
in dieser algebra, ring

bem:

computer deswegen wie welt, wiederholung,
schleife, weil er ja genau die wiederholung 
in richtung der zeitachse ausnutzt

--

dimension der freiheitsgrade = informationsgehalt,

zb lineare glg $f(x+y)=f(x) = f(y)$

es gibt 1 freiheitsgrad (bei $R^1$) also 1 informationsgehalt

zb peridodiche fkt,  es gibt noch immer unendlich viel friehaitsgrade,

jdeoch, dann zu oben, wenn alles bestimmt, kein freiheitsgrad mehr,
da nur 1 lsg

\section{}

ev auch $\sum_i$ als 

---

$\phi(4y) + \phi(z)= zz$

$\phi(x)= 2x $

---

auch mögl:  nur funktionen, also etwa $f(x,y)= x+y$, für algebra operationen

gewisse fktnen gegeben, +, *, sin 

---

$e^x =\sum_n (1/n!) x^n$

---

kürzest mögl ausdruck in der language

\section{}

$\lim_{\varepsilon \rightarrow 0}
(theory) = theory 2$

quality factor ?  zb linear etc 

typischerweise  bool alg ausdruck --> kürzeste --> länge = informationsgehatlt

\section{}

\begin{definition}

gegeben sprache, also alphabet und zeichenketten

äquivalenrelation auf zeichenketten 

informationsgehalt (z)= länge von kürzeste 
zeichenkette $w$ sodass $w \equiv z$

\end{definition}

\begin{example}

Bool sche algebra, mit variablen,

mehreren gleichungen, und funktionsdefintionen 

\end{example}

\begin{lemma}

$I(z \sqcup w ) \le I(z) + I(w)$

$I(z) \not \le I(z \sqcup w )$ im allg

(zb. $w$ setzt in ther theory alles auf null $\phi=0$, )

\end{lemma}

\section{}

$y = \phi(x+\phi(x+1)) + \phi(3x)$

$\phi(x)= \sqrt { x+2}$

---

$\min(I(z),I(w)) \not \le I(z \sqcup w )$ im allg

$\phi(x) = lll$

$\psi=0, \phi=\psi$

durchschnott kiene lösung mehr --> alles wird 0

meaning von $\sqcup$ ist eigentlich und

---

die kürzeste theory, nebenbeding, sodass 
eine andere theorie erfüllt ist

abh von der äquivalenzrelation, 
sonst ganz allg nur axiome zählen

\section{}

theorie mit gruppen aktion -> descent homomoprhsims 

$$\square \phi = 0$$
$$\sum_i \square \phi \rtimes g$$ 

nur meistens fixed ponkt	

---

$\lim_{\varepsilon \rightarrow 0} (T_1 - T_2(\varepsilon)) = 0$

$L_\varepsilon ( ...)$
$L^\varepsilon ( ...)$

---

übersetzung $(T_1) -> T_2$

\section{}

does commute?  -> formula 1 equivalent formula 2

quantization (F1) -> (equiv)  quant (F2)

---

nicht quanti von theory, sondern guess, sodass es näherungsweise die nicht-quanti-theory im

\section{}

möglichst einfache formel

zb bis 2ter ordnung

krümmungs-dfg-gleichung

\section{}

schwierigkeit = zeitdauer zur findung der lösung 
/  informationsgehalt der lsg

---

infromation kommt rein, modell wird gemacht, modell kommt raus

---

modell, + grenzwert formel  --> (M,G) -> T gegeben

---

einschränkung der theorie, teiltheorie, einschränkung der dimension

---

$I(\phi)$  informationsgehalt von $\phi$

$I(... \phi(...) ... = \phi(...) + ...) =: I(z)$
und $\phi$

%was ist
 dann $I(z , \phi)$

--> mit "und" schon oben

--- 

mit koeffizienten, lagrange multi etc. , ausreihend für physikalische modele

---

ev gewichten der länge im informationsgehalt mit den 
koeffizienten $I(a z) = a$

\section{}

difference i-gehalt 

schreibe $z_1:=z_1(a,b,..)$, $z_2:=z_2(a',b',...)$

informatinsgehalt von 
$z_1 \vee z_2$

wenn gleich, kann man vereinfachen

dann $-$ i-gehalt $z_1$

zb. $\triangle - \partial_t = 0$

und $\triangle - \partial_t^2  = 0$

kann vereinfahcen

$\phi:= \triangle$

---

vllt permutation von variablen $x_1,x_2,...$ kostenfrei 

---

vlg  beschreibender i-gehalt / realer i-gehalt (content, bedeutung)

\chapter{}

compare languges

---

$d_1 (a,b) = \ell(b)$ $a \rightarrow b$

$d'(x',y') \le d(x',x)+ d(x,y) + d(y,y')$

in $(S,A) * (S',A')$

---

jedoch könnte $d(x,y)$ extrem kollabiere, zb 0, 
und in $x,y$ zusätzliche äquivalenz rel, und
abd $x=x'$ formal 

$\rightarrow d(x',y' \le 3 0$

---

besser , wenn $x,x'$ aequiv lang

\section{}

wirkunf $\cals$ ist auch optimierung

---

hauptproblem nicht optimierung, sondern überhaupt 
die richtige theory zu finden 

---

diffglg auch comptuting, $x_t \rightarrow x_{t + \epsilon}$

---

theorie 1, man wechselt die zahlen, theorie 2, 
c zu q

---

funktions betwen languages

---

norm der funktions: $d(f(x),f(y)) \le M d(x,y)$

---

mannigfaltigkeit einer menge von theoren, mit
real numbers als freihaitsfrade, db koeffizienten vor
summands, ausdrücken 

---

$d(f  \wedge g , \emptyset) \le d(f,\emptyset) + f(g,\emptyset)$ ?

---

sogar theorie selbst ist macro  

---

könte norm und vektorraum, n, da kein inverses element zu $\wedge$ 

---

funktionale auf theory

$f(x \wedge y) = f(x) ...$

---

übersetzung einer lnguage zu andere selbst über macrso 

---

INFORamtionsgehalt von satz, = informationsgahalt von bewweis

---

ent... ist informationsgehalt bez einher gewähöten standardlanguage

\fi 

\fi

\fi

\end{document}